\DeclareMathAlphabet{\mathcal}{OMS}{cmsy}{m}{n}
\renewcommand{\paragraph}{%
  \@startsection{paragraph}{4}%
  {\z@}{2.25ex \@plus 1ex \@minus .2ex}{-1em}%
  {\normalfont\normalsize\bfseries}%
}
\definecolor{linkblue}{HTML}{001487}
\newtheorem{theorem}{Theorem}[section]
\newtheorem*{theorem*}{Theorem}
\newtheorem{proposition}[theorem]{Proposition}
\newtheorem{lemma}[theorem]{Lemma}
\newtheorem{fact}[theorem]{Fact}
\newtheorem{corollary}[theorem]{Corollary}
\theoremstyle{definition}
\newtheorem{remark}[theorem]{Remark}
\newtheorem{definition}[theorem]{Definition}
\numberwithin{equation}{section}
\newcommand\numberthis{\addtocounter{equation}{1}\tag{\theequation}}
\newcommand{\eps}{\epsilon}
\newcommand{\F}{\ensuremath{\mathds{F}}}
\newcommand{\N}{\ensuremath{\mathds{N}}}
\newcommand{\bits}{\ensuremath{\{0, 1\}}}
\newcommand{\ZZ}{{\mathbb{Z}}}
\newcommand{\CC}{{\mathbb{C}}}
\newcommand{\FF}{{\mathbb{F}}}
\newcommand{\RR}{{\mathbb{R}}}
\newcommand{\baa}{{\mathbf{a}}}
\newcommand{\bbb}{{\mathbf{b}}}
\newcommand{\bxx}{{\mathbf{x}}}
\newcommand{\byy}{{\mathbf{y}}}
\newcommand{\bzz}{{\mathbf{z}}}
\newcommand{\bww}{{\mathbf{w}}}
\newcommand{\bKK}{{\mathbf{K}}}
\newcommand{\CPFPC}{\mathrm{CPFPC}}
\newcommand{\FPC}{\mathrm{FPC}}
\newcommand{\PFC}{\mathrm{PFC}}
\newcommand{\SWAP}{\mathrm{SWAP}}
\newcommand{\ot}{\ensuremath{\otimes}}
\newcommand{\deq}{\coloneqq}
\DeclareMathOperator{\poly}{poly}
\DeclareMathOperator{\E}{\mathds{E}}
\DeclareMathOperator*{\avg}{\mathds{E}}
\DeclareMathOperator{\rank}{rank}
\DeclareMathOperator*{\Tr}{Tr}
\DeclarePairedDelimiter{\abs}{\lvert}{\rvert} 
\DeclarePairedDelimiter{\norm}{\lVert}{\rVert}
\DeclarePairedDelimiter{\parens}{\lparen}{\rparen}
\newcommand{\tand}{\;\textnormal{~and~}\;}
\newcommand{\cA}{\ensuremath{\mathcal{A}}}
\newcommand{\cD}{\ensuremath{\mathcal{D}}}
\newcommand{\cH}{\ensuremath{\mathcal{H}}}
\newcommand{\cI}{\ensuremath{\mathcal{I}}}
\newcommand{\cK}{\ensuremath{\mathcal{K}}}
\newcommand{\cO}{\ensuremath{\mathcal{O}}}
\newcommand{\cP}{\ensuremath{\mathcal{P}}}
\newcommand{\cR}{\ensuremath{\mathcal{R}}}
\newcommand{\cX}{\ensuremath{\mathcal{X}}}
\DeclareSymbolFont{greekletters}{OML}{ntxmi}{m}{it}
\DeclareMathSymbol{\alpha}{\mathord}{greekletters}{"0B}
\DeclareMathSymbol{\beta}{\mathord}{greekletters}{"0C}
\DeclareMathSymbol{\gamma}{\mathord}{greekletters}{"0D}
\DeclareMathSymbol{\delta}{\mathord}{greekletters}{"0E}
\DeclareMathSymbol{\epsilon}{\mathord}{greekletters}{"0F}
\DeclareMathSymbol{\zeta}{\mathord}{greekletters}{"10}
\DeclareMathSymbol{\eta}{\mathord}{greekletters}{"11}
\DeclareMathSymbol{\theta}{\mathord}{greekletters}{"12}
\DeclareMathSymbol{\iota}{\mathord}{greekletters}{"13}
\DeclareMathSymbol{\kappa}{\mathord}{greekletters}{"14}
\DeclareMathSymbol{\lambda}{\mathord}{greekletters}{"15}
\DeclareMathSymbol{\mu}{\mathord}{greekletters}{"16}
\DeclareMathSymbol{\nu}{\mathord}{greekletters}{"17}
\DeclareMathSymbol{\xi}{\mathord}{greekletters}{"18}
\DeclareMathSymbol{\pi}{\mathord}{greekletters}{"19}
\DeclareMathSymbol{\rho}{\mathord}{greekletters}{"1A}
\DeclareMathSymbol{\sigma}{\mathord}{greekletters}{"1B}
\DeclareMathSymbol{\tau}{\mathord}{greekletters}{"1C}
\DeclareMathSymbol{\upsilon}{\mathord}{greekletters}{"1D}
\DeclareMathSymbol{\phi}{\mathord}{greekletters}{"1E}
\DeclareMathSymbol{\chi}{\mathord}{greekletters}{"1F}
\DeclareMathSymbol{\psi}{\mathord}{greekletters}{"20}
\DeclareMathSymbol{\omega}{\mathord}{greekletters}{"21}
\DeclareMathSymbol{\varepsilon}{\mathord}{greekletters}{"22}
\DeclareMathSymbol{\vartheta}{\mathord}{greekletters}{"23}
\DeclareMathSymbol{\varpi}{\mathord}{greekletters}{"24}
\DeclareMathSymbol{\varrho}{\mathord}{greekletters}{"25}
\DeclareMathSymbol{\varsigma}{\mathord}{greekletters}{"26}
\DeclareMathSymbol{\varphi}{\mathord}{greekletters}{"27}
\newcommand{\SL}{{\mathrm{SL}}}
\newcommand{\EL}{{\mathrm{EL}}}
\newcommand{\GL}{{\mathrm{GL}}}
\newcommand{\Alt}{{\mathrm{Alt}}}
\newcommand{\Sym}{{\mathrm{Sym}}}
\newcommand{\Mat}{{\mathrm{Mat}}}
\newcommand{\Cay}{\mathrm{Cay}}
\newcommand{\one}{{\mathds{1}}}
\newcommand{\Cl}{{\mathrm{Cl}}}
\newcommand{\SU}{{\mathrm{SU}}}
\newcommand{\U}{{\mathrm{U}}}
\newcommand{\Asterisk}{{\scalebox{1.7}{\raisebox{-0.2ex}{$\ast$}}}}
\newcommand{\nurevalltoall}{{\nu_{\mathrm{rev},\mathrm{All}\to\mathrm{All}}}}
\newcommand{\nutwodesign}{{\nu_{2\text{-design}}}}
\renewcommand{\eps}{\varepsilon}
\renewcommand{\vec}{\bm}
\newcommand{\CO}{\mathcal{O}}
\newcommand{\vertiiismall}[1]{{\vert\kern-0.25ex\vert\kern-0.25ex\vert #1 
    \vert\kern-0.25ex\vert\kern-0.25ex\vert}}
\newcommand{\vertiii}[1]{{\left\vert\kern-0.25ex\left\vert\kern-0.25ex\left\vert #1 
    \right\vert\kern-0.25ex\right\vert\kern-0.25ex\right\vert}}
\newcommand{\vertiiiNoLR}[1]{{\bigg\vert\kern-0.25ex\bigg\vert\kern-0.25ex\bigg\vert #1 
    \bigg\vert\kern-0.25ex\bigg\vert\kern-0.25ex\bigg\vert}}
\newcommand{\lvertiii}{\bigg\vert\kern-0.25ex\bigg\vert\kern-0.25ex\bigg\vert }
\newcommand{\rvertiii}{\bigg\vert\kern-0.25ex\bigg\vert\kern-0.25ex\bigg\vert }
\newcommand{\lnorm}[1]{\left\Vert {#1} \right\Vert}
\newcommand{\labs}[1]{\left\vert {#1} \right\vert}
\newcommand{\indicator}{\mathds{1}}
\begin{document}

\title{Incompressibility and spectral gaps of random circuits}

\date{}

\author[1]{Chi-Fang Chen}

\author[2]{Jeongwan Haah}

\author[3]{Jonas Haferkamp}

\author[4]{\\Yunchao Liu}

\author[5]{Tony Metger}

\author[6]{Xinyu Tan}

\affil[1]{Caltech}
\affil[2]{Microsoft Quantum}
\affil[3]{Harvard University}
\affil[4]{UC Berkeley}
\affil[5]{ETH Zurich}
\affil[6]{MIT}

\maketitle

\begin{abstract}
Random reversible and quantum circuits form random walks on the alternating group $\Alt(2^n)$ and unitary group $\SU(2^n)$, respectively, with each random gate as one step of the walk. 
Existing bounds on the spectral gap for the $t$-th moment of these random walks have inverse-polynomial dependence in both $n$ and $t$. 
We prove that the gap for random reversible circuits is $\Omega(n^{-3})$ for all $t\geq 1$, and the gap for random quantum circuits is $\Omega(n^{-3})$ for $t \leq \Theta(2^{n/2})$.
Importantly, these gaps are independent of $t$ in the respective regimes. 
We can further improve both gaps to $n^{-1}/\mathrm{polylog}(n, t)$ for $t\leq 2^{\Theta(n)}$, which is tight 
up to polylog factors in $n$ and $t$. Our spectral gap results have a number of consequences:
\begin{enumerate}
    \item Random reversible circuits with $\cO(n^4 t)$ gates form multiplicative-error $t$-wise independent (even) permutations for all $t\geq 1$; for $t \leq \Theta(2^{n/6.1})$, we show that $\tilde \cO(n^2 t)$ gates suffice.
    \item Random quantum circuits with $\cO(n^4 t)$ gates form multiplicative-error unitary $t$-designs for $t \leq \Theta(2^{n/2})$; for $t\leq \Theta(2^{2n/5})$, we show that $\tilde \cO(n^2t)$ gates suffice.
    \item The robust quantum circuit complexity of random quantum circuits grows linearly for an exponentially long time, proving the robust Brown--Susskind conjecture~\cite{brown2018second,brandao2021models}.
    We also show an analogous result for random reversible circuits.
\end{enumerate}

Our spectral gap bounds are proven by reducing random quantum circuits to a more structured walk: a modification of the ``$\PFC$ ensemble'' from~\cite{metger2024simple} together with an expander on the alternating group due to Kassabov~\cite{Kassabov_2007_alt}, for which we give an efficient implementation using reversible circuits. In our reduction, we approximate the structured walk with local random circuits without losing the gap, which uses tools from the study of frustration-free Hamiltonians. 

\end{abstract}

\newpage

{
\hypersetup{linkcolor=black}
\setcounter{tocdepth}{2}
\tableofcontents
}

\newpage

\section{Introduction}

Among all Boolean functions on $n$-bit strings,
the vast majority has exponential circuit complexity.
This is based on a counting argument by Shannon from the 1940's~\cite{Shannon1949synthesis},
and is often summarized by saying that a random function is complex.
A similar argument carries over to the quantum setting,
showing that a Haar random unitary is complex.
Since a sufficiently deep random circuit should produce a random function, this also means that very deep circuits of randomly chosen gates
must be almost always complex, too.
This complexity question for random circuits can be made sharper
as there is a tunable parameter, the number of random gates.
It is then natural to ask if the ``true'' circuit complexity ({\it i.e.,} the size of the smallest possible circuit) of the function implemented by a random circuit
monotonically increases as the number of random gates increases,
and if so, what the precise relation between the two is.
We give arguably the simplest answer:
\emph{a random quantum (or reversible) circuit on $n$ qubits (or bits) with $L\leq 2^{\Theta(n)}$ 
gates cannot be implemented by any quantum (or reversible) circuit with less than $L/\poly(n)$ gates.}
In other words, the complexity of a random circuit grows linearly with the number of random gates for an exponentially long time, which means that random circuits are essentially incompressible.
Though this is the most obviously plausible answer,
this statement has been a conjecture~\cite{brown2018second,brandao2021models}, which is now resolved in this work.

This linear complexity growth follows from our analysis 
of the probability distribution of unitaries (or reversible functions, {\it i.e.,} permutations)
obtained by random circuits.
We give bounds on how quickly this distribution converges to the uniform distribution
on the unitary group on $n$ qubits or the alternating group on $n$-bit strings
as the random circuit deepens.
This convergence to the uniform measures has been studied for various reasons
and often goes under the name of \textit{$t$-designs} or \textit{$t$-wise independence}.
A unitary or permutation $t$-design\footnote{In the classical literature, ``$t$-wise independent permutation'' is a more conventional term, but we will use ``$t$-design'' for both permutations and unitaries throughout.}
is a distribution whose $t$-th moments are (almost) the same as those of the uniform measure.
Although the true uniform measures are hard to sample from,
there are efficiently implementable $t$-designs,
and many randomized algorithms rely on this efficiency.
Applications include randomized benchmarking~\cite{magesan2012characterizing,knill2008randomized}, classical shadows~\cite{huang2020predicting}, and derandomization~\cite{Vadhan2012Pseudorandomness}.
Beyond the algorithmic realm, 
random circuits have been considered as 
toy models of chaotic physical systems such as black holes~\cite{Hayden_2007,brown2018second}
and dynamical systems with little structure~\cite{Fisher2023Random}.
In any of these applications, it is desirable to know higher moments,
and our results give a sharper answer
as to where one can substitute dynamics-induced or circuit-induced 
randomness with uniform distributions.

Our main technical result proves that the $t$-th moment operator of the random walk 
produced by random circuits has a spectral gap of $\Delta=\Omega(1/\poly(n))$ independent of $t$.
This has been an open question in the study of random circuits 
and is known to imply that random circuits of depth $\cO(t \poly(n))$ form $t$-designs.

At a high level, our spectral gap estimate consists of two parts.
The first is to find some random, efficiently implementable walk 
on the unitary group $\SU(2^n)$ or the alternating group $\Alt(2^n)$ 
that has a $t$-independent spectral gap,
where each step is a structured quantum and reversible circuit with $\poly(n)$ gates.
An important ingredient for us here is Kassabov's generating set for an expander on $\Alt(2^n)$~\cite{Kassabov_2007_alt}. 
We show that Kassabov's generators can be implemented efficiently by $\cO(n)$ $3$-bit reversible gates and combine this with a modification of the ``PFC ensemble'' by Metger, Poremba, Sinha, and Yuen~\cite{metger2024simple}.
The second is to convert such a structured random walk 
into a circuit where individual gates are chosen randomly without losing the spectral gap property.
To this end, we use tools from the study of frustration-free Hamiltonians,
namely the detectability lemma~\cite{aharonov2009detectability,anshu2016simple} 
and its converse~\cite{gao2015quantum,o2022quantum}; see~\cref{sec:proof_overview} for the proof overview.

\subsection{Previous work} \label{sec:prior_work}

The mixing properties of random circuits have been extensively studied.
Two important notions in this context 
are moment operators and their spectral gaps, which we briefly recall.
Random reversible and quantum circuits produce a probability distribution
on the alternating and unitary groups, respectively.
The \textit{$t$-th moment operators} of a probability distribution~$\nu$ are defined to be 
\begin{align*}
    \E_{\pi \sim \nu } P(\pi)^{\ot t}\quad \text{for}\quad \pi \in \Alt(2^n) \quad\tand\quad \E_{U \sim \nu} U^{\ot t} \ot \overline{U}^{\ot t}\quad \text{for}\quad U \in \SU(2^n)\,. 
\end{align*}
Here, $\overline{U}$ is the element-wise complex conjugate of $U$ (dual representation) 
and $P(\pi)$ is the linear operator permuting the computational basis vectors, 
{\it i.e.}, $P(\pi)\ket{z} = \ket{\pi(z)}$ for all $z \in \{0,1\}^n$.
The spectral gap for a distribution $\nu$ on $\SU(2^n)$ is 
\begin{align}
\Delta:= 1 - \norm*{\E_{U \sim \nu} U^{\ot t} \ot \overline{U}^{\ot t} - \E_{U \sim \mu(\SU(2^n))} U^{\ot t} \ot \overline{U}^{\ot t}}_\infty \,. \label{eqn:spectral_gap_informal}
\end{align}
The definition for the alternating group is analogous.
An (approximate) $t$-design is a distribution on the unitary group whose $t$-th moment operator is close to the $t$-th moment operator of the Haar measure.
One can use various distance measures to define closeness, giving rise to slightly different notions of approximate designs; see \cref{section:relativevsadditive} for more details.
Similarly, $t$-designs on the alternating groups are distributions of permutations 
whose $t$-th moment operator approximates the $t$-th moment operator for uniformly random \textit{even} permutations. For most of our purposes, the distinction between the alternating group and symmetric group is not important as their moments match exactly for all $t\le 2^n-2$ (see~\cref{lemma:equivalenceSandAlt}). 
For historical reasons, $t$-designs on permutation groups are more commonly called \emph{$t$-wise independent permutations},
but in this paper we will call them \emph{permutation $t$-designs}.
The product of $\cO\left(\frac{1}{\Delta}\left(nt + \log \frac 1 \eps \right)\right)$ 
independent samples from a distribution of spectral gap~$\Delta$
is an $\eps$-approximate $t$-design on $n$ bits or qubits.

\paragraph{Random reversible circuits.}
In 1996, Gowers~\cite{gowers1996almost} proved that 
the $t$-th moment operator of random reversible circuits 
has a spectral gap of $1/\mathrm{poly}(n,t)$.
This was subsequently improved by Hoory, Magen, Myers, and Rackoff~\cite{hoory2005simple} 
to $\tilde{\Omega}(1/n^2t^2)$ for $t \leq \cO(2^{n/4})$, 
and shortly after to $\Omega(1/n^2t)$ for all $t\leq 2^n-2$ by Brodsky and Hoory~\cite{brodsky2008simple} 
for randomly drawn DES[2] gates\footnote{Here, a DES[2] gate is any gate implementing the transformation $\{0,1\}^3\to\{0,1\}^3$ via $(x,b)\mapsto (x,b\oplus f(x))$, 
where $x \in \bits^2$ and $f:\{0,1\}^2\to \{0,1\}$ is chosen uniformly at random.}. 
More recently, the gap was improved to $\tilde \Omega(1/nt)$ for all $t\leq 2^n-2$
by He and O'Donnell~\cite{he2024pseudorandom} 
using techniques developed for random quantum circuits and unitary $t$-designs.
This implies that random reversible circuits with $\tilde \cO(n^2 t^2)$ gates form approximate permutation $t$-designs.
In~\cite{he2024pseudorandom}, these gates can be arranged in a brickwork architecture with depth $\tilde \cO(n t^2)$. 
In this paper we do not consider the brickwork architecture for reversible circuits, 
although we expect that our techniques can be used to show a gap of $\tilde{\Omega}(1/n)$ for the brickwork architecture\footnote{The $\Theta(1/n)$ scaling was conjectured in~\cite[Conjecture 1]{feng2024dynamics} and was consistent with numerics~\cite[Figure 6(b)]{feng2024dynamics}.}.

\paragraph{Random quantum circuits.}
For the case of $t = 2$, Harrow and Low~\cite{harrow2009random} proved a gap of $\Omega(1/n)$, 
implying that $\eps$-approximate unitary $2$-designs are generated after $\cO(n(n+\log \frac 1 \varepsilon))$ random gates. 
For more general $t$,
Brown and Viola~\cite{BrownViola2009} calculated a series expansion of the gap in $1/n$ 
and showed that the first term is $\propto 1/n$ independent of~$t$.
Brand\~{a}o, Harrow, and Horodecki~\cite{brandao2016local} 
showed that the gap is inverse polynomial in both $n$ and $t$.
Concretely, for random quantum circuits on $n$ qubits 
they showed that the gap of $t$-th moment operator is $\Omega(1/nt^{9.5})$ for all $t\leq \Theta(2^{2n/5})$.
This was proven using techniques by Nachtergaele~\cite{nachtergaele1996spectral} 
for lowering bound the spectral gap of frustration-free Hamiltonians.
Using a similar proof strategy, Haferkamp~\cite{haferkamp2022random} later improved this bound to $\Omega(1/nt^{4+o(1)})$. 
Most of the improvement in~\cite{haferkamp2022random} 
comes from an improved $t$-independent inverse exponential gap estimate of $\Omega(n^{-5}4^{-n})$, 
which was obtained first for an auxiliary random walk 
that interleaves random Clifford unitaries with single qubit Haar random unitaries
by Haferkamp, Montealegre-Mora, Heinrich, Eisert, Gross, and Roth~\cite{haferkamp2023efficient} 
and then translated to random quantum circuits.
It was speculated in \cite{brandao2016local} that the true gap could be $1/\mathrm{poly}(n)$ for all $t\geq 1$, 
resulting in a design depth of $\mathrm{poly}(n)(n t+\log \frac 1 \varepsilon)$. 
Evidence for an inverse-polynomial and $t$-independent gap 
was provided by Hunter-Jones~\cite{hunter2019unitary}, 
who showed that the gap becomes $\Omega(n^{-1})$ in the limit of large local dimensions. 
Using Knabe bounds for spectral gaps, 
it can be shown that this behavior already sets in once the local dimensions $q$ 
satisfy $q\geq 6t^2$~\cite{haferkamp2021improved}.

Some of the motivation for a $t$-independent gap stems from a close connection to quantum circuit complexity: 
motivated by applications in black hole physics, 
Brown and Susskind~\cite{brown2018second} argued that the circuit complexity 
({\it i.e.},~the size of the smallest possible circuit that approximately implements a given unitary) 
of random quantum circuits should grow at a linear rate for an exponentially long time 
because collisions (circuits accidentally implementing the same unitary) should be rare.
See also Roberts and Yoshida~\cite{roberts2017chaos}.
While intuitive, this sparsity of collisions seems hard to prove directly.

Towards proving the Brown--Susskind conjecture for random quantum circuits, 
\sloppy Haferkamp, Faist, Kothakonda, Eisert, and Yunger Halpern~\cite{haferkamp2022linear}
showed that the \emph{exact} quantum circuit complexity 
in random quantum circuits of depth $d$ grows as $\Omega(d/\mathrm{poly}(n))$ until $d=4^n/\mathrm{poly}(n)$, 
when it saturates; see also Li~\cite{li2022short} for two short proofs of this fact.
As the name suggests, the exact circuit complexity counts the number of gates required to implement a unitary \emph{exactly}.
Unfortunately, this is too brittle a notion of circuit complexity.
A slightly more robust result for a family of random circuits (using a non-universal gate set) was obtained in~\cite{haferkamp2023moments}, 
showing a linear growth up to exponentially small implementation errors in operator norm, 
but this is still not an operational notion of circuit complexity.

For the standard \textit{robust} notion of circuit complexity where larger approximation errors are allowed,
prior work was only able to show a scaling that is worse than linear.
It was shown in \cite{brandao2016local} (see also Brand\~ao, Chemissany, Hunter-Jones, Kueng, and
Preskill~\cite{brandao2021models}) that the circuit complexity of a unitary 
drawn from a unitary $t$-design is $\Omega(t)$ with high probability for $t\leq 2^{\Theta (n)}$ (see~\cref{section:circuitcomplexity}), 
implying that the circuit complexity of random quantum circuits with $L$ gates
must be at least $\Omega(L^{1/11}/\poly(n))$; 
the exponent~$1/11$ can be improved using~\cite{haferkamp2022random}. In addition, Oszmaniec, Kotowski, Horodecki, and Hunter-Jones~\cite{oszmaniec2024saturation} studied the long-time behavior of complexity, showing that complexity saturates at maximal value after exponential time and undergoes recurrences after double exponential time, for both random quantum circuits and stochastic local Hamiltonian evolution. This was also conjectured in~\cite{brown2018second}.

\paragraph{Other constructions of $t$-designs.}

The gap of random quantum circuits is used in multiple constructions 
to prove the efficient generation of designs for other random processes.
This includes stochastic Hamiltonian dynamics~\cite{lashkari2013towards}, 
where the local Hamiltonian randomly fluctuates depending on time~\cite{onorati2017mixing}.
Jian, Bentsen, and Swingle~\cite{Jian2023} argued that certain random time-dependent Hamiltonian evolution
converges to $t$-designs at a linear rate.
See also~\cite{guo2024complexity,tang2024brownian}.
The gap estimate in~\cite{brandao2016local} 
can also be used to show that cluster states randomly measured in the $X$-$Y$ plane 
generate approximate $t$-designs on the last unmeasured column~\cite[Appendix]{haferkamp2022randomness}.
In particular, our result improves the scaling in both of these settings to linear.
Previously, the work of Nakata, Hirche, Koashi, and Winter~\cite{nakata2017efficient} 
showed a linear scaling in $t$ for a family of stochastic Hamiltonians in the regime $t\leq \cO(\sqrt{n})$ 
without embedding random quantum circuits.

Recently, multiple advances were made on efficient constructions of approximate unitary designs 
that do not rely on random quantum circuits.
Haah, Liu, and Tan~\cite{haah2024efficient} 
showed that a circuit ensemble based on ``random Pauli rotations'' $e^{i\theta P/2}$ 
with a uniformly random $n$-qubit Pauli operator $P$ and a uniformly random angle $\theta\in (-\pi,\pi)$, 
has a gap of $1/(4t)+1/(4^n-1)$ for \emph{all} $t\geq 1$. This achieves \textit{multiplicative-error} approximate $t$-designs 
with circuits of depth $\cO(n t^2 \log n)$ using all-to-all gates.
Chen, Docter, Xu, Bouland, and Hayden~\cite{chen2024efficient_sums} 
considered products of two exponentials of random matrix sums, 
giving an overall gate complexity $\tilde{\cO}(n^2 t^2)$ 
for $t \leq 2^{\cO(n/\log n)}$ 
for approximate $t$-designs, 
but only for \textit{additive-errors} 
(which is a weaker notion of approximation that operationally corresponds to parallel applications of the unitary).
Shortly after, \cite{metger2024simple,chen2024efficient} achieved constructions of approximate unitary $t$-designs 
with $\cO(\poly(n)t)$-depth circuits and additive error $\mathrm{poly}(t)/2^n$. The construction from~\cite{metger2024simple} has a structure suitable for our setting and plays an important role in our spectral gap analysis.
In comparison, we show a $\tilde\Omega(1/n)$ gap for $t\leq \Theta(2^{2n/5})$, 
resulting in a multiplicative-error approximate design of circuit depth $\tilde{\cO}(nt)$.
Furthermore, in contrast to~\cite{metger2024simple,chen2024efficient}, 
we show this for random quantum circuits, not specially constructed ones.
We refer to~\cref{section:relativevsadditive} for details on the difference 
between multiplicative and additive errors in $t$-designs.

To our knowledge, when converted to constant multiplicative error, 
the proven upper bound on circuit sizes for all existing constructions of permutation and unitary $t$-designs 
have at least quadratic dependence in $t$. 
We achieve linear dependence in $t$ (even for exponentially small multiplicative error) due to the $t$-independence of our spectral gap.

\paragraph{Note added.} 
During the preparation of this manuscript, we became aware of independent work by Gretta, He, and Pelecanos~\cite{Gretta2024More}, 
who prove bounds for the mixing of random reversible circuits using log-Sobolev inequalities instead of spectral gaps. 
While not mentioned in their paper, their result also implies a linear growth rate for the complexity of random reversible circuits 
(analogous to \cref{cor:lineargrowthreversible}).

Separately, we became aware that the recent work of He and O'Donnell~\cite{he2024pseudorandom} 
proves a similar overlap result~\cite[Lemma 61]{he2024pseudorandom}
as our~\cref{thm:permutationoverlap}, with different techniques.

\subsection{Results: spectral gaps, \texorpdfstring{$t$}{t}-designs, and circuit complexity growth} \label{sec:results_overview}

The spectral gap of a distribution of unitaries is defined in \cref{eqn:spectral_gap_informal}; 
see also \cref{sec:essential_norm} for more details.
Here, we give the formal statements of our bounds on the spectral gap of random reversible and quantum circuits (\cref{thm:main_rev} and \cref{thm:main_quantum}), as well as the consequences of these bounds, including: (1) random quantum circuits with $\tilde \cO(n^2 t)$ gates form $t$-designs and (2) a robust version of the Brown--Susskind conjecture which says that the complexity of random quantum circuits grows linearly for an exponentially long time.
We remark that all the constants in this paper can (in principle) be estimated explicitly.

\subsubsection{Random reversible circuits}

A random reversible circuit on $n$ bits consists of a sequence of elementary gates acting on $3$ bits. 
Each gate is sampled randomly and independently as follows: pick a uniformly random permutation from $\Sym(2^3)$ and apply it to 3 randomly selected bits of the input string.\footnote{Note that even though we sample the local 3-bit permutation uniformly from $\Sym(2^3)$, not $\Alt(2^3)$, 
applying this local permutation to a subset of the $n > 3$ bits 
and leaving all other bits intact results in an even permutation.}
This procedure can be viewed as a random walk, with each step being a randomly sampled gate from the aforementioned distribution denoted as $\nurevalltoall$.
The following theorem compares $\nurevalltoall$ to the uniform measure on $\Alt(2^n)$ (denoted by $\mu(\Alt(2^n))$), establishing a $t$-independent spectral gap of $\Omega(n^{-3})$.

\begin{restatable}[Spectral gap for random reversible circuits]{theorem}{gaprev}
\label{thm:main_rev}
For all integers $n \geq 4$ and $t \geq 1$, we have 
\begin{align}\label{eq:main_reversible}
\norm*{
    \avg_{\pi \sim \nurevalltoall} P(\pi)^{\ot t}
    - 
    \avg_{\pi \sim \mu(\Alt(2^n))} P(\pi)^{\ot t}
}_\infty \leq 1 - \Omega(n^{-3}) \,.
\end{align}
\end{restatable}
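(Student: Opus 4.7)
The plan is the two-stage strategy outlined in the introduction: first exhibit a structured random walk on $\Alt(2^n)$ whose $t$-th moment operator has a $t$-independent spectral gap of order one, and then transfer this gap to the local walk $\nurevalltoall$ while losing only a polynomial factor in~$n$, using tools from the theory of frustration-free Hamiltonians.

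\emph{Stage 1: Structured walk from Kassabov's expander.} I would invoke Kassabov's theorem~\cite{Kassabov_2007_alt}, which provides a bounded-size generating set $S \subset \Alt(2^n)$ whose Cayley graph is an expander of constant spectral gap. Since this bound holds in the left-regular representation and every nontrivial irreducible of $\Alt(2^n)$ appears inside the $t$-th moment representation for $t \leq 2^n - 2$ (by \cref{lemma:equivalenceSandAlt}), the structured moment operator $\E_{s \sim S} P(s)^{\otimes t}$ has spectral gap $\Delta_S = \Omega(1)$ uniformly in $t$, after restricting to the orthogonal complement of the invariant subspace. The key additional ingredient is a constructive synthesis showing that each Kassabov generator $s \in S$ can be realized as a product of $K = \cO(n)$ three-bit reversible gates drawn from $\Sym(2^3)$.

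\emph{Stage 2: Transfer via detectability.} Write the $t$-th moment operator of the local walk as the average
\begin{equation*}
  P_{\mathrm{local}} \;=\; \binom{n}{3}^{-1} \sum_{g} G_g, \qquad G_g \;=\; \E_{\sigma \sim \Sym(2^3)} P(\pi_{g,\sigma})^{\otimes t}.
\end{equation*}
Each $G_g$ is an orthogonal projector, so $P_{\mathrm{local}}$ corresponds to the frustration-free local Hamiltonian $H = \sum_g (I - G_g)$, and the operator norm on the left-hand side of~\eqref{eq:main_reversible} equals $1 - \lambda_{\min}(H)/\binom{n}{3}$ on the non-invariant subspace. Each Kassabov generator is a length-$K$ product of gates, each of which lies in the image of some $G_{g_i}$, so Stage~1 provides an upper bound on the norm of a length-$K$ product of projectors $G_{g_i}$ away from their common invariant subspace. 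I would then apply the converse of the detectability lemma~\cite{gao2015quantum,o2022quantum} to convert this product-norm bound into a lower bound on $\lambda_{\min}(H)$ in terms of $\Delta_S$ and $K = \cO(n)$. Combining these ingredients and carefully accounting for the polynomial factors yields the claimed $\Omega(n^{-3})$ gap.

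\emph{Main obstacle.} The central difficulty lies in Stage~2, because the converse detectability lemma must be applied with all-to-all connectivity --- the local projectors $G_g$ are indexed by arbitrary 3-subsets of $[n]$ rather than forming the one-dimensional geometry of most prior applications. To keep the reduction tight enough to preserve $t$-independence \emph{and} the $n^{-3}$ scaling, one must carefully interleave the $K=\cO(n)$ gates composing each Kassabov generator with the random-triple selection of $\nurevalltoall$, and average over orderings within a structured step. A secondary but nontrivial obstacle is Stage~1's constructive synthesis: Kassabov's generators arise from algebraic expanders on nilpotent-group quotients, and translating their action on $\bits^n$ into an explicit $\cO(n)$-gate reversible circuit requires careful bookkeeping.
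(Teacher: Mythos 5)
Your Stage~1 is correct and matches the paper's \cref{thm:UltimatePermutationDesign} exactly: each Kassabov generator is realized by $\cO(n)$ reversible $3$-bit gates, and this is indeed the content of Section~5. But your Stage~2 takes a different route from the paper, and as written it contains a genuine gap.

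The problematic step is the claim that ``Stage~1 provides an upper bound on the norm of a length-$K$ product of projectors $G_{g_i}$ away from their common invariant subspace.'' Kassabov's expander result bounds the norm of the \emph{average of unitaries} $\E_{s\sim S}\,P(s)^{\ot t}$ on the non-invariant subspace. This does not yield a bound on the \emph{product of projectors} $\prod_i G_{g_i}$, where $G_{g_i} = \E_{\sigma\sim\Sym(2^3)}P(\sigma)^{\ot t}$: a single Kassabov generator $s = h_1\cdots h_K$ is a deterministic circuit, the factors $P(h_i)^{\ot t}$ are unitaries rather than projectors, and neither $P(s)^{\ot t}$ nor $\E_s P(s)^{\ot t}$ factors through the $G_{g_i}$'s. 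So the converse detectability lemma, whose input must be a norm bound on a product of projectors, has nothing to bite on. Without this step, you cannot conclude $\lambda_{\min}(H|_{\text{non-inv}})=\Omega(1)$.

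The paper sidesteps this entirely by passing through \emph{Kazhdan constants}, which are designed exactly for this kind of product decomposition. The short product lemma (\cref{lem:shortproduct}) is the crucial tool: it degrades a Kazhdan constant only \emph{linearly} with the length of the product, because the Kazhdan constant is an existential statement (some generator moves the vector), not an averaging statement (most generators move the vector). Since every Kassabov generator is a product of at most $|S'| = \cO(n)$ elementary gates, one gets $\cK(\Alt(2^n);S') \geq \Omega(1)/\cO(n)$. The conversion to the spectral gap of the convex combination is then done directly by \cref{lem:gapviakazhdan}, which gives $g(\E_{h'\in S'}\mu(H_{h'}),\tau,\Alt(2^n)) \leq 1-\cK(\Alt(2^n);S')^2/(2|S'|)=1-\Omega(n^{-3})$; the three factors of $n^{-1}$ come from the short-product degradation, the square in the Kazhdan-to-gap conversion, and the $1/|S'|$ normalization of the uniform measure. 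The subgroup-enlargement lemma (\cref{fact:gfromoperatorinequality}) replaces each $H_{h'}=\{\one,h'\}$ with the corresponding $\Sym(8)$, and a final averaging over bit-label permutations upgrades the fixed collection $\cI_\mathrm{Kas}$ to $\nurevalltoall$. No detectability lemma appears in this proof; the detectability machinery (\cref{lem:local-vs-parallel}) is instead deployed in Section~4 for the quantum case, where one really does convolve independent local groups.
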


This theorem is stated for random $3$-local permutation gates with all-to-all connectivity.
We can show a similar $1/\poly(n)$ gap for local random reversible circuits on any (connected) circuit layout via~\cref{lem:reversible_arbitrary_arch}.
The spectral gap in \cref{eq:main_reversible} is independent of $t$, but decays like $n^3$.
We can reduce the $n$ dependence to linear at the expense of a $\mathrm{polylog}$ factor in $nt$.

\begin{theorem}[Gap amplification]\label{thm:bootstrappedreversiblegap}
    For all positive integers $t\leq \Theta(2^{n/6.1})$, 
    the left-hand side of \cref{eq:main_reversible} is $\le 1-\Omega( n^{-1}\log(nt)^{-3})$.
\end{theorem}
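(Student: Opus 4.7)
We bootstrap the $\Omega(n^{-3})$ gap of Theorem~\ref{thm:main_rev} into $\Omega(n^{-1}\log^{-3}(nt))$ via a block coarse-graining argument.

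Fix a block size $k = C\log(nt)$ where $C$ is a constant chosen so that the hypothesis $t \le 2^{n/6.1}$ ensures $k \le n$ with enough slack for the estimates below (the constant $6.1$ will be the optimized ratio of $n$ to $k$). Introduce an auxiliary ``coarse'' random walk $\nu_k$ on $\Alt(2^n)$ whose single step samples a uniformly random $k$-subset $S\subseteq[n]$ together with a uniformly random $\pi\in\Alt(2^k)$, and applies $\pi$ to the bits indexed by~$S$. The theorem follows from two ingredients:
\begin{enumerate}
\item[(i)] \emph{Coarse gap:} the walk $\nu_k$ has a $t$-independent spectral gap $\Omega(k/n)$ on its $t$-th moment operator, for all $t\le 2^k-2$.
\item[(ii)] \emph{Approximation:} a single step of $\nu_k$ can be implemented within operator-norm error $1/\poly(nt)$ by composing $L=\polylog(nt)$ gates from the restriction of $\nurevalltoall$ to the chosen $k$-subset $S$.
\end{enumerate}
Step (ii) is a direct consequence of Theorem~\ref{thm:main_rev} applied to the $k$-bit subsystem, which provides a $\Omega(k^{-3})$ spectral gap on the $t$-th moment operator of the restricted walk. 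Combined with the standard convergence estimate for gapped Markov chains, $L = \cO(k^3\log\poly(nt)) = \polylog(nt)$ suffices. The crucial point is that the gap in Theorem~\ref{thm:main_rev} is $t$-independent, so $L$ carries no $t$-dependence beyond the polylog; this is what allows us to avoid paying a factor of $t$ in the final bound.

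Combining (i) and (ii) through a standard spectral comparison of the coarse and fine walks (if $\|M_{\mathrm{coarse}} - M_{\mathrm{fine}}^L\|_\infty \le \Delta_{\mathrm{coarse}}/2$, then $\Delta_{\mathrm{fine}}\gtrsim \Delta_{\mathrm{coarse}}/L$) yields
\begin{align*}
1-\Bigl\|\avg_{\pi\sim\nurevalltoall}P(\pi)^{\otimes t}-\avg_{\pi\sim\mu(\Alt(2^n))}P(\pi)^{\otimes t}\Bigr\|_\infty \;\gtrsim\; \frac{\Omega(k/n)}{\polylog(nt)} \;=\; \Omega\!\left(\frac{1}{n\log^3(nt)}\right),
\end{align*}
which is the claimed bound.

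\emph{Main obstacle.} The technical heart of the argument is step (i): proving the $t$-independent $\Omega(k/n)$ spectral gap for the coarse walk $\nu_k$. Combinatorial mixing intuition at the bit-string level (each coarse step uniformizes $k$ out of $n$ bits, so $O(n/k)$ steps mix) points to the right scaling, but making this rigorous on the $t$-th moment operator requires more work. A natural strategy is to rewrite $I - M_{\nu_k,t}$ as a frustration-free Hamiltonian whose local terms are indexed by $k$-subsets and then apply a Nachtergaele-style patching together with the representation-theoretic input that for $t\le 2^k-2$ the block $\Alt(2^k)$-invariant subspace agrees with the restriction of the global $\Alt(2^n)$-invariant subspace (cf.~\cref{lemma:equivalenceSandAlt}). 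The restriction $t\le 2^{n/6.1}$ enters precisely here: the block size $k=C\log(nt)$ must simultaneously satisfy $k\le n$ (for the coarse walk to live inside $\Alt(2^n)$) and $k\ge \log_2(t+2)$ (for the block-to-global invariant-subspace matching to hold), and optimizing the two gives the stated threshold.
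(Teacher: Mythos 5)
Your proposal takes a genuinely different route from the paper, but it leaves the main technical burden unresolved, so as written it is incomplete.

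\paragraph{How the paper actually does it.}
The paper's proof (\cref{lemma:bootstrapingforpermutations}) is a \emph{one-bit-at-a-time} recursion, not a jump to a $\Theta(\log(nt))$-bit block. It introduces the auxiliary measure
$\beta = \frac1n\sum_{i=1}^n \mu(\Alt(2^{n-1})_{[n]\setminus\{i\}})$ — Haar on the alternating group of $n-1$ bits, omitting one bit uniformly at random — and proves two facts:
(a) $\Delta(n) \ge \delta(n)\Delta(n-1)$ where $\delta(n) = 1 - g(\beta,\tau)$ (\cref{lemma:recursion}), via the commutativity $P_T Q_i = Q_i$ for $i \notin T$;
(b) $g(\beta,\tau) \le \tfrac1n + \cO(t^3 2^{-0.495n})$ (\cref{lemma:gapboundonaux}), via the inequality $\gamma^2 \le \gamma/n + \eta$ with $\eta = \norm{Q_1 Q_n - M(\mu_{\Alt(2^n)},t)}$, which is bounded by the permutation overlap theorem (\cref{thm:permutationoverlap}).
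Telescoping (a) with (b) from $n$ down to $n_0 = \lceil 11\ln(nt)\rceil$ and then invoking \cref{thm:main_rev} at $n_0$ gives the result. The cubic power of $\log(nt)$ comes entirely from $\Delta(n_0) = \Omega(n_0^{-3})$, and the cutoff $t \le \Theta(2^{n/6.1})$ is dictated by the cubic $t$-dependence in the overlap theorem error $\cO(t^3 2^{-0.495 n_0})$ together with the requirement $n_0 \le n$.

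\paragraph{Where your proposal breaks down.}
The heart of your argument is step (i): that the coarse walk $\nu_k$ (random $k$-subset, Haar on $\Alt(2^k)$) has a $t$-independent spectral gap $\Omega(k/n)$ for all $t\le 2^k-2$. This is not proved; you acknowledge it is ``the technical heart'' and point vaguely at ``Nachtergaele-style patching.'' Two problems. First, Nachtergaele's martingale method is designed for one-dimensional frustration-free chains; the paper itself only raises it as a \emph{speculative remark} for the $1$D brickwork layout, and for all-to-all connectivity the paper instead uses the recursion of \cref{lemma:recursion} (for the quantum analogue, the all-to-all bootstrapping in \cref{lemma:HHJreduction} likewise proceeds by a size recursion, not Nachtergaele). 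Second, and more fundamentally, proving the $\Omega(k/n)$ gap for arbitrary $k$ is essentially as hard as the theorem itself: taking $k=3$ (and $t\le 6$) your step (i) \emph{is} the statement that $\nurevalltoall$ has gap $\Omega(1/n)$, which is precisely what we set out to prove and which is strictly stronger than \cref{thm:main_rev}. So step (i) is not a reduction to an easier problem. The argument you are missing is exactly the permutation overlap theorem: it is what lets the paper bound the gap of the $(n-1)$-bit auxiliary walk $\beta$ \emph{without} knowing anything about the gap of $\nurevalltoall$ itself, breaking the circularity.

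\paragraph{Secondary issues.}
Your step (ii) comparison is also not a clean operator-norm argument. The walk ``pick a random $k$-subset $S$, then do $L$ steps of $\nurevalltoall(S)$'' is not $\nurevalltoall^{*L}$, and the displayed inequality $\Delta_{\mathrm{fine}}\gtrsim\Delta_{\mathrm{coarse}}/L$ needs to be applied to the right pair of moment operators. In fact there is a cleaner route available here via an operator inequality (if $\Delta_k$ is the \cref{thm:main_rev} gap on $k$ bits, then $\one - M(\nurevalltoall,t) \succeq \Delta_k\,(\one - M(\nu_k,t))$ since $\avg_{T\subseteq S}(\one - P_T) \succeq \Delta_k(\one - Q_S)$ for each $S$), which would avoid the $L$-step discretization entirely — but all of this is moot without step (i). Finally, your explanation of the $6.1$ constant (``$k \le n$ versus $k\ge\log_2(t+2)$'') is off: the true source is the cubic $t$-dependence of the overlap error, which your proposal never encounters because it never uses \cref{thm:permutationoverlap}.

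\paragraph{Verdict.}
The block-coarse-graining shape is a reasonable intuition for why a polylog-bit patch should control the gap, but as written the proposal presupposes (in step (i)) a $t$-independent $\Omega(k/n)$ gap that it does not and cannot easily prove, and it omits the single key technical input — the permutation overlap theorem — that makes the paper's recursion close. You should either (a) switch to the paper's one-bit-at-a-time recursion and use \cref{thm:permutationoverlap} to bound $\delta(n)$, or (b) supply an actual proof of your step (i), which will almost certainly end up needing the overlap theorem anyway.
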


The reduction from \cref{thm:main_rev} to \cref{thm:bootstrappedreversiblegap} 
relies on an ``overlap theorem'' for random permutations (\cref{thm:permutationoverlap}), 
which shows that the product of two random permutations on two different (but overlapping) subsets of bits 
approximates a random permutation on all of the bits.
An analog of this overlap lemma for random unitaries was previously known~\cite{brandao2016local};
our permutation overlap lemma may be of independent interest.

One important consequence of our $t$-independent spectral gap result 
is that random reversible circuits form an approximate permutation $t$-design with respect to multiplicative error (see~\cref{def:permutation_designs})
after only $\tilde \cO(n^2 t)$ gates. 

\begin{corollary}[Approximate permutation $t$-design in linear depth]
    For all integers $n\geq 4$ and $t\leq \Theta(2^{n/6.1})$,
    random reversible circuits on $n$ bits with $\cO(n(nt+\log \frac 1 \eps)\log^{3}(nt))$ random gates 
    are approximate permutation $t$-designs with multiplicative error $\eps$.
\end{corollary}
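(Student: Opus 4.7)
The plan is to combine the improved spectral gap of \cref{thm:bootstrappedreversiblegap} with the standard operator-norm-to-design conversion, with the main subtlety being that achieving \emph{multiplicative} (rather than additive) error pays an extra $\log(\dim)$ factor that dictates the $nt$ term in the gate count.

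First, let $\nu_L$ denote the distribution over $\Alt(2^n)$ obtained by composing $L$ independent samples from $\nurevalltoall$, and set $M \deq \E_{\pi\sim\nurevalltoall} P(\pi)^{\ot t}$ and $\Pi \deq \E_{\pi\sim\mu(\Alt(2^n))} P(\pi)^{\ot t}$, so that $\E_{\pi\sim\nu_L} P(\pi)^{\ot t} = M^L$. Since $\Pi$ is the orthogonal projector onto the $\Alt(2^n)$-invariant subspace of $(\CC^{2^n})^{\ot t}$, every $\pi \in \supp(\nurevalltoall) \subseteq \Alt(2^n)$ acts as the identity on this subspace and preserves its orthogonal complement; hence $M\Pi = \Pi M = \Pi$ and a short induction yields $M^L - \Pi = (M-\Pi)^L$. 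Combined with \cref{thm:bootstrappedreversiblegap}, this gives
\begin{align*}
    \norm*{M^L - \Pi}_\infty \;\le\; (1-\Delta)^L \;\le\; e^{-\Delta L}, \qquad \Delta = \Omega\bigl(n^{-1}\log^{-3}(nt)\bigr).
\end{align*}

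Next I would convert this operator-norm bound into a pointwise multiplicative bound on the marginal distribution of $(\pi(x_1),\ldots,\pi(x_t))$. For any tuples of distinct inputs $\vx = (x_1,\ldots,x_t)$ and distinct outputs $\vy = (y_1,\ldots,y_t)$, one has $\bra{\vy} M^L \ket{\vx} = \prs{\pi\sim\nu_L}{\pi(x_i)=y_i \;\forall i}$, and since $t \leq \Theta(2^{n/6.1}) \leq 2^n - 2$, \cref{lemma:equivalenceSandAlt} gives $\bra{\vy}\Pi\ket{\vx} = (2^n-t)!/(2^n)! \ge 2^{-nt}$. The operator-norm inequality $|\bra{\vy}(M^L-\Pi)\ket{\vx}| \leq e^{-\Delta L}$ therefore yields
\begin{align*}
    \frac{\bigl|\prs{\pi\sim\nu_L}{\pi(\vx)=\vy} - \prs{\pi\sim\mu(\Alt)}{\pi(\vx)=\vy}\bigr|}{\prs{\pi\sim\mu(\Alt)}{\pi(\vx)=\vy}} \;\leq\; e^{-\Delta L}\cdot 2^{nt}.
\end{align*}
To make the right-hand side at most $\eps$, it suffices to take $L \geq \Delta^{-1}(nt\log 2 + \log(1/\eps)) = \cO\bigl(n(nt+\log\tfrac{1}{\eps})\log^{3}(nt)\bigr)$. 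Since \cref{def:permutation_designs} constrains only distinct-input tuples---inputs with repeats map to outputs with the identical repeat structure under any permutation, deterministically---this completes the derivation.

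The proof is essentially a packaging exercise on top of \cref{thm:bootstrappedreversiblegap}, so I do not anticipate any substantial obstacle. The only point demanding attention is the dimensional factor $2^{nt}$ paid when converting an operator-norm bound into a pointwise multiplicative bound on probabilities as small as $(2^n)^{-t}$; this is precisely what forces the $nt$ term in $L$ on top of the $\log(1/\eps)$ that mere multiplicative closeness of a single matrix element would demand.
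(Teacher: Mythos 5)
Your proof is correct and takes essentially the same route as the paper: plug the spectral gap $\Delta = \Omega(n^{-1}\log^{-3}(nt))$ of \cref{thm:bootstrappedreversiblegap} into the gap-to-design conversion (\cref{lemma:fromgtopermutationdesign}), which the paper sketches tersely and you unpack into the explicit operator-norm-to-matrix-element argument, correctly identifying the $2^{nt}$ dimensional factor as what forces the $nt$ term in $L$.
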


We also remark that for structured rather than random circuits, we can construct approximate permutation $t$-designs in depth $\cO(nt)$ (without any $\log$-factors), which might be of independent interest (see~\cref{sec:nt_depth_permutations}).

\subsubsection{Random quantum circuits}

We focus on two models of random quantum circuits, defined as follows.

\begin{definition}[Random quantum circuits] \label{def:rqc}
~
We define two distributions on $\SU(2^n)$: 
\begin{itemize}
    \item \emph{$k$-local all-to-all random quantum circuits}: $\nu_{k,\mathrm{All}\to \mathrm{All},n}$ is defined by first picking a subset $S\subseteq [n]$ with $|S|=k$ uniformly at random and then applying a Haar random unitary $U_S\in \SU(2^k)$ to the qubits in $S$.

    \item \emph{Brickwork random quantum circuits}: Assume for simplicity that $n$ is even.
    $\nu_{\mathrm{BRQC},n}$ is defined by first applying a unitary $U_{1,2}\otimes U_{3,4}\otimes \cdots \otimes U_{n-1,n}$ and then a unitary $U_{2,3}\otimes \cdots \otimes U_{n,1}$, where each $U_{i,i+1}$ is drawn independently from the Haar measure on $\SU(4)$.
    Here, we identify $n+1$ with $1$.
\end{itemize}
We refer to $\nu_{\mathrm{BRQC},n}^{*k}$ as a brickwork random quantum circuit of depth $k$ (see \cref{fig:brqc} for an illustration).\footnote{For two probability measures $\nu_1$ and $\nu_2$ on a group, $\nu_1 * \nu_2$ is their \emph{convolution},
the probability distribution of the product $U_1 U_2$ for $U_1 \sim \nu_1$ and $U_2 \sim \nu_2$. 
The notation $\nu^{*k}$ means the $k$-fold convolution of $\nu$ with itself.}
We will often drop the subscript $n$ for simplicity.
\end{definition}

Brickwork random quantum circuits are also sometimes called parallel random quantum circuits.
If we interpret the probability measures above in the context of random walks, then one step of $\nu_{k,\mathrm{All}\to \mathrm{All},n}$ corresponds to applying just one quantum gate, whereas one step of $\nu_{\mathrm{BRQC},n}$ corresponds to applying a total of $n$ gates.

Brickwork circuits are a natural model for quantum devices with $1$D locality constraints on the qubits, 
so we will usually state our main results for this model of random quantum circuits.
However, our analysis can be generalized to arbitrary architectures via~\cref{lem:quantum_arbitrary_arch}, resolving a conjecture from~\cite{mittal2023local}.
The key idea in the proof is to replace the qubit permutation with a sequence of SWAP gates drawn uniformly at random corresponding to the connectivity of qubits. 
We then analyze this process using some well-known results on the eigenvalues of Cayley graphs on the symmetric group generated by transpositions. 

\begin{figure}
    \centering
    \includegraphics[width=0.75\textwidth]{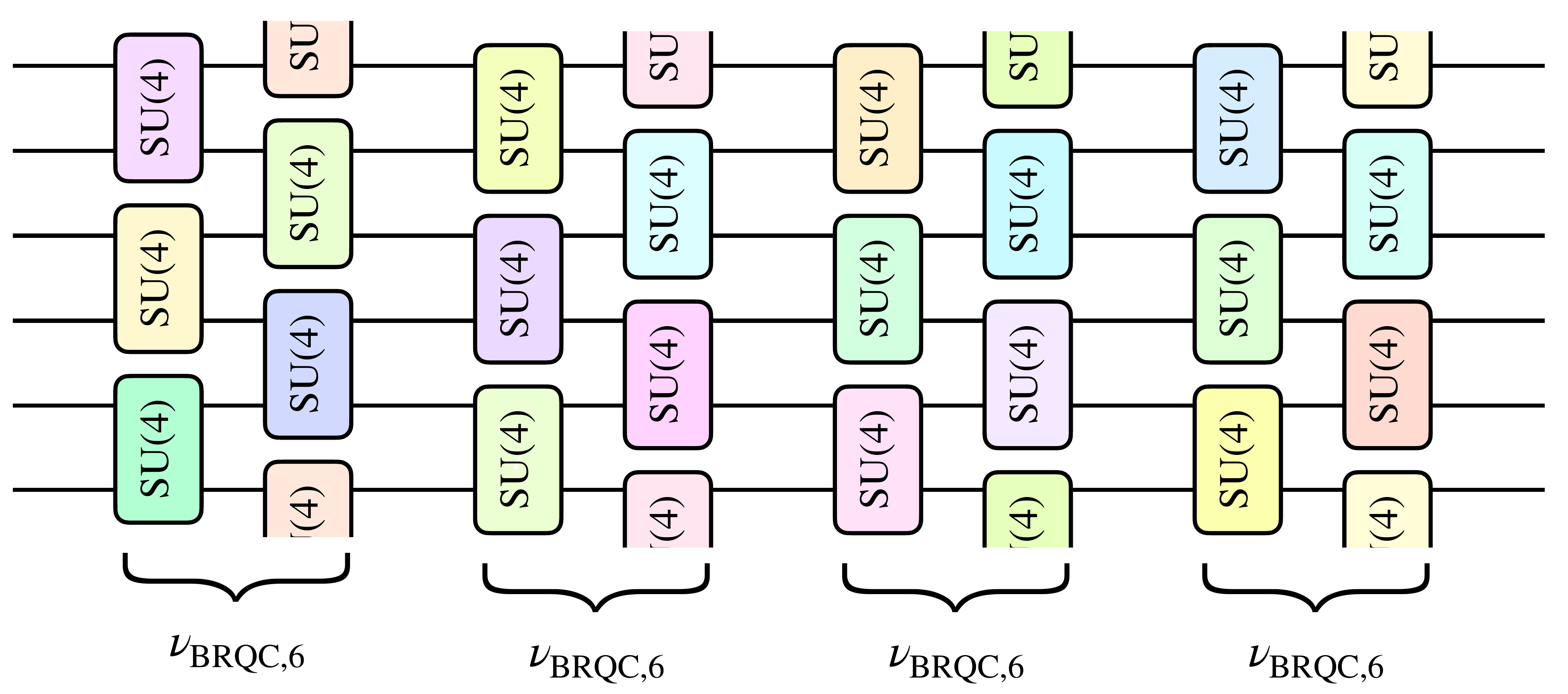}
    \caption{A depth-4 brickwork random quantum circuit, described by the probability measure $\nu_{\mathrm{BRQC},6}^{*4}$}
    \label{fig:brqc}
\end{figure}

Our main result is that both models of random quantum circuits have $t$-independent spectral gaps:
\begin{restatable}[Spectral gap for random quantum circuits]{theorem}{gaprqc}
\label{thm:main_quantum}
For all integers $n\geq 2$ and $t \leq \Theta(2^{n/2})$, we have 
\begin{align}
\norm*{\avg_{U \sim \nu_{\mathrm{2,\mathrm{All}\to\mathrm{All}},n}} U^{\ot t} \ot \overline U^{\ot t} - \avg_{U \sim \mu(\SU(2^n))} U^{\ot t} \ot \overline U^{\ot t}}_\infty &\leq 1 - \Omega(n^{-3})\,, \label{eq:main_quantum_all} \\
\norm*{\avg_{U \sim \nu_{\mathrm{BRQC},n}} U^{\ot t} \ot \overline U^{\ot t} - \avg_{U \sim \mu(\SU(2^n))} U^{\ot t} \ot \overline U^{\ot t}}_\infty &\leq 1 - \Omega(n^{-5}/\log n)\,. \label{eq:main_quantum_brickwork}
\end{align}
\end{restatable}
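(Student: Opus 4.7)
I aim to bound the spectral gap of both quantum random walks by reducing to a structured ensemble whose gap is easy to control, and then transferring the bound to a walk over local random gates, following the two-step template outlined in the introduction.

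\textbf{Step 1: A structured ensemble with $t$-independent gap.} Following the PFC paradigm of Metger--Poremba--Sinha--Yuen, I consider an ensemble of the form $\nu_{\mathrm{struct}} = P \cdot C \cdot F$, where $P$ is a uniformly random $n$-qubit Pauli, $C$ is drawn from an approximate Clifford $2$-design, and $F$ acts in the computational basis as an element of $\Alt(2^n)$. In the original PFC ensemble $F$ is a uniform random function, and the analysis shows that the ensemble is an approximate $t$-design up to $t = \Theta(2^{n/2})$. My modification takes $F$ to be a product of $\cO(n)$ Kassabov generators of $\Alt(2^n)$, and I show that each Kassabov generator admits a decomposition into $\cO(n)$ three-bit reversible gates. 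Combining the constant gap from Kassabov's expander, Theorem \ref{thm:main_rev}, and the PFC moment analysis then yields an $\Omega(1)$ spectral gap for $\nu_{\mathrm{struct}}$ in the regime $t \leq \Theta(2^{n/2})$, with each big step of the walk expressed as a product of $\poly(n)$ structured pieces.

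\textbf{Step 2: Transferring the gap to local random circuits.} The moment operator of a random circuit walk can be written as $\Id - \bar H$ for a local frustration-free positive semidefinite operator $\bar H$, so the spectral gap question becomes a ground-state gap problem for a frustration-free ``Hamiltonian''. To compare the local random walk to $\nu_{\mathrm{struct}}$, I apply the detectability lemma \cite{aharonov2009detectability,anshu2016simple} together with its converse \cite{gao2015quantum,o2022quantum}: the structured walk lets me build an upper bound on the ground-state projector from $\poly(n)$ local terms that agree with the local terms of the random circuit up to a uniform multiplicative factor, while the detectability converse turns this comparison into a lower bound on the gap of the local walk. The losses accumulate as $\Omega(n^{-3})$ for all-to-all two-local circuits, yielding \eqref{eq:main_quantum_all}. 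For the brickwork architecture I further compose with SWAP routing; the effective gap loss is controlled by the spectral gap of the Cayley graph on $\Sym(n)$ generated by nearest-neighbor transpositions (which scales as $1/(n^2 \log n)$), giving \eqref{eq:main_quantum_brickwork}.

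\textbf{Expected main obstacle.} The delicate part is keeping Step 2 free of $t$-dependence: previous detectability-lemma analyses of local random circuits (Brand\~ao--Harrow--Horodecki, Haferkamp) pay a $\poly(t)$ factor because the underlying comparison walk only has a $1/\poly(t)$ gap. Our Step 1 provides a walk with a gap that is $t$-independent on the full $t$-th-moment Hilbert space throughout $t \leq \Theta(2^{n/2})$, so the detectability conversion only needs to pay in $n$. Organising the local terms of $\nu_{\mathrm{struct}}$ so that they dominate (rather than just compare to) those of the local random walk---uniformly in $t$---is where most of the work goes, and is what makes the Kassabov-based reduction to $\cO(n)$ reversible gates essential.
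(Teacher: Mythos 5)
Your high-level two-step plan --- establish a $t$-independent spectral gap for a structured ensemble built from PFC and Kassabov generators, then transfer it to local random circuits via the detectability lemma and its converse --- matches the paper's strategy, and your Step 2 description (including the SWAP routing for brickwork) is essentially the paper's \cref{lemma:reductiongeneraltorqc} and \cref{lemma:reductionfromalltoalltolocal}. However, Step 1 as you describe it contains genuine errors that would block the argument.

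You mislabel and in fact internally contradict the roles in the PFC ensemble. The letters stand for Permutation, Function (a random $\pm1$ diagonal phase), and Clifford; you first state $P$ is a uniformly random $n$-qubit Pauli (it is a random permutation of computational-basis states, a crucial distinction --- a Pauli twirl cannot supply the combinatorial structure PFC needs), then correctly recall that $F$ is a random function, but then propose to replace $F$ by a product of Kassabov generators, which is a permutation and not a diagonal phase. The paper instead replaces the \emph{permutation} $P$ with a short reversible circuit built from Kassabov generators (\cref{lemma:Sgateisgapped}), while the phase $F$ requires a separate idea: it is simulated by conjugating a single-qubit $Z$ by a random permutation, the $\mathrm{P}Z\mathrm{P}^{-1}$ trick of \cref{lemma:gapforPZP} and \cref{cor:gap_CPZPC}. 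Without that trick there is no way to realize the phase part with reversible gates, and your proposed walk would not actually be expressible as $\poly(n)$ structured pieces of the kind Step 2 requires.

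Separately, your proposed ensemble $P\cdot C\cdot F$ is not self-adjoint, and this matters for getting a spectral-gap bound rather than a diamond-norm bound. The PFC analysis only controls the $1\to 1$ (diamond) norm; the essential norm in \eqref{eq:main_quantum_all} is a $2\to 2$ norm. The paper's conversion (\cref{cor:AdditiveToMultiplicativeError}, via noncommutative Riesz--Thorin and duality) needs the twirling superoperator to be self-adjoint, which is arranged by symmetrizing the ensemble to $\mathrm{CPFPC}$ and then $\mathrm{CPZPC}$. An asymmetric $P\cdot C\cdot F$ leaves you with an additive-error bound, not the operator-norm bound the theorem asserts. Finally, the nearest-neighbor transposition Cayley graph on $\Sym(n)$ has spectral gap $\Theta(n^{-2})$, not $\Theta(n^{-2}(\log n)^{-1})$; the extra $\log n$ in \eqref{eq:main_quantum_brickwork} arises because one needs $\cO(n^2\log(1/\delta))$ SWAP layers (with $\delta=\Theta(n^{-3})$) to $\delta$-approximate a uniformly random qubit permutation, as in the proof of \cref{lemma:reductionfromalltoalltolocal}, not from the Cayley graph gap itself.
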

\noindent

Using known bootstrapping methods for the gaps~\cite{brandao2016local,haferkamp2021improved}
together with the bound in~\cref{thm:main_quantum} readily implies the following corollary.
\begin{corollary}[Gap amplification]\label{corollary:main}
    For all positive integers $t\leq \Theta(2^{2n/5})$, 
    the left-hand side of~\cref{eq:main_quantum_all} is $\leq 1-\Omega(n^{-1}(\log t)^{-4})$ 
    and the left-hand side of~\cref{eq:main_quantum_brickwork} is $\leq 1-\Omega((\log t)^{-7})$.
\end{corollary}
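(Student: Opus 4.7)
The plan is to invoke the standard bootstrapping machinery of Brand\~ao--Harrow--Horodecki~\cite{brandao2016local}, together with its refinements in Haferkamp~\cite{haferkamp2022random} and Haferkamp~\cite{haferkamp2021improved}, and apply it to the $t$-independent gap established in~\cref{thm:main_quantum}. These bootstrapping lemmas take as input a $t$-independent spectral gap on a small block of $m$ qubits and output an improved gap on the full $n$-qubit system; the point is that \cref{thm:main_quantum} supplies precisely such a small-block input.

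Concretely, I would choose the block size $m = \lceil 2\log_2 t \rceil + O(1)$, so that $t \leq \Theta(2^{m/2})$ and \cref{thm:main_quantum} yields a gap $\Delta_m = \Omega(m^{-3}) = \Omega((\log t)^{-3})$ for the all-to-all random walk on the $m$-qubit block at the $t$-th moment. The bootstrapping lemma (see, e.g., \cite[Lemma 16]{brandao2016local} and the refinement in~\cite{haferkamp2021improved}) then relates the gap $\Delta_n$ of the $n$-qubit random walk to $\Delta_m$ via an inequality of the form $\Delta_n \geq c(n,m)\,\Delta_m$, where $c(n,m)$ is $\Omega(1/n)$ up to $\polylog(m)$ corrections for the all-to-all model and $\Omega(1)$ up to $\polylog(m)$ corrections for the brickwork model. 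The latter reflects the fact that a single brickwork layer already applies $\Theta(n)$ gates in parallel, so no global $1/n$ factor appears. Substituting $\Delta_m = \Omega((\log t)^{-3})$ and $m = \Theta(\log t)$ yields the advertised bounds $\Omega(n^{-1}(\log t)^{-4})$ and $\Omega((\log t)^{-7})$; the extra powers of $\log t$ come from the $\polylog(m)$ overhead in the bootstrapping inequality, with the larger exponent for brickwork reflecting an additional iterated detectability-lemma step needed to convert an all-to-all block gap into a brickwork gap~\cite{aharonov2009detectability,anshu2016simple}.

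The restriction $t \leq \Theta(2^{2n/5})$ (rather than $t \leq \Theta(2^{n/2})$ as in \cref{thm:main_quantum}) comes from the requirement in the bootstrapping lemma that the block size $m$ be bounded by a fixed fraction of $n$; together with $m = \Theta(\log t)$ this translates into $\log t \leq \alpha n$ for some $\alpha < 1/2$, which evaluates to $2/5$ with the constants appearing in~\cite{brandao2016local,haferkamp2021improved}. The main technical obstacle is bookkeeping: one must verify that the bootstrapping lemmas, originally stated in slightly different contexts (and in the brickwork case requiring iterated overlap / detectability steps), apply to our setting without introducing any sub-optimal loss, and that the $\polylog(t)$ exponents $4$ and $7$ are the correct outcome of tracking the constants through the iterated overlap arguments. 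Since no new ideas are needed beyond the inputs in~\cref{thm:main_quantum}, this is essentially a routine (if careful) application of existing machinery.
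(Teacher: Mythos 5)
Your proposal matches the paper's approach exactly: the paper also derives this corollary by plugging the bound of \cref{thm:main_quantum} on a block of $m=\cO(\log t)$ qubits into the bootstrapping reductions stated as \cref{lemma:bhhreduction} (the BHH martingale/Nachtergaele argument, giving the $1/(n\,\mathrm{polylog}\,t)$ form for the local/brickwork case) and \cref{lemma:HHJreduction} (the HHJ recursion for all-to-all). The paper's own proof is one line of the same form, and your identification of the role of the block size, the source of the $t\leq\Theta(2^{2n/5})$ restriction (inherited from BHH), and the origin of the extra $\polylog(t)$ powers in the bootstrapping overhead are all consistent with the paper's argument.
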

Further, the gap estimates in~\Cref{thm:main_quantum} and~\Cref{corollary:main} lead to multiplicative-error approximate unitary $t$-designs with optimal $t$-dependence up to polylog-factors (see~\cref{section:relativevsadditive} for the precise definitions). 

\begin{corollary}[Random quantum circuits are linear unitary $t$-designs]\label{cor:designbounds}
For all integers $n\geq 2$ and $t\leq \Theta(2^{2n/5})$,
\begin{itemize}
    \item $2$-local all-to-all random quantum circuits with $L=\cO(n(nt+\log \frac 1 \varepsilon)(\log t)^4)$ random gates and 
    \item brickwork random quantum circuits of depth $D=\cO((nt+\log \frac 1 \varepsilon)(\log t)^7)$
\end{itemize}
form approximate unitary $t$-designs with multiplicative error $\varepsilon$. Note that both bounds can be made $\cO(\poly(n) \, t)$ ({\it i.e.}, exactly linear in $t$) by inserting $t \leq \Theta(2^{2n/5})$.
\end{corollary}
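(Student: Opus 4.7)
The plan is to derive \cref{cor:designbounds} directly from the spectral gap estimates in \cref{corollary:main} via the standard conversion between spectral gaps and multiplicative-error design depth. The key algebraic observation is that the Haar moment operator $M_{\mathrm{Haar}} \deq \avg_{U\sim\mu(\SU(2^n))} U^{\otimes t}\otimes\overline{U}^{\otimes t}$ is the orthogonal projector onto the $U^{\otimes t}\otimes \overline{U}^{\otimes t}$-invariant subspace, so it satisfies $M_\nu M_{\mathrm{Haar}}=M_{\mathrm{Haar}}M_\nu=M_{\mathrm{Haar}}$ for every step-measure $\nu$. Consequently, a telescoping/induction on the convolution identity $M_\nu M_\mu=M_{\nu*\mu}$ yields
\[
M_{\nu^{*k}}-M_{\mathrm{Haar}} \;=\; (M_\nu-M_{\mathrm{Haar}})^k,
\]
and therefore $\norm{M_{\nu^{*k}}-M_{\mathrm{Haar}}}_\infty \le (1-\Delta)^k \le e^{-k\Delta}$, where $\Delta$ is the one-step spectral gap from \eqref{eqn:spectral_gap_informal}.

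Next, I would invoke the criterion recalled in \cref{section:relativevsadditive}: a distribution whose $t$-th moment operator lies within operator-norm distance $\varepsilon\cdot 2^{-cnt}$ of $M_{\mathrm{Haar}}$ (for an appropriate absolute constant $c$) is a multiplicative $\varepsilon$-approximate unitary $t$-design. This dimension factor arises because the $t$-th moment operator acts on a Hilbert space of dimension $2^{\mathcal{O}(nt)}$, and converting operator-norm closeness to closeness as a completely positive map (which is what multiplicative error demands) requires paying this dimension factor. Solving $e^{-k\Delta}\le \varepsilon\cdot 2^{-cnt}$ for $k$ gives the standard relation $k=\mathcal{O}\!\bigl(\Delta^{-1}(nt+\log(1/\varepsilon))\bigr)$.

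Plugging in the bounds from \cref{corollary:main} finishes the proof. For the all-to-all model, each gate contributes a gap of $\Delta=\Omega(n^{-1}(\log t)^{-4})$, yielding $L=\mathcal{O}(n(\log t)^4(nt+\log(1/\varepsilon)))$ random gates. For the brickwork model, each layer contributes a gap of $\Delta=\Omega((\log t)^{-7})$, yielding depth $D=\mathcal{O}((\log t)^7(nt+\log(1/\varepsilon)))$. The claim that both bounds are $\mathcal{O}(\poly(n)\,t)$ in the stated regime follows because $t\le\Theta(2^{2n/5})$ forces $\log t=\mathcal{O}(n)$, so the polylog factors absorb into the polynomial prefactor in $n$.

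There is no serious obstacle here: the identity $M_{\nu^{*k}}-M_{\mathrm{Haar}}=(M_\nu-M_{\mathrm{Haar}})^k$ and the spectral-gap-to-design-depth conversion are textbook-level consequences of the projector structure of $M_{\mathrm{Haar}}$, so the corollary is essentially a bookkeeping consequence of \cref{corollary:main}. The only point requiring care is the dimension factor in the multiplicative-error criterion, which is precisely what produces the additive $nt$ term (rather than merely $\log(1/\varepsilon)$) in the final gate count and matches the form of the stated bounds.
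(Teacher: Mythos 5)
Your proposal is correct and follows essentially the same route as the paper: it combines the amplified spectral gaps of \cref{corollary:main} with the gap-to-design conversion of \cref{lemma:fromgtounitarydesign} (which rests on exactly the identity $M_{\nu^{*k}}-M_{\mathrm{Haar}}=(M_\nu-M_{\mathrm{Haar}})^k$ from \cref{rem:conv} and the dimension-factor criterion from~\cite[Lemma~4]{brandao2016local}). The paper's proof is just the one-line invocation of those two ingredients, so your account is a faithful unpacking of it.
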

It was observed in~\cite{brandao2016local} that~\cref{corollary:main} together with~\cite{bourgain2012spectral} implies a similar design depth for random quantum circuits with gates drawn from any universal gate set that contains inverses and algebraic matrix entries, although in this case we can no longer obtain explicit constants for the bounds in \cref{thm:main_quantum}.
Moreover, at the expense of some additional $\mathrm{polylog}(t)$ factor, we can obtain a similar design depth for gates drawn from any (fixed) universal probability distribution on $\SU(4)$ using~\cite{oszmaniec2021epsilon,varju2012random}.

\subsubsection{Circuit complexity growth}
For $\pi\in \Alt(2^n)$, denote by $C_R(\pi)$ the minimum number of $3$-bit reversible gates required to implement $\pi$.\footnote{Note that an odd permutation on $n$ bits with $n \ge 4$ cannot be implemented by a circuit of $3$-bit reversible gates unless one uses an ancilla bit.}
We show that random reversible circuits are not compressible (by more than a polynomial multiplicative factor in the system size $n$) up to an exponential depth.
\begin{restatable}[Linear growth of reversible circuit complexity]{corollary}{revlineargrowth}
\label{cor:lineargrowthreversible}
    Let $\pi$ be a random reversible circuit on $n$ bits with $L\leq \cO(2^{n})$ gates. 
    The reversible circuit complexity $C_R(\pi)$ must satisfy
    \begin{equation}
        C_R(\pi)\geq \Omega\left(\frac{L}{n^3\log n}\right),
    \end{equation}
with probability at least $1-2^{-\Omega(L/n^3)}$ over the choice of $\pi$.
\end{restatable}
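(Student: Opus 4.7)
The plan is a counting-versus-anti-concentration argument, enabled by the $t$-independence of the spectral gap in \cref{thm:main_rev}. A single $3$-bit reversible gate is described by a choice of $3$ bit positions ($\le \binom{n}{3}$ options) and an element of $\Sym(2^3)$ ($8!$ options), so the set $S_m \subseteq \Alt(2^n)$ of permutations realisable by at most $m$ such gates has size $|S_m| \le (C_0 n^3)^m$ for an absolute constant $C_0$ (and lies in $\Alt(2^n)$ by the footnote in \cref{def:rqc}'s reversible analog, since $n \ge 4$). It therefore suffices to show that after $L$ steps of the random walk $\nu_L := \nurevalltoall^{*L}$, every fixed permutation $\sigma$ is hit with probability at most $2^{-\Omega(L/n^3)} / |S_m|$, and then union-bound over $S_m$.

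To obtain such a pointwise anti-concentration, I would compare the $t$-th moment operators. Because $T_{\mu^\star}^{(t)} := \E_{\pi \sim \mu(\Alt(2^n))} P(\pi)^{\ot t}$ is an orthogonal projection and convolution on the group corresponds to multiplication of moment operators, \cref{thm:main_rev} gives
\[
\lnorm{T_{\nu_L}^{(t)} - T_{\mu^\star}^{(t)}}_\infty \;\le\; (1 - c n^{-3})^L \;\le\; e^{-cL/n^3} \qquad \text{for every } t \ge 1.
\]
For any $\sigma \in \Alt(2^n)$ and any tuple $\vec x = (x_1, \ldots, x_t)$ of distinct $n$-bit strings, the matrix element $\bra{\sigma(\vec x)} T_{\nu_L}^{(t)} \ket{\vec x} = \Pr_{\pi \sim \nu_L}[\pi(x_i) = \sigma(x_i)\ \forall i]$ upper-bounds $\Pr_{\nu_L}[\pi = \sigma]$, while the corresponding matrix element of $T_{\mu^\star}^{(t)}$ equals $(2^n-t)!/(2^n)! \le 2^{-(n-1)t}$ for $t \le 2^{n-1}$ (using \cref{lemma:equivalenceSandAlt} to equate the $\Alt$ and $\Sym$ moments whenever $t \le 2^n - 2$). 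Since $\ket{\vec x}$ and $\ket{\sigma(\vec x)}$ are unit vectors, the triangle inequality yields
\[
\Pr_{\nu_L}[\pi = \sigma] \;\le\; 2^{-(n-1)t} + e^{-cL/n^3},
\]
and a union bound over $S_m$ produces
\[
\Pr_{\nu_L}\bigl[C_R(\pi) \le m\bigr] \;\le\; (C_0 n^3)^m \bigl(2^{-(n-1)t} + e^{-cL/n^3}\bigr).
\]

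To close the argument, I would choose $m = \lfloor L/(\alpha n^3 \log n) \rfloor$ with $\alpha$ sufficiently large and $t = \lceil L/(\beta n^4) \rceil$ with $\beta$ sufficiently small, so that the counting factor $(C_0 n^3)^m = 2^{O(m \log n)}$ is dominated by both $2^{(n-1)t}$ and $e^{cL/n^3}$; this produces $\Pr_{\nu_L}[C_R(\pi) \le m] \le 2^{-\Omega(L/n^3)}$, which is exactly the statement. The hypothesis $L \le \cO(2^n)$ keeps $t \le 2^n - 2$, as required for \cref{lemma:equivalenceSandAlt}. The main obstacle is the three-way balancing of constants: $m$ must be small enough for $(C_0 n^3)^m$ to be beaten by the anti-concentration factors, while $t$ must be large enough for $2^{-(n-1)t}$ to overpower $(C_0 n^3)^m$, yet small enough that $e^{-cL/n^3}$ still dominates $2^{-(n-1)t}$. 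It is precisely the $t$-independence of the spectral gap in \cref{thm:main_rev} that makes these three constraints simultaneously satisfiable, rather than forcing a sub-linear complexity growth rate as in the $1/\poly(n,t)$ gap bounds of earlier work.
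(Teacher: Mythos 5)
Your argument is correct and lands on the same counting-versus-anticoncentration skeleton as the paper's proof, but it implements the anticoncentration step by a genuinely more elementary route. The paper first invokes \cref{lemma:fromgtopermutationdesign} to upgrade the $t$-independent gap of \cref{thm:main_rev} into a \emph{multiplicative-error} permutation $t$-design (which, via \cref{def:permutation_designs}, gives $\Pr_{\pi\sim\nu}[\pi=\sigma]\le (1+\eps)/\bigl(N(N-1)\cdots(N-t+1)\bigr)$); the proof of \cref{lemma:fromgtopermutationdesign} itself appeals somewhat loosely to Markov-chain mixing-time considerations. You instead bypass the design abstraction entirely: the operator-norm decay $\lnorm{T_{\nu_L}^{(t)}-T_{\mu^\star}^{(t)}}_\infty\le(1-cn^{-3})^L$ from \cref{rem:conv} plus the matrix-element / triangle-inequality observation gives an \emph{additive} anticoncentration bound $\Pr_{\nu_L}[\pi=\sigma]\le 2^{-(n-1)t}+e^{-cL/n^3}$, which turns out to be just as good once you union-bound over $\labs{S_m}\le(C_0n^3)^m$ and tune $m\sim L/(n^3\log n)$ and $t\sim L/n^4$. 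Both routes rest on the same two inputs --- the $t$-independence of the gap in \cref{thm:main_rev} and \cref{lemma:equivalenceSandAlt} to pass from $\Alt$ to $\Sym$ moments --- and reach the same constants up to the $\Omega(\cdot)$ notation. What your version buys is self-containedness: every step is an explicit operator inequality, whereas the paper's Lemma~2.9 step is stated at a higher level; what it costs is that the three-way balancing of $m$, $t$, and $L$ has to be carried out by hand rather than being pre-packaged in a design statement, and you correctly identify this balancing as the only real work.
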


The (robust) quantum circuit complexity $C_{Q,\delta}(\ket{\psi})$ of a state $\ket{\psi}$ is the smallest integer $R$ such that there exists a quantum circuit $V$ with $R$ two-qubit gates that satisfies $|\langle\psi|V|0^n\rangle|^2\geq 1-\delta^2$.
We can similarly define the circuit complexity for unitaries: $C_{Q,\delta}(U)$ is the smallest number of 2-qubit gates required to implement $U$ within $\delta$ error (see \cref{def:quantumcircuitcomplexity}).
Here, we state a circuit lower bound for states $\ket{\psi}=U\ket{0^n}$, where $U$ is a random quantum circuit. Note that this implies a lower bound on $C_{Q,\delta}(U)$ since $C_{Q,\delta}(U)\geq C_{Q,\delta}(U\ket{0^n})$.

\begin{restatable}[Linear growth of robust quantum circuit complexity]{corollary}{quantumlineargrowth}
\label{cor:lineargrowth}
    Let $\delta\in(0,1/2)$ be a constant and $U$ be a random quantum circuit on $n$ qubits with all-to-all connectivity and $L\leq\cO(2^{n/2})$ gates. The quantum circuit complexity $C_{Q,\delta}(U\ket{0^n})$ must satisfy
 \begin{equation}
     C_{Q,\delta}(U\ket{0^n})\geq \Omega\left(\frac{L}{n^4}\right),
 \end{equation}
with probability at least $1-e^{-\Omega(L/n^3)}$ over the choice of $U$. The big-$\Omega$ notation absorbed dependence on $\delta$.
\end{restatable}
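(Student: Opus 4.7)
The plan is to chain the spectral gap of~\cref{thm:main_quantum} with the standard reduction from multiplicative-error unitary $t$-designs to robust circuit complexity lower bounds, originating in~\cite{brandao2016local,brandao2021models}.

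\textbf{Step 1 (gap $\Rightarrow$ design).} By~\cref{thm:main_quantum}, the $2$-local all-to-all random walk has spectral gap $\Delta=\Omega(n^{-3})$ uniformly in $t$ for all $t\leq \Theta(2^{n/2})$. The standard conversion recorded in~\cref{sec:prior_work} says that $L$ independent samples yield a multiplicative $\varepsilon$-approximate $t$-design whenever $L\geq C\Delta^{-1}(nt+\log(1/\varepsilon))$. Picking $\varepsilon=\tfrac12$, an $L$-gate random quantum circuit is therefore a multiplicative $\tfrac12$-approximate $t$-design for $t=\Theta(L/n^{4})$; the hypothesis $L\leq\cO(2^{n/2})$ guarantees $t\leq \Theta(2^{n/2})$ as required by~\cref{thm:main_quantum}.

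\textbf{Step 2 (design $\Rightarrow$ complexity).} For any fixed unit vector $\ket{\phi}\in\CC^{2^n}$, the Haar $2t$-th overlap moment is
\[
\E_{U\sim\mu(\SU(2^{n}))}\!\bigl[|\langle \phi|U|0^n\rangle|^{2t}\bigr]
=\frac{t!\,(2^n-1)!}{(t+2^n-1)!}\leq \frac{t!}{2^{nt}},
\]
and the multiplicative design property transports this bound to $\nu$ with a factor $(1+\varepsilon)\leq \tfrac32$. Markov's inequality then gives
\[
\Pr_{U\sim\nu}\!\bigl[|\langle \phi|U|0^n\rangle|^{2}\geq 1-\delta^{2}\bigr]\leq \frac{3\,t!}{2(1-\delta^{2})^{t}\,2^{nt}}.
\]
Applying this to $\ket{\phi}=V\ket{0^n}$ for every $V$ in a $(\delta/R)$-net $\CN_R$ of the set of $R$-gate circuits---of size $|\CN_R|\leq (C n^{2}R/\delta)^{cR}$ by a triangle-inequality argument combined with a resolution $\delta/R$ discretization of $\SU(4)$---and union-bounding yields
\[
\Pr_{U\sim\nu}\!\bigl[C_{Q,\delta}(U\ket{0^n})\leq R\bigr]\leq (Cn^{2}R/\delta)^{cR}\cdot\frac{3\,t!}{2(1-\delta^{2})^{t}\,2^{nt}}.
\]
For $R=\alpha t$ with a sufficiently small $\alpha=\alpha(\delta)>0$ and $t\leq 2^{n/2}$ (so $\log t\leq (n/2)\log 2$), the exponent is $\leq -\Omega(nt)$, making the right-hand side at most $e^{-\Omega(nt)}$.

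\textbf{Combining and main obstacle.} Substituting $t=\Theta(L/n^{4})$ yields $C_{Q,\delta}(U\ket{0^n})\geq \alpha t=\Omega(L/n^{4})$ with probability at least $1-e^{-\Omega(nt)}=1-e^{-\Omega(L/n^{3})}$, matching the claim (and absorbing the $\delta$-dependence into the constants). The principal bookkeeping subtlety is the choice of net resolution in Step 2: it must be fine enough ($\delta/R$) so that composing $R$ approximation errors via the triangle inequality still yields overall error $\leq \delta$, yet coarse enough that $|\CN_R|$ does not overwhelm the Haar moment $\sim t!/2^{nt}$ in the union bound. This balancing is the classical argument of~\cite{brandao2016local,brandao2021models}; the only new ingredient is our $t$-independent gap $\Omega(n^{-3})$, which upgrades prior estimates of the form $1/\poly(n,t)$ (yielding only $L^{1/11}/\poly(n)$) into the clean linear bound $\Omega(L/n^{4})$.
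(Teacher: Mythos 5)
Your proposal is correct and follows the same route as the paper's proof: spectral gap $\Omega(n^{-3})$ from \cref{thm:main_quantum} gives a constant-multiplicative-error $t$-design at $L=\cO(n^4 t)$ gates via \cref{lemma:fromgtounitarydesign}, then the Brand\~ao et al.\ counting argument (Markov on the $2t$-th Haar overlap moment combined with a $\delta$-net over $R$-gate circuits and a union bound) converts this to the complexity lower bound. The only differences are superficial parametrization choices: the paper discretizes $\SU(4)$ at resolution $\delta_2=\delta^2/(2R)$ so that the fidelity threshold degrades from $1-\delta^2$ to $1-2\delta^2$, whereas you use a coarser $\delta/R$-net that degrades it to $1-2\delta\sqrt{1-\delta^2}$; and the paper sets $R=c_1 nt/\log(nt)$ while you take $R=\alpha t$. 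Both variants yield the same $C_{Q,\delta}(U\ket{0^n})\geq\Omega(L/n^4)$ with probability $1-2^{-\Omega(L/n^3)}$ because, after substituting $t=\Theta(L/n^4)$ and using $L\leq\cO(2^{n/2})$ so that $\log t\leq\cO(n)$, the net size is overwhelmed by the Haar moment $t!/2^{nt}$ in either parametrization. One small thing worth flagging for precision: to keep the degraded threshold strictly positive with your coarser net, you are implicitly using $\delta<1/\sqrt{2}$ (ensured by the hypothesis $\delta<1/2$); the paper's finer net avoids this constraint and gives cleaner constants in the Markov step, though both are absorbed into the $\delta$-dependent $\Omega(\cdot)$ in the statement.
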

\Cref{cor:lineargrowth} resolves the Brown--Susskind conjecture for random quantum circuits~\cite{brown2018second,brandao2021models}.
In addition, we remark that \Cref{cor:lineargrowth} also implies linear lower bounds on Nielsen's geometric complexity measure~\cite{nielsen2005geometric}.

Note that similar bounds also hold for other definitions of circuit complexity, 
such as the minimum circuit size required to implement a measurement 
that distinguishes $U$ from the completely depolarizing channel~\cite{brandao2021models}.
In addition, one may relax the definition of circuit complexity to allow nonunitary gates 
and still have essentially the same lower bound on complexity;
see~\cref{rem:nonunitaryGates}.

\subsection{Proof overview} \label{sec:proof_overview}
In this section, we highlight a few technical components of our proof. 
As mentioned, the many attempts to analyze the spectral gap directly often acquire an extra $t$-dependence. 
Instead of jumping into the spectral gap analysis of random circuit models, we turn to more \textit{structured} ensembles. 
For reversible circuit, the starting point is Kassabov's construction of an expander on the alternating group; for unitary circuits, we draw from the recent ``PFC'' ensemble that constructs a (pseudo)-random unitary from (pseudo)-random permutations. 
To translate the $t$-design properties of these structured ensembles into spectral gaps of unstructured random circuits, we further need an efficient decomposition of the structured ensembles into elementary gates.
This process includes plugging in Kassabov's choice of generating set and using tricks from the study of frustration-free Hamiltonians extensively. 
Finally, we apply ``overlap'' lemmas that further improve the $n$-dependence at the expense of $\mathrm{polylog}(t)$ for both quantum and reversible circuits. 
We discuss each component as follows:

\paragraph{Kazhdan constants and Kassabov's expander.}

Similar to spectral gaps, Kazhdan constants also characterize the mixing time of a random walk. 
In~\cite{Kassabov_2007_alt},  
Kassabov provided a family of generating sets $S_N$ for $\Alt(N)$ where $\abs{S_N} = \cO(1)$,
whose Kaszhdan constants are uniformly bounded away from zero. 
This implies that the spectral gaps of the corresponding moment operators are independent of $N$ and $t$. 
The case of $N=2^n$ is particularly relevant to us, as it can be interpreted as (even) permutations of the bitstrings $\{0,1\}^n$. 

While it is conceivable that Kassabov's result
restricted to $\{\Alt(2^n)\}_{n=4}^\infty$ 
gives a useful family of $\poly(n)$ size circuits,
we use intermediate groups $\Alt((2^{3s}-1)^6)$ where $s\geq 1$ is an integer.
In fact this is Kassabov's main focus.
Since $\Alt((2^{3s}-1)^6)$ is a subgroup of $\Alt(2^{18s})$, 
we can work with $18s$ bits directly and think of the group as permuting elements 
in $\mathbf K_s = (\{0,1\}^{3s}\setminus \{0^{3s}\})^{\times 6}$. 
While the existence of these generators was the focus of Kassabov, 
their computational complexity is important for us.
We show that the action of each generator in this scenario 
has a circuit depth-$1$ implementation using $\cO(n)$ CNOT and Toffoli gates. 

Using NOT and multiply controlled NOT gates and the property of bounded generation, 
we extend $\Alt((2^{3s}-1)^6)$ to $\Alt(2^{18s})$, and then to $\Alt(2^n)$ for arbitrary $n$.
Eventually, we show that every generator of $\Alt(2^n)$ 
can be decomposed as a product of $\cO(n)$ NOT, CNOT, and Toffoli gates (see~\cref{thm:UltimatePermutationDesign}). 

\paragraph{CPFPC.}
Recently, \cite{metger2024simple} introduced the ``PFC ensemble'' as an approximation to the Haar measure up to exponentially high moments that is based on classical random functions. Here, P stands for a uniformly random permutation of the computational basis states, 
F stands for a random phase 
({\it i.e.}, a random unitary diagonal in the computational basis with $\pm$ entries on the diagonal), 
and C stands for a random Clifford unitary. 
In our reduction from structured to random circuits, we replace the ideal permutations with short reversible circuits and the Cliffords with short quantum circuits. 

In particular, \cite{metger2024simple} proved that the $t$-fold twirl 
over the PFC ensemble is close to that over the Haar measure in diamond distance (which implies closeness in the $1 \to 1$ norm). 
Although the $1 \to 1$ norm does not immediately give a favorable bound on the spectral gap,
the noncommutative Riesz--Thorin theorem
tells us that the spectral gap that comes from $2 \to 2$ norm of the associated mixed unitary channel
is more directly related to the $1 \to 1$ norm bound,
\emph{if the channel is self-adjoint}.
We observe that this self-adjointness can be safely 
imposed by appending a unitary design with the ensemble of the inverses (see~\cref{cor:AdditiveToMultiplicativeError}).
Hence, we show that the extended ``CPFPC'' ensemble has a spectral gap of $1-\cO(t2^{-n/2})$ (see \cref{lemma:gapofCPFPC}), where each of the five components is sampled independently 
and the entire ensemble is invariant under taking inverses.

\paragraph{Decompose into local gates.}
    
Since each Kassabov's generator is a product of $\cO(n)$ $3$-bit gates, we can naturally break it down into local reversible gates. 
We show that the new walk has a spectral gap of $\Omega(n^{-3})$, which is still independent of $t$ (see~\cref{lemma:Sgateisgapped}). 
    
For random quantum circuits, we also substitute each component in the CPFPC ensemble by gapped random circuits from local subgroups:
\begin{itemize}
    \item We can replace C with any approximate unitary $2$-design with constant multiplicative error; here we use a linear depth brickwork random quantum circuit.

    \item F is not generated by local random gates; as a diagonal group, F does not have an expanding generating set.
    Instead, we ``simulate'' the phases by the ensemble $\mathrm{P}Z\mathrm{P}^{-1}$, 
    which conjugates a single-qubit Pauli $Z$ by a uniformly random permutation of bitstrings.
    We show that $\mathrm{P}Z\mathrm{P}^{-1}$ has a gap of $1-\cO(t^2 2^{-n})$ in F (see~\cref{lemma:gapforPZP}).
    The inverse $\mathrm{P}^{-1}$ can then be absorbed into the independent random permutation in CPFPC, resulting in a ``CPZPC'' ensemble (see~\cref{cor:gap_CPZPC}).

    \item P can be replaced by sampling from the $3$-bit gates, which show up in the implementation of Kassbov's generators as described above, for $\cO(n^3)$ times. 
\end{itemize}

\paragraph{Tools from the study of frustration-free Hamiltonians.}
The above consideration gives a distribution over reversible circuits, consisting of particular $3$-bit gates, with a $t$-independent spectral gap of $\Omega(n^{-3})$.
We relate the gap of this more structured walk to that of random reversible circuits using 
the detectability lemma~\cite{aharonov2009detectability,aharonov2011detectability,anshu2016simple} and its converse, Gao's quantum union bound~\cite{gao2015quantum,o2022quantum}. 
These tools enable conversion between the gaps of the convolutions and the convex combinations of local distributions 
(see~\cref{sec:hamiltonian_tools}).
Convolutions of local distributions inevitably show up when we amplify gaps, 
but convex combinations over local distributions allow us to use operator inequalities to compare spectral gaps. 
Using convex combinations,
we can safely substitute each local subgroup with $\Sym(8)$ without decreasing the spectral gap,
and by permuting qubits uniformly at random, $\Sym(8)$ at a fixed position can be 
regarded as one that acts on a random triple of bits.

A similar technique works for quantum circuits, 
where we substitute each local subgroup on $3$ qubits 
with $\SU(4)$ groups on $2$ qubits (see~\cref{lemma:reductiongeneraltorqc}).
Similar arguments are also used to prove bounds on spectral gaps 
with respect to arbitrarily connected architectures.

\paragraph{Small-size reductions for gaps.}
Once we have a $1/\poly(n)$ gap for random reversible and quantum circuits, 
we show that the $n$-dependence can be improved to $\tilde \Omega(1/n)$.
For random quantum circuits, this type of reduction already exists:
see \cite{brandao2016local} for the case of brickwork circuits using spectral gap bounds for frustration-free Hamiltonians~\cite{nachtergaele1996spectral},
and \cite{haferkamp2021improved} for the case of $2$-local all-to-all circuits using a recursion over $n$.

For random reversible circuits, we give similar reductions.
We prove a permutation analogue of the ``overlap lemma'' in~\cite{brandao2016local}, 
which concerns products of overlapping chunks of unitary (see~\cref{thm:permutationoverlap} in~\cref{section:permutationoverlap}). 
Curiously, the existing strategy for the unitary case (known as the approximate orthogonality of permutations) 
cannot be applied to the case of symmetric groups (\cref{rem:partition_non_orthogonal}).
To circumvent this issue, we use a recently introduced ``large-$N$-interpolation'' principle~\cite{chen2024efficient,chen2024efficient_sums,chen2024new} that avoided fine-grained combinatorial calculation by interpolating from the limit where the overlap is infinitely large.

\subsection{Discussion and outlook}

In this work we prove near optimal gaps of $\tilde \Omega(1/n)$ for the $t$-th moment operator of random reversible circuits for $t\leq \Theta(2^{n/6.1})$ and random quantum circuits for $t\leq \Theta(2^{2n/5})$.
This is proven for all-to-all random reversible classical circuits and both all-to-all and brickwork random quantum circuits.
We also remark that our overlap theorem (\cref{thm:permutationoverlap}) can likely be used together with the martingale technique for spectral gaps of frustration-free Hamiltonians~\cite{nachtergaele1996spectral} to prove a spectral gap of $\tilde \Omega(1/n)$ for random reversible circuits in a \emph{$1$D brickwork layout}, improving upon the $\tilde \Omega(1/t)$ spectral gap in~\cite{he2024pseudorandom} for brickwork reversible circuits.
This spectral gap estimate implies approximate unitary $t$-designs and approximate $t$-wise independent permutations after $\tilde \cO(n^2t)$ gates.

As a consequence, we show that the robust quantum circuit complexity grows at a linear rate for an exponentially long time, which was conjectured by Brown and Susskind~\cite{brown2018second,brandao2021models}. 
Moreover, we prove an analogous statement for random reversible circuits, showing that random circuits are effectively incompressible with overwhelming probability.
Our gap estimate for random quantum circuits also improves bounds on the mixing time of other random processes on the unitary group such as the dynamic of stochastic Hamiltonians~\cite{onorati2017mixing}.
Moreover, the fast convergence to designs allows us to improve predictions about random quantum circuits that involve strong concentration properties~\cite{low2009large}.
For example, our designs bounds together with the results of~\cite{cotler2022fluctuations} show that the fluctuations of the entropy $S(\rho_A(d))$ decay super-exponentially fast in the depth, where $A\subset [n]$ with $|A|=\cO(1)$ and $\rho_A=\mathrm{Tr}_{[n]\setminus A}[|\psi\rangle\langle\psi|]$, where $|\psi\rangle=U_d\cdots U_1 \ket{0^n}$ is generated by random quantum circuits of $d$ gates.
More precisely, we find 
\begin{equation}
    \Pr[S(\rho_A(d))\leq |A|-\delta]\leq (e^{2\delta}d\cdot 2^{-n})^{d/\poly(n)}.
\end{equation}

While our spectral gap estimates $n^{-1}/\mathrm{polylog}(n,t)$ are optimal up to polylog factors
(see, {\it e.g.}, the discussion in~\cite{mittal2023local}),
there are multiple avenues for future work.
In particular, it would be desirable to lift the assumption $t\leq \Theta(2^{n/2})$.
We use this condition many times and it is precisely the regime in which the permutations are approximately orthogonal~\cite{harrow2023approximate}.
In this regime, a $t\times t$ submatrix behaves approximately Gaussian, so that the $t$-th moments cannot see the correlations between Haar random matrix entries.
We cannot rule out that this condition is fundamental to the $1/\poly(n)$-gap.
On the other hand, it was suggested that the gap might be $1/\poly(n)$ for all $t\geq 1$ in~\cite{brandao2016local}, which is the quantum version of the result we have proven for random reversible circuits.

A more ambitious open problem regards the precise value of the gap for random quantum circuits.
It was pointed out to us by Nick Hunter-Jones that numerically 
the gap for large $t$ appears to be equal to the gap for $t=2$.
In other words, additional irreps of $\SU(2^n)$ 
beyond those already appearing in $U^{\otimes 2}\otimes \overline{U}^{\otimes 2}$ do not appear to contribute to the gap.
A similar phenomenon has been observed in~\cite{haah2024efficient}:
the spectral gap of random Pauli rotations on~$\SU(2)$ reaches the minimum value~$5/12$ at $t=4$.
A high level explanation for this phenomenon might be the growing size of the irreps: 
the gap of brickwork random circuits is the cosine of the angle between trivial representation spaces 
of different embeddings of $\SU(4)^{\times n/2}$ into $\SU(2^n)$.
In large dimensional representations the angles tend to be close to $90^{\circ}$ 
in the sense that the expected overlap of two random vectors is $1/\dim$.
We do not know, however, how to make this intuition precise for random quantum circuits.
We note that this phenomenon seems to be common.
Indeed, the spectral gap of Kac's random walk is equal to the case $t=4$~\cite{maslen2003eigenvalues} and similarly, the gap of random transpositions in $\Sym(N)$ is equal to the case $t=1$ 
as conjectured in 1994 by Aldous~\cite{david1995reversible} and proven in~\cite{caputo2010proof}.

The translation from spectral gaps to circuit complexity via $t$-designs 
seems to loose a factor in $n$.
For random quantum circuits on $D$-dimensional lattice, 
the design depth (in diamond distance) was improved to $n^{1/D}\mathrm{poly}(t)$ in~\cite{harrow2023approximate}.
Upcoming work~\cite{schuster2024random} constructs ensembles of unitaries, 
which generate multiplicative-error approximate unitary designs in depth $\cO(\log(n)\mathrm{poly}(t))$ 
on any geometry including a $1$D line.

\paragraph{Acknowledgements.}
We thank Aram Harrow for pointing out the reference~\cite{Kassabov_2007_alt} 
and Martin Kassabov for pointing out the reference~\cite{caprace2023tame}.
We thank Jonas Helsen, Nick Hunter-Jones, Nikhil Srivastava, and John Wright for helpful discussions. 
Most of this work was done while the authors were visiting the Simons Institute for the Theory of Computing, supported by DOE QSA grant \#FP00010905 and NSF QLCI Grant No. 2016245. C.-F.C. is supported by a Simons-CIQC postdoctoral fellowship through NSF QLCI Grant No. 2016245. J. Haferkamp acknowledges funding from the Harvard Quantum Initiative.
Y.~Liu~is supported by DOE Grant No.~DE-SC0024124, NSF Grant No.~2311733, DOE Quantum Systems Accelerator, and NSF QLCI program through grant number OMA-2016245. 
T. Metger acknowledges support from the ETH Quantum Center, NCCR SwissMAP, SNSF Grant No.~200021\_188541, and an ETH Doc. Mobility Fellowship. 
X. Tan~is supported by NSF Grant No. CCF-1729369 and by the U.S. Department of Energy, Office of Science, National Quantum Information Science Research Centers, Co-design Center for Quantum Advantage (C2QA) under contract number DE-SC0012704.

\section{Quantifying mixing rates}\label{sec:QuantifyingMixingRates}

By definition, an approximate unitary design is a distribution that resembles the Haar distribution on a unitary group. Sometimes, random draws from an approximate unitary design may be applied in sequence, {\it i.e.}, as a step in a random walk, to give even better designs.
The question of how well it reproduces $t$-th moments then
amounts to the convergence rate or mixing time of the random walk to the Haar distribution.

In this section, we give a minimal introduction to two quantities relevant to the mixing of random walks on compact groups:
\textit{Kazhdan constants}
and
\textit{spectral gaps}.
They are conceptually equivalent
and are quantitatively related to each other, as we will show.
Different results will be better described by one of the notions than the other,
and inequalities in this section will enable conversions between them. Most statements are standard, but we include self-contained proofs and rewrite them in our language. 

We will make an observation (\cref{cor:AdditiveToMultiplicativeError}) in this section,
that we use importantly later.
This appears to be not used elsewhere previously in the context of $t$-designs.
Our observation basically turns an approximate design 
that is good in terms of the trace distance (additive error)
into one that is also good in terms of a more stringent metric (spectral gap)
without any dimension-dependent factor.

\subsection{Essential norm of moment operators and the spectral gap}\label{sec:essential_norm}

We define the \emph{essential norm} of a probability measure on a group together with a representation, which is our main metric of interest.

\begin{definition}\label{def:moment_gap}
    For a finite or compact Lie group~$G$,
    we denote by~$\mu(G)$ or~$\mu_G$ the Haar probability measure on~$G$.
    If $\nu$ is a probability measure on~$G$ and 
    $\rho$ is a finite-dimensional unitary representation of~$G$ that may be reducible,
    we define
    \begin{align}
        \text{a moment operator }\quad M(\nu,\rho,G) &\deq \avg_{U \sim \nu} \rho(U), \\
        \text{and the \textit{essential norm} } \quad g(\nu,\rho,G) &\deq \norm*{M(\nu,\rho,G) - M(\mu_G,\rho,G)}_\infty. \nonumber
    \end{align}
    Here the norm is the operator norm, the largest singular value.
    The \emph{spectral gap} of~$\nu$ in~$\rho$ is defined to be
    \begin{align}
        \Delta(\nu,\rho,G) = 1 - g(\nu,\rho,G).
    \end{align}
    If $G \subseteq \SU(N)$ is a subgroup of the special unitary group, then we overload the notations to write for any integer $t \ge 1$
    \begin{align}
        \text{$t$-th moment operator } \quad M(\nu,t,G) &= M(\nu,\tau_{t,t},G),\\
        g(\nu,t,G) &= g(\nu,\tau_{t,t},G) \nonumber
    \end{align}
    where $\tau_{t,t}$ is a tensor representation
    \begin{align}
        \tau_{t,t}(U) = (U \otimes \overline U)^{\otimes t} .
    \end{align}
\end{definition}

The latter is the reason that we call $M(\nu,\rho,G)$ a moment operator.
The appearance of~$G$ in $M(\nu,\rho,G)$ and $g(\nu,\rho,G)$ is redundant because both $\nu$ and $\rho$ carry~$G$ with them.
However, we will often consider a probability measure~$\nu$ that is defined on a subgroup of~$G$,
in which case including $G$ in the notation is helpful as a reminder to view $\nu$ as a probability measure on $G$, not just on the subgroup.

\begin{remark}\label{rmk:moment_projector}
    For any unitary representation~$\rho$ of a finite or compact Lie group~$G$, 
    the moment operator $M(\mu_G,\rho,G)$ with respect to the Haar probability measure
    is the orthogonal projector onto the trivial subrepresentation of~$\rho$,
    which may be zero.
    So, the essential norm is precisely the norm of the average of the represented operators
    restricted to the orthogonal complement of the trivial subrepresentation of~$\rho$.
    An exact unitary $t$-design is one whose essential norm of the $t$-th moment operator is zero.
    The spectral gap in~\cite{bourgain2012spectral} is the infimum of our spectral gap 
    over all finite dimensional unitary representations.
\end{remark}

\begin{remark}\label{rem:conv}
A random walk on~$G$ is a sequence of random steps.
If each step~$i$ is defined by a distribution~$\nu_i$ on~$G$,
then after $n$ steps the distribution of the walker 
is the convolution~$\nu_n * \nu_{n-1} *\cdots * \nu_1$.
(This is the definition of the convolution.
The ordering here is to follow the convention that linear operators act on the left of a vector.)
The moment operator follows the same rule:
\begin{align}
    M(\nu_1 * \nu_2, \rho) = \avg_{U\sim \nu_1, ~V \sim \nu_2} \rho(UV) 
    &=  \avg_{U\sim \nu_1} \avg_{V \sim \nu_2} \rho(U)\rho(V)\\
    &= \avg_{U\sim \nu_1}  \rho(U) \avg_{V \sim \nu_2} \rho(V)
    = M(\nu_1,\rho) M(\nu_2,\rho)\nonumber
\end{align}
where the second equality is because $\rho$ is a group representation.
Since the Haar measure is left and right invariant,
we have $\nu * \mu_G = \mu_G = \mu_G * \nu$ for any probability measure~$\nu$ on~$G$.
This implies that $M(\nu,\rho) M(\mu_G,\rho) = M(\mu_G,\rho) = M(\mu_G,\rho) M(\nu,\rho)$.
Hence we have for any integer~$k \ge 1$ 
\begin{align}
    \big(M(\nu,\rho) - M(\mu_G,\rho) \big)^k &= M(\nu^{* k},\rho) - M(\mu_G,\rho) , \nonumber\\
    g(\nu,\rho)^k &\ge g(\nu^{* k}, \rho)\, , \label{eq:gapamp}
\end{align}
where in the second line the equality holds if $M(\nu,\rho)$ is diagonalizable.
This amplification in~\cref{eq:gapamp} of the spectral gap by convolutions will be frequently used below.
\end{remark}

\begin{remark} \label{rem:convex_gap}
Suppose that $\nu, \nu'$ are distributions on $G$ and $\lambda \in [0,1]$.
If $g(\nu, t, G) \leq 1 - \delta$, then the convex combination $\lambda \nu + (1 - \lambda) \nu'$ has essential norm 
\begin{align*}
    g(\lambda \nu + (1 - \lambda) \nu', t, G) 
    &\leq \lambda g(\nu, t, G) + (1 - \lambda) g(\nu', t, G) \\
    &\leq 1 - \lambda \delta \,,
\end{align*}
where we used $g(\nu', t, G) \le 1$ for the last inequality. 
\end{remark}

\subsection{Multiplicative vs additive error approximate \texorpdfstring{$t$}{t}-designs}\label{section:relativevsadditive}
There are various definitions of approximate unitary and permutation designs in the literature.
Here, we will consider the strongest, often called multiplicative (or relative) error.

\begin{definition} [Unitary design] \label{def:unitary_designs}
Let $\nu$ be a probability distribution on $\SU(N)$. 
We call $\nu$ an approximate unitary $t$-design with multiplicative error $\varepsilon$ if
\begin{equation*}
    (1-\varepsilon) \, \Phi^{(t)}_H \, \preceq \, \Phi^{(t)}_{\nu} \, \preceq \, (1+\varepsilon) \, \Phi^{(t)}_H,
\end{equation*}
for quantum channels (completely positive trace-preserving, CPTP, maps)
\begin{equation*}
    \Phi^{(t)}_{\nu}(A) = \avg_{U \sim \nu} \left[ U^{\otimes t} A (U^\dagger)^{\otimes t} \right] \,, \qquad \Phi^{(t)}_{H}(A) = \avg_{U \sim \mu(\SU(N))} \left[ U^{\otimes t} A (U^\dagger)^{\otimes t} \right] .
\end{equation*}
Here, $\Phi \preceq \Phi'$ denotes that $\Phi'-\Phi$ is completely positive.
\end{definition}
The following lemma shows that the above notion of approximate unitary $t$-design can be achieved via gap amplification. While there exist many means to achieve multiplicative-error designs, the spectral gap approach seems to give the sharpest $\eps$-dependence.
\begin{lemma}[Unitary designs from spectral gaps]\label{lemma:fromgtounitarydesign}
    If $g(\nu, \ t, \ \SU(2^n)) \leq 1-\Delta$, 
    then $\nu^{*k}$ is an approximate unitary $t$-design with multiplicative error $\varepsilon$ when $k\geq c\cdot \frac{1}{\Delta}\left(n t +\log\frac{1}{\eps}\right)$ for some constant $c>0$.
\end{lemma}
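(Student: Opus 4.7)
My plan combines two standard ingredients: sub-multiplicativity of the essential norm under convolution, and a folklore conversion from essential-norm closeness to multiplicative-error design.

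First, I would apply the gap-amplification identity from~\cref{rem:conv} to obtain
\begin{align*}
g(\nu^{*k}, t, \SU(2^n)) \leq g(\nu, t, \SU(2^n))^k \leq (1-\Delta)^k \leq e^{-k\Delta},
\end{align*}
so the essential norm of the $k$-fold convolution decays exponentially in $k\Delta$.

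Second, I would invoke the standard conversion: if $g(\eta, t, \SU(2^n)) \leq \varepsilon \cdot 2^{-2nt}$, then $\eta$ is a multiplicative-error $t$-design with error $\varepsilon$ in the sense of~\cref{def:unitary_designs}. To justify it, I would pass to the Choi matrices of the twirl channels $\Phi_\eta^{(t)}$ and $\Phi_H^{(t)}$. Since $M(\mu(\SU(2^n)), t, \SU(2^n))$ is an orthogonal projector onto the trivial subrepresentation (\cref{rmk:moment_projector}), the Haar Choi matrix $J(\Phi_H^{(t)})$ is PSD on an ambient space of dimension $2^{2nt}$ with trace $2^{nt}$; crudely, its smallest non-zero eigenvalue is at least $2^{-2nt}$. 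An essential-norm bound $g(\eta,t,\SU(2^n)) \leq \varepsilon \cdot 2^{-2nt}$, combined with the annihilation identity $M(\mu) M(\eta) = M(\eta) M(\mu) = M(\mu)$ from~\cref{rem:conv} (which guarantees that $M(\eta)-M(\mu)$ lives on the orthogonal complement of the trivial subspace), would then translate via the reshuffling relating $M(\eta, t, \SU(2^n))$ to $J(\Phi_\eta^{(t)})$ into the PSD sandwich
\begin{align*}
(1-\varepsilon)\,J(\Phi_H^{(t)}) \;\preceq\; J(\Phi_\eta^{(t)}) \;\preceq\; (1+\varepsilon)\,J(\Phi_H^{(t)}),
\end{align*}
which is exactly the multiplicative-error condition of~\cref{def:unitary_designs}.

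Combining the two steps, it suffices to choose $k$ so that $e^{-k\Delta} \leq \varepsilon \cdot 2^{-2nt}$, i.e., $k \geq \frac{1}{\Delta}(2nt\ln 2 + \ln\tfrac{1}{\varepsilon})$, which is $O\!\left(\frac{1}{\Delta}(nt + \log\tfrac{1}{\varepsilon})\right)$ as claimed. The main obstacle is the second step: the outline is standard, but one must carefully carry out the Choi-matrix translation, verifying that the reshuffling preserves closeness in both PSD directions simultaneously, and justifying the lower bound on the minimum non-zero eigenvalue of $J(\Phi_H^{(t)})$. That dimension factor, whether one uses the crude $2^{-2nt}$ or the sharper $2^{-nt}$ available from Schur--Weyl, is precisely what produces the $nt$ term in the final depth bound, while the $\log(1/\varepsilon)$ term is inherited directly from inverting the exponential decay in the first step.
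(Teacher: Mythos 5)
Your proposal is correct and follows essentially the same route as the paper: gap amplification to drive $g(\nu^{*k},t)$ below $\varepsilon \cdot 2^{-2nt}$, followed by the standard conversion to multiplicative error, which the paper outsources to~\cite[Lemma 4]{brandao2016local} while you sketch its proof via Choi matrices. Your choice of $k$ matches the paper's up to constants, and your acknowledged gap (carefully verifying the Choi-matrix translation and the $2^{-2nt}$ eigenvalue bound) is precisely the content of the cited lemma.
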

\begin{proof}
    Gap amplification implies that $g(\nu^{*k}, \ t, \ \SU(2^n) ) \leq \varepsilon\cdot 2^{-2nt}$ when $k\geq\frac{1}{\Delta}\left(2\log 2\cdot nt + \log\frac{1}{\eps}\right)$. This implies that $\nu^{*k}$ is an approximate unitary $t$-design with multiplicative error $\varepsilon$ due to~\cite[Lemma 4]{brandao2016local}.
\end{proof}

\noindent The above definition of approximate unitary design immediately implies the weaker notion of  \emph{additive error} $t$-designs
\begin{equation*}
    \norm*{\Phi^{(t)}_{\nu} - \Phi^{(t)}_H}_\diamond \leq \varepsilon.
\end{equation*}
Negligible additive errors imply the operational indistinguishability 
of the channel $\Phi^{(t)}_{\nu}$ 
from the Haar random channel $\Phi^{(t)}_H$.
However, for many applications including adaptive queries to $t$-copies of the design (see {\it e.g.},~\cite{kretschmer2021quantum}),
additive error is insufficient and a bound on multiplicative error is needed.

For our purpose of proving linear growth of quantum circuit complexity, the approximate unitary design must faithfully reproduce the moments
\begin{equation}
    \avg_{U\sim \mu(\SU(2^n))} \abs{\bra{\phi}U\ket{0^n}}^{2t} = \binom{2^n+t-1}{t}^{-1}
\end{equation}
for any state $\ket{\phi}$ (see~\cref{section:circuitcomplexity}). This can be achieved by a constant multiplicative error, but requires additive error $\varepsilon\leq 2^{-nt}$ for the same purpose.

Next, we define approximate permutation designs in a similar fashion.
\begin{definition} [Permutation design] \label{def:permutation_designs}
Let $\nu$ be a probability distribution on $\Sym(2^n)$. 
We call $\nu$ an approximate permutation $t$-design with multiplicative error $\varepsilon$ if for any distinct $t$-tuple $x_1,\dots,x_t\in\{0,1\}^n$, and distinct $t$-tuple $y_1,\dots,y_t\in\{0,1\}^n$,
\begin{equation}
    \frac{1-\eps}{N(N-1)\cdots (N-t+1)}\leq\Pr_{\sigma\sim\nu}[\sigma(x_1)=y_1,\dots,\sigma(x_t)=y_t]\leq\frac{1+\eps}{N(N-1)\cdots (N-t+1)}.
\end{equation}
\end{definition}
This is closely related to the notion of ``approximate $t$-wise independent permutations'' in the literature ({\it e.g.},~\cite[Definition 7]{he2024pseudorandom}), where the distribution of $(\sigma(x_1),\dots,\sigma(x_t))$ is $\eps$-close to uniform (on the set of distinct $t$-tuples) in total variation distance. The distinction here is that we use the stronger notion of multiplicative error instead of additive error, which is useful for the application to circuit complexity (see~\cref{section:circuitcomplexity}). 

Finally, we state a lemma similar to \cref{lemma:fromgtounitarydesign} to connect the spectral gap with \cref{def:permutation_designs}. 

\begin{lemma}[Permutation designs from spectral gaps]\label{lemma:fromgtopermutationdesign}
    If $g(\nu, \ t, \ \Sym(2^n)) \leq 1-\Delta$, 
    then $\nu^{*k}$ is an approximate permutation $t$-design with multiplicative error $\varepsilon$ when $k\geq c\cdot \frac{1}{\Delta}\left(n t +\log\frac{1}{\eps}\right)$ for some constant $c>0$.
\end{lemma}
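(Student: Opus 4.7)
The plan is to mirror the proof of \cref{lemma:fromgtounitarydesign} by combining gap amplification with the observation that the probabilities appearing in \cref{def:permutation_designs} are \emph{individual matrix elements} of the moment operator $M(\nu',t,\Sym(2^n)) \deq \avg_{\sigma \sim \nu'} P(\sigma)^{\otimes t}$. Specifically, for any distinct $t$-tuples $\vec x = (x_1,\dots,x_t)$ and $\vec y = (y_1,\dots,y_t)$ in $(\bits^n)^t$, one has $\bra{\vec y} M(\nu',t,\Sym(2^n)) \ket{\vec x} = \Pr_{\sigma \sim \nu'}[\sigma(x_i) = y_i \text{ for all } i]$, and for $\nu' = \mu(\Sym(2^n))$ this equals $1/[2^n(2^n-1)\cdots(2^n-t+1)]$ by symmetry. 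Hence it suffices to upper bound the essential norm $g(\nu^{*k},t,\Sym(2^n))$ by a sufficiently small quantity, from which both the upper and lower multiplicative bounds will follow uniformly.

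First, I would invoke the gap amplification inequality \cref{eq:gapamp} to obtain
\begin{equation*}
    g(\nu^{*k},t,\Sym(2^n)) \;\leq\; g(\nu,t,\Sym(2^n))^k \;\leq\; (1-\Delta)^k \;\leq\; e^{-k\Delta}.
\end{equation*}
Next, since the magnitude of any matrix entry is bounded by the operator norm, I would deduce that for every pair of distinct $t$-tuples $\vec x, \vec y$,
\begin{equation*}
    \left\lvert \Pr_{\sigma \sim \nu^{*k}}[\sigma(x_i) = y_i \text{ for all } i] \;-\; \frac{1}{2^n(2^n-1)\cdots(2^n-t+1)} \right\rvert \;\leq\; e^{-k\Delta}.
\end{equation*}
To achieve multiplicative error $\eps$ as in \cref{def:permutation_designs}, I need the left-hand side to be at most $\eps/[2^n(2^n-1)\cdots(2^n-t+1)]$. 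Since the falling factorial is bounded by $2^{nt}$, it suffices to enforce $e^{-k\Delta} \leq \eps \cdot 2^{-nt}$, which rearranges to $k \geq \frac{1}{\Delta}\bigl(nt \ln 2 + \ln(1/\eps)\bigr)$, matching the claimed form with $c = \max(\ln 2, 1)$.

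There is no substantive obstacle: the argument is a direct specialization of the unitary proof \cref{lemma:fromgtounitarydesign}, and in fact conceptually simpler because $P(\sigma)^{\otimes t}$ is already a $0/1$ matrix on the computational basis, so its entries literally equal the permutation-image probabilities defining permutation $t$-designs, with no conversion between a channel picture and an operator picture needed. The only mild slack is bounding the falling factorial by $2^{nt}$, which costs at most a constant absorbed into $c$; a tighter estimate $\ln[2^n(2^n-1)\cdots(2^n-t+1)] \geq t \ln(2^n - t)$ would give a slightly sharper constant but does not change the stated asymptotic form.
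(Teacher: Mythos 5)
Your proof is correct. It is also cleaner and more self-contained than the paper's own sketch, which routes through total variation distance and invokes "the connection between the spectral gap and the mixing time of a Markov chain." Your key observation---that for distinct $t$-tuples $\vec x,\vec y$ the quantity $\Pr_{\sigma\sim\nu'}[\sigma(x_i)=y_i \text{ for all }i]$ is literally the matrix element $\bra{\vec y}M(\nu',t,\Sym(2^n))\ket{\vec x}$, and that matrix elements are pointwise controlled by the operator norm---lets you go from the amplified bound $g(\nu^{*k},t,\Sym(2^n))\le(1-\Delta)^k$ directly to the multiplicative-error criterion in \cref{def:permutation_designs} without ever summing an $\ell_1$ error over the (exponentially many) target tuples. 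The paper's route instead bounds the TV distance of the induced distribution on distinct $t$-tuples by $\eps/N^t$; since TV is an $\ell_1$ quantity while the spectral gap naturally controls $\ell_2$ or entrywise quantities, that approach implicitly pays an extra factor (absorbed into the constant $c$), whereas yours does not. Both give the stated asymptotic form, and your calculation of the threshold $k\geq\frac{1}{\Delta}\bigl(nt\ln 2+\ln(1/\eps)\bigr)$ is a valid instantiation of the claimed bound.

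One small remark for hygiene: when you write $g(\nu^{*k},t,\Sym(2^n))\le g(\nu,t,\Sym(2^n))^k$ you are using \cref{eq:gapamp}, which follows from submultiplicativity of the operator norm; this direction of the inequality holds unconditionally and does not require any diagonalizability assumption, so the step is safe as stated.
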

\begin{proof}
    We can choose sufficiently large $k$ such that the total variation distance (mentioned above) is at most $\eps/N^t$, which implies \cref{def:permutation_designs}. The stated bound follows from the connection between the spectral gap and the mixing time of a Markov chain.
\end{proof}

\subsection{Trace norm to spectral norm}\label{section:tracetospectral}

While the additive error is a weaker metric for approximate unitary designs,
it is sometimes easier to establish an estimate for.
To bridge the three notions, the essential norm (the spectral gap), the additive error, and the multiplicative error,
we present a conversion method.

Recall that for a linear operator~$A \in L(\cH)$ on a finite dimensional Hilbert space~$\cH$,
the Schatten $p$-norm of~$A$ where $p \in [1,\infty]$ is defined as $\norm{A}_p = (\Tr[ (A^\dagger A)^{p/2}] )^{1/p}$.
If $p = \infty$, the norm is the operator norm (also called the spectral norm),
the largest singular value of~$A$. 
If $p=1$, the norm is the trace norm.
For a superoperator $\Phi: L(\cH) \to L(\cH')$, we may consider the $p$-to-$q$ norm:
\begin{align}
    \norm{\Phi}_{p \to q} \deq \sup_{A \neq 0} \frac{\norm{\Phi(A)}_q}{\norm{A}_p} \,.
\end{align}

\noindent We need the following lemma (see~\cite{lofstrom1976interpolation} or \cite{reed2003methods}):
\begin{lemma}[Noncommutative Riesz--Thorin]\label{thm:rieszthorin}
    For any superoperator $\Phi: L(\cH) \to L(\cH')$ and real numbers $c, p \in [1,\infty]$
    we have
    \begin{equation}
        \norm{\Phi}_{p\to p} \leq \norm{\Phi}^{\frac{1}{c}}_{1\to 1}\norm{\Phi}^{1-\frac{1}{c}}_{\infty\to\infty} \, .
    \end{equation}
\end{lemma}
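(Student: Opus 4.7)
The plan is to follow the classical complex-interpolation proof of Riesz--Thorin, adapted to the noncommutative Schatten setting, with $c$ playing the role of the interpolation parameter (the statement is the standard interpolation between $S^1$ and $S^\infty$ evaluated at $1/c$). My starting point is the duality $\|X\|_p = \sup\{|\Tr[X^\dagger B]| : \|B\|_{p'} \le 1\}$ with $1/p + 1/p' = 1$, so it suffices to bound $|\Tr[\Phi(A)^\dagger B]|$ for arbitrary $A \in L(\cH)$, $B \in L(\cH')$ with $\|A\|_p = \|B\|_{p'} = 1$, by the right-hand side of the claimed inequality.

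The key construction is an operator-valued analytic family. Using polar decompositions $A = U|A|$ and $B = V|B|$, I would define, for $z$ in the closed strip $\{0 \le \Re z \le 1\}$,
\begin{equation*}
A_z \;=\; U\, |A|^{\,p z}, \qquad B_z \;=\; V\, |B|^{\,p'(1-z)},
\end{equation*}
where the fractional powers are defined by the functional calculus on the positive parts. By a density/approximation step (perturbing $|A|$ and $|B|$ to be invertible and then passing to the limit), these families are entire in $z$, bounded on the strip, and satisfy $A_{1/p} = A$, $B_{1/p} = B$. A direct computation shows $\|A_{it}\|_\infty \le 1$ and $\|B_{it}\|_1 \le 1$ on the line $\Re z = 0$, while $\|A_{1+it}\|_1 \le 1$ and $\|B_{1+it}\|_\infty \le 1$ on $\Re z = 1$.

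Next, consider the scalar function $F(z) = \Tr[\Phi(A_z)^\dagger B_z]$, which is holomorphic in the interior of the strip and continuous and bounded on its closure. On the boundary we bound
\begin{equation*}
|F(it)| \le \|\Phi(A_{it})\|_\infty \, \|B_{it}\|_1 \le \|\Phi\|_{\infty\to\infty}, \qquad |F(1+it)| \le \|\Phi(A_{1+it})\|_1 \, \|B_{1+it}\|_\infty \le \|\Phi\|_{1\to 1},
\end{equation*}
using the tracial H\"older inequality $|\Tr[X^\dagger Y]| \le \|X\|_q \|Y\|_{q'}$. Applying the Hadamard three-lines theorem to $F$, evaluated at the point $z = 1/c$, yields $|F(1/c)| \le \|\Phi\|_{\infty\to\infty}^{\,1-1/c}\|\Phi\|_{1\to 1}^{\,1/c}$. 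Choosing $c$ so that $A_{1/c} = A$ and $B_{1/c} = B$ (i.e.\ $c = p$) and taking the supremum over $A,B$ delivers the stated bound.

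The main technical obstacle is the rigorous justification of the operator-valued analyticity and the passage to noninvertible $A, B$: fractional powers of positive operators are defined by functional calculus, and I would first prove the inequality assuming strict positivity (so $z \mapsto |A|^{pz}$ is analytic into a fixed finite-dimensional space) and then remove the assumption by continuity of both sides in $A$ and $B$. Once this is in place, all remaining steps reduce to routine Schatten-norm estimates and a single application of the three-lines theorem.
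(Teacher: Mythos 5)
The paper does not prove this lemma; it simply cites standard references (Bergh--L\"ofstr\"om and Reed--Simon), so there is no in-paper proof to compare against. Your argument is the canonical Stein--Hadamard complex interpolation proof, specialized to finite-dimensional Schatten spaces, and it is correct in its essentials. The analytic family $A_z = U|A|^{pz}$, $B_z = V|B|^{p'(1-z)}$ together with the three-lines theorem and duality $\|X\|_p = \sup_{\|B\|_{p'}\le 1}|\Tr[X^\dagger B]|$ is exactly the standard route, and your boundary estimates are right: at $\Re z = 0$ the $A_z$ are subunitary and $\|B_z\|_1 = \||B|^{p'}\|_1 = 1$, while at $\Re z = 1$ the roles swap. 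Because $\cH$ is finite-dimensional throughout the paper, the analyticity/density issue you flag is genuinely routine (perturb $|A|,|B|$ by $\eps\one$ and let $\eps\to 0$; both sides are continuous in $A,B$).

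One small point worth recording: the lemma as stated presents $c$ and $p$ as two free parameters in $[1,\infty]$, but the interpolation inequality is only guaranteed when $c = p$ (equivalently, $1/p = (1/c)\cdot 1 + (1-1/c)\cdot 0$, the Riesz--Thorin exponent relation between the endpoints $1$ and $\infty$). You correctly took $c = p$ in the final step, and the only place the paper invokes this lemma is \cref{cor:AdditiveToMultiplicativeError} with $c = p = 2$, so no harm is done; but a reader of the lemma statement alone might be misled into thinking the bound holds for unrelated $c$ and $p$, which it does not.
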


We will use the following fact for the Schatten $1$- and $\infty$-norms on the vector space of operators.
The inner product on the space of operators will always be the Hilbert--Schmidt inner product.

\begin{fact}[Duality]\label{fact:dualinducednorm}
    Let $V$ be a Hilbert space, 
    and let~$\norm{\cdot}_p$ be a norm on~$V$
    (that may be different from the norm given by the inner product~$\braket{\cdot | \cdot}$).
    Let $\norm{\cdot}_q$ be its dual norm 
    : $\norm{v}_q = \sup_{w \in V: \norm{w}_p=1} \abs{\braket{w|v}}$ for any~$v \in V$.
    Then, for the induced norms of a linear map $\Phi : V \to V$ it holds that
    \begin{equation}
        \norm{\Phi}_{p \to p} = \norm{ \Phi^\dagger }_{q \to q}\, .
    \end{equation}
\end{fact}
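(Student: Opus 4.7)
The plan is to prove $\|\Phi\|_{p\to p} = \|\Phi^\dagger\|_{q\to q}$ by a short chain of duality manipulations: transfer the supremum defining the inner $\|\cdot\|_p$ onto the other slot of the Hilbert--Schmidt inner product via the adjoint, and recognize what remains as the induced $q$-norm of $\Phi^\dagger$. First, I would unfold the induced norm as
\begin{equation*}
\|\Phi\|_{p\to p} = \sup_{v:\ \|v\|_p = 1} \|\Phi(v)\|_p,
\end{equation*}
and rewrite the inner $\|\cdot\|_p$ using reflexivity of duality, namely $\|u\|_p = \sup_{w:\ \|w\|_q = 1} |\langle w | u\rangle|$ applied to $u=\Phi(v)$. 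This yields the double supremum
\begin{equation*}
\|\Phi\|_{p\to p} = \sup_{v:\ \|v\|_p = 1}\ \sup_{w:\ \|w\|_q = 1} |\langle w | \Phi(v)\rangle|.
\end{equation*}

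Next I would use the defining property of the adjoint with respect to the Hilbert--Schmidt inner product, $\langle w | \Phi(v)\rangle = \langle \Phi^\dagger(w) | v\rangle$, and swap the two suprema (permissible since the constraints on $v$ and $w$ are independent). The inner supremum over $v$ is then exactly the definition of $\|\Phi^\dagger(w)\|_q$, giving
\begin{equation*}
\|\Phi\|_{p\to p} = \sup_{w:\ \|w\|_q = 1} \|\Phi^\dagger(w)\|_q = \|\Phi^\dagger\|_{q\to q},
\end{equation*}
which is the claim.

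The only step with actual content is the reflexivity identity $\|u\|_p = \sup_{\|w\|_q = 1} |\langle w|u\rangle|$ used to rewrite the inner norm, since the definition of the dual only gives one direction $(\geq)$ trivially. I would either invoke finite-dimensional Hahn--Banach (the Hilbert spaces appearing in the applications are operator spaces $L(\cH)$ for $\cH$ finite-dimensional) or, for the Schatten case actually used later, supply a direct saturation argument: given $u$, take the singular value decomposition $u = \sum_i s_i\, x_i y_i^\dagger$ and pick $w$ proportional to $\sum_i s_i^{p-1} x_i y_i^\dagger$ (with the convention that sends $p=\infty$ to a polar/phase choice), which achieves the supremum by Hölder's inequality. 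I would state this reflexivity explicitly as a one-line lemma before the main chain, after which the rest of the argument is a symmetric unwinding of definitions with no further subtlety.
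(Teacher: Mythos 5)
Your proof is correct and follows essentially the same route as the paper: both unfold one side via the adjoint, swap the two independent suprema, and recognize the other induced norm, with the double-dual (reflexivity) identity as the only substantive ingredient. The paper proves that reflexivity lemma self-containedly via an orthogonal-projection argument (with a footnote pointing to Hahn--Banach for general normed spaces), whereas you invoke Hahn--Banach or a Schatten-specific H\"older-saturating element, but the overall structure is the same.
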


\begin{proof}
    We recall that the double dual norm is itself as follows.
    Let $x \mapsto \norm{x}_r = \sup_{y: \norm{y}_q = 1} \abs{\braket{y|x}}$ 
    be the dual norm of $\norm{\bullet}_q$; this is the double dual of~$\norm{\bullet}_p$.
    By definition, we have $\abs{\braket{y|x}} \le \norm{y}_q \norm{x}_p$ for any $x,y \in V$.
    For any real $\delta > 0$, there is $y \in V$ such that $\norm{x}_r - \delta < \abs{\braket{y|x}}$ and $\norm{y}_q = 1$.
    So $\norm{x}_r \le \norm{x}_p$.
    For $x \neq 0$, 
    let $\tilde x \in V$ be a scalar multiple of~$x$ such that $\braket{\tilde x| x} = 1$.
    \footnote{For more general normed vector spaces, the Hahn--Banach theorem can be used in this step
    to show that the canonical injection from $V$ to its double dual is an isometry.}
    Considering the one-dimensional subspace spanned by~$x \in V$ and the orthogonal projection onto it,
    we see $\norm{\tilde x}_q = \norm{x}_p^{-1}$.
    Then, $\norm{x}_r = \sup_{y \neq 0} \frac{\abs{\braket{y|x}}}{\norm{y}_q} \ge \frac{\abs{\braket{\tilde x|x}}}{\norm{\tilde x}_q} = \norm{x}_p$.
    Hence, $\norm{x}_r = \norm{x}_p$ for all~$x \in V$. 
    Then,
    \begin{equation}
    \sup_{v:~\norm{v}_q = 1} \norm{\Phi^\dagger v}_q
    =
    \sup_{v,w:~\norm{v}_q = 1, \norm{w}_p =1} \abs{\braket{w | \Phi^\dagger v}}
    =
    \sup_{v,w:~\norm{v}_q = 1, \norm{w}_p =1} \abs{\braket{v | \Phi w }}
    =
    \sup_{w:~\norm{w}_p = 1} \norm{\Phi w}_p \, .
    \qedhere
    \end{equation}
\end{proof}

\begin{corollary}\label{cor:AdditiveToMultiplicativeError}
    Let $\Phi, \Phi' : L(\cH) \to L(\cH)$ be superoperators for a finite dimensional Hilbert space~$\cH$.
    If  $(\Phi' \circ \Phi) = (\Phi' \circ \Phi)^\dagger$ 
    and $\norm{\Phi'}_{1 \to 1} \le 1$,
    then
    \begin{align}
        \norm{\Phi' \circ \Phi}_{2 \to 2} \le \norm{\Phi}_{1 \to 1} .
    \end{align}
    In particular, for any probability distributions~$\nu, \nu'$ on a finite or compact Lie group~$G$ 
    and any integer~$t \ge 1$,
    if for any measurable set~$\{W\} \subseteq G$ 
    it holds that $(\nu' * \nu)(\{W\}) = (\nu' * \nu)(\{W^{-1}\})$, {\it i.e.}, the same probabilities,
    then
    \begin{align}
        &\norm*{
            \avg_{U \sim \nu' * \nu} (U \otimes \overline U)^{\otimes t} - \avg_{V \sim \mu_G} (V\otimes \overline V)^{\otimes t}
        }_{\infty}
        \le
        \norm*{\Phi}_{1 \to 1} \\
        \text{ where } \quad &\Phi(\eta) = \avg_{U \sim \nu} U^{\ot t} \eta U^{\dagger \otimes t} - \avg_{V \sim \mu_G} V^{\ot t} \eta V^{\dagger \otimes t} \nonumber
    \end{align}
\end{corollary}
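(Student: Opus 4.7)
The plan is to combine the noncommutative Riesz--Thorin interpolation bound of \cref{thm:rieszthorin} with the trace/operator norm duality of \cref{fact:dualinducednorm}. Specifically, applying \cref{thm:rieszthorin} with $c=2$ to the composed map $\Phi'\circ\Phi$ yields
\[
    \norm{\Phi'\circ\Phi}_{2\to 2} \;\le\; \norm{\Phi'\circ\Phi}_{1\to 1}^{1/2}\,\norm{\Phi'\circ\Phi}_{\infty\to\infty}^{1/2}.
\]
The self-adjointness hypothesis $\Phi'\circ\Phi = (\Phi'\circ\Phi)^\dagger$, together with \cref{fact:dualinducednorm} applied to the Schatten $1$- and $\infty$-norms (which are dual under the Hilbert--Schmidt inner product), gives $\norm{\Phi'\circ\Phi}_{\infty\to\infty} = \norm{(\Phi'\circ\Phi)^\dagger}_{1\to 1} = \norm{\Phi'\circ\Phi}_{1\to 1}$. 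Substituting and using submultiplicativity of the induced $1\to 1$ norm together with the hypothesis $\norm{\Phi'}_{1\to 1}\le 1$, I conclude $\norm{\Phi'\circ\Phi}_{2\to 2} \le \norm{\Phi'\circ\Phi}_{1\to 1} \le \norm{\Phi'}_{1\to 1}\norm{\Phi}_{1\to 1} \le \norm{\Phi}_{1\to 1}$.

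For the group-theoretic consequence, my plan is to specialize by defining $\Phi'(\eta) \deq \avg_{U'\sim\nu'} (U')^{\otimes t}\,\eta\,(U'^{\dagger})^{\otimes t}$, which is a CPTP map and hence automatically satisfies $\norm{\Phi'}_{1\to 1}\le 1$. Using left-invariance of the Haar measure ($\nu'*\mu_G = \mu_G$), one checks that $\Phi'\circ\Phi = \cT_{\nu'*\nu} - \cT_{\mu_G}$, where $\cT_\sigma(\eta) \deq \avg_{W\sim\sigma} W^{\otimes t}\eta W^{\dagger\otimes t}$. A short calculation using unitarity of $W$ shows $\cT_\sigma^\dagger = \cT_{\sigma^{-1}}$, where $\sigma^{-1}$ denotes the push-forward of~$\sigma$ under inversion, and the Haar projector~$\cT_{\mu_G}$ is self-adjoint; therefore the symmetry hypothesis $(\nu'*\nu)(\{W\}) = (\nu'*\nu)(\{W^{-1}\})$ yields the required self-adjointness $\Phi'\circ\Phi = (\Phi'\circ\Phi)^\dagger$. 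The final step is to translate the resulting superoperator bound into the stated operator inequality via vectorization: under the standard vectorization isomorphism, the conjugation $W\,\cdot\,W^{\dagger}$ corresponds to multiplication by $W\otimes \overline W$ and the Hilbert--Schmidt norm becomes the Euclidean norm, so $\norm{\cT_{\nu'*\nu} - \cT_{\mu_G}}_{2\to 2}$ equals the operator norm of $\avg_{U\sim\nu'*\nu} (U\otimes\overline U)^{\otimes t} - \avg_{V\sim\mu_G}(V\otimes\overline V)^{\otimes t}$.

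The argument is short once \cref{thm:rieszthorin} and \cref{fact:dualinducednorm} are in hand, but the conceptually important ingredient is the self-adjointness shortcut. Without it, Riesz--Thorin alone would only give $\norm{\Phi'\circ\Phi}_{2\to 2}\le\sqrt{\norm{\Phi'\circ\Phi}_{1\to 1}\norm{\Phi'\circ\Phi}_{\infty\to\infty}}$, and any naive bound on the $\infty\to\infty$ norm of a channel difference picks up a dimension factor of $\dim\cH$. Lifting the elementary identity $\norm{A^\dagger}_\infty = \norm{A}_\infty$ to the superoperator level through $1$-$\infty$ duality is precisely what removes this factor, and that is what allows a purely additive-error estimate on~$\nu$ (which bounds $\norm{\Phi}_{1\to 1}$) to be upgraded into a spectral-gap / multiplicative-error estimate on the symmetrized convolution~$\nu'*\nu$ with no dimension loss. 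The only real subtlety is verifying that inversion symmetry of~$\nu'*\nu$ exactly translates to $\cT_{\nu'*\nu}$ being self-adjoint, which is the content of the identity $\cT_\sigma^\dagger = \cT_{\sigma^{-1}}$.
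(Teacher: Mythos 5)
Your proof is correct and follows essentially the same route as the paper's: Riesz--Thorin interpolation at $p=2$, $c=2$, combined with $\infty$-to-$1$ norm duality (\cref{fact:dualinducednorm}) and the self-adjointness hypothesis to collapse the geometric mean, then submultiplicativity of the $1\to 1$ norm. The only cosmetic difference is that the paper first reduces to the case $\Phi'=\one$ via the same submultiplicativity observation before invoking Riesz--Thorin, whereas you apply Riesz--Thorin directly to $\Phi'\circ\Phi$; these are interchangeable. For the second claim you spell out the identity $\cT_\sigma^\dagger=\cT_{\sigma^{-1}}$ and the self-adjointness of the Haar projector, which the paper leaves implicit but which is exactly the intended verification.
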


We use this in the following way.
For a random unitary that is an approximate unitary design,
we consider another design obtained by appending the inverse of the random unitary.
If we have an estimate for the error of the former design
measured in $1 \to 1$ norm,
we will have the same estimate for the latter design
measured in $2 \to 2$ norm, that is the essential norm.
This is an important technical ingredient toward our results.

\begin{proof}
    First, note that $\norm{\Phi' \circ \Phi}_{1 \to 1} \le \norm{\Phi'}_{1 \to 1} \norm{\Phi}_{1 \to 1}$,
    which is at most $\norm{\Phi}_{1 \to 1}$ by assumption.
    Hence, it suffices to prove the claim with $\Phi' = \one$ (the identity map),
    which means~$\Phi = \Phi^\dagger$.
    Now, \cref{thm:rieszthorin} says that 
    $\norm{\Phi}_{2 \to 2} \le \norm{\Phi}_{1 \to 1}^{1/2} \norm{\Phi}_{\infty \to \infty}^{1/2}$.
    \Cref{fact:dualinducednorm} says that $\norm{\Phi}_{\infty \to \infty} = \norm{\Phi^\dagger}_{1 \to 1}$,
    but we have assumed~$\Phi^\dagger = \Phi$.
    This proves the first claim.

    The second claim relies on the left invariance of the Haar measure, 
    which implies that
    \begin{align}
    (\avg_{U \sim \nu'} \tau_{t,t}(U) ) (\avg_{V \sim \mu_G} \tau_{t,t}(V)) 
    = (\avg_{V \sim \mu_G} \tau_{t,t}(V))
    \end{align}
    where $\tau_{t,t}$ is the tensor representation.
    Define superoperators for~$(\CC^N)^{\otimes t}$
    \begin{align}
        \Phi &: \eta \mapsto \avg_{U \sim \nu} U^{\otimes t} \eta U^{\dagger \otimes t} - \avg_{V \sim \mu_G} V^{\otimes t} \eta V^{\dagger \otimes t},\\
        \Phi' &: \eta \mapsto \avg_{U \sim \nu'} U^{\otimes t} \eta U^{\dagger \otimes t}.\nonumber
    \end{align}
    The assumption on~$\nu' * \nu$ 
    implies that $(\Phi' \circ \Phi)^\dagger = \Phi' \circ \Phi$.
    It is obvious that $\norm{\Phi'}_{1 \to 1} \le 1$.
    The Schatten $2$-norm is the inner product norm on the space of operators,
    so the $2 \to 2$ norm of the superoperator 
    is the spectral norm of the linear operator acting on the vector space of operators.
    Therefore, the second claim follows from the first part.
\end{proof}

\subsection{Kazhdan constants}\label{sec:kaszhdan}

Kazhdan constants are in some sense similar to spectral gaps (see~\cref{lem:gapviakazhdan}) 
but have a nice geometric interpretation.

\begin{definition}[Kazhdan constants] \label{def:kazhdan_const}
    Consider a group $G$ generated by a set $S$.
    The \emph{Kazhdan constant for $G$ with respect to $S$}, denoted by $\cK(G;S)$, 
    is the largest $\eps \ge 0$ that satisfies the following property: 
    for every unitary representation $\rho: G \to \U(\cH)$
    and every unit vector $\ket{\psi} \in \cH$, 
    where $\cH$ is a Hilbert space without $G$-invariant vectors 
    ({\it i.e.}, $\rho$ does not have a trivial subrepresentation), 
    there exists $g \in S$ such that $\norm{\rho(g) \ket{\psi} - \ket{\psi}} \geq \eps$. 
\end{definition}

\noindent
For compact groups at least, 
one can think about the Kazhdan constant for each nontrivial irreducible unitary representation
and take the infimum of all those.

The Kazhdan constant is useful because we can compare it for different generating sets
using the following basic properties. We include self-contained proofs, partly following~\cite{Kassabov_2007_alt}.

\begin{lemma}[{\cite[Proposition 1.3]{Kassabov_2007_alt}}]\label{lem:KGG}
    For any group~$G$ that admits a Haar probability measure,
    \begin{align}
        \cK\left(G;G\right) \ge \sqrt{2}.
    \end{align}
\end{lemma}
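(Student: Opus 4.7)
\textbf{Proof plan for \cref{lem:KGG}.} The plan is to use the averaging argument: since the Haar average of $\rho(g)$ projects onto the $G$-invariant subspace (\cref{rmk:moment_projector}), which is trivial by hypothesis, the expectation of $\langle\psi|\rho(g)|\psi\rangle$ over the Haar measure must vanish, forcing some group element to move $\ket{\psi}$ by at least $\sqrt{2}$.

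Concretely, fix any unitary representation $\rho: G \to \U(\cH)$ with no nonzero $G$-invariant vector, and any unit vector $\ket{\psi}\in\cH$. By \cref{rmk:moment_projector}, the operator $M(\mu_G,\rho,G) = \avg_{g\sim\mu_G}\rho(g)$ is the orthogonal projector onto the trivial subrepresentation, which is zero under our assumption. Pairing with $\ket{\psi}$ on both sides gives
\begin{equation}
    \avg_{g\sim\mu_G}\,\langle \psi | \rho(g) | \psi\rangle \;=\; 0,
\end{equation}
and in particular $\avg_{g\sim\mu_G}\mathrm{Re}\langle\psi|\rho(g)|\psi\rangle = 0$.

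Since this average vanishes and $\mu_G$ is a probability measure, the set $\{g\in G : \mathrm{Re}\langle\psi|\rho(g)|\psi\rangle \le 0\}$ has positive Haar measure (and hence is nonempty). For any such $g$, the standard identity
\begin{equation}
    \norm{\rho(g)\ket{\psi} - \ket{\psi}}^2 \;=\; 2 - 2\,\mathrm{Re}\langle\psi|\rho(g)|\psi\rangle \;\ge\; 2
\end{equation}
gives $\norm{\rho(g)\ket{\psi}-\ket{\psi}} \ge \sqrt{2}$. Since $\rho$ and $\ket{\psi}$ were arbitrary, by \cref{def:kazhdan_const} this yields $\cK(G;G) \ge \sqrt{2}$.

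There is no real obstacle here: the whole argument is a one-line application of the Haar averaging property already established in \cref{rmk:moment_projector}. The only mild subtlety is ensuring the existence of a genuine $g\in G$ (not merely a supremum), which is handled by the positive-measure observation above.
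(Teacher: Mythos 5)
Your proof is correct and uses essentially the same argument as the paper's, just stated in the contrapositive: you compute $\avg_{g\sim\mu_G}\langle\psi|\rho(g)|\psi\rangle=0$ directly from the vanishing of the Haar projector and deduce that some $g$ has $\mathrm{Re}\langle\psi|\rho(g)|\psi\rangle\le 0$, while the paper supposes $\cK(G;G)<\sqrt{2}$ to place the orbit of $\ket{\psi}$ in an open half-space and then observes that the Haar average of the orbit would be a nonzero invariant vector — a contradiction. Both proofs hinge on the same two facts (the averaging projector annihilates $\ket{\psi}$ when there is no trivial subrepresentation, and the polarization identity $\|\rho(g)\ket{\psi}-\ket{\psi}\|^2 = 2-2\,\mathrm{Re}\langle\psi|\rho(g)|\psi\rangle$), so there is no substantive difference.
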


\begin{proof}
    Otherwise, the orbit of a normalized vector~$\ket \psi$ under the $G$ action would be in some half space,
    so the convex hull of the orbit would not contain the zero vector,
    and $\avg_{g \sim \mu_G} \rho(g) \ket \psi$ would be an invariant nonzero vector.
\end{proof}

\begin{lemma}[Special case of {~\cite[Proposition 1.6 b]{Kassabov_2007_alt}}]\label{lem:chain_rule_pii}
    Let $H$ be a group generated by a set~$S \subseteq H$.
    If $\phi_i : H \to G$ are group homomorphisms, we have
    \begin{align}
        \cK\left(G;\bigcup_i \phi_i(S)\right) \ge \frac{1}{2} \cK\left(G;\bigcup_i \phi_i(H) \right) \cdot \cK(H;S).
    \end{align}
\end{lemma}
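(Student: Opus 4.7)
The plan is to prove the chain rule by taking an arbitrary unitary representation $\rho: G \to \U(\cH)$ without $G$-invariant vectors and an arbitrary unit vector $\ket\psi \in \cH$, and exhibiting some generator in $\bigcup_i \phi_i(S)$ that moves $\ket\psi$ by at least the claimed bound. The two Kazhdan constants on the right-hand side will be applied sequentially: first $\cK(G;\bigcup_i \phi_i(H))$ at the level of $G$ to locate a specific homomorphism $\phi_{i_0}$ whose $H$-action on $\ket\psi$ is already nontrivial, then $\cK(H;S)$ applied to the restriction $\rho\circ \phi_{i_0}$ to upgrade this to a displacement by a generator in $S$.

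Concretely, first I would invoke $\cK(G;\bigcup_i \phi_i(H))$ on the vector $\ket\psi$, producing some index $i_0$ and element $h_0 \in H$ with $\lnorm{\rho(\phi_{i_0}(h_0))\ket\psi - \ket\psi} \ge \cK(G;\bigcup_i \phi_i(H))$. The key observation is then to study the representation $\rho\circ\phi_{i_0}$ of $H$ on $\cH$ and split $\cH = V \oplus V^\perp$, where $V$ is the subspace of $H$-invariant vectors for $\rho\circ\phi_{i_0}$; both subspaces are invariant under the full $H$-action because the action is unitary. Writing $\ket\psi = \ket{\psi_V} + \ket{\psi_{V^\perp}}$, the element $\rho(\phi_{i_0}(h))$ fixes $\ket{\psi_V}$ for every $h\in H$, so the $V$-component drops out of every displacement.

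This already yields the mass estimate I need: because $\lnorm{\rho(\phi_{i_0}(h_0))\ket\psi - \ket\psi} = \lnorm{\rho(\phi_{i_0}(h_0))\ket{\psi_{V^\perp}} - \ket{\psi_{V^\perp}}} \le 2\lnorm{\psi_{V^\perp}}$, I conclude $\lnorm{\psi_{V^\perp}} \ge \tfrac12 \cK(G;\bigcup_i \phi_i(H))$. Now the subrepresentation of $H$ on $V^\perp$ via $\rho\circ\phi_{i_0}$ has no invariant vectors by construction, so $\cK(H;S)$ applies to the unit vector $\ket{\psi_{V^\perp}}/\lnorm{\psi_{V^\perp}}$, producing some $s\in S$ with $\lnorm{\rho(\phi_{i_0}(s))\ket{\psi_{V^\perp}} - \ket{\psi_{V^\perp}}} \ge \cK(H;S)\cdot\lnorm{\psi_{V^\perp}}$. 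Using the invariance of $\ket{\psi_V}$ once more, $\lnorm{\rho(\phi_{i_0}(s))\ket\psi - \ket\psi}$ equals the left-hand side, and combining the two bounds gives exactly the factor $\tfrac12\,\cK(G;\bigcup_i \phi_i(H))\,\cK(H;S)$, with $\phi_{i_0}(s)\in\bigcup_i\phi_i(S)$ as required.

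I do not expect any serious obstacle: the only nuance is to make sure that after restricting to $V^\perp$ the resulting $H$-representation genuinely has no invariant vectors, so that $\cK(H;S)$ can be applied, and to track where the factor of $2$ enters (from the triangle bound $\lnorm{\rho(g)\ket{\psi_{V^\perp}} - \ket{\psi_{V^\perp}}}\le 2\lnorm{\psi_{V^\perp}}$). Neither depends on any special property of $G$ beyond admitting unitary representations, so the argument is clean and combinatorial once the orthogonal decomposition is in place.
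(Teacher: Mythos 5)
Your proposal is correct and takes essentially the same approach as the paper: you decompose $\cH$ by the projector onto $\phi_{i_0}(H)$-invariant vectors, use the triangle bound $\lnorm{\rho(\phi_{i_0}(h_0))\ket\psi - \ket\psi} \le 2\lnorm{\psi_{V^\perp}}$ to get the mass estimate, and then apply $\cK(H;S)$ on the complement. The only difference is presentational—the paper phrases it contrapositively (assuming an $\eps$-almost-invariant vector and bounding $\eps$) while you build the displacement lower bound directly—but the underlying steps are identical.
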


\begin{proof}
    Let $\rho : G \to \U(\cH)$ be a unitary representation with no $G$-invariant vectors.
    Suppose $\ket \psi \in \cH$ is an $\eps$-invariant normalized vector with respect to~$\bigcup_i \phi_i(S)$,
    {\it i.e.}, $\norm{\rho(s) \ket \psi - \ket \psi} < \eps$ for all $s \in \bigcup_i \phi_i(S)$.
    By the definition of~$\cK(G; \bigcup \phi_i(H))$,
    there exists $h_0 \in \bigcup_i \phi_i(H)$ such that
    \begin{align}
        \cK\left(G; \bigcup_i \phi_i(H)\right)
        \le
        \norm{\rho(h_0) \ket \psi - \ket \psi}. \label{eq:kgh}
    \end{align}
    Let $j$ be such that $h_0 \in \phi_j(H)$.
    Let $\Pi_j$ be the orthogonal projector onto the subspace of all $\phi_j(H)$-invariant vectors,
    so $\rho(h)\Pi_j = \Pi_j = \Pi_j \rho(h)$ for all $h \in \phi_j(H)$.
    Then, $H \xrightarrow{~\phi_j~} G \xrightarrow{~\rho~} \U(\cH)$ 
    contains an $H$-representation on~$(\one - \Pi_j)\cH$ with no $H$-invariant vectors.
    By the definition of~$\cK(H;S)$ there exists $s_0 \in \phi_j(S) \subseteq \phi_j(H)$
    such that
    \begin{align}
        \cK(H;S) \norm{(\one - \Pi_j)\ket \psi}
        \le
        \norm{\rho(s_0)(\one - \Pi_j) \ket \psi - (\one - \Pi_j) \ket \psi}
        =
        \norm{\rho(s_0) \ket \psi - \ket \psi} 
        <
        \eps \, . \label{eq:khseps}
    \end{align}
   From~\cref{eq:kgh} and~\cref{eq:khseps} we have
    \begin{align}
        \cK\left(G; \bigcup_i \phi_i(H)\right)
        \le
        \norm{\rho(h_0) \ket \psi - \ket \psi}
        \le
        2 \norm{(\one-\Pi_j) \ket \psi}
        <
        2 \eps \cK(H;S)^{-1}.
    \end{align}
    Therefore, $\eps \ge \frac 1 2 \cK(H;S)\cK(G;\bigcup_i \phi_i(H))$.
\end{proof}

\begin{lemma}[Short product{~\cite[Proposition 1.4]{Kassabov_2007_alt}}]\label{lem:shortproduct}
    Let $S$ and $S'$ be two generating sets of a group $G$ such that $S'\subseteq \bigcup_{j = 0}^k S^j$, 
    {\it i.e.}, every element of~$S'$ can be written as a product of at most $k$ elements of~$S$. 
    Then,
    \begin{align}
        \cK(G;S) \ge \frac{1}{k} \cK(G;S').
    \end{align}
\end{lemma}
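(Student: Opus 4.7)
The plan is to combine the triangle inequality with the unitarity of $\rho$ via a telescoping argument, reducing bounds on $\rho(s')$-invariance to bounds on $\rho(s)$-invariance. I would fix an arbitrary unitary representation $\rho : G \to \U(\cH)$ without $G$-invariant vectors and an arbitrary unit vector $\ket{\psi} \in \cH$, and set $\eps := \sup_{s \in S}\norm{\rho(s)\ket{\psi} - \ket{\psi}}$. The key claim is that for every $s' \in S'$,
\[
\norm{\rho(s')\ket{\psi} - \ket{\psi}} \le k\eps.
\]

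To prove this claim, I would use the hypothesis to write $s' = s_1 s_2 \cdots s_j$ with $j \le k$ and each $s_i \in S$; the degenerate case $j = 0$, where $s' = e$, is trivial. Then telescope:
\[
\rho(s')\ket{\psi} - \ket{\psi} = \sum_{i=1}^{j} \rho(s_1 \cdots s_{i-1})\bigl(\rho(s_i)\ket{\psi} - \ket{\psi}\bigr) .
\]
Applying the triangle inequality and using that each prefix $\rho(s_1 \cdots s_{i-1})$ is unitary (and hence norm-preserving) gives
\[
\norm{\rho(s')\ket{\psi} - \ket{\psi}} \le \sum_{i=1}^j \norm{\rho(s_i)\ket{\psi} - \ket{\psi}} \le j\eps \le k\eps,
\]
as claimed.

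Once this is established, by the definition of $\cK(G;S')$ applied to $\rho$ (which has no $G$-invariant vectors, by hypothesis), there exists some $s'_0 \in S'$ with $\norm{\rho(s'_0)\ket{\psi} - \ket{\psi}} \ge \cK(G;S')$. Combined with the claim above, this yields $\cK(G;S') \le k\eps$. Taking the infimum over all pairs $(\rho, \ket{\psi})$ and invoking the characterization
\[
\cK(G;S) \;=\; \inf_{\rho,\ket{\psi}} \; \sup_{s \in S}\norm{\rho(s)\ket{\psi} - \ket{\psi}}
\]
(where the infimum is over unitary representations $\rho$ with no $G$-invariant vectors and unit vectors $\ket{\psi}$), we obtain $\cK(G;S') \le k \cK(G;S)$, which rearranges to the desired inequality.

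There is no substantive obstacle; the only care needed is in setting up the telescoping so that unitarity of the prefix turns each summand into a norm of the form $\norm{\rho(s_i)\ket{\psi} - \ket{\psi}}$ for a single generator $s_i \in S$, rather than for $\rho(s_i)$ applied to some other (potentially hard-to-control) vector.
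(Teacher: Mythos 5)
Your proof is correct and is essentially the same argument as the paper's: both rest on the telescoping identity $\rho(s_1\cdots s_m)\ket\psi - \ket\psi = \sum_i \rho(s_1\cdots s_{i-1})(\rho(s_i)\ket\psi - \ket\psi)$ together with unitarity of the prefix, giving the bound $\norm{\rho(s')\ket\psi - \ket\psi} \le k\,\sup_{s\in S}\norm{\rho(s)\ket\psi - \ket\psi}$. The only difference is presentational: the paper phrases it as a proof by contradiction (assuming an $S$-almost-invariant vector and deriving that it would be $S'$-almost-invariant), while you present it directly via the infimum-supremum characterization of the Kazhdan constant; the mathematical content is identical.
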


Remarkably, the Kazhdan constant may only degrade by a factor of~$k$,
instead of the cardinality of $S'$, which can be exponential in $k$. Roughly, this is because a good Kazhdan constant (\cref{def:kazhdan_const}) only requires the existence of \textit{one} $g\in S$ to move the vector, whereas spectral gap requires \textit{most} elements to move the vector. 
We will use this short product lemma many times later.

\begin{proof}[Proof of~\cref{lem:shortproduct}]
    Let $\rho:G\to \U(\cH)$ be a unitary representation with no $G$-invariant vectors.
    Suppose, on the contrary to the claim, 
    that there is a normalized vector $\ket{\psi}$ 
    such that $\norm*{\rho(g)\ket{\psi} -\ket{\psi}} < \cK(G,S') / k$ for all $g\in S$.
    By definition, there is an element~$h \in S'$, 
    which must be a product $h = g_1 \cdots g_m \in S'$ of $g_j \in S$ where $m \le k$,
    such that $\cK(G; S') \le \norm*{ \rho(g_1\cdots g_m) \ket{\psi} - \ket{\psi} }$.
    But this is a contradiction:
    \begin{align}
		\cK(G; S') 
        \le \norm*{ \rho(g_1\cdots g_m) \ket{\psi} - \ket{\psi} } 
        &\leq \sum_{j=1}^{m} \norm*{\rho(g_1)\cdots\rho(g_{j}) \ket \psi - \rho(g_1)\cdots\rho(g_{j-1})\ket{\psi} }\nonumber\\
		&= \sum_{j=1}^{m} \norm*{ \rho(g_j)\ket{\psi} - \ket{\psi} }\\
		&< m \cdot \cK(G;S')/k \nonumber\\
            & \leq \cK(G;S'). \nonumber \qedhere
    \end{align}
\end{proof}

\begin{corollary}\label{cor:GoodKazhdanSubgroupToBigger}
    Let a group~$H$ be generated by a subset~$S$, 
    and $\phi_i : H \to G$ be group homomorphisms.
    If every element of~$G$ can be written as a product of at most~$k$ elements of~$\bigcup_i \phi_i(H)$,
    then
    \begin{align}
        \cK(G; \bigcup_i \phi_i(S) ) \ge \frac 1 {k\sqrt{2}} \cK(H;S) .
    \end{align}
\end{corollary}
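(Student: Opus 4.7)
The plan is to chain together the three preceding lemmas: the chain rule (\cref{lem:chain_rule_pii}), the short product lemma (\cref{lem:shortproduct}), and the ``full group as generating set'' bound $\cK(G;G)\ge\sqrt{2}$ (\cref{lem:KGG}). The structure of the target inequality already suggests this: the factor $\frac{1}{2}$ comes from \cref{lem:chain_rule_pii}, the factor $\frac{1}{k}$ from \cref{lem:shortproduct}, and the $\sqrt{2}$ from \cref{lem:KGG}. Together these exactly give the stated $\frac{1}{k\sqrt{2}}$.

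First, I apply \cref{lem:chain_rule_pii} directly to the homomorphisms $\phi_i : H \to G$ with the generating set $S$ of $H$, obtaining
\begin{align*}
    \cK\!\left(G;\,\bigcup_i \phi_i(S)\right) \;\ge\; \frac{1}{2}\,\cK\!\left(G;\,\bigcup_i \phi_i(H)\right)\cdot \cK(H;S).
\end{align*}
So it suffices to lower bound $\cK(G;\,\bigcup_i \phi_i(H))$ by $\sqrt{2}/k$.

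Second, I invoke \cref{lem:shortproduct} with the two generating sets of $G$ taken to be $T \deq \bigcup_i \phi_i(H)$ and $T' \deq G$. The hypothesis of the corollary is precisely that $T' = G \subseteq \bigcup_{j=0}^k T^j$, {\it i.e.}, every element of $G$ is a product of at most $k$ elements of $T$. Hence
\begin{align*}
    \cK\!\left(G;\,\bigcup_i \phi_i(H)\right) \;\ge\; \frac{1}{k}\,\cK(G;G) \;\ge\; \frac{\sqrt{2}}{k},
\end{align*}
where the last step uses \cref{lem:KGG}.

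Combining the two inequalities yields
\begin{align*}
    \cK\!\left(G;\,\bigcup_i \phi_i(S)\right) \;\ge\; \frac{1}{2}\cdot \frac{\sqrt{2}}{k}\cdot \cK(H;S) \;=\; \frac{1}{k\sqrt{2}}\,\cK(H;S),
\end{align*}
as desired. There is no real obstacle here: all three ingredients are already proven in the excerpt, and the only thing to check is that the hypotheses of \cref{lem:shortproduct} and \cref{lem:chain_rule_pii} line up with our setup (they do, verbatim). The corollary should be viewed as a clean packaging of ``transport a good Kazhdan constant from a subgroup-like structure $H$ to $G$, as long as $G$ is boundedly generated by the images of $H$''—this is precisely the mechanism later used to pass from Kassabov's expander on $\Alt((2^{3s}-1)^6)$ up to $\Alt(2^n)$.
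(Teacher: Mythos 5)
Your proof is correct and is exactly the paper's proof: the paper's proof text is the one-liner ``Combine \cref{lem:KGG}, \cref{lem:chain_rule_pii}, and \cref{lem:shortproduct},'' and your three-step chaining (chain rule, then short product with $T' = G$, then $\cK(G;G)\ge\sqrt{2}$) is the intended unpacking, with the constants $\tfrac12$, $\tfrac1k$, $\sqrt{2}$ matching term for term.
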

\begin{proof}
    Combine \cref{lem:KGG}, \cref{lem:chain_rule_pii}, and \cref{lem:shortproduct}.
\end{proof}

A Kazhdan constant gives a lower bound on the spectral gap:

\begin{lemma}[Spectral gap from Kazhdan constant]\label{lem:gapviakazhdan}
    Let $G$ be a compact group (finite or Lie) generated by a finite subset~$S$ 
    where $S$ contains the identity
    and is closed under taking inverses.
    Let $\nu_S$ be a probability measure on~$G$ that is uniform over~$S$, 
    {\it i.e.}, $\nu_S$ assigns an equal probability for every element of~$S$.
    Then for any finite-dimensional unitary representation $\rho$ of $G$,
   	\begin{equation}\label{eq:gapviakazhdan}
   		g(\nu_S, \rho, G) \leq 1 - \frac{\cK(G;S)^2}{2 \abs S}.
   	\end{equation}
\end{lemma}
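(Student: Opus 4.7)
The plan is to decompose the representation into its trivial and non-trivial parts, reduce the essential norm bound to an operator-norm bound on the non-trivial part, and then connect it to the Kazhdan constant via the identity $\|\rho(g)\ket\psi - \ket\psi\|^2 = 2 - 2\Re\braket{\psi|\rho(g)|\psi}$. Concretely, write $\rho = \rho_{\mathrm{triv}} \oplus \rho'$ where $\rho_{\mathrm{triv}}$ acts on the subspace $\cH_{\mathrm{triv}}$ of $G$-invariant vectors and $\rho'$ on its orthogonal complement $\cH'$. By \cref{rmk:moment_projector}, $M(\mu_G, \rho, G)$ is the orthogonal projector onto $\cH_{\mathrm{triv}}$; since every $\rho(g)$ restricts to the identity on $\cH_{\mathrm{triv}}$, so does $M(\nu_S, \rho, G)$. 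Therefore $M(\nu_S, \rho, G) - M(\mu_G, \rho, G)$ vanishes on $\cH_{\mathrm{triv}}$ and equals $M(\nu_S, \rho', G)$ on $\cH'$, so $g(\nu_S, \rho, G) = \|M(\nu_S, \rho', G)\|_\infty$.

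Because $S$ is closed under inverses, $M(\nu_S, \rho', G)^\dagger = |S|^{-1}\sum_{g \in S}\rho'(g^{-1}) = M(\nu_S, \rho', G)$ is self-adjoint, so its operator norm equals $\sup_{\ket\psi\in \cH',\,\|\psi\|=1} |\braket{\psi|M(\nu_S,\rho',G)|\psi}|$. Using $\|\rho(g)\ket\psi - \ket\psi\|^2 = 2 - 2\Re\braket{\psi|\rho(g)|\psi}$ together with the symmetry of $\nu_S$ (which makes $\braket{\psi|M(\nu_S,\rho',G)|\psi}$ real), I obtain for any unit $\ket\psi \in \cH'$,
\[
1 - \braket{\psi|M(\nu_S,\rho',G)|\psi} \;=\; \frac{1}{2|S|}\sum_{g \in S}\|\rho(g)\ket\psi - \ket\psi\|^2 \;\geq\; \frac{1}{2|S|}\max_{g \in S}\|\rho(g)\ket\psi - \ket\psi\|^2 \;\geq\; \frac{\cK(G;S)^2}{2|S|},
\]
where the last inequality is \cref{def:kazhdan_const} applied to $\rho'$ (which has no $G$-invariant vectors). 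This yields the upper bound $\braket{\psi|M(\nu_S,\rho',G)|\psi} \leq 1 - \cK(G;S)^2/(2|S|)$.

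The remaining step is to ensure the matching lower bound, so that the same inequality controls the absolute value. Since $e \in S$, the identity term in the sum gives
\[
\braket{\psi|M(\nu_S,\rho',G)|\psi} \;=\; \frac{1}{|S|} + \frac{1}{|S|}\sum_{g \in S \setminus \{e\}} \braket{\psi|\rho'(g)|\psi} \;\geq\; \frac{2-|S|}{|S|} \;=\; -1 + \frac{2}{|S|} \;\geq\; -1 + \frac{\cK(G;S)^2}{2|S|},
\]
where the penultimate step uses $|\braket{\psi|\rho'(g)|\psi}| \leq 1$ by unitarity and the last step uses the trivial bound $\cK(G;S) \leq 2$ (as $\|\rho(g)\ket\psi - \ket\psi\| \leq 2$). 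The two bounds combine to give $\|M(\nu_S,\rho',G)\|_\infty \leq 1 - \cK(G;S)^2/(2|S|)$, as claimed. The only delicate point of the argument is precisely this lower bound: without $e \in S$ one could in principle have highly negative eigenvalues of $M(\nu_S, \rho', G)$, and the role of the identity element is to keep the spectrum uniformly bounded away from $-1$, matching the Kazhdan-based upper bound.
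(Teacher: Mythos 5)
Your proof is correct, and it hits the same strategic points as the paper's: reduce to the subspace orthogonal to $G$-invariant vectors, use self-adjointness of the moment operator to work with the quadratic form, invoke the Kazhdan constant for the upper bound, and use $e \in S$ to pin down the lower bound against highly negative eigenvalues. The organizational difference is in the upper-bound step. The paper works directly with $\sum_{s \in S}\bra\psi\rho(s)\ket\psi$, identifies the Kazhdan witness $s_\psi$, bounds the contribution of $\{s_\psi, s_\psi^{-1}\}$ via $\cK(G;S)^2 \le 2 - \bra\psi\rho(s_\psi)\ket\psi - \bra\psi\rho(s_\psi^{-1})\ket\psi$, bounds all remaining terms trivially by $1$, and then splits into cases depending on whether $s_\psi$ is an involution. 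You instead convert the entire expression at once via
$$
1 - \braket{\psi|M(\nu_S,\rho',G)|\psi} = \frac{1}{2\abs S}\sum_{g\in S}\norm{\rho(g)\ket\psi - \ket\psi}^2,
$$
which turns the sum into one of manifestly nonnegative terms, so that dropping all but the Kazhdan witness immediately yields the bound. This sidesteps the case analysis and the ``bound other terms by $1$'' step entirely, and is the cleaner organization of what is essentially the same argument. (One incidental observation: when $s_\psi \neq s_\psi^{-1}$, the paper's case analysis actually gives the stronger factor $\cK(G;S)^2$ rather than $\tfrac{1}{2}\cK(G;S)^2$; your uniform argument always yields the weaker-but-sufficient bound because you keep only the single maximal term rather than the pair $\{s_\psi, s_\psi^{-1}\}$. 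Since the lemma's statement is the worst-case bound, nothing is lost.) The one step worth being slightly more explicit about is the claim that $\braket{\psi|M(\nu_S,\rho',G)|\psi} = \frac{1}{\abs S}\sum_{g \in S}\Re\braket{\psi|\rho'(g)|\psi}$: this uses not just self-adjointness of $M$ but the pairing of $g$ with $g^{-1}$ in $S$ to pass from the complex entries to their real parts; it is a standard point but you invoke it somewhat tersely under ``symmetry of $\nu_S$.''
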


Note that $\cK(G; S) = \cK(G; S\cup \{\one\})$ for any $S$, but $g(\nu_S, \rho, G) \leq g(\nu_{S\cup \{\one\}}, \rho, G)$,
so it is not essential that the generating set $S$ contains the identity.

\begin{proof}[Proof of~\cref{lem:gapviakazhdan}]
    Since $G$ is compact,
    it suffices to prove the lemma for a nontrivial and irreducible~$\rho$ and nonzero $\cK(G;S)$.
    This forces $\abs S \ge 2$.
    It follows from~\cref{def:moment_gap} and~\cref{rmk:moment_projector} that
    $g(\nu_S, \rho, G) = \norm*{ \avg_{U \sim \nu_S} \rho(U) }_\infty$.
    Since $S$ is closed under taking inverses, the moment operator is hermitian.
    Hence, the essential norm is
    \begin{align}
         g(\nu_S,\rho, G) = \frac{1}{\abs S} \max_{\ket \psi} \abs*{\sum_{s \in S} \bra \psi \rho(s) \ket \psi}
    \end{align}
    where $\ket{\psi}$ is a unit vector in the representation space of~$\rho$.
    By the definition of Kazhdan constant $\cK(G; S)$, 
    for any such $\ket{\psi}$, 
    there exists $s_\psi \in S$ such that
    \begin{align}
        \cK(G;S)^2 
        \leq \norm*{\rho(s_\psi)\ket{\psi} - \ket{\psi}}^2 
        = 2 - \bra{\psi} \rho(s_\psi) \ket{\psi} - \bra{\psi} \rho(s_\psi^{-1}) \ket{\psi}
    \end{align}
    where $s_\psi^{-1}\in S$ by assumption. 
    If $s_\psi \neq s_\psi^{-1}$, then
    \begin{align}
        \sum_{s \in S} \bra{\psi} \rho(s) \ket{\psi} 
        &\le 
        \bra \psi \rho(s_\psi) \ket \psi + \bra \psi \rho(s_\psi^{-1}) \ket \psi + 
        \sum_{s \neq s_\psi,s_\psi^{-1} } \bra \psi \rho(s) \ket \psi \nonumber\\
        &\le 
        \bra \psi \rho(s_\psi) \ket \psi + \bra \psi \rho(s_\psi^{-1}) \ket \psi
        + (\abs S -2 )   \\
        &\le
        |S| - \cK(G;S)^2\, . \nonumber
    \end{align}
    Similarly, if $s_\psi = s_\psi^{-1}$, then
    \begin{align}
        \sum_{s \in S} \bra{\psi} \rho(s) \ket{\psi} 
        &\le 
        \bra \psi \rho(s_\psi) \ket \psi + 
        \sum_{s \neq s_\psi } \bra \psi \rho(s) \ket \psi \nonumber\\
        &\le 
        \frac12 \left(2 - \cK(G;S)^2\right)
        + (\abs S - 1 )   \\
        &=
        |S| - \frac12\cK(G;S)^2\, . \nonumber
    \end{align}
    On the other hand, since $S$ contains the identity and $\cK(G;S)^2 \le 4$,
    \begin{equation}
        \sum_{s\in S} \bra{\psi}  \rho(s) \ket{\psi} 
        \geq  1 + (-1) \cdot (\abs S - 1) = - \abs S + 2 \geq -\abs S + \frac 1 2 \cK(G; S)^2. \qedhere
    \end{equation}
\end{proof}

\begin{remark}
    The quadratic dependence of the spectral gap bound by the Kazhdan constant cannot be improved.
    Consider~$G = \ZZ/m\ZZ$, the additive group of integers modulo $m$,
    and $S = \{-1,0,1\}$.
    This is an abelian group and every irrep is of form $G \ni j \mapsto e^{2\pi i p j / m} \in \U(1)$
    for some integer~$p$.
    An irrep does not have any $G$-invariant nonzero vector if and only if $p \neq 0 \bmod m$.
    The Kazhdan contant is $\min_{p \neq 0 \bmod m} \norm{e^{2\pi i p/m} - 1} = \abs{e^{2\pi i / m} - 1}$.
    For large $m$, this is $\approx 2\pi / m$.
    On the other hand, the essential norm maximized over all irreps 
    is~$(1+2\cos \frac{2\pi}{m} )/3 \approx 1 - 4\pi^2 / 3m^2$.
\end{remark}

\subsection{Gaps of convolutions and convex combinations of subgroup distributions} \label{sec:hamiltonian_tools}

We use the detectability lemma~\cite{aharonov2009detectability} and its converse, Gao's quantum union bound~\cite{gao2015quantum,o2022quantum}.
Here, we follow the presentation in~\cite{anshu2016simple}:

\begin{lemma}[Detectability lemma and quantum union bound~\cite{anshu2016simple}]\label{lemma:detectabilityandunionbound}
    Let $H=\sum_i Q_i$ be a finite sum of orthogonal projectors $Q_i$ on a Hilbert space~$\cH$.
    Then, we have for every vector $\ket \psi \in \cH$
    \begin{equation}
        1-4\braket{ \psi | H | \psi}\leq \norm*{\prod_i (\one-Q_i)\ket{\psi}}^2\,. \label{eqn:quantumunionbound}
    \end{equation}
    If furthermore each projector $Q_i$ commutes with all but $\ell$ other projectors, where $\ell \ge 1$,
    then we have an upper bound
    \begin{align}
        \norm*{\prod_i (\one-Q_i)\ket{\psi}}^2 \leq \frac{1}{ \ell^{-2}\braket{ \psi | H | \psi}  + 1}\,. \label{eqn:detectabilitylemma}
    \end{align}
    Here, the ordering of projectors in the product~$\prod_i (\one - Q_i)$ can be arbitrary.
\end{lemma}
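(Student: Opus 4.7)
The plan is to treat the two inequalities by separate arguments, as they are logically unrelated despite being grouped into a single lemma.

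For the quantum union bound \cref{eqn:quantumunionbound}, my approach is a telescoping argument in the spirit of Gao. Fix an ordering of the projectors and define the sequence of (unnormalized) vectors $\ket{\psi_0} = \ket{\psi}$ and $\ket{\psi_k} = (\one - Q_k)\ket{\psi_{k-1}}$, so that $\prod_i(\one - Q_i)\ket{\psi} = \ket{\psi_n}$ for $n$ projectors total. Since each $Q_k$ is an orthogonal projector, the Pythagorean identity gives $\norm{\ket{\psi_{k-1}}}^2 - \norm{\ket{\psi_k}}^2 = \norm{Q_k \ket{\psi_{k-1}}}^2$, and summing over $k$ yields
\begin{equation*}
    1 - \norm{\prod_i (\one - Q_i)\ket{\psi}}^2 = \sum_k \norm{Q_k \ket{\psi_{k-1}}}^2.
\end{equation*}
It therefore suffices to bound the right-hand side by $4\sum_k \norm{Q_k \ket{\psi}}^2 = 4 \braket{\psi|H|\psi}$. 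Using the triangle inequality, $\norm{Q_k \ket{\psi_{k-1}}} \le \norm{Q_k \ket{\psi}} + \norm{\ket{\psi} - \ket{\psi_{k-1}}}$, combined with the telescoping bound $\norm{\ket{\psi} - \ket{\psi_{k-1}}} \le \sum_{j<k} \norm{Q_j \ket{\psi_{j-1}}}$, gives a self-referential system on the sequence $a_k \deq \norm{Q_k \ket{\psi_{k-1}}}$. A careful arrangement of these estimates (essentially Cauchy--Schwarz applied to a partial sum), as in Gao's original argument, produces precisely the constant $4$. The main delicacy here is constant-tracking to avoid getting a worse prefactor.

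For the detectability lemma bound \cref{eqn:detectabilitylemma}, the plan is to exploit the commutation structure. Since each $Q_i$ commutes with all but $\ell$ others, the ``non-commutation graph'' has maximum degree $\ell$, so by greedy coloring it is $(\ell+1)$-colorable. Partition the projectors into $\ell + 1$ layers $\mathcal{C}_1,\dots,\mathcal{C}_{\ell+1}$ of pairwise commuting projectors; within each layer the product $\Pi_r \deq \prod_{i \in \mathcal{C}_r}(\one - Q_i)$ is itself an orthogonal projector (onto the simultaneous $0$-eigenspace of $\{Q_i\}_{i \in \mathcal{C}_r}$). By choosing an ordering in which layers are applied consecutively, the full product becomes $\Pi_{\ell+1} \cdots \Pi_1$, a product of only $\ell + 1$ projectors. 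The remaining task is to show the operator inequality
\begin{equation*}
    (\Pi_1 \cdots \Pi_{\ell+1})(\Pi_{\ell+1} \cdots \Pi_1) \preceq \left(\one + \ell^{-2} H\right)^{-1},
\end{equation*}
or equivalently to sandwich the left-hand side between an explicit rational function of $H$. Evaluating on $\ket \psi$ then yields the stated bound via $\braket{\psi|(\one + \ell^{-2} H)^{-1}|\psi} \le (1 + \ell^{-2} \braket{\psi|H|\psi})^{-1}$ (operator convexity of $x \mapsto (1+x)^{-1}$ on the positive cone, equivalent to Jensen applied to the spectral decomposition of $H$ on $\ket\psi$).

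The main obstacle is the operator inequality above: this is the analytic heart of the detectability lemma, and the factor $\ell^{-2}$ reflects the combinatorial cost of interleaving non-commuting layers. My plan is to follow the ``simple proof'' of \cite{anshu2016simple}: expand $\sum_r \Pi_r^\perp = \sum_r (\one - \Pi_r)$ and compare it to $H = \sum_i Q_i$ using that each $\Pi_r^\perp \preceq \sum_{i \in \mathcal{C}_r} Q_i$ together with a Cauchy--Schwarz over at most $\ell$ non-commuting neighbors, which is precisely where the $\ell^2$ appears. Once this comparison is in place, standard manipulations turn the product $\Pi_{\ell+1}\cdots \Pi_1$ into the advertised resolvent bound.
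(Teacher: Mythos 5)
The paper does not prove this lemma itself; it simply cites \cite{anshu2016simple} for \cref{eqn:detectabilitylemma} and \cite{o2022quantum} for \cref{eqn:quantumunionbound} (and explicitly invites the reader to enjoy those short proofs), so there is no internal proof to compare against. Your blind sketch, however, contains genuine gaps in both halves.

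For \cref{eqn:quantumunionbound}, the recursion $a_k \le b_k + \sum_{j<k} a_j$ together with Cauchy--Schwarz on partial sums cannot by itself produce a dimension-free constant: taking $b_k \equiv \eps$, this recursion permits $a_k$ to double at every step, so $\sum_k a_k^2$ can be exponentially larger than $\sum_k b_k^2$, and squaring or Cauchy--Schwarz only worsens matters. The actual closing of the argument (in \cite{o2022quantum} and in Gao) relies on the exact Pythagorean identity
\begin{equation*}
\norm{\psi - \psi_k}^2 = \norm{\psi - \psi_{k-1}}^2 + \norm{Q_k \psi}^2 - \norm{Q_k(\psi - \psi_{k-1})}^2 ,
\end{equation*}
which is what forces $\sum_k \norm{Q_k(\psi - \psi_{k-1})}^2 \le \sum_k \norm{Q_k \psi}^2$; plugging this into $a_k \le b_k + \norm{Q_k(\psi - \psi_{k-1})}$ and one application of Cauchy--Schwarz in $\ell^2$ then gives the constant $4$. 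Without this identity, the ``self-referential system'' you describe is not controllable.

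For \cref{eqn:detectabilitylemma}, the final step of your plan is backwards: $x \mapsto (1+x)^{-1}$ is (operator) convex on $[0,\infty)$, so Jensen applied to the spectral measure of $H$ in the state $\ket\psi$ gives
\begin{equation*}
\braket{\psi \mid (\one + \ell^{-2}H)^{-1} \mid \psi} \;\ge\; \big(1 + \ell^{-2}\braket{\psi\mid H\mid\psi}\big)^{-1},
\end{equation*}
the reverse of what you wrote, so the operator inequality $(\Pi_1\cdots\Pi_{\ell+1})(\Pi_{\ell+1}\cdots\Pi_1) \preceq (\one + \ell^{-2}H)^{-1}$ does not imply the claimed scalar bound even if it were true. (The right-hand side of \cref{eqn:detectabilitylemma} is a nonlinear function of $\ket\psi$, so one should not expect it to come from any operator inequality that is linear in $\bra\psi \cdot \ket\psi$.) Two further issues: a greedy coloring of a max-degree-$\ell$ graph uses $\ell+1$ colors, so the layered route would at best yield a denominator $(\ell+1)^{-2}\braket{\psi|H|\psi}+1$ rather than the stated $\ell^{-2}\braket{\psi|H|\psi}+1$; and a bound established for the layered reordering does not by itself give the statement for an arbitrary order of the $\one-Q_i$, which the lemma explicitly asserts. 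The proof in \cite{anshu2016simple} does not reorder and does not go through coloring or an operator resolvent bound; it works directly with the product in the given order and charges the ``energy'' $\braket{\psi|Q_i|\psi}$ against the at most $\ell$ non-commuting neighbors via a direct Cauchy--Schwarz, which is where $\ell^2$ (not $(\ell+1)^2$) enters.
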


\noindent These bounds originate from the study of gapped frustration-free Hamiltonians~\cite{aharonov2011detectability}.
A reader would enjoy a half-page proof of \cref{eqn:detectabilitylemma} in \cite{anshu2016simple} as well as a half-page proof of \cref{eqn:quantumunionbound} in \cite{o2022quantum}.

In the context of random quantum circuits, 
\cref{lemma:detectabilityandunionbound} helps to relate spectral gaps of two types of random quantum circuits.
Here is our adaptation where averaging over subgroups naturally gives projectors:
\begin{lemma}[A detectability lemma and its converse for subgroups] \label{lem:local-vs-parallel}
    Let $G_1,\ldots,G_L$ be compact subgroups of a compact group~$A$,
    where each $G_i$ commutes element-wise with all but $\ell$ others.
    Suppose $\ell \ge 1$.
    Consider the convolution and average of the subgroup Haar measures:
    \begin{align}
        \Asterisk \deq \mu_{G_1} * \cdots *\mu_{G_L}\qquad \text{ and } \qquad   
        \Sigma \deq \frac{1}{L}\big ( \mu_{G_1} + \cdots + \mu_{G_L} \big). 
    \end{align}
    Then, for any unitary representation~$\rho$ of~$A$,
    the spectral gaps $\Delta(\Asterisk) = 1- g(\Asterisk, \rho, A)$ and $\Delta(\Sigma) = 1 -  g(\Sigma, \rho, A)$
    are related as
    \begin{align} \label{eq:local-vs-parallel}
        \Delta(\Sigma) \ge \frac{1}{4L}\Delta(\Asterisk) \quad\text{ and }\quad 
        \Delta(\Asterisk) 
        \ge 
        1 - \frac{1}{\sqrt{1 + L \ell^{-2} \Delta(\Sigma)}}
        \ge \frac 1 4 \min(1,~ L\ell^{-2} \Delta(\Sigma)) \, .
    \end{align}
\end{lemma}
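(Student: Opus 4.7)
The key observation is that by \cref{rmk:moment_projector}, each moment operator $M(\mu_{G_i},\rho,A) = \Pi_i$ is the orthogonal projector onto the $G_i$-invariant subspace of $\rho$, and similarly $M(\mu_A,\rho,A) = \Pi_A$ projects onto the $A$-invariants. Since $\Pi_A \cH \subseteq \Pi_i \cH$ for every $i$, the projector $\Pi_A$ commutes with every $\Pi_i$, so $M(\Asterisk,\rho,A) = \Pi_1 \cdots \Pi_L$ and $M(\Sigma,\rho,A) = \tfrac{1}{L}\sum_i \Pi_i$ both act as the identity on $\Pi_A \cH$, and their differences with $\Pi_A$ are supported on $(\one - \Pi_A)\cH$. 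First I would handle the degenerate case: if the common $G_i$-invariant subspace $\bigcap_i \Pi_i \cH$ strictly contains $\Pi_A \cH$, any unit vector in the difference witnesses $g(\Asterisk) = g(\Sigma) = 1$, and \cref{eq:local-vs-parallel} holds vacuously. Hence I may assume $\bigcap_i \Pi_i \cH = \Pi_A \cH$. Setting $Q_i \deq \one - \Pi_i$, the element-wise commutativity of $G_i, G_j$ passes to commutativity of $\Pi_i, \Pi_j$, and hence of $Q_i, Q_j$, so the interaction-graph hypothesis of \cref{lemma:detectabilityandunionbound} holds with the same $\ell$.

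For the second inequality, the plan is to invoke the detectability upper bound \cref{eqn:detectabilitylemma} with the product $\prod_i(\one - Q_i) = \Pi_1 \cdots \Pi_L$ ordered to match the convolution $\Asterisk$ (the lemma permits arbitrary orderings). The identity $H = L\one - \sum_i \Pi_i$ gives $\langle \psi | H | \psi \rangle = L - L\langle \psi | \tfrac{1}{L}\sum_i \Pi_i | \psi \rangle$, and on $(\one - \Pi_A)\cH$ the operator $\tfrac{1}{L}\sum_i \Pi_i$ is positive semidefinite with operator norm $g(\Sigma) = 1 - \Delta(\Sigma)$, so $\langle \psi | H | \psi \rangle \ge L\Delta(\Sigma)$ uniformly over unit $\ket \psi$ in the subspace. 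Maximizing both sides of \cref{eqn:detectabilitylemma} over such $\ket \psi$ yields $g(\Asterisk)^2 \le (1 + L\ell^{-2}\Delta(\Sigma))^{-1}$, which rearranges to the first bound. The remaining elementary inequality $1 - (1+x)^{-1/2} \ge \tfrac{1}{4}\min(1,x)$ for $x \ge 0$ reduces after clearing denominators to $x^2 - 7x + 8 \ge 0$ on $[0,1]$ (the relevant quadratic root is $\approx 1.44$), while for $x \ge 1$ the left-hand side already exceeds $1 - 2^{-1/2} > 1/4$.

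For the first inequality, the plan is symmetric: apply the quantum union bound \cref{eqn:quantumunionbound} with the same ordered product. For any unit $\ket \psi \in (\one - \Pi_A)\cH$, the estimate $\|\Pi_1 \cdots \Pi_L \ket\psi\|^2 \le g(\Asterisk)^2$ combines with \cref{eqn:quantumunionbound} to give $\langle \psi | H | \psi \rangle \ge \tfrac{1}{4}(1 - g(\Asterisk)^2) \ge \tfrac{1}{4}(1 - g(\Asterisk)) = \tfrac{1}{4}\Delta(\Asterisk)$, where the middle step uses $g(\Asterisk) \le 1$. Minimizing $\langle \psi | H | \psi \rangle = L - L\langle \psi | \tfrac{1}{L}\sum_i \Pi_i | \psi \rangle$ over the same $\ket \psi$ yields $L\Delta(\Sigma)$, and so $L\Delta(\Sigma) \ge \tfrac{1}{4}\Delta(\Asterisk)$. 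The only mild obstacle is the bookkeeping around the invariant subspace and matching the product ordering to that of $\Asterisk$; once these are in place, both halves follow immediately from \cref{lemma:detectabilityandunionbound}.
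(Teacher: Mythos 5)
Your proof is correct and uses the same strategy as the paper: identify $M(\Sigma)=\one-\tfrac1L H$ with $H=\sum_i Q_i$ and $M(\Asterisk)=\prod_i(\one-Q_i)$, then apply the quantum union bound \cref{eqn:quantumunionbound} for the bound $\Delta(\Sigma)\ge\tfrac1{4L}\Delta(\Asterisk)$ and the detectability lemma \cref{eqn:detectabilitylemma} for the reverse direction. The only cosmetic differences are that the paper does not separate out the degenerate case $\bigcap_i\Pi_i\cH\supsetneq\Pi_A\cH$ (which, as you note, is harmless but also unnecessary since the argument goes through regardless) and establishes the elementary inequality $1-(1+z)^{-1/2}\ge\tfrac14\min(1,z)$ via concavity of $z\mapsto 1-(1+z)^{-1/2}$ rather than your direct quadratic computation.
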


For our purposes it suffices to assume that $A$ is a subgroup of~$\SU(N)$.
The coefficient~$1/4$ is not sharp.

\begin{proof}
As remarked earlier, the Haar average of represented operators of a compact group is an orthogonal projector
onto the trivial subrepresentation.
Since we are only interested in the essential norms, 
we may assume that $\rho$ does not contain any trivial subrepresentation of~$A$,
{\it i.e.}, there is no nonzero $A$-invariant vector.
There may still be $G_i$-invariant vectors.
Then, the essential norm is just the norm of the moment operator,
and the spectral gap is one minus this norm.

The moment operator for~$\Sigma$ is $M(\Sigma) = \frac 1 L \sum_{i=1}^L P_i$ 
where $P_i = \avg_{g \sim \mu(G_i)} \rho(g)$ is an orthogonal projector (onto the trivial $G_i$-subrepresentation).
Put $Q_i = \one - P_i$ to conform with the notation in~\cref{lemma:detectabilityandunionbound}.
We write the moment operator as
\begin{align}
    M(\Sigma) = \one - \frac 1 L H \quad \text{ where } \quad H = \sum_i Q_i.
\end{align}
Likewise, the moment operator for~$\Asterisk$ is $M(\Asterisk) = M(\mu_{G_1}) \cdots M(\mu_{G_L})$ by \cref{rem:conv},
so 
\begin{align}
    M(\Asterisk) = \prod_{i=1}^L (\one - Q_i).
\end{align}
The assumption on the commutativity among $G_i$ implies that each $Q_i$ commutes with all but $\ell$ others.
The spectral gaps are
\begin{align}
    \Delta(\Sigma) = \frac 1 L \inf_{\ket \psi} \bra \psi H \ket \psi
    \quad\text{ and }\quad
    \Delta(\Asterisk) = \inf_{\ket \psi} \parens*{1 -  \norm*{\prod_{i} (\one - Q_i) \ket \psi} }.
\end{align}

Write $y = \norm{\prod_i (\one - Q_i) \ket \psi}$ and $x = \frac 1 L \bra \psi H \ket \psi$ for short.
Since $y^2 \le y$,
\cref{eqn:quantumunionbound} implies that $1 - 4 x L \le y$,
which means $1 - y \le 4 L x$.
This implies that $\Delta(\Asterisk) \le 4L \Delta(\Sigma)$.
On the other hand,
\cref{eqn:detectabilitylemma} says that $y \le (1 + \ell^{-2} L x )^{-1/2}$.
Rearranging, we have $1 - y \ge 1 - (1+ \ell^{-2} L x )^{-1/2}$.
The final (easy) inequality follows since
a real function $z \mapsto 1 - (1+z)^{-1/2}$ for $z \ge 0$ 
is increasing and concave.
\end{proof}

The following shows that by enlarging groups from which we take random instances, 
we only increase the spectral gap. This uses the subgroup structure and fails for subsets.

\begin{lemma}[Gap from subgroup]\label{fact:gfromoperatorinequality}
    Let $G_i \le G'_i \le H$ be compact groups,
    so each has its own Haar probability measure~$\mu(G_i)$, $\mu(G'_i)$, or $\mu(H)$.
    Then, 
    \begin{align}
        g(\avg_i \mu(G'_i), \ t, \ H) \leq g(\avg_i \mu(G_i), \ t, \ H).
    \end{align}
\end{lemma}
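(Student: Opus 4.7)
The plan is to reduce the inequality to an elementary operator monotonicity fact about orthogonal projectors associated to nested invariant subspaces. The core observation (see \cref{rmk:moment_projector}) is that for any compact group $K$ and any unitary representation $\rho$ of $K$, the moment operator $M(\mu(K),\rho,K) = \avg_{U\sim\mu(K)} \rho(U)$ is the orthogonal projector onto the subspace of $K$-invariant vectors in the representation space. Applying this to $\rho = \tau_{t,t}$ restricted from $H$ to the subgroups $G_i \le G'_i \le H$, one obtains orthogonal projectors $P_{G_i}$, $P_{G'_i}$, and $P_H$ onto the invariant subspaces of $G_i$, $G'_i$, and $H$ respectively.

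First I would record the nested inclusion of invariant subspaces: any vector fixed by all of $H$ is in particular fixed by $G'_i$, and any vector fixed by $G'_i$ is fixed by $G_i$. Hence the images satisfy $\mathrm{Im}(P_H) \subseteq \mathrm{Im}(P_{G'_i}) \subseteq \mathrm{Im}(P_{G_i})$, which translates to the operator inequality
\begin{equation}
0 \preceq P_H \preceq P_{G'_i} \preceq P_{G_i},
\end{equation}
where $\preceq$ is the Löwner order on Hermitian operators. Averaging over $i$ and subtracting $P_H$ preserves the inequality, so
\begin{equation}
0 \preceq \frac{1}{L}\sum_i P_{G'_i} - P_H \preceq \frac{1}{L}\sum_i P_{G_i} - P_H.
\end{equation}

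Both of these operators are positive semidefinite Hermitian, so their operator norms equal their largest eigenvalues, and for $0 \preceq A \preceq B$ one has $\|A\|_\infty \le \|B\|_\infty$ (by testing against the top eigenvector of $A$ and using $\langle \psi | A | \psi\rangle \le \langle \psi|B|\psi\rangle \le \|B\|_\infty$). Finally, by \cref{def:moment_gap}, the left- and right-hand side operator norms are exactly $g(\avg_i \mu(G'_i), t, H)$ and $g(\avg_i \mu(G_i), t, H)$, using linearity of $M(\cdot,\tau_{t,t},H)$ in the measure argument and the fact that $M(\mu(H),\tau_{t,t},H) = P_H$. This yields the claim.

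There is essentially no obstacle here; the only thing to be careful about is to make sure the subtraction of $M(\mu(H),\tau_{t,t},H)$ (rather than some other reference operator) is what produces positive semidefinite operators, which is precisely why one needs $G_i, G'_i$ to be subgroups of the \emph{same} ambient group $H$ and why the statement would fail for arbitrary subsets. The subgroup assumption is used exactly in establishing the nested inclusion of invariant subspaces.
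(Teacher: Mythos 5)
Your proof is correct and follows essentially the same approach as the paper's: identify the Haar-averaged moment operators as orthogonal projectors onto invariant subspaces, use the nested inclusion of invariant subspaces for $G_i \le G'_i \le H$ to get the Löwner order, average over $i$, subtract $P_H$, and conclude via the elementary fact that $0 \preceq A \preceq B$ implies $\norm{A}_\infty \le \norm{B}_\infty$. The only cosmetic difference is that you explicitly record $P_H \preceq P_{G'_i}$ in the chain, whereas the paper leaves that step implicit when passing to the subtracted operators; your version makes the positivity of $\avg_i P_{G'_i} - P_H$ transparent, which is a small clarity gain.
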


\begin{proof}
    For any compact group~$G$ and any unitary representation~$\rho$ of~$G$,
    the operator $\avg_{h \sim \mu(G)} \rho(h)$ is the orthogonal projector onto its trivial subrepresentation. 
    A trivial action from a group implies a trivial action from its subgroup. Hence,
    \begin{align*}
        0 \preceq \avg_{h \sim \mu(G'_i)} \rho(h) &\preceq  \avg_{h \sim \mu(G_i)} \rho(h)\,,\\
       \implies \quad 0 \preceq M(\avg_i \mu(G'_i), \rho) = \avg_i \avg_{h \sim \mu(G'_i)} \rho(h)
        &\preceq 
        \avg_i \avg_{h \sim \mu(G_i)} \rho(h) = M(\avg_i \mu(G_i), \rho)\, ,\\\implies \quad
        0 \preceq M(\avg_i \mu(G'_i),\rho,H) - M(\mu(H),\rho,H)
        &\preceq
        M(\avg_i \mu(G_i),\rho,H) - M(\mu(H),\rho,H) .
    \end{align*}
    The claim follows. 
    (If $0 \preceq A \preceq B$ for two hermitian operators $A$ and $B$,
    then for any real number $\delta > 0$, there exists a normalized vector $\ket \psi$ 
    such that $\norm{A} - \delta < \bra \psi A \ket \psi$, implying
    $\norm{A} - \delta < \bra \psi A \ket \psi \le \bra \psi B \ket \psi \le \norm{B}$.)
\end{proof}

\section{Spectral gaps for random  reversible circuits} \label{sec:gap_revckt}

We begin with a model of random reversible circuits with all-to-all connectivity:

\begin{definition}[Random reversible circuits]\label{def:revalltoall}
    Let $\Alt(2^n)$ be the alternating group that permutes elements of~$\{0,1\}^n$.
    For $n > 3$ we define a probability distribution  $\nurevalltoall$
    on $\Alt(2^n)$ obtained by 
    sampling three distinct indices $\{i_1,i_2,i_3\} \subset \{1,2,\ldots,n\}$ 
    uniformly at random 
    and then 
    applying a uniformly random permutation in $\Sym(2^3)$ on the bits $\{ i_1, i_2, i_3 \}$. 
\end{definition}

\noindent
Schematically, we may write
\begin{align}
    \nurevalltoall = \avg_{\{i_1, i_2, i_3\}} \mu(\Sym(\{i_1, i_2, i_3\})) \, .
\end{align}

We restate the goal of this section for readers' convenience.

\gaprev*

\noindent
To prove this, we use crucially Kassabov's generators~\cite{Kassabov_2007_alt} for alternating groups,
whose Kazhdan constants are uniformly bounded away from zero.
Specializing the generators on the set of all $n$-bit strings,
we will show in \cref{section:efficientreversiblecircuits} 
that Kassabov's generators can be implemented by $\cO(n)$ classical reversible gates.
\Cref{section:efficientreversiblecircuits} logically precedes this section,
but to read this section it suffices to note a result in~\cref{section:efficientreversiblecircuits},
which we copy here:

\begin{restatable*}[Kassabov's generators are short reversible circuits]{theorem}{kascircuitrev}
\label{thm:UltimatePermutationDesign}
    There exists a real $\eps >0$ and an integer $k \ge 1$ 
    such that for every $n \ge 1$
    there exists a generating set $S$ of $k$ elements for $\Alt(2^n)$ with respect to which 
    the Kazhdan constant is at least $\eps$.
    Moreover, every generator in $S$ is a strongly explicit reversible circuit on $n$ bits
    consisting of $\cO(n)$ NOT, control-NOT, and Toffoli gates without any ancilla bit,
    and every bit in this circuit is acted upon by $\cO(1)$ gates.
\end{restatable*}

Since this section focuses on classical circuits, there is no distinction between bras and kets.
Still, we find it convenient to work with tensor product vector spaces~$(\CC^2)^{\otimes \cdots}$
because then we may simply quote the results from the earlier sections.
We represent the permutation groups $\Alt(2^n)\le \Sym(2^n)$ in a standard way.
The representation space is a complex vector space of dimension~$2^n$ (or its tensor power),
and a permutation group element~$\pi$ permutes the basis vectors as
\begin{align}
    P(\pi) \ket z = \ket{\pi(z)} \qquad\text{where}\quad  z \in \{0,1\}^n, \quad \pi \in \Sym(2^n)\, .
\end{align}
We will use a shorthand in this section:
\begin{align}
    \tau : \pi \mapsto P(\pi)^{\otimes t} .
\end{align}

\subsection{A \texorpdfstring{$t$}{t}-independent spectral gap from Kassabov's generators}

Using tools noted in~\cref{sec:hamiltonian_tools} 
and Kazhdan constants, 
we turn this structured circuit of \cref{thm:UltimatePermutationDesign} 
into a random circuit where each step is a $3$-bit reversible gate chosen uniformly at random.

We first replace each NOT, CNOT, and Toffoli gate in the circuit of a Kassabov generator 
with a random element from a group $\Sym(2^3)$.

\begin{proposition}\label{lemma:Sgateisgapped}
    For each subset $T = \{i_1,i_2,i_3\}$ of three distinct elements from~$\{1,2,\ldots,n\}$,
    let $G_T \cong \Sym(2^3)$ be the subgroup of $\Alt(2^n)$
    that acts on the bits $i_1,i_2,i_3$.
    For any collection~$\cI$ of such triples, 
    let $\nu_\cI$ denote the convex combination $\nu_\cI = \abs{\cI}^{-1} \sum_{T \in \cI} \mu(G_T)$.
    For any integer $n\geq 4$,
    there exists a strongly explicit collection $\cI_\mathrm{Kas}$ of $\cO(n)$ triples of distinct bit indices
    such that for all integers $t \geq 1$,
    \begin{align}
        g\left(\nu_{\cI_\mathrm{Kas}},~ \tau,~ \Alt(2^n) \right) = 1 - \Omega(n^{-3}). 
    \end{align}
\end{proposition}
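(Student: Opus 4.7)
The plan is to chain Kassabov's Kazhdan bound for $\Alt(2^n)$ through the short product lemma to obtain a $\Omega(1/n)$ Kazhdan bound on a generating set formed by the union of the relevant $G_T$'s, and then convert this Kazhdan bound into an essential-norm bound of $1-\Omega(n^{-3})$ for the averaged projector $M(\nu_{\cI_\mathrm{Kas}}) = \frac{1}{m}\sum_T \Pi_T$.

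First, I invoke \cref{thm:UltimatePermutationDesign} to obtain a generating set $S_0 \subset \Alt(2^n)$ of constant size with Kazhdan constant $\cK_0 := \cK(\Alt(2^n); S_0) = \Omega(1)$, where every $s \in S_0$ is a reversible circuit of $L = \cO(n)$ NOT, CNOT, and Toffoli gates. I define $\cI_\mathrm{Kas}$ as the collection of bit-triples obtained by labeling each gate in each of these circuits with the triple it acts on (extending NOT or CNOT with an arbitrary third bit). Then $m := |\cI_\mathrm{Kas}| = \cO(n)$, and by construction every $s \in S_0$ is a product of at most $L$ elements of $S_1 := \bigcup_{T \in \cI_\mathrm{Kas}} G_T$. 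The short product lemma (\cref{lem:shortproduct}) then yields
\begin{equation*}
\cK_1 := \cK(\Alt(2^n); S_1) \geq \cK_0/L = \Omega(1/n).
\end{equation*}

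Next, I convert $\cK_1$ into a bound on the essential norm of $M(\nu_{\cI_\mathrm{Kas}}, \tau, \Alt(2^n)) = \frac{1}{m}\sum_T \Pi_T$, where $\Pi_T := M(\mu(G_T), \tau, \Alt(2^n))$ is the orthogonal projector onto the $G_T$-invariant subspace of $\tau$. Fix a unit vector $\psi$ orthogonal to the $\Alt(2^n)$-invariant subspace of $\tau$. By definition of $\cK_1$, there exists $g \in S_1$ with $\|\tau(g)\psi - \psi\| \geq \cK_1$, and this $g$ lies in some $G_{T^\star}$. Since $\tau(g)$ fixes $\Pi_{T^\star}\psi$, we have $\|\tau(g)\psi - \psi\| \leq 2\|\psi - \Pi_{T^\star}\psi\|$, hence $\langle\psi|\Pi_{T^\star}|\psi\rangle = 1 - \|\psi - \Pi_{T^\star}\psi\|^2 \leq 1 - \cK_1^2/4$. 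Bounding $\langle\psi|\Pi_T|\psi\rangle \leq 1$ for all other triples gives
\begin{equation*}
\langle\psi|M(\nu_{\cI_\mathrm{Kas}})|\psi\rangle \leq 1 - \cK_1^2/(4m) = 1 - \Omega(n^{-3}).
\end{equation*}
Since $M(\nu_{\cI_\mathrm{Kas}})$ is positive semidefinite and Hermitian, and preserves the orthogonal complement of the $\Alt(2^n)$-invariant subspace (on which $M(\mu_{\Alt(2^n)})$ vanishes), this supremum over $\psi$ equals the essential norm $g(\nu_{\cI_\mathrm{Kas}}, \tau, \Alt(2^n))$, finishing the proof.

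The main conceptual hurdle is the last step: a Kazhdan constant asserts only that \emph{some} single generator moves $\psi$, whereas $\nu_{\cI_\mathrm{Kas}}$ \emph{averages} over all triples in $\cI_\mathrm{Kas}$. The point is that a single bad triple $T^\star$ already suppresses the average by $\cK_1^2/(4m)$; this is where the extra factor $m = \cO(n)$ enters and, combined with the squared Kazhdan constant $(1/n)^2$, produces the $n^{-3}$ scaling. Sharpening this lossy passage to $\tilde\Omega(1/n)$ is then deferred to the overlap-based bootstrapping in \cref{thm:bootstrappedreversiblegap}.
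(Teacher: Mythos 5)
Your proof is correct, and the overall route is the same as the paper's: push Kassabov's constant Kazhdan bound (\cref{thm:UltimatePermutationDesign}) through the short-product lemma (\cref{lem:shortproduct}) to get a Kazhdan constant of $\Omega(1/n)$ for a local generating set, and then convert that into a spectral-gap bound of $1-\Omega(n^{-3})$ for $\nu_{\cI_\mathrm{Kas}}$. The difference is in the packaging of the final conversion. The paper takes $S'$ to be the \emph{finite} set of elementary gates appearing in Kassabov's circuits, applies \cref{lem:gapviakazhdan} to the distribution $\E_{h'\in S'}\mu(H_{h'})$ built from the order-two groups $H_{h'}=\{\one,h'\}$, and then invokes the subgroup-monotonicity lemma (\cref{fact:gfromoperatorinequality}) to enlarge each $H_{h'}$ into the full $G_T\cong\Sym(8)$ on the same triple. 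You instead take the generating set to be the union $S_1=\bigcup_T G_T$ from the start and rederive the key calculation of \cref{lem:gapviakazhdan} directly for the convex combination $\nu_{\cI_\mathrm{Kas}}=\frac1m\sum_T\mu(G_T)$: given $\psi$ orthogonal to the trivial subrepresentation, a single generator $g\in G_{T^\star}$ moving $\psi$ by $\geq\cK_1$ forces $\bra\psi\Pi_{T^\star}\ket\psi\leq 1-\cK_1^2/4$, and averaging over the $m$ projectors gives $1-\cK_1^2/(4m)$. This sidesteps \cref{fact:gfromoperatorinequality} and makes the role of $m=|\cI_\mathrm{Kas}|$ (rather than $|S_1|$, which is larger by a factor of $8!$) transparent; it also makes explicit the point, which the paper leaves implicit, that \cref{lem:gapviakazhdan} is really a statement about averages of subgroup projectors. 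Either way the constants track to the same $\Omega(n^{-3})$, so the two arguments are equivalent in strength.
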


Note that because $n> 3$, each $G_T$ is an embedding of $\Sym(2^3)$ into $\Alt(2^n)$. 

\begin{proof}
    Let $S$ be the set of Kassabov's generators described in~\cref{thm:UltimatePermutationDesign}. 
    For each circuit $h \in S$ of order $2$, let $H_h = \{\one, h\}$ be the group generated by this gate.
    It follows that
    \begin{align}
        &\avg_{h\in S} \mu(H_h) = \frac12 \mu( \{\one\} ) + \frac12 \nu_S \, ,\\
        &g\big(
            \avg_{h\in S}\mu(H_h),~ \tau,~ \Alt(2^n)
        \big) = 1 - \Omega(1)\,  .\nonumber
    \end{align}
    Each circuit $h\in S$ is a product of $\cO(n)$ NOT, CNOT, and Toffoli gates.
    Let $S'$ denote the set of all these elementary $3$-bit gates that appear in at least one circuit of $S$.
    (CNOT and NOT are not $3$-bit gates, but we choose arbitrary one or two other bits 
    to insist that they are each a $3$-bit gate.)
    Let $\cI_\mathrm{Kas}$ be the collection of all triples of indices that support the gates of $S'$. 
    \cref{thm:UltimatePermutationDesign} implies that $|S'| = \cO(n)$, so $\abs{\cI_\mathrm{Kas}} = \cO(n)$.
    Using~\cref{lem:shortproduct,lem:gapviakazhdan}, we have
    \begin{align}
        g(\E_{h'\in S'}\mu(H_{h'}), \ \tau, \ \Alt(2^n)) 
        \leq 
        1 - \frac{\cK(\Alt(2^n); S')^2}{2|S'|} 
        \leq 
        1 - \frac{\cK(\Alt(2^n); S)^2}{2|S'|^3} = 1 - \Omega(n^{-3}).
    \end{align}
    Since for each $h'\in S'$, the group $H_{h'}$ is a subgroup of a copy of $\Sym(8)$ acting on the same three bits as $h'$, it follows from~\cref{fact:gfromoperatorinequality} that
    \begin{equation}
    g\left(\nu_{\cI_\mathrm{Kas}},~ \tau,~ \Alt(2^n) \right) \leq g(\E_{h'\in S'}\mu(H_{h'}), \ \tau, \ \Alt(2^n)) = 1 - \Omega(n^{-3}).  \qedhere
    \end{equation}
\end{proof}

\begin{proof}[Proof of~\cref{thm:main_rev}]
    By convexity of the operator norm, we have $g(\avg_i \nu_i, \rho, G) \le \avg_i g(\nu_i,\rho,G)$
    for any distribution~$\nu_i$ on a group~$G$ and any representation~$\rho$ of~$G$.
    Let $\Sym(n)$ permute bit labels (not $n$-bit strings).
    For each $\pi \in \Sym(n)$ we have an obvious distribution $\pi(\nu_{\cI_\mathrm{Kas}})$,
    obtained by permuting bit labels, with the same spectral gap.
    Since $\cI_\mathrm{Kas}$ is nonempty, 
    we have $\avg_{\pi \sim \mu(\Sym(n))} \pi(\nu_{\cI_\mathrm{Kas}}) = \nurevalltoall$.
    \begin{align}
        g(\nurevalltoall,~ \tau, ~\Alt(2^n)) 
        &=
        g(\avg_{\pi \sim \mu(\Sym(n))} \pi(\nu_{\cI_\mathrm{Kas}}), ~\tau,~\Alt(2^n)) \\
        &\le
        \avg_{\pi \sim \mu(\Sym(n))} g( \pi(\nu_{\cI_\mathrm{Kas}}), ~\tau,~\Alt(2^n))
        =
        \avg_{\pi \sim \mu(\Sym(n))} (1 - \Omega(n^{-3})) . \qquad \qquad \nonumber \qedhere
    \end{align}
\end{proof}

\subsection{Bootstrapping spectral gap for random reversible circuits: 
a proof of \texorpdfstring{\cref{thm:bootstrappedreversiblegap}}{Theorem 1.2}}\label{sec:reversiblebootstraping}

The goal is to replace the gap estimate $\Omega(n^{-3})$ of~$\nurevalltoall$ in \cref{thm:main_rev} 
to $\Omega(n^{-1}\log^{-3}(nt))$,
where the exponent on $\log(nt)$ is the exponent on~$n$ in \cref{thm:main_rev}.
This is based on the following relation between the spectral gap of~$\nurevalltoall$ on $n$ bits
and that on a smaller subsystem with $\Theta(\log(nt))$ bits.
To clarify the number of bits, we write $\nurevalltoall(n)$, not just $\nurevalltoall$,
to denote the random reversible circuit on~$n$ bits.

\begin{proposition}[Bootstrapping spectral gap for better $n$-dependence]\label{lemma:bootstrapingforpermutations}
    For any positive integer $m$, let $\Delta(m)$ be the spectral gap of the random reversible circuit on $m$ bits:
    \begin{align}
        \Delta(m) = 1 - g(\nurevalltoall(m),~\tau,~\Alt(2^m))
    \end{align}
    Then, for any $t \leq \Theta(2^{n/6.1})$,
     \begin{equation}
        \Delta(n)  \ge \frac{\Delta(\lceil 11\ln(nt) \rceil)}{n} (1 - \cO(n^{-1}))
     \end{equation}
\end{proposition}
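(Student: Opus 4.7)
The plan is to adapt the Nachtergaele-style martingale method used by Brand\~ao, Harrow, and Horodecki~\cite{brandao2016local} (and refined in~\cite{haferkamp2022random}) for random quantum circuits to the permutation setting, using the permutation overlap theorem (\cref{thm:permutationoverlap}) in place of the approximate orthogonality of Haar-random unitaries that is available in the unitary case.

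First I would observe that $\nurevalltoall(n)$ admits the convex decomposition
\[
\nurevalltoall(n) \;=\; \avg_{S \subseteq [n],\,|S|=m}\, \nurevalltoall(m)\big|_S,
\]
because sampling a uniformly random $3$-subset of $[n]$ is the same as first sampling a uniformly random $m$-subset $S$ and then a uniformly random $3$-subset within $S$ (using the identity $\binom{n-3}{m-3}\binom{n}{m}^{-1} = \binom{m}{3}\binom{n}{3}^{-1}$). Write $\Pi_S \deq M(\mu(\Alt(2^m))_S, \tau, \Alt(2^n))$ and $\Pi_{[n]} \deq M(\mu(\Alt(2^n)), \tau, \Alt(2^n))$ for the corresponding invariant projectors, where $\tau(\pi) = P(\pi)^{\otimes t}$. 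Using that $M(\nurevalltoall(m)|_S,\tau) - \Pi_S$ has operator norm at most $1 - \Delta(m)$ and is supported on $\mathrm{range}(I - \Pi_S)$, testing against a unit vector $v$ orthogonal to $\mathrm{range}(\Pi_{[n]})$ and splitting $v = \Pi_S v + (I-\Pi_S)v$ gives
\[
\langle v, M(\nurevalltoall(n),\tau)\,v\rangle \;\le\; 1 - \Delta(m)\bigl(1 - \langle v, \avg_S \Pi_S\,v\rangle\bigr),
\]
so that
\[
\Delta(n) \;\geq\; \Delta(m)\cdot \Bigl(1 - \bigl\|\avg_S \Pi_S\bigr\|_{\Pi_{[n]}^\perp}\Bigr).
\]

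It then remains to show $1 - \|\avg_S \Pi_S\|_{\Pi_{[n]}^\perp} \geq \tfrac{1}{n}(1 - O(1/n))$, i.e.\ that the chunk walk $\avg_S \Pi_S$ has spectral gap $\Omega(1/n)$ on $\Pi_{[n]}^\perp$. I would first fix an arbitrary linear order on the bits and prove the analogous bound for the auxiliary contiguous-window walk $\avg_{i=1}^{n-m+1} \Pi_{[i,i+m-1]}$; by convexity of the operator norm and averaging over $\Sym(n)$ acting on bit labels, the gap of the all-subsets walk dominates that of the contiguous walk. For the contiguous walk, adjacent windows share $m-1$ bits, so for $m = \lceil 11\ln(nt)\rceil$ and $t \leq \Theta(2^{n/6.1})$ the permutation overlap theorem provides an estimate of the form $\|\Pi_{[i,i+m-1]}\,\Pi_{[i+1,i+m]} - \Pi_{[i,i+m]}\|_\infty \leq \mathrm{poly}(nt)\cdot 2^{-m/2}$, which is $o(1/n)$ in this regime. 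Feeding this into Nachtergaele's martingale bound for 1D frustration-free Hamiltonians (or equivalently combining the detectability-lemma framework of \cref{lem:local-vs-parallel} with a Knabe-type step on the window Hamiltonian $H = \sum_i (I - \Pi_{[i,i+m-1]})$) should produce the required $1 - 1/n + O(1/n^2)$ bound on the contiguous chunk walk.

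The hard part will be the quantitative bookkeeping in the overlap-recursion step: the Nachtergaele recursion amplifies the overlap error across $\Theta(\log(n/m))$ doubling levels, and obtaining the multiplicative factor $(1 - O(1/n))$ (rather than a weaker $\Omega(1/n)$) requires the accumulated error to be truly $o(1/n)$. The constants $11$ in $11\ln(nt)$ and $6.1$ in $2^{n/6.1}$ are tuned precisely so that the $2^{-m/2}$ decay beats the polynomial-in-$nt$ prefactor of the overlap theorem with enough slack, while $m$ stays in the regime where $\Delta(m)$ is meaningful; making this trade-off tight is the delicate technical heart of the proof.
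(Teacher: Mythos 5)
Your Step~1 (convex decomposition of $\nurevalltoall(n)$ over random $m$-subsets) and Step~2 (the bound $\Delta(n)\ge\Delta(m)\big(1-\|\avg_S\Pi_S\|_{\Pi_{[n]}^\perp}\big)$) are both correct, and for $m=n-1$ Step~2 is precisely the paper's one-step recursion (\cref{lemma:recursion}). The genuine gap is in Step~3, and it is not merely delicate bookkeeping: the Nachtergaele/BHH-style martingale you propose cannot produce the sharp factor $\tfrac{1}{n}(1-\cO(1/n))$ when the chunk size is $m=\cO(\log(nt))$. If you write the contiguous-window chunk walk as $\one-\tfrac{1}{n-m+1}H$ with $H=\sum_i(\one-\Pi_{[i,i+m-1]})$, then the target bound forces $\Delta(H)\ge 1-\cO(m/n)$, i.e.\ a Hamiltonian gap within $\cO(m/n)$ of $1$. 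The Nachtergaele martingale never delivers this: for interaction range $\ell$ it produces a bound of the form $\Delta(H)\gtrsim \Delta_0/\ell$, and in the corresponding 1D quantum-circuit setting (\cref{lemma:bhhreduction}, quoting~\cite{brandao2016local}) the concrete output is $\Delta(H_{n,t})\ge \Delta(H_{n_0,t})/(4n_0)$, an explicit $\Theta(1/\log t)$ loss that is present even when the overlap errors are exactly zero. So your route would prove a weaker statement such as $\Delta(n)\ge\Omega\big(\Delta(m)/(n\,\mathrm{polylog}(nt))\big)$, not the proposition as stated.

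The paper avoids this entirely by \emph{never using the martingale}: it takes $(n-1)$-bit chunks $Q_i=\Pi_{[n]\setminus\{i\}}$, so that any two chunks overlap on $n-2$ bits and the overlap theorem already gives $\|Q_iQ_j-\Pi_{[n]}\|=\cO(t^3 2^{-0.495n})$ --- note the crucial fact that $Q_iQ_j$ is exponentially close to the \emph{global} invariant projector, which fails when chunks are small and $\Pi_S\Pi_{S'}$ is instead close to $\Pi_{S\cup S'}\ne\Pi_{[n]}$. This makes the elementary quadratic estimate of \cref{lemma:gapboundonaux} close in one shot, giving the chunk-walk gap $\delta(n)\ge 1-1/n-\xi_n$ exactly. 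Iterating the one-bit recursion $\Delta(n)\ge\delta(n)\Delta(n-1)$ down to $n_0=\lceil 11\ln(nt)\rceil$ then telescopes, $\prod_{k}(1-1/k)=\Theta(n_0/n)$, with the accumulated overlap errors $\sum_k\xi_k$ contributing only a $(1-\cO(1/n))$ multiplicative correction. To repair your argument you should replace the one-shot jump with $m$-bit chunks and the martingale step by this iterated $(n-1)$-bit reduction; your Step~2 formula with $m=n-1$ is already the right inductive step.
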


\noindent
\cref{thm:bootstrappedreversiblegap} is an immediate consequence of this proposition and \cref{thm:main_rev}.

To prove \cref{lemma:bootstrapingforpermutations} we use an auxiliary distribution (a random walk)
\begin{align}
    \beta = \frac{1}{n} \sum_{i=1}^n \mu(\Alt(2^{n-1})_{[n]\setminus\{i\}})
\end{align}
where $\Alt(2^{n-1})_{[n]\setminus\{i\}}$ is the alternating group that acts on $n-1$ bits,
which are all bits but the $i$-th.
Let $Q_i = M(\mu(\Alt(2^{n-1})_{[n]\setminus\{i\}}),~ \tau)$ be the moment operator.
This is an orthogonal projector onto the subspace of all $\Alt(2^{n-1})$-invariant vectors
and is supported on all but the $i$-th qubit with $t$-th tensor power.
Denote the spectral gap of $\beta$ by
\begin{align}
    \delta(n) 
    := 1 - g( \beta,~ \tau,~ \Alt(2^n)) 
    = 1 - \norm*{\left(\frac{1}{n} \sum_{i=1}^n Q_i\right) - \avg_{\pi \sim \mu(\Alt(2^n))} P(\pi)^{\otimes t} }_\infty \, .
\end{align}

\begin{lemma}\label{lemma:recursion}
    $\Delta(n) \ge \delta(n) \Delta(n-1)$ for all $n \ge 4$.
\end{lemma}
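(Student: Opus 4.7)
I would prove this as a recursion by first decomposing $\nurevalltoall(n)$ as a uniform average of ``leave-one-bit-out'' $(n-1)$-bit walks, and then comparing each leave-one-out walk to its Haar counterpart $Q_i$ via the recursion hypothesis. Concretely, a direct count shows that for every $3$-subset $S\subset[n]$, the probability of selecting $S$ under $\frac{1}{n}\sum_i\nurevalltoall(n-1)_{[n]\setminus\{i\}}$ equals $\frac{n-3}{n\binom{n-1}{3}}=\frac{1}{\binom{n}{3}}$, matching the marginal under $\nurevalltoall(n)$. Hence
\[
\nurevalltoall(n)=\frac{1}{n}\sum_{i=1}^n \nurevalltoall(n-1)_{[n]\setminus\{i\}}\,,
\]
and setting $R_i\deq M(\nurevalltoall(n-1)_{[n]\setminus\{i\}},\tau)$ gives $M\deq M(\nurevalltoall(n),\tau)=\frac{1}{n}\sum_i R_i$. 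Crucially, each $R_i$ (and thus $M$) is a convex combination of orthogonal projectors of the form $\E_{\pi\sim\mu(\Sym(T))}P(\pi)^{\otimes t}$ over $3$-subsets $T$, which implies $0\preceq R_i\preceq\one$ and $0\preceq M\preceq\one$.

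Next, I would upgrade the recursion hypothesis $\|R_i-Q_i\|_\infty\le 1-\Delta(n-1)$ into an operator inequality. Since $\nurevalltoall(n-1)$ is supported on $\Alt(2^{n-1})_{[n]\setminus\{i\}}$, left/right invariance of the corresponding Haar measure gives $R_iQ_i=Q_i=Q_iR_i$, so $R_i$ acts as identity on $Q_i\cH$ and preserves $(\one-Q_i)\cH$. On the latter subspace, the hypothesis together with $R_i\succeq 0$ forces $0\preceq R_i|_{(\one-Q_i)\cH}\preceq(1-\Delta(n-1))\one$, yielding as a global operator inequality
\[
R_i\preceq Q_i + (1-\Delta(n-1))(\one-Q_i) = (1-\Delta(n-1))\one + \Delta(n-1)Q_i\,.
\]
Averaging over $i$ gives $M\preceq(1-\Delta(n-1))\one+\Delta(n-1)\bar Q$ where $\bar Q\deq\frac{1}{n}\sum_i Q_i=M(\beta,\tau)$.

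By the definition of $\delta(n)$, $\|\bar Q-H\|_\infty=1-\delta(n)$; combined with $\bar QH=H=H\bar Q$ and $\bar Q\succeq 0$, this gives $\bar Q|_{(\one-H)\cH}\preceq(1-\delta(n))\one$. Substituting yields $M|_{(\one-H)\cH}\preceq(1-\delta(n)\Delta(n-1))\one$. The main obstacle is that $\|M-H\|_\infty$ requires a \emph{two-sided} bound on the eigenvalues of $M|_{(\one-H)\cH}$, whereas the argument so far only bounds them from above, and the naive lower bound $M\succeq-\one$ is far too weak. This is precisely where the nonnegativity $M\succeq 0$ established in paragraph one is essential: it rules out any negative eigenvalues, so $\|M|_{(\one-H)\cH}\|_\infty\le 1-\delta(n)\Delta(n-1)$. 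Since $MH=HM=H$, we have $\|M-H\|_\infty=\|M|_{(\one-H)\cH}\|_\infty\le 1-\delta(n)\Delta(n-1)$, proving $\Delta(n)\ge\delta(n)\Delta(n-1)$.
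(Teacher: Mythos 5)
Your proof is correct and follows essentially the same line as the paper's: both proofs rest on the decomposition $M(\nurevalltoall(n),\tau) = \frac{1}{n}\sum_i R_i$ into leave-one-bit-out walks, apply the $(n-1)$-bit gap to each $R_i$ on the orthogonal complement of $Q_i$, average over $i$, and then use positivity of the moment operator to convert an upper bound on $M|_{(\one-H)\cH}$ into the operator-norm bound $\|M-H\|_\infty$. The only difference is presentational: you phrase the recursion as a chain of operator inequalities ($R_i \preceq (1-\Delta(n-1))\one + \Delta(n-1)Q_i$, averaged, then restricted to $(\one-H)\cH$), whereas the paper works directly with the quadratic form $\bra\psi\sum_T P_T\ket\psi$ for $\ket\psi\perp H\cH$ and does the same book-keeping via explicit cancellation of binomial factors; you also make explicit the role of $M\succeq 0$, which the paper uses implicitly when it equates $1-\Delta(n)$ with $\binom{n}{3}^{-1}\sup_\psi\bra\psi\sum_T P_T\ket\psi$.
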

The proof of this lemma is similar to~\cite{haferkamp2021improved}.
\begin{proof}
    For any~$T \subset \{1,2,\ldots,n\}$ with $\abs{T} = 3$,
    let $P_T = M(\mu(\Sym(2^3)_T),\tau)$ be the orthogonal projector onto the $\Sym(2^3)$-trivial subspace.
    $P_T$ is supported on $3t$ qubits that form the $t$-fold tensor power of those on~$T$.
    The moment operator $M(\nurevalltoall(n),~\tau)$ equals $\avg_T P_T$.
    If $i \notin T$, then $P_T Q_i = Q_i = Q_i P_T$.
    Let $\ket \psi$ be a normalized vector 
    orthogonal to the $\Alt(2^n)$-trivial subspace,
    {\it i.e.}, $M(\mu(\Alt(2^n)),\tau) \ket \psi = 0$.
    Then, for any qubit $i$, we have
    \begin{align}
        \begin{split}\label{eq:recursionfirstcalculation}
        \bra{\psi} \sum_{T:~ i \notin T}  P_T \ket{\psi}
        &= 
        \bra{\psi} \sum_{T:~i\notin T}  P_T Q_i \ket{\psi} 
            + 
        \bra{\psi} \sum_{T:~i\notin T}  P_T(\one - Q_i) \ket{\psi}\\
        &= 
        \binom{n-1}{3} \bra{\psi} Q_i \ket{\psi}
            +
        \bra{\psi} (\one - Q_i) \sum_{T:~i\notin T}  P_T (\one - Q_i) \ket{\psi}
        \\
        &\leq  
        \binom{n-1}{3} \bra{\psi} Q_i \ket{\psi} 
            + 
        \binom{n-1}{3} (1-\Delta(n-1)) \bra{\psi} (\one - Q_i) \ket{\psi}\\
        &=
        \binom{n-1}{3}\big( 1-\Delta(n-1) + \Delta(n-1) \bra \psi Q_i \ket \psi \big)\, .
        \end{split}
    \end{align}
    Summing this over all $i$, we obtain
    \begin{align}\begin{split}
        (n-3)\bra{\psi} \sum_{T}  P_T \ket{\psi}
        &\leq 
        \binom{n-1}{3} \left( n(1-\Delta(n-1)) + \Delta(n-1)\bra{\psi}\sum_i Q_i \ket{\psi} \right)\\
        &\leq 
        \binom{n-1}{3} n \big( 1-\Delta(n-1) + \Delta(n-1) (1- \delta(n)) \big)\\
        &=
        \binom{n}{3}(n-3)\big( 1 - \Delta(n-1)\delta(n) \big) \, .
    \end{split}
    \end{align}
    Since $1 - \Delta(n) = \binom{n}{3}^{-1} \sup_{\ket \psi} \bra \psi \sum_T P_T \ket \psi$,
    we complete the proof.
\end{proof}

\begin{lemma}\label{lemma:gapboundonaux}
    $g(\beta, ~\tau, ~\Alt(2^n)) \leq \frac{1}{n} + \xi_n$ 
    where $\xi_n = \cO( t^{3}/2^{0.495 n})$.
\end{lemma}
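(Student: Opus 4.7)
The aim is to bound $x := \|M\|_\infty$ where $M := \frac{1}{n}\sum_{i=1}^n \tilde Q_i$ and $\tilde Q_i := Q_i - \Pi$, with $\Pi := M(\mu(\Alt(2^n)),\tau)$ the Haar projector. The first observation is that $\Alt(2^{n-1})_{[n]\setminus\{i\}}$ is a subgroup of $\Alt(2^n)$, so every $\Alt(2^n)$-invariant vector is also $\Alt(2^{n-1})_{[n]\setminus\{i\}}$-invariant. This gives $\Pi Q_i = Q_i\Pi = \Pi$, so $\tilde Q_i$ is itself an orthogonal projector (onto $\mathrm{im}(Q_i)\cap\ker\Pi$) and the quantity to bound is exactly $g(\beta,\tau,\Alt(2^n)) = x$.

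\textbf{Squaring.} Since $M$ is hermitian, $x^2 = \|M^2\|_\infty$. Using $\tilde Q_i^2 = \tilde Q_i$, expand
\[
M^2 \;=\; \tfrac{1}{n^2}\sum_i \tilde Q_i \;+\; \tfrac{1}{n^2}\sum_{i\neq j}\tilde Q_i\tilde Q_j \;=\; \tfrac{M}{n} + R,
\]
so $x^2 \le x/n + \|R\|_\infty$. Solving this quadratic inequality and applying $\sqrt{a^2+b}\le a+\sqrt{b}$ for $a,b\ge 0$ yields
\[
x \;\le\; \tfrac{1}{2n} + \sqrt{\tfrac{1}{4n^2}+\|R\|_\infty} \;\le\; \tfrac{1}{n} + \sqrt{\|R\|_\infty}.
\]
So it suffices to show $\|R\|_\infty = \cO(t^{6}/2^{0.99n})$, which would give $\xi_n = \sqrt{\|R\|_\infty} = \cO(t^3/2^{0.495n})$ as stated.

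\textbf{Bounding the off-diagonal terms by permutation overlap.} For distinct $i,j$, use $\Pi Q_i = Q_j \Pi = \Pi$ to rewrite $\tilde Q_i \tilde Q_j = Q_i Q_j - \Pi$. The operator $Q_iQ_j$ is the $t$-th moment operator of the convolution $\mu(\Alt(2^{n-1})_{[n]\setminus\{j\}})*\mu(\Alt(2^{n-1})_{[n]\setminus\{i\}})$, whose two factors are supported on subsets of bits whose union is $[n]$ and whose intersection has size $n-2$. The permutation overlap theorem (\cref{thm:permutationoverlap}) applied to these two subgroups gives an estimate of the form $\|Q_iQ_j - \Pi\|_\infty = \cO(t^{6}/2^{0.99n})$ under the assumption $t\le \Theta(2^{n/6.1})$. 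Since $R$ is an average of $n(n-1)$ such terms divided by $n^2$, the triangle inequality gives $\|R\|_\infty \le \max_{i\neq j}\|Q_iQ_j-\Pi\|_\infty$, and the lemma follows.

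\textbf{Main obstacle.} The whole difficulty is hidden in the quantitative permutation overlap bound $\|Q_iQ_j-\Pi\|_\infty = \cO(t^6/2^{0.99n})$, which is proven separately in the paper as \cref{thm:permutationoverlap}. A small subtlety is that $Q_iQ_j$ is not hermitian, so to apply an overlap statement that is formulated for symmetric convolutions, one may instead bound $\|Q_iQ_jQ_i - \Pi\|_\infty = \|Q_iQ_j-\Pi\|_\infty^2$, since $(Q_iQ_j-\Pi)^\dagger(Q_iQ_j-\Pi) = Q_jQ_iQ_j - \Pi$; this is the moment operator of a symmetric, three-fold convolution and falls directly within the scope of the overlap theorem. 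Once the overlap bound is in place, the rest of the argument is the short projector calculation above.
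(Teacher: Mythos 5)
Your overall structure matches the paper's proof closely: exploit that $M$ is hermitian, square it, separate the $i=j$ diagonal from the $i\neq j$ cross terms, bound the cross terms by the permutation overlap theorem, and close the quadratic inequality. The identity $\tilde Q_i\tilde Q_j = Q_iQ_j - \Pi$ and the identification of $Q_iQ_j$ with a product of two overlapping chunks are exactly right. (You should also note, as the paper implicitly does, that \cref{thm:permutationoverlap} is stated for symmetric groups and one invokes \cref{lemma:equivalenceSandAlt} to pass to $\Alt$.)

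There is, however, a genuine quantitative gap. You claim that \cref{thm:permutationoverlap} gives $\norm{Q_iQ_j - \Pi}_\infty = \cO(t^6/2^{0.99n})$. It does not. With the overlap region consisting of $n-2$ bits, so $\abs{B}=2^{n-2}$, the theorem gives
\[
\norm{Q_iQ_j - \Pi}_\infty = \cO\!\left(\frac{(t\log t + n)^3}{2^{(n-2)/2}}\right) = \cO\!\left(\frac{t^3}{2^{0.495n}}\right),
\]
after absorbing the polylog factors into the exponent. This is strictly weaker than the $t^6/2^{0.99n}$ you assert (indeed, the latter is exactly the square of the former, and is exactly what your argument needs). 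Plugging the correct bound $\eta:=\norm{Q_iQ_j-\Pi}_\infty = \cO(t^3 2^{-0.495n})$ into your loose quadratic step $x \le \tfrac1n + \sqrt{\eta}$ only yields $\xi_n = \cO(t^{3/2}/2^{0.2475n})$, which is weaker than the lemma's statement. The fix is to solve the quadratic more tightly: from $x^2 \le x/n + \eta$ the positive root is $x \le \tfrac12\left(\tfrac1n + \sqrt{\tfrac1{n^2}+4\eta}\right)$, and using $\sqrt{a^2+b}\le a + \tfrac{b}{2a}$ (rather than $\sqrt{a^2+b}\le a+\sqrt b$) gives $x \le \tfrac1n + n\eta = \tfrac1n + \cO(nt^3/2^{0.495n})$, which matches the lemma once the extra $n$ is absorbed into the $<1/2$ exponent. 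This is what the paper does: it keeps the exact root $\gamma \le \tfrac{1}{2n} + \tfrac{1}{2n}\sqrt{1+4n^2\eta}$.

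Finally, the digression about $Q_iQ_j$ not being hermitian and passing to $Q_jQ_iQ_j$ is a red herring. \Cref{thm:permutationoverlap} is a bound on the operator norm of $\avg P(\pi_{AB})^{\ot t}P(\pi_{BC})^{\ot t} - \avg P(\pi_{ABC})^{\ot t}$; it applies to the non-hermitian product $Q_iQ_j$ directly and imposes no self-adjointness condition. Moreover, as stated, the theorem concerns two overlapping chunks, not a three-fold product $Q_jQ_iQ_j$, so your proposed workaround does not ``fall directly within the scope'' of the theorem anyway.
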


\begin{proof}
    Put $\gamma = 1 - \delta(n) = g(\beta, \ \tau, \ \Alt(2^n))$.
    Since the moment operator~$\avg_i Q_i$ for $\beta$ is hermitian, we have
    \begin{align}
    \begin{split}
        \gamma^2 
        &= \norm*{ \frac{1}{n^2}\sum_{i,j} Q_i Q_j - M(\mu(\Alt(2^n)),\tau) }_\infty\\
        &= \norm*{ \frac{1}{n^2} \sum_i Q_i - \frac 1 n M(\mu(\Alt(2^n)),\tau) + \frac{1}{n^2}\left(\sum_{i \neq j} Q_i Q_j - (n^2-n)M(\mu(\Alt(2^n)),\tau) \right)}_\infty\\
        &\leq \frac{\gamma}{n}  +  \frac{n^2-n}{n^2} \norm*{Q_1Q_n  - M(\mu(\Alt(2^n)),t)}_\infty \, .
        \end{split}
    \end{align}
    If we denote the last norm by $\eta$, \cref{thm:permutationoverlap} says $\eta = \cO(t^3 2^{-0.495 n} )$
    where $0.495 < 1/2$ is arbitrarily chosen.
    Solving the quadratic equation, we see $\gamma \le \frac{1}{2n} + \frac{1}{2n}\sqrt{1 + 4n^2 \eta}$.
\end{proof}

\begin{proof}[Proof of~\cref{lemma:bootstrapingforpermutations}]
    We combine the lemmas above. Suppose $4 \le n_0 < n$. Then,
    \begin{align}
       \Delta(n)
       &\geq \Delta(n-1) \delta(n) \geq \Delta(n_0) \prod_{m = n_0}^n \delta(m) & \text{by \cref{lemma:recursion}} \nonumber\\
       &\geq \Delta(n_0) \prod_{m=n_0}^n
       \left[ 
            1 - \frac{1}{m} - \xi_m 
        \right] &\text{by \cref{lemma:gapboundonaux}} \nonumber\\
        &\ge \Delta(n_0) \left[ \left(\prod_{m = n_0}^n \frac{m - 1}{m}\right) - \sum_{m=n_0}^n \xi_m \right]\\
        &\ge \Delta(n_0) \left[ \frac{1}{n} - \cO(n t^3 2^{-0.495 n_0}) \right] \, .\nonumber
    \end{align}
    Setting $n_0 = \Theta(\log(nt))$ we complete the proof.
\end{proof}

\section{Spectral gaps for random quantum circuits} \label{sec:gap_qckt}

Recall from \cref{def:rqc} that we consider two models of random quantum circuits: 2-local gates with all-to-all connectivity ($\nu_{\mathrm{2,\mathrm{All}\to\mathrm{All}},n}$)
and 2-local gates arranged on a brickwork ($\nu_{\mathrm{BRQC}, n}$).
We shall show that both models of random quantum circuits have $t$-independent spectral gaps that are inverse-polynomial in $n$.
The statement is copied from the introduction:

\gaprqc*

We will frequently use the shorthand $g(\nu, t)$ for the essential norm $g(\nu, t, \SU(2^n))$ 
and $M(\nu, t)$ for the moment operator $M(\nu, t, \SU(2^n))$. 
With this, we can restate the bounds in \cref{thm:main_quantum} compactly as
\begin{align*}
    g(\nu_{\mathrm{2,\mathrm{All}\to\mathrm{All}},n}, \ t) \leq 1 - \Omega(n^{-3}) \,, 
    \qquad \qquad
    g(\nu_{\mathrm{BRQC}, n}, \ t) \leq 1 - \Omega(n^{-5} / \log n) \,.
\end{align*}
Throughout, we write $U^{\otimes t,t}$ to mean $(U \otimes \overline U)^{\otimes t}$ for any operator~$U$.

\subsection{Gap for CPFPC ensemble} \label{sec:cpfpc}

Let $\Cl_n$ be the $n$-qubit Clifford group, 
and let $F$ be the subgroup of $\U(2^n)$ which consists of all diagonal $(\pm 1)$-matrices. 
We use $\Sym^U(2^n)$ to denote the group of all $2^n$-by-$2^n$ permutation matrices $P(\pi) \in \U(2^n)$ 
where $\pi \in \Sym(2^n)$.
The matrix $P(\pi)$ permutes computational basis vectors.
Note that for any $A \in F$ and $P \in \Sym^U(2^n)$ there is $B \in F$ such that $AP = PB$.
This implies that commutativity of two distributions' convolution:
$\mu(\Sym^U(2^n)) * \mu(F) = \mu(F) * \mu(\Sym^U(2^n))$.
We define
\begin{align}
\nu_{\CPFPC}&\deq \mu(\Cl_n)*\mu(F)*\mu(\Sym^U(2^n))*\mu(\Cl_n),\label{eq:definitionCPFPC}\\
\nu_{\FPC} &\deq \mu(F) * \mu(\Sym^U(2^n)) * \mu(\Cl_n).  \nonumber
\end{align}
Using the commutativity 
we see that $\nu_\CPFPC = \mu(\Cl_n)*\mu(\Sym^U(2^n))*\mu(F)*\mu(\Sym^U(2^n))*\mu(\Cl_n)$,
which is manifestly self-adjoint and justifies the notation ``$\CPFPC$.''
(The adjoint of a distribution $\nu$ on a unitary group is the distribution of $U^\dag$ where $U \sim \nu$.)
It is shown in \cite{metger2024simple} that $\nu_{\PFC}$
is close to the Haar measure $\mu(\U(2^n))$ in diamond distance: 

\begin{lemma}[{\cite[Theorem 3.1]{metger2024simple}}] \label{lem:PFC_1norm}
    Let $\Phi_{\nu}^{(t)}$ be the twirling channel with respect to measure $\nu$ defined in \cref{def:unitary_designs}. 
    Then
    \begin{equation}\label{eq:1to1normbound}
        \norm*{\Phi_{\nu_{\FPC}}^{(t)} - \Phi_{\mu_H}^{(t)}}_\diamond = \cO\left(\frac{t}{2^{n/2}}\right)\,.
    \end{equation}
\end{lemma}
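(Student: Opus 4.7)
The plan is to follow the PFC-ensemble strategy of~\cite{metger2024simple}. Writing the random unitary as $U = FPC$ with $F$, $P$, $C$ drawn independently from the Haar measures on the sign group, $\Sym^U(2^n)$, and $\Cl_n$ respectively, I would integrate out the three factors in turn and compare the resulting twirling channel with $\Phi^{(t)}_{\mu_H}$, which projects onto the commutant of $V^{\otimes t}$ for Haar-random $V$.

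The key computation is the $F$- and $P$-averages in the computational basis. Since $F\ket{z} = (-1)^{f(z)}\ket{z}$ for a uniformly random $f:\{0,1\}^n \to \{0,1\}$, the operator $\avg_F F^{\otimes t}\ket{\vec{x}}\bra{\vec{y}}F^{\otimes t}$ vanishes unless the multisets of $\vec{x} = (x_1,\ldots,x_t)$ and $\vec{y} = (y_1,\ldots,y_t)$ coincide, in which case it equals $\ket{\vec{x}}\bra{\vec{y}}$. The subsequent $P$-average symmetrises bit-string labels: on the ``collision-free'' sector where all $x_i$ (and hence $y_i$) are distinct, $\avg_P P^{\otimes t}\ket{\vec{x}}\bra{\vec{y}}P^{\otimes t}$ produces a uniform average over the $t!$ permutations matching $\vec{x}$ to $\vec{y}$, which agrees exactly with the Haar moment operator restricted to this sector. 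The entire discrepancy between $\Phi^{(t)}_{\nu_{\FPC}}$ and $\Phi^{(t)}_{\mu_H}$ is therefore supported on tuples with coinciding entries, and a crude birthday estimate immediately yields $\cO(t^2/2^n)$.

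To recover the sharper $\cO(t/2^{n/2})$ scaling I would use a Cauchy--Schwarz style argument: express the defect as a Hermitian operator supported on the collision sector, and exploit self-adjointness to bound its norm by the square root of the relative dimension, giving the claimed $\sqrt{\cO(t^2/2^n)} = \cO(t/2^{n/2})$. The outer Clifford $C$ is used to prepare the input in a form where the $FP$ analysis applies uniformly: as at least a unitary $2$-design it absorbs arbitrary input-state dependence into a stabiliser orbit, which also enables the conversion from a pointwise estimate to the stated diamond-norm bound without picking up additional dimension factors from the ancilla in the diamond-norm supremum. I expect the hardest step to be precisely this square-root improvement over the naive birthday bound, together with verifying that the Clifford twirl on the input side does not reintroduce any cross terms between the collision-free and collision contributions that would spoil the scaling.
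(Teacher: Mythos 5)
The paper does not prove this lemma; it cites \cite[Theorem~3.1]{metger2024simple} as a black box, so there is no in-paper argument to compare against. Your outline captures the broad architecture of the MPSY proof (analyse the $F$- and $P$-twirls in the computational basis, use the Clifford factor as a $2$-design to control the collision sector, convert an overlap estimate into a diamond-norm bound), but two of your steps are incorrect as stated, and together they are the crux of the argument.

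First, the claim that on the collision-free sector the $P$-average ``agrees exactly with the Haar moment operator restricted to this sector'' is false. For distinct $\vec x$ and $\vec y = \pi(\vec x)$ one computes
\begin{equation*}
\avg_P P^{\otimes t}\ket{\vec x}\!\bra{\pi(\vec x)}(P^\dagger)^{\otimes t}
= \frac{1}{N^{\underline t}}\,R(\pi)\,\Pi_{\mathrm{distinct}}\,,
\qquad
\Phi^{(t)}_H\!\big(\ket{\vec x}\!\bra{\pi(\vec x)}\big)
= \sum_{\sigma\in S_t} \mathrm{Wg}(\sigma,\pi)\,R(\sigma)\,,
\end{equation*}
where $N=2^n$, $N^{\underline t} = N(N-1)\cdots(N-t+1)$, $R(\sigma)$ is the register-permutation operator, and $\Pi_{\mathrm{distinct}}$ projects onto tuples with pairwise-distinct entries. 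These are not equal: the first is a single scaled permutation operator cut down to the distinct subspace (rank $\approx N^t$), the second is a Weingarten combination supported on the symmetric subspace (rank $\approx N^t/t!$). They agree only to leading order in $1/N$, with a residual discrepancy of order $t^2/N$. So the discrepancy between the twirls is \emph{not} entirely supported on collision tuples, and one must also bound what is left on the distinct sector.

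Second, the framing of $\cO(t/2^{n/2})$ as a ``sharper'' bound recovered from the birthday estimate $\cO(t^2/2^n)$ via Cauchy--Schwarz is backwards: for $t\le 2^{n/2}$ one has $t^2/2^n \le t/2^{n/2}$, so $\cO(t^2/2^n)$ is the stronger quantity. The $\cO(t^2/2^n)$ estimate actually bounds the \emph{overlap} of the Clifford-twirled input with the collision subspace. Concretely, if $\sigma$ is the $2$-design twirl of an arbitrary ancilla-extended input then, for any pair of registers $(i,j)$, a direct Weingarten computation gives $\tr[\Pi_{ij}\,\sigma] = \frac{1+s}{N+1} \le \frac{2}{N}$ with $s=\tr[F_{ij}\rho_{ij}]\le 1$, hence $\tr[\Pi_{\mathrm{coll}}\,\sigma] \le \binom{t}{2}\frac{2}{N} = \cO(t^2/2^n)$; this is the precise content hidden behind your phrase ``absorbs arbitrary input-state dependence into a stabiliser orbit,'' and you assert it rather than establish it. Converting this overlap bound into a trace-norm (and thus diamond-norm) bound over all ancilla-extended inputs requires a gentle-measurement step, which unavoidably \emph{loses} a square root and produces $\cO(\sqrt{t^2/2^n})=\cO(t/2^{n/2})$. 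So the step you flag as hardest --- a clever Cauchy--Schwarz improvement over the birthday bound --- is not the issue; the real work is in (i) quantifying the residual discrepancy of the $PF$-twirl on the distinct sector, which you incorrectly declared to be zero, and (ii) making the overlap-to-diamond-norm conversion rigorous with the ancilla present, which is exactly where the square root comes from.
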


Note that the only property of the Clifford group used in the proof of~\cite[Theorem 3.1]{metger2024simple} is that it forms an exact unitary $2$-design. Moreover, with a closer look at the proof of~\cite[Lemma 3.2]{metger2024simple}, one can substitute $\mu(\Cl_n)$ with any approximate unitary $2$-design with multiplicative error $\eps=\cO(1)$. For example, one can use a brickwork random quantum circuit with depth $\cO(n)$ to achieve $\eps=\frac12$ due to the following result.
\begin{lemma}[{\cite[Theorem 2 with $t=2$]{haferkamp2022random}}]
There exists a constant $c>0$ such that for any integer $n\geq 1$, 
    $g(\nu_{\mathrm{BRQC},n}, \ 2)\leq 1 - c$.  
\end{lemma}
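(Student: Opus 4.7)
The plan is to deduce the constant spectral gap of $\nu_{\mathrm{BRQC},n}$ at $t=2$ from the classical Harrow--Low-type $\Omega(1/n)$ spectral gap for $1$D local random quantum circuits at $t=2$, using the detectability-lemma conversion of \cref{lem:local-vs-parallel} to boost a convex combination into a convolution.

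First, I set up the brickwork as a convolution of subgroup Haar measures. Let $G_i \cong \SU(4)$ denote the subgroup of $\SU(2^n)$ supported on qubits $(i,i{+}1)$, with indices taken modulo $n$. Because the gates within each brickwork layer act on disjoint pairs and hence commute, the brickwork measure factors as
\begin{align*}
    \nu_{\mathrm{BRQC},n} = \mu(G_2) * \mu(G_4) * \cdots * \mu(G_n) * \mu(G_1) * \mu(G_3) * \cdots * \mu(G_{n-1}),
\end{align*}
which is of the form $\Asterisk = \mu(G_1) * \cdots * \mu(G_L)$ in \cref{lem:local-vs-parallel} with $L = n$. Since $G_i$ shares a qubit only with $G_{i-1}$ and $G_{i+1}$, every $G_i$ commutes element-wise with all but $\ell = 2$ of the others.

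Second, I bound the essential norm of the convex combination $\Sigma_n \deq \frac{1}{n}\sum_{i=1}^n \mu(G_i)$ at $t = 2$. The projector onto the trivial $\SU(4)$-subrepresentation of $U^{\otimes 2,2}$ is of rank $2$ on each $2$-qubit block, spanned by the vectorizations of the identity and of the swap between the unconjugated and conjugated copies. On the orthogonal complement of the global $\SU(2^n)$-trivial subrepresentation, the moment operator $M(\Sigma_n, 2)$ reduces to a reversible stochastic kernel of a symmetric exclusion/interchange-type process on the cycle $\ZZ/n\ZZ$, whose spectral gap is $\Omega(1/n)$ by Aldous' spectral gap theorem (proved by Caputo--Liggett--Richthammer). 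This yields $g(\Sigma_n, 2) \le 1 - \Omega(1/n)$.

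Third, I invoke \cref{lem:local-vs-parallel} with $L = n$, $\ell = 2$, and $\Delta(\Sigma_n) = \Omega(1/n)$ to obtain
\begin{align*}
    \Delta(\nu_{\mathrm{BRQC},n})
    \;\geq\; 1 - \frac{1}{\sqrt{1 + L\ell^{-2}\Delta(\Sigma_n)}}
    \;\geq\; 1 - \frac{1}{\sqrt{1 + \Omega(1)}}
    \;=\; \Omega(1),
\end{align*}
which is precisely the claimed $n$-independent constant gap. The main obstacle is the clean reduction of $M(\Sigma_n, 2)$ to an explicit classical Markov chain: the local $2$-qubit projector onto the $\SU(4)$-trivial subspace is not diagonal in the natural identity/swap basis on each block, so a careful change of basis is needed before Aldous-type bounds apply. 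Once this classical reduction is in place, the remaining assembly via \cref{lem:local-vs-parallel} is immediate.
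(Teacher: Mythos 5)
The paper imports this lemma as a bare citation from Haferkamp's earlier spectral-gap paper specialized to $t=2$; it contains no internal proof. Your proposal's overall architecture — decompose $\nu_{\mathrm{BRQC},n}$ as a brickwork convolution, bound the gap of the convex combination $\Sigma_n = \frac{1}{n}\sum_i \mu(G_i)$, then boost via \cref{lem:local-vs-parallel} with $L=n$, $\ell=2$ — is sound, and Steps 1 and 3 are correct. The gap is in Step 2, and it is a genuine one.

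You claim that $M(\Sigma_n,2)$, restricted off the trivial subspace, reduces to an interchange/symmetric-exclusion process on the cycle $\ZZ/n\ZZ$ with spectral gap $\Omega(1/n)$ by Aldous' theorem. This fails on two counts. First, the state spaces do not match: the interchange process lives on $\Sym(n)$ with $n!$ configurations, while the effective state space for the $t=2$ moment operator (after the standard pairing/domain-wall reduction) is roughly $\{I,S\}^n$, a $2^n$-dimensional space; neither the gate set (Haar random $\SU(4)$, not swaps) nor the reduced dynamics is of interchange type. Second, even taking the reduction at face value, Aldous' theorem gives the wrong exponent: by \cref{fact:aldous_transpositions} and \cref{lem:relate_gap_cayley}, the normalized gap of the cycle-transposition walk is $\alpha(G_T)/|T| = \Theta(1/n^2)/n = \Theta(1/n^3)$, because the algebraic connectivity of $C_n$ is $\Theta(1/n^2)$ and there are $|T|=n$ generators. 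Feeding $\Omega(1/n^3)$ into \cref{lem:local-vs-parallel} would give only $\Delta(\nu_{\mathrm{BRQC}}) = \Omega(1/n^2)$, not a constant. What is actually needed is $\Delta(\Sigma_n) = \Omega(1/n)$, which via the identity $1-g(\nu_{\mathrm{LRQC},n},t) = \Delta(H_{n,t})/n$ (see the proof of \cref{lemma:bhhreduction}) is equivalent to showing that the frustration-free $1$D Hamiltonian $H_{n,2}$ has a \emph{constant} spectral gap. That is a nontrivial input requiring a dedicated argument — Nachtergaele's martingale bound, a Knabe finite-size criterion, or a direct $t=2$ eigenvalue computation — and it does not follow from Aldous' spectral gap theorem. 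The change-of-basis obstacle you acknowledge at the end is not a technicality; it is precisely where the claimed reduction breaks down.
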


Hence, throughout the section, we can safely replace $\mu(\Cl_n)$ with a distribution 
\begin{align}
    \nutwodesign :=(\nu_\mathrm{BRQC})^{*\cO(n)}, \label{eq:nutwodesign}
\end{align}
and the ``$\mathrm{C}$'' in $\nu_{\CPFPC}$ will refer to $\nutwodesign$.
Note that the second ``$\mathrm{C}$'' in $\CPFPC$ is technically the adjoint of $\nutwodesign$ 
({\it i.e.}, the distribution of $U^\dagger$ where $U\sim \nutwodesign$), 
but we will not mention this technicality from now on by assuming that $\nutwodesign$ is self-adjoint.

We have a bound on the essential norm of $\nu_\CPFPC$, which is the foundation of our analysis: 

\begin{lemma}\label{lemma:gapofCPFPC}
    $g(\nu_{\CPFPC}, \ t) = \cO\left(\frac{t}{2^{n/2}}\right)$.
\end{lemma}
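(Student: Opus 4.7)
The plan is to apply \cref{cor:AdditiveToMultiplicativeError} with $\nu = \nu_{\FPC}$ and $\nu' = \mu(\Cl_n)$, so that $\nu' * \nu = \nu_{\CPFPC}$. This is precisely the scenario that the corollary was designed for: it lets us upgrade the diamond-distance bound of \cref{lem:PFC_1norm} (which is essentially a $1 \to 1$ statement) into the operator-norm bound on the moment operator that the essential norm $g(\nu_\CPFPC,t)$ measures (a $2 \to 2$ statement, by the choice of inner product on operator space).

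To invoke \cref{cor:AdditiveToMultiplicativeError} we need to verify two hypotheses. First, the map $\Phi'(\eta) = \avg_{U \sim \mu(\Cl_n)} U^{\otimes t} \eta U^{\dagger \otimes t}$ is a mixed-unitary quantum channel, hence $\norm{\Phi'}_{1 \to 1} \le 1$ is automatic. Second, we need the distribution $\nu_{\CPFPC}$ to be symmetric under inversion, equivalently that the composed superoperator is self-adjoint. This follows from the rewriting indicated immediately after \cref{eq:definitionCPFPC}: using the idempotence $\mu(\Sym^U(2^n))^{*2} = \mu(\Sym^U(2^n))$ together with the commutation $\mu(F) * \mu(\Sym^U(2^n)) = \mu(\Sym^U(2^n)) * \mu(F)$ noted above, one has
\begin{equation*}
    \nu_{\CPFPC} = \mu(\Cl_n) * \mu(\Sym^U(2^n)) * \mu(F) * \mu(\Sym^U(2^n)) * \mu(\Cl_n),
\end{equation*}
which is a palindrome of self-adjoint factors and so self-adjoint itself.

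With these hypotheses in hand, \cref{cor:AdditiveToMultiplicativeError} yields
\begin{equation*}
    g(\nu_{\CPFPC},t)
    = \norm*{M(\nu_{\CPFPC},t) - M(\mu(\SU(2^n)),t)}_\infty
    \le \norm*{\Phi^{(t)}_{\nu_{\FPC}} - \Phi^{(t)}_{\mu_H}}_{1 \to 1},
\end{equation*}
and the right-hand side is upper bounded by the diamond norm of the same difference, which is $\cO(t/2^{n/2})$ by \cref{lem:PFC_1norm}. Combined, this gives the claim.

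There is no real technical obstacle here; the content of the lemma is entirely a bookkeeping exercise around the two tools already assembled in Section 2. The only genuine insight is conceptual, namely the choice to ``close'' the $\FPC$ block by appending a fresh Clifford average so that the resulting measure becomes symmetric under inversion; this is precisely what enables the Riesz--Thorin-style passage from a $1 \to 1$ additive-error estimate to a $2 \to 2$ (spectral) bound without any dimension-dependent loss.
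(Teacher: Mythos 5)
Your proof is correct and follows exactly the same route as the paper's: the diamond-norm bound of \cref{lem:PFC_1norm} gives the $1\to1$ bound, and \cref{cor:AdditiveToMultiplicativeError} upgrades it to an essential-norm bound using the self-adjointness of $\nu_{\CPFPC}$. The only difference is that you spell out the verification of the corollary's hypotheses (the palindromic rewriting and the contractivity of the Clifford twirl), which the paper leaves implicit since it already noted the self-adjointness right after \cref{eq:definitionCPFPC}.
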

\begin{proof}
    Since a bound on the diamond norm implies the same bound on the $1 \to 1$-norm, \cref{lem:PFC_1norm} implies that 
    \begin{align*}
        \norm*{\Phi_{\nu_{\FPC}}^{(t)} - \Phi_{\mu_H}^{(t)}}_{1 \to 1} = \cO\left(\frac{t}{2^{n/2}}\right)\,.
    \end{align*}
    The claim follows from \cref{cor:AdditiveToMultiplicativeError}.
\end{proof}

\subsection{Simulating random phases using permutations} \label{sec:random_phases_via_perm}

The following lemma shows that for all applications in this work, we can work with the alternating group $\Alt(N)$ instead of the full symmetric group $\Sym(N)$.
\begin{lemma}[Moments of small order are equal]
\label{lemma:equivalenceSandAlt}
	For all $t\leq N-2$ we have
	\begin{equation}
		\avg_{\pi\sim \mu(\Alt(N))} P(\pi)^{\otimes t}=\avg_{\pi\sim \mu(\Sym(N))} P(\pi)^{\otimes t},
	\end{equation}
	where $P(\pi)\! \ket{x} = \ket{\pi(x)}$ for all $x\in [N]$ and $\pi \in \Sym(N)$. 
\end{lemma}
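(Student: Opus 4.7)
The plan is to reduce the equality of the two Haar moment operators to the equality of the orbit partitions of $[N]^t$ under $\Alt(N)$ and $\Sym(N)$, and then to verify the latter by a direct transposition trick whenever $t\le N-2$.

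First, I would compute the matrix entries of the moment operator $M(\mu(G), t, G)$ for a finite group $G$ acting on $[N]$. Applying it to a standard basis vector $\ket{\mathbf{x}}=\ket{x_1,\dots,x_t}$ with $\mathbf{x}\in[N]^t$ gives
\begin{equation*}
    \avg_{\pi\sim \mu(G)} P(\pi)^{\otimes t}\ket{\mathbf{x}}
    = \frac{1}{\abs{G}}\sum_{\pi\in G}\ket{\pi\cdot\mathbf{x}}
    = \frac{1}{\abs{G\cdot \mathbf{x}}}\sum_{\mathbf{y}\in G\cdot \mathbf{x}}\ket{\mathbf{y}},
\end{equation*}
using orbit--stabilizer. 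Hence $M(\mu(G),t,G)$ depends only on the partition of $[N]^t$ into $G$-orbits. It therefore suffices to show that $\Alt(N)$ and $\Sym(N)$ induce the same partition of $[N]^t$ as long as $t\le N-2$.

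Next, since $\Alt(N)\le \Sym(N)$, each $\Alt(N)$-orbit is contained in a $\Sym(N)$-orbit, so equality of partitions reduces to showing that no $\Sym(N)$-orbit splits. Given $\mathbf{x},\mathbf{y}\in[N]^t$ in the same $\Sym(N)$-orbit, pick any $\sigma\in\Sym(N)$ with $\sigma\cdot \mathbf{x}=\mathbf{y}$. If $\sigma$ is already even, we are done. Otherwise, the set $\{y_1,\dots,y_t\}$ has at most $t\le N-2$ elements, so there exist two distinct values $a,b\in[N]\setminus\{y_1,\dots,y_t\}$. The transposition $\tau=(a\,b)$ is odd and fixes each $y_i$, whence $\tau\sigma\in \Alt(N)$ and $(\tau\sigma)\cdot \mathbf{x}=\tau\cdot \mathbf{y}=\mathbf{y}$. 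Thus $\mathbf{x}$ and $\mathbf{y}$ lie in the same $\Alt(N)$-orbit, concluding the argument.

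There is no real obstacle in this proof, only a bookkeeping point worth flagging: the hypothesis $t\le N-2$ is used exactly once, to guarantee the existence of two unused values so that a correcting odd transposition can be placed entirely outside the support of $\mathbf{y}$. This is also essentially tight — for $t=N-1$ the orbits corresponding to $t$ distinct values do split into two $\Alt(N)$-orbits (those related by an even vs.\ odd permutation of $[N]$), and the two moment operators indeed differ on that subspace.
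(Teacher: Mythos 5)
Your proof is correct and follows essentially the same route as the paper: both reduce the statement to the claim that $\Alt(N)$ and $\Sym(N)$ have the same orbits on $[N]^t$ when $t\le N-2$, and both establish this with the same transposition-outside-the-support trick. The only cosmetic difference is that you make the orbit structure explicit by computing $\avg_\pi P(\pi)^{\otimes t}\ket{\mathbf{x}}$ via orbit--stabilizer, whereas the paper invokes the abstract fact that the Haar average is the orthogonal projector onto the invariant subspace; your tightness remark at $t=N-1$ is a correct and welcome addition.
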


\begin{proof}
This is a direct consequence of the fact that $\Alt(N)$ is $t$-transitive for all $t \le N-2$;
$t$-transitivity means that any tuple of $t$ distinct elements can be mapped to any other such tuple under the group action.
In detail,
note that $\avg_{\pi\sim \mu(\Sym(N))} P(\pi)^{\otimes t}$ and $\avg_{\pi\sim \mu(\Alt(N))} P(\pi)^{\otimes t}$ 
are orthogonal projectors onto the subspaces of $\Sym(N)$- and $\Alt(N)$-invariant vectors, respectively.
Since any $\Sym(N)$-invariant vector is $\Alt(N)$-invariant, 
it suffices to show that every $\Alt(N)$-invariant vector in the $t$-fold tensor power vector space is $\Sym(N)$-invariant.
Since $\Alt(N)$ together with any transposition generates $\Sym(N)$, this is equivalent to showing that
for any $x_1, \cdots, x_t\in [N]$ and $\pi \in \Alt(N)$, we can find a transposition $\pi_0$ such that
\begin{align*}
    \Big(\pi(x_1), \cdots, \pi(x_t)\Big) = \Big(\pi_0(\pi(x_1)), \cdots, \pi_0(\pi(x_t))\Big). 
\end{align*}
Since $t\leq N-2$, there must exist distinct $y_1, y_2\in [N]$ 
different from all of $\pi(x_1), \cdots, \pi(x_t)$. Hence, we can take $\pi_0$ as transposing $y_1$ and $y_2$.
\end{proof}

\begin{lemma}\label{lemma:gapforPZP}
    Let $Z$ be the Pauli operator $Z$ on a qubit.
    Let $\nu_{\mathrm{P}Z\mathrm{P}^{-1}}$ be the probability distribution of 
    $P(\pi)\cdot (Z\otimes \one_{n-1}) \cdot P(\pi^{-1}) \in F$ 
    where $\pi \sim \mu(\Alt(2^n))$. 
    Then, $g(\nu_{\mathrm{P}Z\mathrm{P}^{-1}}, \ t, \ F)
    \leq 8t^2/2^n$. 
\end{lemma}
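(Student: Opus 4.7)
The plan is to exploit the fact that $F$ is abelian. Every $D \in F$ is real, so $\overline D = D$ and $\tau_{t,t}(D) = D^{\otimes 2t}$ acts diagonally on the computational basis of $(\CC^{2^n})^{\otimes 2t}$: the vector $\ket{x_1 \cdots x_{2t}}$ transforms by the character $\chi_T(D) = \prod_{x \in T} D_{xx}$, where $T \subseteq \{0,1\}^n$ is the set of strings appearing with odd multiplicity among $(x_1,\ldots,x_{2t})$. A parity argument (the total multiplicity $\sum_y m_y = 2t$ is even) forces $|T|$ to be even, and clearly $|T| \le 2t$.

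Step one is to identify $\nu_{\mathrm{P}Z\mathrm{P}^{-1}}$ combinatorially. We have $Z \otimes \one_{n-1} = D_{S_0}$, where $D_{S_0}$ is the diagonal matrix whose $-1$-entries sit on $S_0 = \{x : x_1 = 1\}$, a set of size $2^{n-1}$. Conjugation by $P(\pi)$ sends $D_{S_0}$ to $D_{\pi(S_0)}$, and since $2^{n-1} \le 2^n - 2$ for $n \ge 2$, the transitivity of $\Alt(2^n)$ on size-$(2^{n-1})$ subsets (in the same spirit as \cref{lemma:equivalenceSandAlt}) implies that $\pi(S_0)$ is uniformly distributed over all such subsets. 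Hence $\nu_{\mathrm{P}Z\mathrm{P}^{-1}}$ is the law of $D_S$ for $S$ uniform in size-$(2^{n-1})$ subsets of $\{0,1\}^n$.

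Step two reduces the essential norm to a scalar hypergeometric average. Because both moment operators are simultaneously diagonalized in the computational basis, and the Haar operator $M(\mu(F),\tau_{t,t},F)$ retains only the trivial $T = \emptyset$ summand,
\begin{equation*}
g(\nu_{\mathrm{P}Z\mathrm{P}^{-1}},\, t,\, F) \;=\; \max_{T}\ \abs*{\avg_{|S| = 2^{n-1}}(-1)^{|S \cap T|}}\,,
\end{equation*}
where the maximum ranges over nonempty $T \subseteq \{0,1\}^n$ with $|T|$ even and $|T| \le 2t$.

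Step three is to evaluate and bound. Writing $\ell = |T|$ and applying $\sum_k (-1)^k \binom{2^{n-1}}{k}\binom{2^{n-1}}{\ell-k} = [z^\ell](1-z^2)^{2^{n-1}} = (-1)^{\ell/2}\binom{2^{n-1}}{\ell/2}$ to the hypergeometric probability mass gives $\avg_{|S|=2^{n-1}}(-1)^{|S\cap T|} = (-1)^{\ell/2}\binom{2^{n-1}}{\ell/2}/\binom{2^n}{\ell}$. A factorial telescoping then rewrites the ratio as the clean product
\begin{equation*}
\frac{\binom{2^{n-1}}{\ell/2}}{\binom{2^n}{\ell}} \;=\; \prod_{j=0}^{\ell/2 - 1}\frac{2j+1}{2^n - 2j - 1}\,,
\end{equation*}
each factor of which is at most $1$ when $\ell \le 2^{n-1}$. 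Hence for $t \le 2^{n-2}$ the maximum over even $\ell \in [2, 2t]$ is attained at $\ell = 2$, yielding $1/(2^n - 1) \le 2/2^n \le 8 t^2 / 2^n$; for $t > 2^{n-2}$ the target $8 t^2 / 2^n$ already exceeds $1 \ge g$ and is automatic. The only real obstacle is the bookkeeping behind the $(1-z^2)^{2^{n-1}}$ identity and its product form; conceptually, the argument just piggybacks on $F$ being abelian, which is what makes the problem reduce to counting parities of intersections with a random half-set.
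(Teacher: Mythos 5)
Your proof is correct and follows the same skeleton as the paper's: reduce to the diagonal action of $D^{\otimes 2t}$ on computational basis vectors, identify the conjugated $Z$ distribution as a uniform random balanced subset $S$, compute the resulting hypergeometric alternating sum indexed by the odd-multiplicity set $T$, and bound. Where you differ is in the technical execution, and your steps are cleaner in two places. First, you bypass the paper's invocation of \cref{lemma:equivalenceSandAlt} at the $4t$-th moment (which is what forces the working assumption $t \le (2^n-2)/4$ in the paper's opening line) by directly observing that $\Alt(2^n)$ acts transitively on size-$2^{n-1}$ subsets once $n\ge 2$ — this is correct because the $\Sym(2^n)$-stabilizer of a size-$2^{n-1}$ set contains a transposition. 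Second, the paper computes $\Pr[\text{even parity}]$ and recombines via the two Vandermonde-type sums in \cref{eq:combo_id_one}, while you evaluate $\avg_S (-1)^{|S\cap T|}$ in one shot using hypergeometric symmetry and $[z^\ell](1-z^2)^{2^{n-1}}$; both routes land on the same exact expression $\binom{2^{n-1}}{\ell/2}/\binom{2^n}{\ell}$. Your telescoping rewrite $\prod_{j=0}^{\ell/2-1}(2j+1)/(2^n-2j-1)$ is a genuine improvement over the paper's crude estimates $\binom{m}{k}\le m^k$, $\binom{2m}{2k}\ge ((m-k)/k)^{2k}$: it exposes monotonicity in $\ell$ and gives the sharper bound $g \le 1/(2^n-1)$ for all $t \le 2^{n-2}$, from which the stated $8t^2/2^n$ follows trivially.
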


\begin{proof}
We assume $t\leq (2^{n}-2)/4$ as otherwise the claim is trivial.
Writing out the definition, we have
\begin{align*}
	\begin{split}
	\avg_{\pi\sim \mu(\Alt(2^n))} \left(P(\pi)(Z\otimes \one_{n-1})P(\pi^{-1})\right)^{\otimes t,t}&=\avg_{\pi\sim \mu(\Alt(2^n))} P(\pi)^{\otimes 2t}(Z\otimes \one_{n-1})^{\otimes 2t}(P(\pi)^{\top})^{\otimes 2t} \,.
	\end{split}
\end{align*}
Then for all computational basis states $\ket{x_1},\ket{x_2}, \ket{y_1}, \ket{y_2}$, it follows from~\cref{lemma:equivalenceSandAlt} that
\begin{align*}
    \avg_{\pi\sim\mu(\Alt(2^n))}\bra{ x_2} P(\pi)^{\otimes 2t}\ket{x_1}\!\bra{y_1}(P(\pi)^{\top})^{\otimes 2t}\ket{y_2} 
    &= \avg_{\pi\sim\mu(\Alt(2^n))} \bra{x_2} P(\pi)^{\otimes 2t}\ket{x_1}\!\bra{y_2}P(\pi)^{\otimes 2t}\ket{y_1}\\
    &= \avg_{\pi\sim\mu(\Sym(2^n))} \bra{x_2}P(\pi)^{\otimes 2t}\ket{x_1}\!\bra{y_2}P(\pi)^{\otimes 2t}\ket{y_1}\\
    &= \avg_{\pi\sim\mu(\Sym(2^n))} 
    \bra{x_2} P(\pi)^{\otimes 2t} \ket{x_1} \!\bra{y_1} (P(\pi)^{\top})^{\otimes 2t}  \ket{y_2} \, .
\end{align*}
Consequently, 
\begin{align*}
\avg_{\pi\sim \mu(\Alt(2^n))} (P(\pi)(Z\otimes \one_{n-1})P(\pi^{-1}))^{\otimes t,t} = \avg_{\pi\sim \mu(\Sym(2^n))} (P(\pi)(Z\otimes \one_{n-1})P(\pi^{-1}))^{\otimes t,t}
\end{align*}

We identify $F\cong (\mathbb{Z}_2)^{\times 2^n}$ by writing each $D\in F$ as 
\begin{equation}
D=\sum_{x\in \{0,1\}^n} (-1)^{f_D(x)}\proj{x}
\end{equation}
for a Boolean function $f_D:\{0,1\}^n\to \{0,1\}$.
For any probability distribution $\nu$ on $F$ we have 
\begin{align}
\avg_{D\sim \nu} D^{\otimes 2t}= \sum_{x_1,\cdots,x_{2t}\in \{0,1\}^n}\avg_{D\sim \nu} (-1)^{\sum_{j=1}^{2t}f_D(x_j)}\ket{x_1,\cdots,x_{2t}}\!\bra{x_1,\cdots,x_{2t}}.
\end{align}
Moreover,
\begin{align}
	\avg_{D\sim \nu} (-1)^{\sum_{j=1}^{2t}f_D(x_j)}
    &= 2\Pr_{D\sim\nu}\left[ \sum_{j=1}^{2t} f_D(x_j) = 0 \mod 2\right] - 1.
\end{align}
Let $N_x(a) = |\{j\in [2t]: x_j = a\}|$. Then,
\begin{align}
    \Pr_{D\sim \mu(F)} \left[\sum_{j=1}^{2t}f_D(x_j)=0\mod 2\right] = \begin{cases}
        1, & \text{if }N_x(a) \text{ is even for all }a\in \{0,1\}^n, \\
        \frac12, & \text{otherwise.}
    \end{cases}
\end{align}
Hence,
\begin{align}
    g(\nu_{\mathrm{P}Z\mathrm{P}^{-1}}, \ t, \ F) = \max_{x} \ \left|2\Pr_{D\sim \nu_{\mathrm{P}Z\mathrm{P}^{-1}}} \left[\sum_{j=1}^{2t} f_D(x_j) = 0 \mod 2\right] - 1\right| \label{eqn:maxg}
\end{align}
where the maximum is over $x=(x_1, \cdots, x_{2t})$ such that there exists $a\in \{0,1\}^n$ with $N_x(a)$ being odd. 
Let us now assume that $x_1, \cdots, x_{2t}$ are distinct, because repeated $x_j$'s in \cref{eqn:maxg} will cancel and reduce to the same problem for a lower value of $t$.
Hence,
\begin{align}
    \Pr_{D\sim \nu_{\mathrm{P}Z\mathrm{P}^{-1}}} \left[\sum_{j=1}^{2t}f_D(x_j)=0\mod 2\right] 
    & = \sum_{i=0}^t \Pr_{D\sim \nu_{\mathrm{P}Z\mathrm{P}^{-1}}} \left[\sum_{j=1}^{2t}f_D(x_j)=2i \right] \nonumber\\
    &= \sum_{i=0}^t \frac{\binom{2t}{2i} \binom{2^n-2t}{2^{n-1}-2i} }{\binom{2^n}{2^{n-1}}}.
\end{align}
For the last equality, note that sampling a function $f_D$ is equivalent to assigning 0 or 1 to each of the $2^n$ possible inputs $x \in \bits^n$ uniformly at random, under the restriction that there are $2^{n-1}$ many 0's and $2^{n-1}$ many 1's in total.
The total number of such assignments is $\binom{2^n}{2^{n-1}}$ (choose the $2^{n-1}$ positions out of $2^n$ where there is a 1).
To count the number of assignments that satisfy $\sum_{j=1}^{2t}f_D(x_j)=2i$, we can view this as choosing $2i$ out of the first $2t$ positions that get assigned 1, and $2^{n-1} - 2i$ out of the remaining $2^{n} - 2t$ positions that also get assigned a 1.
Hence, the total number of choices is $\binom{2t}{2i}\binom{2^n-2t}{2^{n-1}-2i}$. 

Let $m = 2^{n-1}$. 
We can rearrange and simplify
\begin{align}
    \frac{\binom{2t}{2i} \binom{2m-2t}{m-2i} }{\binom{2m}{m}} = \frac{(2t)!}{(2i)! (2t-2i)!} \frac{(2m-2t)!}{(m-2i)!(m-2t+2i)!}\frac{m!m!}{(2m)!} = \frac{\binom{m}{2i} \binom{m}{2t-2i}}{\binom{2m}{2t}}.
\end{align}
To sum the numerator over $i$,
we examine the coefficients of polynomials $(1+h)^m (1+h)^m = (1+h)^{2m}$ and $(1+h)^m (1-h)^m = (1-h^2)^m$ in variable $h$
to see that
\begin{align}\label{eq:combo_id_one}
    \sum_{i=0}^{2t} \binom{m}{i} \binom{m}{2t-i} = \binom{2m}{2t} \quad\text{ and }\quad \sum_{i=0}^{2t} (-1)^i \binom{m}{i}\binom{m}{2t-i} = (-1)^t\binom{m}{t}.
\end{align}
Summing the two, we have
\begin{align}
    \sum_{i=0}^t \binom{m}{2i} \binom{m}{2t-2i} = \frac12 \left((-1)^t \binom{m}{t} + \binom{2m}{2t}\right),
\end{align}
and thus
\begin{align}
    g(\nu_{\mathrm{P}Z\mathrm{P}^{-1}}, \ t, \ F) &= \max_{k=1,\cdots, t} \ \frac{\binom{m}{k}}{\binom{2m}{2k}}.
\end{align}
Note that this is an exact expression for $g(\nu_{\mathrm{P}Z\mathrm{P}^{-1}}, t, F)$. 
To bound this quantity by something that is easier to deal with, we use the fact that $\binom{m}{k}\leq m^k$ and $\binom{2m}{2k}\geq \frac{(2m-2k)^{2k}}{(2k)!} \geq (\frac{m-k}{k})^{2k}$.
With this, we have
\begin{align}
    \frac{\binom{m}{k}}{\binom{2m}{2k}} \leq \frac{m^k}{(\frac{m-k}{k})^{2k}} = \left(\frac{mk^2}{(m-k)^2}\right)^{k} \leq  \left(\frac{mt^2}{(m-t)^2}\right)^{k}. 
\end{align}
We can assume that $t \leq \frac{\sqrt{m}}{2}$ because otherwise the claim is trivial. Then, $t\sqrt{m} \leq m-t$ and thus
\begin{equation}
    \left(\frac{mt^2}{(m-t)^2}\right)^{k} \leq \frac{mt^2}{(m-t)^2} \leq \frac{4t^2}{m}. \qedhere
\end{equation}
\end{proof}

\begin{corollary}[Gap for unitary design from alternating group]
\label{cor:gap_CPZPC}
Let $\delta(Z)$ denote the probability distribution of applying $Z$ with probability $1$ to the first qubit. 
For all integers $n\geq 1$ and $t \leq \Theta(2^{n/2})$,
\begin{align}
    g(\nutwodesign * \mu(\Alt^U(2^n)) * \delta(Z) * \mu(\Alt^U(2^n)) * \nutwodesign, \ t) = \cO\left(\frac{t}{2^{n/2}}\right).
\end{align}
\end{corollary}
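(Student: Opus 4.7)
The plan is to compare the distribution in the statement, call it $\nu$, to $\nu_{\CPFPC}$ and reduce to \cref{lemma:gapofCPFPC,lemma:gapforPZP}. The observation driving the comparison is that, once sandwiched by the permutation-invariance projector, applying $Z$ to a single qubit produces the same $t$-th moment operator as averaging a $\nu_{\mathrm{P}Z\mathrm{P}^{-1}}$-random diagonal sign, and \cref{lemma:gapforPZP} shows the latter is close to a $\mu(F)$-random sign---which is exactly what appears in the symmetric factorization of $\nu_{\CPFPC}$.

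Concretely, in the regime $t \le \Theta(2^{n/2})$ one has $2t \le 2^n - 2$, so \cref{lemma:equivalenceSandAlt} (applied at tensor power $2t$) gives $\Pi \deq M(\mu(\Alt^U(2^n)), t) = M(\mu(\Sym^U(2^n)), t)$. Writing $W \deq Z \otimes \one_{n-1}$, the moment operators factor as $M(\nu, t) = M(\nutwodesign, t) \, \Pi \, W^{\otimes t, t} \, \Pi \, M(\nutwodesign, t)$ and, by the self-adjoint factorization of $\nu_{\CPFPC}$ given right after \cref{eq:definitionCPFPC}, $M(\nu_{\CPFPC}, t) = M(\nutwodesign, t) \, \Pi \, M(\mu(F), t) \, \Pi \, M(\nutwodesign, t)$. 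The central step is then the identity
\begin{align*}
\Pi \, W^{\otimes t, t} \, \Pi \;=\; \Pi \, M(\nu_{\mathrm{P}Z\mathrm{P}^{-1}}, t) \, \Pi,
\end{align*}
which I will prove by expanding $M(\nu_{\mathrm{P}Z\mathrm{P}^{-1}}, t) = \avg_\pi P(\pi)^{\otimes t, t} \, W^{\otimes t, t} \, P(\pi^{-1})^{\otimes t, t}$ and absorbing both $P(\pi)^{\otimes t, t}$ and $P(\pi^{-1})^{\otimes t, t}$ into the outer $\Pi$'s via left-invariance of the Haar measure, since $\pi \in \Alt(2^n) \le \Sym(2^n)$. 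Plugging this into the difference of moment operators, using $\norm{M(\nutwodesign, t)}_\infty \le 1$, and applying \cref{lemma:gapforPZP} yields
\begin{align*}
\norm{M(\nu, t) - M(\nu_{\CPFPC}, t)}_\infty \;\le\; \norm{M(\nu_{\mathrm{P}Z\mathrm{P}^{-1}}, t) - M(\mu(F), t)}_\infty \;\le\; \frac{8t^2}{2^n}.
\end{align*}

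A triangle inequality against \cref{lemma:gapofCPFPC} then completes the proof:
\begin{align*}
g(\nu, t) \;\le\; \norm{M(\nu, t) - M(\nu_{\CPFPC}, t)}_\infty + g(\nu_{\CPFPC}, t) \;=\; \cO\!\left(\tfrac{t^2}{2^n}\right) + \cO\!\left(\tfrac{t}{2^{n/2}}\right) \;=\; \cO\!\left(\tfrac{t}{2^{n/2}}\right),
\end{align*}
where the last equality uses the hypothesis $t \le \Theta(2^{n/2})$. The only substantive step is the sandwich identity, which reduces to a one-line computation from Haar left-invariance; the $\Alt$-vs-$\Sym$ bookkeeping is handled uniformly by \cref{lemma:equivalenceSandAlt}, and all the hard estimates are already packed into \cref{lemma:gapofCPFPC,lemma:gapforPZP}, so I do not anticipate any serious obstacle.
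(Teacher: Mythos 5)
Your proposal is correct and follows essentially the same route as the paper: both replace $\mu(\Alt^U)$ by $\mu(\Sym^U)$ via \cref{lemma:equivalenceSandAlt}, use the left/right invariance of $\mu(\Sym^U)$ to trade $\delta(Z)$ for $\nu_{\mathrm{P}Z\mathrm{P}^{-1}}$ inside the sandwich (your ``sandwich identity'' is the moment-operator form of the paper's distributional identity $\mu(\Sym^U) * \nu_{\mathrm{P}Z\mathrm{P}^{-1}} * \mu(\Sym^U) = \mu(\Sym^U) * \delta(Z) * \mu(\Sym^U)$), and finish by triangle inequality against $\nu_{\CPFPC}$ via \cref{lemma:gapforPZP,lemma:gapofCPFPC}. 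The only cosmetic difference is that the paper manipulates distributions while you manipulate moment operators directly; the underlying argument is identical.
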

\begin{proof}
By~\cref{lemma:equivalenceSandAlt} we can substitute $\mu(\Alt^U(2^n))$ with $\mu(\Sym^U(2^n))$
in the interested regime of $t$.
In addition,
because of the right- and left-invariance of $\mu(\Sym^U(2^n))$,
we have 
\begin{align}
    \mu(\Sym^U(2^n)) * \nu_{\mathrm{P} Z \mathrm{P}^{-1}} * \mu(\Sym^U(2^n))
    = \mu(\Sym^U(2^n))*\delta(Z)*\mu(\Sym^U(2^n)).
\end{align}
Hence, the essential norm in the claim is equal to
the essential norm of 
\begin{align}
    \nutwodesign * \mu(\Sym^U(2^n)) * \nu_{\mathrm{P}Z\mathrm{P}^{-1}} * \mu(\Sym^U(2^n)) * \nutwodesign . \label{eq:fivedist}
\end{align}
The moment operator of this distribution is the product of five moment operators,
the middle $M(\nu_{\mathrm{P}Z\mathrm{P}^{-1}},~t)$ of which is close to
the projector~$\Pi_F = M(\mu_F,t) = \avg_{D \sim \mu(F)} D^{\otimes t,t}$ by~\cref{lemma:gapforPZP}.
If we replace $\nu_{\mathrm{P}Z\mathrm{P}^{-1}}$ in \cref{eq:fivedist} with $\mu_F$,
then we obtain $\nu_\CPFPC$.
That is,
with $P_H = \E_{U \sim \mu(\SU(2^n))} U^{\ot t, t}$ the Haar projector, we have
\begin{align}
    \label{eq:firstPZPcalculation}
    & g(\nutwodesign * \mu(\Sym^U(2^n)) * \nu_{\mathrm{P}Z\mathrm{P}^{-1}} * \mu(\Sym^U(2^n)) * \nutwodesign, \ t) \nonumber\\
    = \ & \norm*{M(\nutwodesign * \mu(\Sym^U(2^n)), \ t)  \cdot M(\nu_{\mathrm{P}Z\mathrm{P}^{-1}}, \ t) \cdot M(\mu(\Sym^U(2^n)) * \nutwodesign, \ t) - P_H}_{\infty} \nonumber\\
    \leq \ & \norm*{M(\nutwodesign * \mu(\Sym^U(2^n)), \ t) \cdot \left[M(\nu_{\mathrm{P}Z\mathrm{P}^{-1}}, \ t) - M(\mu_F, \ t)\right] \cdot M(\mu(\Sym^U(2^n)) * \nutwodesign, \ t)}_{\infty} \nonumber\\
    & \qquad + g(\nu_{\CPFPC}, \ t) \nonumber\\
    \leq \ & \norm*{M(\nu_{\mathrm{P}Z\mathrm{P}^{-1}}, \ t)-M(\mu_F, \ t)}_{\infty}+g(\nu_{\CPFPC}, \ t) \nonumber\\
    = \ & g(\nu_{\mathrm{P}Z\mathrm{P}^{-1}}, \ t, \ F)+g(\nu_{\CPFPC}, \ t).
\end{align}
\cref{lemma:gapforPZP} says $g(\nu_{\mathrm{P}Z\mathrm{P}^{-1}}, \ t, \ F) = \cO\left(\frac{t^2}{2^n}\right)$
and
\cref{lemma:gapofCPFPC} says $g(\nu_{\CPFPC}, \ t)= \cO\left(\frac{t}{2^{n/2}}\right)$.
\end{proof}

\subsection{All roads lead to local random quantum circuits}\label{section:allroads}

Here,
we develop a general machinery to reduce the spectral gap of random quantum circuits to that of a wide range of probability distributions.
The gaps of random quantum circuits are universal in this sense: 
a gap for any probability distribution that can be approximated 
by polynomial-size quantum circuits with each gate drawn independently from a subgroup of $\U(2^n)$ 
implies a gap for local random quantum circuits.
This reduction has appeared in our bound (\cref{thm:main_rev})
for random classical reversible circuit of the previous section,
and will be a key step in the proof of~\cref{thm:main_quantum}, 
where we will first show a spectral gap for a more structured probability distribution 
and then use that to show a spectral gap for random circuits.

The reduction proceeds in two steps (recall~\cref{def:rqc} for the various random quantum circuits):
\begin{align}
    g(\Asterisk_i \mu(G_i), \ t) \xrightarrow{\Cref{lemma:reductiongeneraltorqc}} 
    g(\nu_{2, \mathrm{All}\to\mathrm{All}}, \ t) \xrightarrow{\Cref{lemma:reductionfromalltoalltolocal}}
    g(\nu_{\mathrm{BRQC}}, \ t),
\end{align}
where $G_i$ are a list of $k$-local unitary groups, 
and the arrow means that a gap for one distribution implies a gap for another distribution. 

We will use the symmetric group~$\Sym(n)$ on $n$ qubit indices,
which is represented by $r : \Sym(n) \to \U(2^n)$ as
\begin{align}
    r(\pi)\ket{x_1, \cdots, x_n} = \ket{x_{\pi(1)}, \cdots, x_{\pi(n)}}.
\end{align}

\begin{lemma}[Gap reduction from general $k$-local subgroup circuits to $2$-local all-to-all random circuits]\label{lemma:reductiongeneraltorqc}
Let $G_1,\ldots,G_L\subseteq \U(2^n)$ be groups 
such that each $G_i$ acts on at most $k$ qubits. 
If $g(\Asterisk_i\mu(G_i), \ t) \leq 1 - \delta$ for some $\delta > 0$, 
then 
\begin{align}
    g(\nu_{2,\mathrm{All}\to\mathrm{All}}, \ t)\leq 1-\frac{\delta \xi}{4L},
\end{align}
for a positive constant $\xi=\Omega(4^{-k}k^{-6})$ that only depends on $k$ but not on $n$ or $t$. 
\end{lemma}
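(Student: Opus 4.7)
The plan is to carry out the reduction in four stages, losing only a bounded factor at each. First, I will apply the quantum-union-bound direction of~\cref{lem:local-vs-parallel} to the convolution $\Asterisk_i \mu(G_i)$, obtaining $g(\Sigma,t) \leq 1 - \delta/(4L)$ for the convex combination $\Sigma = L^{-1}\sum_i \mu(G_i)$. Second, by~\cref{fact:gfromoperatorinequality}, I will enlarge each subgroup $G_i$ to the full unitary group $\U(2^{k_i})_{S_i}$ on its support $S_i \subseteq [n]$, giving the same gap bound for $\widetilde\Sigma = L^{-1}\sum_i \mu(\U(2^{k_i})_{S_i})$. Third, I average over permutations $\pi \in \Sym(n)$ of the $n$ qubit labels: since conjugation by a permutation preserves the essential norm, and by convexity of the essential norm in the distribution, this converts each fixed-support Haar measure into Haar on a uniformly random $k_i$-qubit subset. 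Enlarging $\U(2^{k_i})$ to $\U(2^k)$ via~\cref{fact:gfromoperatorinequality} once more (padding by identity on extra qubits) then yields $\Delta(\nu_{k,\mathrm{All}\to\mathrm{All}}) \geq \delta/(4L)$.

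The main substantive step is to reduce $\nu_{k,\mathrm{All}\to\mathrm{All}}$ to $\nu_{2,\mathrm{All}\to\mathrm{All}}$ with only a $\xi = \Omega(4^{-k}k^{-6})$ loss. For this, I will approximate Haar on a fixed $k$-qubit support $S$ by a short convolution of $m$ Haar random 2-local gates within $S$. The key input is a $t$-independent essential-norm bound on the 2-local all-to-all walk restricted to a $k$-qubit system, viewed as a distribution in $\U(2^k)$: this is at most $1 - \Omega(k^{-c})$ for some absolute $c$, obtained via a Kazhdan-constant argument (\cref{lem:gapviakazhdan}) applied to the $\binom{k}{2}$ 2-local subgroups of $\U(2^k)$. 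Convolving $m = \Theta(k^{c+1})$ such steps drives the approximation error below $4^{-k}$ in moment-operator norm. I then substitute this approximation inside each $k$-qubit step of $\nu_{k,\mathrm{All}\to\mathrm{All}}$, use the forward direction of~\cref{lem:local-vs-parallel} to trade the convolutions for convex combinations of 2-local subgroups within $S$, and average once more over qubit permutations to distribute the pairs uniformly over all $\binom{n}{2}$ pairs in $[n]$. This yields a gap of $\Omega(\xi \delta/L)$ for $\nu_{2,\mathrm{All}\to\mathrm{All}}^{*m}$, and gap amplification (\cref{eq:gapamp}) then gives the claimed gap for a single step.

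The main obstacle is the bookkeeping in the fourth step: the approximation error, the Kazhdan-based gap $\Delta_k$, the convolution length $m$, and the $\binom{k}{2}^{-1}$ loss from detectability must all be balanced so that $\xi$ depends only on $k$ and not on $n$ or $t$. The $4^{-k}$ factor reflects the dimension $4^k$ of the $k$-qubit operator space: substitution of a 2-local convolution for $\U(2^k)$ Haar is only informative once the error falls below this threshold. The $k^{-6}$ combines a $k^{-3}$-type loss from the Kazhdan bound with another $k^{-3}$ from trading convolutions over $\binom{k}{2}$ 2-local subgroups for their convex combinations via the detectability lemma.
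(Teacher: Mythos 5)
Your first three stages match the paper's proof: quantum union bound to pass from convolution to convex combination, \cref{fact:gfromoperatorinequality} to enlarge to full unitary groups, and a permutation average. The divergence is in the fourth stage, and there you have a genuine gap.

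The central problem is that your ``approximate-and-substitute'' strategy does not give an $n$- and $t$-independent loss factor $\xi$. If you replace the $k$-qubit Haar step of $\nu_{k,\mathrm{All}\to\mathrm{All}}$ by an $m$-step $2$-local convolution $\nu'$ with $\|M(\nu',t)-M(\nu_{k,\mathrm{All}\to\mathrm{All}},t)\|_\infty \le \eta$, the only bound you get by triangle inequality is $g(\nu',t)\le \eta + g(\nu_{k,\mathrm{All}\to\mathrm{All}},t) \le \eta + 1-\delta/(4L)$. For this to be informative you need $\eta \ll \delta/(4L)$, a threshold that depends on $n$ (through $L$) and potentially on $t$ (through $\delta$), so the number of steps $m$, and hence $\xi$, would also depend on $n$ and $t$. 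Your heuristic that $\eta \le 4^{-k}$ suffices because $4^k$ is the dimension of the $k$-qubit operator space does not rescue this: the relevant comparison is against the gap $\delta/(4L)$ you are trying to preserve, not against a dimension. The paper sidesteps this entirely by invoking, instead of an approximation, the L\"owner inequality $\one - M(\nu_{2,\mathrm{All}\to\mathrm{All},k},t) \succeq \xi'\left(\one - P_{\{1,\ldots,k\}}\right)$ with $\xi' = \Omega(4^{-k}k^{-5})$ (the $t$-independent unconditional gap from \cite{haferkamp2022random}). An operator inequality of this form transfers directly into $\one - M(\nu,t) \succeq \xi\bigl(\one - M(\avg_i\mu(G_i),t)\bigr)$, multiplying the gap by the fixed constant $\xi$ no matter how small the gap is, with no accumulation of error.

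A secondary issue: you propose to obtain the $t$-independent bound $g(\nu_{2,\mathrm{All}\to\mathrm{All},k},t)\le 1-\Omega(k^{-c})$ for $k$-qubit systems via a Kazhdan-constant argument using \cref{lem:gapviakazhdan}. That lemma is stated for a \emph{finite} generating set $S$; the $\binom{k}{2}$ $2$-local subgroups of $\U(2^k)$ are continuous. Even if one adapts the machinery (via \cref{cor:GoodKazhdanSubgroupToBigger} applied to embeddings of $\SU(4)$), the loss factor in \cref{lem:shortproduct}/\cref{cor:GoodKazhdanSubgroupToBigger} scales with the number of $2$-local gates needed to generate a generic element of $\SU(2^k)$, which is $\Theta(4^k)$, not polynomial in $k$. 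So the bound you would extract is far weaker (more like $1-\Omega(16^{-k}\cdot\mathrm{poly}(k)^{-1})$) and, more importantly, it is not the $1-\Omega(k^{-c})$ you asserted. The paper does not use Kazhdan constants for the $k$-qubit unitary gap at all; it cites an explicit theorem with explicit constants.
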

\begin{proof}
    By~\cref{lem:local-vs-parallel} we have
    \begin{equation}\label{eq:unionboundingeneralreduction}
        g(\avg_i\mu(G_i), \ t)\leq 1-\frac{\delta}{4L} \,.
    \end{equation}
    By \cref{fact:gfromoperatorinequality} we may enlarge each $G_i$ to a larger group,
    which we choose to be the full unitary group on the support of~$G_i$.
    Enlarging $G_i$ if necessary, 
    we may further assume that the support of $G_i$ contains exactly $k$ qubits.

    Now, consider an intermediate distribution $\nu$
    defined by replacing each Haar measure $\mu(G_i)$ on a $k$-qubit unitary group
    with $\nu_{2, \mathrm{All}\to\mathrm{All}, k}$ on the same set of qubits as~$G_i$.
    To clarify the support of the group with respect to which we consider $\nu_{2, \mathrm{All}\to\mathrm{All}, k}$,
    for any subset $A \subseteq [n]$ we let 
    \begin{align}
        P_A := M(\mu_A, t) = \avg_{U \sim \mu(\SU(2^k))} (U_A  \otimes \overline U_A)^{\otimes t} \otimes \one_{[n]\setminus A}^{\otimes 2t}
    \end{align}
    be the moment operator of the Haar measure on $k$-qubit unitary group on $A$,
    where we have used subscripts to denote the support.
    The moment operator $P_A$ is an orthogonal projector.
    An unconditional gap theorem~\cite[Theorem 1]{haferkamp2022random} 
    says that for any $t \geq 1$ and $\xi' = 120000^{-1}4^{-k}k^{-5}$,
    \begin{align}
        \one - \frac{1}{k}\sum_{i=1}^k P_{\{i,i+1\,\mathrm{mod}\, k\}} \succeq \xi' (\one - P_{\{1, \cdots, k\}}).
    \end{align}
    If $A=\{j_1,\dots,j_k\}$ where $k \ge 2$ and $j_i< j_{i+1}$, we have 
    \begin{align*}
        \one - \frac{1}{\binom{k}{2}}\sum_{j_1,j_2\in A, \ j_1<j_2}P_{\{j_1,j_2\}}&=\frac{1}{\binom{k}{2}}\sum_{j_1,j_2\in A, \ j_1<j_2}\left(\one - P_{\{j_1,j_2\}}\right)
        \succeq\frac{1}{\binom{k}{2}}\sum_{i=1}^k\left(\one - P_{\{j_i,j_{i+1 \bmod k}\}}\right)
        \succeq \frac{2\xi'}{k-1} (\one - P_A). \numberthis
    \end{align*}
    Let $A_i \subseteq [n]$ be the support of $G_i$. We have assumed $\abs{A_i} = k$.
    Let $B_i = \{\{j_1,j_2\}~|~ j_1,j_2\in A_i \text{ and } j_1 \neq j_2\}$ 
    be the collection of all pairs of qubit indices of $A_i$.
    Then,  with $\xi = 2\xi' / (k-1)$,
    \begin{align}
        M(\avg_i \mu(G_i), \ t) = \one - \frac{1}{L}\sum_{i=1}^L (\one - P_{A_i})
        \succeq \one - \frac{1}{\xi L}\sum_{i=1}^L \left(\one - \frac{1}{\binom{k}{2}} \sum_{(j_1,j_2)\in B_i} P_{\{j_1,j_2\}}\right)
        = \one - \frac{1}{\xi} ( \one - M(\nu, \ t)).
    \end{align}
    where the last equality is by the definition of~$\nu$.
    This immediately implies that $g(\nu, \ t) \leq 1 - \frac{\delta \xi}{4L}$.
    
    Since $\nu$ is a convex mixture of Haar measures of $2$-qubit unitary groups,
    the qubit permutation $\pi \in \Sym(n)$ transforms $\nu$ into another such a mixture that we may denote by $\pi(\nu)$.
    Since $\nu_{2,\mathrm{All}\to\mathrm{All}}$ is the unique mixture of the $2$-qubit Haar measures
    such that $\pi(\nu_{2,\mathrm{All}\to\mathrm{All}}) = \nu_{2,\mathrm{All}\to\mathrm{All}}$ for all $\pi \in \Sym(n)$,
    we have $\avg_{\pi \sim \mu(\Sym(n))} \pi(\nu) = \nu_{2,\mathrm{All}\to\mathrm{All}}$.
    The essential norm is convex with respect to convex combinations of distributions,
    simply because the norm is convex.
    Hence, the spectral gap of $\nu_{2,\mathrm{All}\to\mathrm{All}}$ is at least that of~$\nu$.
\end{proof}

Let $T$ be a set of some transpositions that generates $\Sym(n)$.
This set $T$ represents the connectivity of an $n$-qubit architecture. 
We will borrow results on the eigenvalues of Cayley graphs to study how fast $T$ generates $\Sym(n)$.
Let $\Cay(\Sym(n), T)$ denote its associated Cayley graph,
a directed graph whose nodes are the group elements 
and where a vertex is connected to another by left multiplication by an element of a generating set.
Since transpositions are involutions, $\Cay(\Sym(n), T)$ is undirected.
Let $G_T$ denote the undirected graph with a vertex set $[n]$ 
and distinct vertices $i,j\in [n]$ are joined by an edge 
if and only if $(i,j) \in T$. 
Let $\alpha(G_T)$ denote the algebraic connectivity of $G_T$, 
{\it i.e.}, second smallest eigenvalue of the Laplacian $D-A$,
where $A = (A_{ij})$ is the adjacency matrix with $A_{ij} = 1$ if $(i,j) \in T$ and $0$ otherwise,
and $D$ is diagonal whose entries are the vertex degrees.
Let $\lambda_i(\cdot)$ denote the $i$-th largest eigenvalue (ignoring multiplicity) 
of a matrix or the adjacency matrix of a graph.

\begin{fact}
\label{fact:aldous_transpositions}
    For any set $T$ of transpositions that generates $\Sym(n)$, 
    \begin{align}
        \lambda_2(\Cay(\Sym(n), T)) = \max_{\rho} \ \lambda_1\left(\sum_{a\in T} \rho(a)\right) 
        =|T| - \alpha(G_T),
    \end{align}
    where the maximum is taken over all nontrivial irreps~$\rho$ of $\Sym(n)$ 
    and is attained by the $(n-1)$-dimensional standard representation.
\end{fact}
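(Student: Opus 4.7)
The plan is to combine basic representation theory of $\Sym(n)$ with Aldous's spectral gap conjecture, established by Caputo, Liggett, and Richthammer~\cite{caputo2010proof}. The adjacency matrix of $\Cay(\Sym(n), T)$ is $A = \sum_{a \in T} L_a$, where $L_a$ denotes left multiplication by $a$ acting on $\C[\Sym(n)]$. Since the left-regular representation decomposes as $\bigoplus_{\rho} \rho^{\oplus \dim \rho}$ over all irreps $\rho$ of $\Sym(n)$, the spectrum of $A$ is the union of the spectra of the operators $M_\rho \deq \sum_{a \in T} \rho(a)$, each eigenvalue of $M_\rho$ appearing with multiplicity $\dim \rho$. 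The trivial representation contributes the single eigenvalue $|T|$, which is the top eigenvalue of $A$ (since $T$ generates $\Sym(n)$, the graph is connected and undirected, so $\lambda_1 = |T|$ is simple). Consequently $\lambda_2(\Cay(\Sym(n),T)) = \max_{\rho \neq \mathrm{triv}} \lambda_1(M_\rho)$, giving the first equality.

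Next I would compute $\lambda_1(M_\rho)$ for the $(n-1)$-dimensional standard representation $\rho_{\mathrm{std}}$, realized on $V = \{v \in \C^n : \sum_i v_i = 0\}$ as a subrepresentation of the natural permutation representation on $\C^n$. A transposition $(i,j)$ acts on $\C^n$ as $\one - (e_i - e_j)(e_i - e_j)^\top$, so summing over $T$ and using the standard identity $\sum_{(i,j)\in T} (e_i - e_j)(e_i - e_j)^\top = L_{G_T}$ (the graph Laplacian of $G_T$), the full sum equals $|T|\one - L_{G_T}$ on $\C^n$. Since $L_{G_T}$ has a zero eigenvalue on the all-ones vector (which lies outside $V$), its restriction to $V$ has smallest eigenvalue equal to $\alpha(G_T)$. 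Hence $\lambda_1(M_{\rho_{\mathrm{std}}}) = |T| - \alpha(G_T)$, which provides the lower bound direction of the second equality.

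The matching upper bound, namely that \emph{no} other nontrivial irrep produces a larger value of $\lambda_1(M_\rho)$, is precisely Aldous's spectral gap conjecture: the spectral gap of the interchange process on an arbitrary connected weighted graph coincides with that of the corresponding simple random walk on the vertices, which in the unweighted transposition setting is governed by $\alpha(G_T)$. This is the main obstacle in the statement; it is not elementary and I would not reprove it here. Instead, I would invoke the theorem of Caputo--Liggett--Richthammer~\cite{caputo2010proof}, which settles exactly this inequality for all connected $G_T$, completing the proof and justifying the claim that the maximum is attained on the standard representation.
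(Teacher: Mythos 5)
Your proposal is correct and takes essentially the same approach as the paper, which simply cites "spectral graph theory of Cayley graphs" for the first equality and Aldous' conjecture~\cite{caputo2010proof,liu2022cayleyeigenvalues} for the second. You have unpacked the standard details the paper leaves implicit — the Fourier-block decomposition of the adjacency matrix over the regular representation, and the explicit computation $\sum_{(i,j)\in T}\rho_{\mathrm{perm}}(i,j) = |T|\one - L_{G_T}$ giving the standard-representation eigenvalue $|T|-\alpha(G_T)$ — while, like the paper, deferring the genuinely hard direction (no other irrep does better) to Caputo–Liggett–Richthammer.
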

\begin{proof}
    The first equality follows from the spectral graph theory of Cayley graphs.
    The second equality is given by the well-known Aldous' spectral gap property, 
    which states that the maximum is achieved at the standard $(n-1)$-dimensional 
    representation of $\Sym(n)$ corresponding to the partition $(n-1,1)$~\cite{caputo2010proof,liu2022cayleyeigenvalues}.
\end{proof}

We can thus directly equate the essential norm with the normalized Cayley graph eigenvalue. 
Let $S_n:=r(\Sym(n))$ be the group of all represented unitary matrices of $\Sym(n)$.
\begin{lemma}\label{lem:relate_gap_cayley}
    For any positive integers $n$ and $t$ and any set $T$ of transpositions that generates $\Sym(n)$,
    if $\mu_T$ is the uniform probability measure on~$r(T)$, we have 
    \begin{align}
        g\left(\mu_T, \ t, \  S_n\right) = \frac{1}{\abs T} \lambda_2 (\Cay(\Sym(n), T)).
    \end{align}
\end{lemma}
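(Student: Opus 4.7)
The plan is to decompose the $t$-th moment representation of $S_n$ into $\Sym(n)$-isotypes via Schur--Weyl duality, reducing the essential norm to a spectral quantity on Specht modules, and then invoke \cref{fact:aldous_transpositions} to identify this with the second-largest Cayley graph eigenvalue.

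First, I would reframe the representation. Since $r(\pi)$ permutes the $n$ qubit factors of $(\CC^2)^{\otimes n}$, the tensor power $r(\pi)^{\otimes 2t}$ permutes $n$ super-sites of dimension $2^{2t}=4^t$ in $(\CC^{4^t})^{\otimes n}$; note also that $r(\pi)$ is real so $\overline{r(\pi)} = r(\pi)$ and the $\tau_{t,t}$ moment reduces to $r(\pi)^{\otimes 2t}$. Thus $r^{\otimes 2t}$ is the permutation action of $\Sym(n)$ on $(\CC^{4^t})^{\otimes n}$, and Schur--Weyl duality yields
\begin{equation*}
    (\CC^{4^t})^{\otimes n} \cong \bigoplus_{\lambda \vdash n,\,\ell(\lambda)\le 4^t} V_\lambda \otimes S^\lambda,
\end{equation*}
where $V_\lambda$ is a $\GL_{4^t}$-Weyl module and $S^\lambda$ is a Specht module of $\Sym(n)$. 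On each isotype the moment operator restricts to $I_{V_\lambda}\otimes \frac{1}{|T|}\sum_{\tau\in T}\rho_\lambda(\tau)$, where $\rho_\lambda$ is the Specht representation.

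Second, the Haar projector $M(\mu_{S_n},t,S_n)$ is precisely projection onto the trivial isotype $\lambda=(n)$. Subtracting it gives
\begin{equation*}
    g(\mu_T,t,S_n) = \max_{\lambda\neq (n),\,\ell(\lambda)\le 4^t} \frac{1}{|T|}\lnorm*{\sum_{\tau\in T}\rho_\lambda(\tau)}_\infty.
\end{equation*}
\cref{fact:aldous_transpositions} states that $\max_\rho \lambda_1(\sum_\tau \rho(\tau)) = |T|-\alpha(G_T) = \lambda_2(\Cay(\Sym(n),T))$ over nontrivial irreps of $\Sym(n)$, attained by the standard rep $S^{(n-1,1)}$. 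Since $\ell((n-1,1))=2\le 4^t$ for every $t\ge 1$, the standard rep always appears in the Schur--Weyl decomposition, supplying the lower bound on $g$, while \cref{fact:aldous_transpositions} supplies the upper bound on every remaining $\lambda_1$ term in the maximum.

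The main technical hurdle is that the operator norm selects the largest \emph{absolute} eigenvalue, whereas Aldous' property bounds only the largest eigenvalue of $\sum_\tau \rho_\lambda(\tau)$ from above. To close the proof I would need to argue that the minimum eigenvalue of $\frac{1}{|T|}\sum_\tau\rho_\lambda(\tau)$ is not more negative than $-\lambda_2/|T|$ for every nontrivial $\lambda$ appearing in $r^{\otimes 2t}$; a sign representation in the decomposition, for instance, would contribute $-1$ and threaten to dominate. I expect to handle this either via a character-twist symmetry that pairs extreme positive and negative eigenvalues on the Cayley spectrum, or by passing to the self-adjoint square $\mu_T * \mu_T$ and using that its spectrum is the eigenvalue-wise square of the original. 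This spectral sign-control step is where the structure of $r^{\otimes 2t}$ must enter beyond Aldous' theorem and will be the most delicate part of the argument.
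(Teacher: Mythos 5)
Your proposal follows the same route as the paper's proof (Schur--Weyl decomposition into $\Sym(n)$-isotypes, then \cref{fact:aldous_transpositions}), and the concern you raise at the end is not a technicality you should expect to wave away --- it is a genuine gap, and in fact the paper's own proof contains the same unaddressed gap. The operator norm on a nontrivial isotype $S^\lambda$ is $\max\bigl(\lambda_{\max},\,-\lambda_{\min}\bigr)$ of $\frac{1}{|T|}\sum_{\tau\in T}\rho_\lambda(\tau)$, and Aldous' theorem controls only $\lambda_{\max}$. For transposition generators the Cayley graph is bipartite, so the extremal negative eigenvalue $-|T|$ is always present on the sign representation $S^{(1^n)}$, and $S^{(1^n)}$ \emph{does} appear in $r^{\otimes 2t}\cong$ the permutation action on $(\CC^{4^t})^{\otimes n}$ whenever $4^t\geq n$. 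In that regime $M(\mu_T)$ has a $-1$ eigenvalue on that isotype, so $g(\mu_T,t,S_n)=1$, while $\frac{1}{|T|}\lambda_2(\Cay(\Sym(n),T))<1$ for any connected $T$. (A concrete instance: $n=3$, $t=1$, $T=\{(12),(23)\}$ gives a $6$-cycle Cayley graph with $\lambda_2/|T|=1/2$, yet the sign rep appears in $(\CC^{4})^{\otimes 3}$ and forces $g=1$.) So the stated equality is false as written; the paper's one-line claim ``it follows from the first equality in \cref{fact:aldous_transpositions} that $g\le\lambda_2/|T|$'' does not follow.

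Neither of your proposed repairs closes this. The sign-twist symmetry only restates that the Cayley spectrum is symmetric (pairing $\rho\leftrightarrow\rho\otimes\mathrm{sign}$ flips each eigenvalue's sign), which is precisely the source of the problem, not a cure. Passing to $\mu_T*\mu_T$ squares the moment operator, sending the offending $-1$ eigenvalue to $+1$, so $g(\mu_T*\mu_T,t,S_n)$ is still $1$. The fact that rescues the paper's downstream use is visible only at the level of \cref{lemma:gapforswap_arb_arch}: there the measure of interest is $\frac12\delta_{e}+\frac12\mu_T$, whose moment operator $\frac12 I+\frac12 M(\mu_T)$ is positive semidefinite on the orthogonal complement of the trivial isotype (eigenvalues $\frac12+\frac12\mu$ with $\mu\ge -1$), and \emph{its} operator norm is exactly $\frac12+\frac{\lambda_2}{2|T|}$, attained on the standard representation. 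So the correct statement one can prove --- and what is actually needed --- is either the one-sided bound $g(\mu_T,t,S_n)\ge\frac{1}{|T|}\lambda_2(\Cay(\Sym(n),T))$ with equality only when $4^t<n$ excludes enough negative eigenvalues, or, more usefully, the identity
\begin{align*}
    g\Bigl(\tfrac12\delta_{e}+\tfrac12\mu_T,\;t,\;S_n\Bigr)
    =\tfrac12+\tfrac{1}{2|T|}\lambda_2\bigl(\Cay(\Sym(n),T)\bigr),
\end{align*}
which makes the nonnegativity manifest. Your instinct that ``the structure of $r^{\otimes 2t}$ must enter beyond Aldous' theorem'' was correct; the missing ingredient is the explicit mixing with the identity, not a symmetry of the Cayley spectrum.
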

\begin{proof}
    Note that $\lambda_1(\Cay(\Sym(n), T)) = |T|$.
    It then follows from the first equality in~\cref{fact:aldous_transpositions} that
    \begin{align}
        g\left(\mu_T, \ t, \  S_n\right) \leq \frac{1}{|T|} \lambda_2 (\Cay(\Sym(n), T)).
    \end{align}
    The representation $\Sym(n) \ni \pi \mapsto r(\pi)^{\otimes t}\otimes \overline{r(\pi)}^{\otimes t} = r(\pi)^{\otimes 2t}$ 
    can be decomposed into a direct sum of irreps of $\Sym(n)$,
    and we have to show that the $(n-1)$-dimensional standard representation is always a summand.
    The representation $r$ has a subrepresentation on the span 
    of $\ket{1 0^{n-1}}, \ket{01 0^{n-2}}, \cdots, \ket{0^{n-1}1}$.
    This subrepresentation is $n$-dimensional 
    and is isomorphic to one that acts on $\mathrm{span}\{ \ket a \,:\, a=1,2,\ldots,n\}$.
    We see that $\sum_{i=1}^n \ket{0^{i-1} 1 0^{n-i}}$ is a trivial subrepresentation
    and its orthogonal complement is isomorphic to the $(n-1)$-dimensional standard representation.
    Hence, any tensor power of $r$ contains the standard representation.
    We conclude the lemma using~\cref{fact:aldous_transpositions} again.
\end{proof}

Define a group $G_{\SWAP,i,j} = \{ \one, \SWAP_{i,j} \} \cong S_2$ for distinct $i,j\in [n]$. 

\begin{lemma}[Spectral gaps of SWAPs]\label{lemma:gapforswap_arb_arch}
    Given any connected simple graph~$T$ of $n$ vertices,
    we regard the edges as transpositions of $\Sym(n)$
    and these transpositions generate~$\Sym(n)$,
    so we identify $T$ with a group generating set.
    For all integer $t\geq 1$,
    \begin{equation}
        g(\E_{(i,j)\in T} \mu(G_{\SWAP,i,j}), \ t, \ S_n)\leq 1 - \Omega(n^{-2}|T|^{-1}).
    \end{equation}
\end{lemma}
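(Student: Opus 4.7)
The plan is to compute the moment operator explicitly and then reduce directly to the Cayley-graph bound of \cref{lem:relate_gap_cayley}. Since each $G_{\SWAP,i,j} = \{\one, r((i,j))\}$ is a two-element subgroup, its Haar average under the representation $U \mapsto U^{\otimes 2t}$ is $\tfrac{1}{2}(\one + r((i,j))^{\otimes 2t})$, and hence
\begin{equation*}
    M \deq M\!\left(\avg_{(i,j)\in T}\mu(G_{\SWAP,i,j}),\, t,\, S_n\right) = \tfrac{1}{2}\one + \tfrac{1}{2}\, M(\mu_T, t, S_n),
\end{equation*}
where $\mu_T$ is the uniform measure on the transpositions $r(T)$ as in \cref{lem:relate_gap_cayley}.

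I would then split the analysis across $\cH = \cH_{\mathrm{triv}} \oplus \cH_{\mathrm{triv}}^\perp$, where $\cH_{\mathrm{triv}}$ is the $\Sym(n)$-invariant subspace of the representation $r^{\otimes 2t}$. On $\cH_{\mathrm{triv}}$ every transposition acts as the identity and $M_H \deq M(\mu(S_n), t, S_n)$ is the orthogonal projector onto $\cH_{\mathrm{triv}}$, so $(M - M_H)|_{\cH_{\mathrm{triv}}} = 0$. On $\cH_{\mathrm{triv}}^\perp$ the Haar projector vanishes, and because each $r((i,j))$ is an involution, the spectrum of $M(\mu_T, t, S_n)|_{\cH_{\mathrm{triv}}^\perp}$ lies in $[-1,1]$; consequently $M|_{\cH_{\mathrm{triv}}^\perp}$ has spectrum in $[0,1]$ and its operator norm equals $\tfrac{1}{2}\bigl(1 + \lambda_{\max}(M(\mu_T, t, S_n)|_{\cH_{\mathrm{triv}}^\perp})\bigr)$. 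The crucial point is that mixing in the identity inside each $G_{\SWAP,i,j}$ shifts the spectrum into $[0,1]$, so only the \emph{largest} non-trivial eigenvalue of $M(\mu_T, t, S_n)$, rather than the largest in absolute value, controls the essential norm of $M$.

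By the Aldous characterization in \cref{fact:aldous_transpositions}, the largest eigenvalue over non-trivial irreps satisfies $\lambda_{\max}(M(\mu_T, t, S_n)|_{\cH_{\mathrm{triv}}^\perp}) \leq 1 - \alpha(G_T)/|T|$, where $\alpha(G_T)$ is the algebraic connectivity of $G_T$. Combining,
\begin{equation*}
    g\!\left(\avg_{(i,j)\in T}\mu(G_{\SWAP,i,j}),\, t,\, S_n\right) \leq 1 - \frac{\alpha(G_T)}{2\,|T|}.
\end{equation*}
Finally, because $T$ generates $\Sym(n)$ the graph $G_T$ is connected, so Fiedler's classical lower bound $\alpha(G_T) \geq 2(1-\cos(\pi/n)) = \Omega(n^{-2})$ yields the claimed $1 - \Omega(n^{-2}|T|^{-1})$. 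The argument is short and modular over the already-established \cref{lem:relate_gap_cayley} and \cref{fact:aldous_transpositions}, so I do not foresee a major obstacle; the only mild subtlety is keeping straight the one-sided versus two-sided spectral estimates, which is precisely what motivates averaging over the subgroups $G_{\SWAP,i,j}$ rather than directly over the transposition set $T$.
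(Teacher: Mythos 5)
Your proof is correct and follows the same high-level route as the paper: write the moment operator as $\tfrac12\one + \tfrac12 M(\mu_T, t, S_n)$, control the relevant eigenvalue via Aldous' characterization (\cref{fact:aldous_transpositions}), and finish with a Fiedler-type lower bound on the algebraic connectivity of $G_T$. The substantive point your proof handles more carefully than the paper's is the one-sided versus two-sided spectral estimate. The paper's proof relies on the equality asserted in \cref{lem:relate_gap_cayley}, $g(\mu_T, t, S_n) = \lambda_2(\Cay(\Sym(n),T))/|T|$, but that equality breaks down whenever the sign representation of $\Sym(n)$ occurs inside $r^{\otimes 2t}$, which happens as soon as $n \le 2^{2t}$ (for instance $n=4$, $t=1$: the self-conjugate irrep $V_{(2,2)}$ occurs in $r$, so $V_{(2,2)}\otimes V_{(2,2)}$ contributes the sign irrep to $r^{\otimes 2}$). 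On the sign irrep every transposition acts as $-1$, so $M(\mu_T)=-\one$ there, $g(\mu_T,t,S_n)=1$, and the paper's convexity step would only give the vacuous $g\le 1$. Your observation that averaging in the identity inside each $G_{\SWAP,i,j}$ pushes the spectrum of $M$ into $[0,1]$ — so only the largest \emph{positive} eigenvalue of $M(\mu_T)$ controls the essential norm, and that quantity is indeed bounded by $\lambda_2/|T|$ via \cref{fact:aldous_transpositions} — is exactly the argument the paper implicitly needs. The only remaining difference (spanning tree $T_0$ plus the tree bound of~\cite{Grone1990} versus Fiedler's bound for connected graphs applied directly to $G_T$) is cosmetic: both reduce to edge-monotonicity of $\alpha$ and the path's value $2(1-\cos(\pi/n)) = \Theta(n^{-2})$.
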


\begin{proof}
    Let $T_0$ be a spanning tree of~$T$.
    $T_0$ can also be considered as a generating set of transpositions of $\Sym(n)$. 
    Then, it follows from~\cref{lem:relate_gap_cayley} and~\cref{fact:aldous_transpositions} that
    \begin{align}
        g\left(\mu\left(\{\SWAP_{i,j} \}_{(i,j)\in T_0}\right), \ t, \ S_n\right) = 1-\frac{\alpha(G_{T_0})}{|T_0|}.
    \end{align}
    Using~\cref{rem:convex_gap},
    \begin{align}
        g\left(\E_{(i,j)\in T} \mu(G_{\SWAP, i,j}\right), \ t, \ S_n) 
        &\le \frac12 + \frac12 g\left(\mu\left(\{\SWAP_{i,j} \}_{(i,j)\in T}\right), \ t, \ S_n\right) \nonumber\\
        &\leq \frac12 + \frac12\cdot \left( 1-\frac{|T_0|}{|T|}\cdot\frac{\alpha(G_{T_0})}{|T_0|} \right) \nonumber\\
        &= 1 - \frac{\alpha(G_{T_0})}{2|T|}.
    \end{align}
    It is well known that the algebraic connectivity of a tree on $n\geq 3$ vertices 
    is maximized with the value of $1$ if and only if the tree is a ``star,'' 
    {\it i.e.}, $T_0=\{(1,2), (1,3), \cdots, (1,n)\}$, 
    and minimized with the value of $2-2\cos(\pi / n)$ 
    if and only if the tree is a ``path,'' 
    {\it i.e.}, $T_0=\{(1,2), (2,3), \cdots, (n-1,n)\}$~\cite{Grone1990}. 
    This means that we always have $2-2\cos(\pi/n)\leq \alpha(G_{T_0})\leq 1$. 
    Since $1 - \cos(\pi/n)=\Theta(n^{-2})$, we conclude the lemma.
\end{proof}

For the brickwork architecture, since $|T|=n$, 
using~\cref{lem:local-vs-parallel} with $L=n$ and $\ell=2$, we have

\begin{corollary}\label{lemma:gapforswap_brickwork}
    For any positive integers $n$ and $t$ 
    it holds that $g(\Asterisk_{i} \mu(G_{\SWAP, i,i+1}), \ t, \ S_n)\leq 1 - \Omega(n^{-2})$ 
    where the order in the convolution is given by the brickwork layout.
\end{corollary}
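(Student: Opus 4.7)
The plan is to combine \cref{lemma:gapforswap_arb_arch} with the converse-detectability side of \cref{lem:local-vs-parallel} exactly as the hint suggests. First I would take the connectivity graph for a $1$D brickwork, namely $T = \{(1,2),(2,3),\ldots,(n-1,n),(n,1)\}$ (or its open-boundary version), so $|T|=\Theta(n)$. Applying \cref{lemma:gapforswap_arb_arch} to this $T$ yields
\begin{equation}
    g\!\left(\tfrac{1}{|T|}\sum_{(i,j)\in T}\mu(G_{\SWAP,i,j}),\ t,\ S_n\right) \le 1 - \Omega(n^{-2}|T|^{-1}) = 1 - \Omega(n^{-3}),
\end{equation}
that is, writing $\Sigma$ for this convex combination, $\Delta(\Sigma) = \Omega(n^{-3})$.

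Next I would invoke \cref{lem:local-vs-parallel} to upgrade the convex combination to the convolution. Take the $L=|T|=\Theta(n)$ subgroups $\{G_{\SWAP,i,i+1}\}$ and note that the Haar measure on each $G_{\SWAP,i,i+1}$ is uniform on $\{\one,\SWAP_{i,i+1}\}$. In the brickwork ordering, each SWAP on $(i,i+1)$ acts on two qubits that are disjoint from $(j,j+1)$ whenever $|i-j|\ge 2$, so it commutes with every SWAP in the list except at most the two neighbors on $(i-1,i)$ and $(i+1,i+2)$; thus the commutation parameter $\ell$ in \cref{lem:local-vs-parallel} satisfies $\ell \le 2$. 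The converse-detectability bound then gives
\begin{equation}
    \Delta(\Asterisk_i \mu(G_{\SWAP,i,i+1})) \;\ge\; \tfrac{1}{4}\min\!\bigl(1,\ L\ell^{-2}\Delta(\Sigma)\bigr) \;=\; \tfrac{1}{4}\min\!\bigl(1,\ \Theta(n)\cdot \tfrac{1}{4}\cdot \Omega(n^{-3})\bigr) \;=\; \Omega(n^{-2}),
\end{equation}
which is precisely the claimed bound.

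The only nontrivial step is the verification of $\ell=2$ and the translation of the brickwork ordering into the hypothesis of \cref{lem:local-vs-parallel}; everything else is plugging numbers into results already established. A small nuisance is that $\cref{lem:local-vs-parallel}$ is stated for an unordered convolution $\mu_{G_1}\!*\cdots*\mu_{G_L}$, but since the moment operators $M(\mu(G_{\SWAP,i,i+1}),t)$ for commuting indices (disjoint pairs) commute as operators, the brickwork order can be matched to any fixed order on the odd-then-even sublayers without changing the operator norm of the product. So this is not a real obstacle, and the proof is complete after these two invocations.
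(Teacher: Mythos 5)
Your proof is correct and is exactly the argument the paper uses: apply \cref{lemma:gapforswap_arb_arch} with $|T|=n$ to get a gap $\Omega(n^{-3})$ for the convex combination, then plug $L=n$, $\ell=2$ into the converse-detectability bound of \cref{lem:local-vs-parallel} to obtain $\Omega(n^{-2})$ for the convolution. Your ``nuisance'' about the ordering is not really an issue, since \cref{lemma:detectabilityandunionbound} already states that the ordering of the projectors in the product is arbitrary, so \cref{lem:local-vs-parallel} applies to the brickwork ordering as-is.
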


Next, we show that a gap for $2$-local all-to-all random quantum circuits 
implies a gap for ($1$D) brickwork random quantum circuits. 
The proof can be easily adapted to arbitrary quantum and classical architectures. 

It is helpful to define a local ($1$D) random quantum circuits model, 
which induces a probability distribution $\nu_{\mathrm{LRQC},n}$ on $\SU(2^n)$ 
by first picking an index $i\in [n]$ uniformly at random 
and then applying a Haar random unitary $U_{i,i+1}$ from $\SU(4)$.
We identify $n+1$ with $1$.

\begin{lemma}[Gap reduction from all-to-all to brickwork connectivity]\label{lemma:reductionfromalltoalltolocal}
    If $g(\nu_{2,\mathrm{All}\to \mathrm{All}}, \ t)\leq 1-\delta$ 
    for some $\delta>0$, 
    then $g(\nu_{\mathrm{LRQC}}, \ t)\leq 1-\Omega\left(\delta n^{-3}/\log \frac 1 \delta\right)$ 
    and $g(\nu_{\mathrm{BRQC}}, \ t) \leq 1- \Omega\left(\delta n^{-2}/ \log \frac 1 \delta \right)$.
\end{lemma}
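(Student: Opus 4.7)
The plan is to transfer the gap of $\nu_{2, \mathrm{All}\to\mathrm{All}}$ to $\nu_{\mathrm{LRQC}}$ via SWAP routing, exploiting the symmetrization identity
\[
\nu_{2, \mathrm{All}\to\mathrm{All}} = \E_{\sigma \sim \mu(\Sym(n))} \bracks*{r(\sigma) \, \nu_{\mathrm{LRQC}} \, r(\sigma^{-1})}\,,
\]
which holds because $(\sigma(i), \sigma(i+1))$ is uniform over ordered distinct pairs when $\sigma$ is uniform on $\Sym(n)$ and $i$ is uniform on $[n]$. The bound on $\nu_{\mathrm{BRQC}}$ will then follow from the LRQC bound via \cref{lem:local-vs-parallel}, gaining an extra factor of $n$.

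First, I would approximate $\mu(\Sym(n))$ by a short local SWAP walk. By \cref{lemma:gapforswap_arb_arch} applied to the path graph, $g(\nu_{\mathrm{SWAP}}, t, S_n) \leq 1 - \Omega(n^{-3})$ for $\nu_{\mathrm{SWAP}} := \E_i \mu(G_{\mathrm{SWAP}, i, i+1})$, and gap amplification (\cref{eq:gapamp}) implies that $\sigma_k := \nu_{\mathrm{SWAP}}^{*k}$ satisfies $\|M(\sigma_k, t) - M_{S_n}\|_\infty \leq \delta/10$ for $k = \Theta(n^3 \log(1/\delta))$, where $M_{S_n}$ is the Haar projector for $\Sym(n) \hookrightarrow \SU(2^n)$ in representation $\tau$. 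The symmetrization identity then yields the operator equality $M_{S_n} M(\nu_{\mathrm{LRQC}}, t) M_{S_n} = M_{S_n} M(\nu_{2, \mathrm{All}\to\mathrm{All}}, t) M_{S_n}$. Combined with $M_{S_n} P_H = P_H = P_H M_{S_n}$, this gives $\|M_{S_n} M(\nu_{\mathrm{LRQC}}, t) M_{S_n} - P_H\|_\infty \leq g(\nu_{2, \mathrm{All}\to\mathrm{All}}, t) \leq 1 - \delta$, and substituting the $\sigma_k$ approximation for $M_{S_n}$ yields
\[
g(\sigma_k * \nu_{\mathrm{LRQC}} * \sigma_k, \ t) \leq 1 - \Omega(\delta)\,.
\]

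Next, I would transfer this bound to the pure convolution $\nu_{\mathrm{LRQC}}^{*L}$ for $L = 2k+1$. The key decomposition is that each $\nu_{\mathrm{LRQC}}$ step may be written as a Haar gate on a random adjacent pair followed by a $\mathrm{SWAP}$ on the same pair (by right-invariance of the Haar measure on $\SU(4)$). Pushing all $L$ such $\mathrm{SWAP}$s to one side via conjugation---which, by conjugation-invariance of the Haar measure, preserves each Haar gate's distribution while shifting its support to the pair obtained by applying the trailing $\mathrm{SWAP}$s---rewrites $\nu_{\mathrm{LRQC}}^{*L}$ as a product of Haar gates on pairs determined by the $\mathrm{SWAP}$ subwalks, times the total $\mathrm{SWAP}$ product $\sim \sigma_L$. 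Once the trailing $\mathrm{SWAP}$s are well-mixed (i.e., for steps $j$ with $L - j \gtrsim k$), the marginal distribution of the pair of the $j$-th Haar gate becomes uniform over all pairs, matching the sandwich distribution to error $O(\delta)$ in moment-operator norm. This yields $g(\nu_{\mathrm{LRQC}}^{*L}, t) \leq 1 - \Omega(\delta)$; since the moment operator of $\nu_{\mathrm{LRQC}}$ is self-adjoint (Haar on $\SU(4)$ is inversion-symmetric), gap amplification is tight and so $g(\nu_{\mathrm{LRQC}}, t) \leq (1 - \Omega(\delta))^{1/L} \leq 1 - \Omega(\delta / L) = 1 - \Omega\parens*{\delta / (n^3 \log(1/\delta))}$. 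The BRQC bound then follows from \cref{lem:local-vs-parallel}, gaining an additional factor of $n$.

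The hard part will be the final transfer step: the pairs on which different Haar gates act share portions of the $\mathrm{SWAP}$ randomness, so they are not independent, and the decorrelation is only approximate. Controlling this correlation carefully---either via a coupling argument on the $\mathrm{SWAP}$ walk, or via an operator-inequality comparison of $M(\nu_{\mathrm{LRQC}}^{*L}, t)$ against the sandwich moment operator $M(\sigma_k, t) M(\nu_{\mathrm{LRQC}}, t) M(\sigma_k, t)$ that exploits the subgroup inclusion $G_{\mathrm{SWAP}} \leq \SU(4)$---is the crux of the argument.
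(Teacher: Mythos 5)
Your first paragraph is essentially correct and aligned with the paper's proof. The symmetrization identity, the reduction of $\mu(\Sym(n))$ to a short local SWAP walk, and the conclusion $g(\sigma_k * \nu_{\mathrm{LRQC}} * \sigma_k, t) \le 1 - \Omega(\delta)$ for $k = \Theta(n^3\log(1/\delta))$ are all the right ideas, and they match the paper's opening moves (see also that the paper's substitution of $r(\sigma^{-1})$ by a fresh independent $r(\sigma')$ is a minor variant of your operator identity).

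The second paragraph has a genuine gap, which you correctly self-diagnose. Trying to transfer the sandwich gap to $\nu_{\mathrm{LRQC}}^{*L}$ by routing SWAPs through the Haar gates is the wrong move: after conjugation, the $j$-th gate's location is $\pi_j(\{i_j,i_j+1\})$ with $\pi_j$ and $\pi_{j'}$ differing by at most $|j-j'|$ transpositions, so successive gate locations are nearly deterministic functions of one another. There is no fixed budget of fresh SWAP steps between consecutive Haar gates; no amount of ``well-mixing'' of the \emph{trailing} product fixes these pairwise correlations. This is not a technical nuisance to be coupled away but a structural obstruction: the routed distribution is not close to $(\sigma_k * \nu_{\mathrm{LRQC}} * \sigma_k)$ or to $\nu_{2,\mathrm{All}\to\mathrm{All}}^{*L}$ in any norm you can control.

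The paper avoids $\nu_{\mathrm{LRQC}}^{*L}$ entirely. The crucial tool you are missing is \cref{lem:local-vs-parallel} (quantum union bound direction): once the sandwich is expressed as an \emph{ordered convolution} $\Asterisk_{j=1}^{L}\mu(G_j)$ of $L$ subgroup Haar measures (each $G_j$ is one of the $G_{\SWAP,i,i+1}$, the $\SU(4)$ on $(1,2)$, or a trivial padding group), the union bound gives $\Delta(\frac{1}{L}\sum_j\mu(G_j)) \ge \frac{1}{4L}\Delta(\Asterisk_j\mu(G_j)) = \Omega(\delta/L)$. Then enlarging each $G_j$ to the $\SU(4)$ on the same adjacent pair, which cannot decrease the gap by \cref{fact:gfromoperatorinequality}, turns the convex combination exactly into $\nu_{\mathrm{LRQC}}$. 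Your closing sentence gestures at this ($G_{\mathrm{SWAP}} \le \SU(4)$), but the object you need to compare the sandwich against is the \emph{averaged} moment operator $\frac{1}{L}\sum_j M(\mu(G_j))$, not $M(\nu_{\mathrm{LRQC}}^{*L})$; this is what makes the argument close.

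Two smaller points: (i) to feed \cref{lem:local-vs-parallel}, your SWAP walk should be a pure convolution, so use the brickwork product $\Asterisk_i \mu(G_{\mathrm{SWAP},i,i+1})$ per layer (paper uses $m = \Theta(n^2\log(1/\delta))$ layers, $\Theta(n^3\log(1/\delta))$ SWAP gates total) rather than $\nu_{\mathrm{SWAP}}^{*k}$ with $\nu_{\mathrm{SWAP}}$ a convex combination; and (ii) the BRQC bound does indeed follow from LRQC by one further application of \cref{lem:local-vs-parallel} with $\ell=2$, as you say.
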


\begin{proof}
    Note that $\nu_{2, \mathrm{All}\to \mathrm{All}}$ is the distribution of
    $r(\pi) (U_{1,2}\otimes \one_{n-2}) r(\pi^{-1})$ where $\pi\sim \mu(\Sym(n))$ and $U_{1,2}\sim\mu(\SU(4))$. 
    Let us replace $r(\pi) (U_{1,2}\otimes \one_{n-2}) r(\pi^{-1})$ by $r(\pi) (U_{1,2}\otimes \one_{n-2}) r(\sigma)$ 
    for another independent permutation $\sigma\in \Sym(n)$ uniformly at random; 
    that is, we take the convolution $\mu(S_n)*\mu_{H,12}*\mu(S_n)$
    of $\nu_{2, \mathrm{All}\to \mathrm{All}}$ and $\mu(S_n)$.
    We claim that this does not decrease the gap.
    Indeed, if $P_H = \E_{U \sim \mu(\SU(2^n))} U^{\ot t, t}$ is the Haar projector,
    the essential norm of $\mu(S_n)*\mu_{H,12}*\mu(S_n)$ 
    is the norm of $(M(\nu_{2, \mathrm{All}\to \mathrm{All}}) - P_H)M(\mu(S_n))$,
    where we used the invariance of the Haar measure $\mu(\SU(2^n))$ 
    under the convolution by $\mu(S_n)$ on the right.
    Norm is submultiplicative, and $\norm{M(\mu(S_n))} \le 1$.
    Therefore, 
    \begin{align}
        g(\mu(S_n)*\mu_{H,1,2}* \mu(S_n),\,t) \le g(\nu_{2, \mathrm{All}\to \mathrm{All}},\, t) \le 1 - \delta.
    \end{align}
    
    Our goal now is to upper-bound $g(\nu_{\mathrm{LRQC}}, \ t)$ 
    using $g(\mu(S_n)*\mu_{H,12}*\mu(S_n), \ t)$.
    We first replace $\mu(S_n)$ with random local SWAP gates. 
    Let us use the shorthand $G_{\SWAP,i}$ for $G_{\SWAP, i, i+1}$. 
    \cref{lemma:gapforswap_brickwork} implies that it suffices to apply gates from $\Asterisk_i\mu(G_{\SWAP,i})$ 
    (the order in the convolution is given by the brickwork architecture) 
    for $m = \Theta( n^2 \log \frac 1 \delta)$ times 
    where $m$ is sufficiently large such that
    \begin{align}
        \norm*{\E_{\pi\sim \mu(\Sym(n))} r(\pi)^{\otimes t, t} - \E_{U\sim [\Asterisk_i\mu(G_{\SWAP,i})]^{*m}} U^{\otimes t, t}}_\infty &= g(\Asterisk_i\mu(G_{\SWAP,i}), \ t, \ S_n)^{m} \leq \frac \delta 4 \, .
    \end{align}
    Using triangle inequality, 
    \begin{align*}
        g\left([\Asterisk_i\mu(G_{\SWAP,i})]^{*m}*\mu_{H,1,2}*[\Asterisk_i\mu(G_{\SWAP,i})]^{*m}, \ t\right) 
        &\leq g\left(\mu(S_n)*\mu_{H,1,2}* \mu(S_n), \ t\right) + \frac \delta 2 
        \leq 1 - \frac \delta 2 \,. \numberthis \label{eq:sparallelhaarestimate}
    \end{align*}
    
    Next, we will reduce from $2$-local all-to-all random quantum circuits to local random quantum circuits.
    To achieve this, we apply~\Cref{lem:local-vs-parallel}.
    Consider a list of the $2nm+1$ subgroups $G_j$ which are $\{\SWAP_{i,i+1}, \one\}$ and $\SU(4)$ 
    that comprises the convolution $[\Asterisk_i\mu(G_{\SWAP,i})]^{*m}*\mu_{H,1,2}*[\Asterisk_i\mu(G_{\SWAP,i})]^{*m}$.
    Just to make the list perfectly fit into the brickwork structure,
    we insert $n-1$ auxiliary $2$-qubit trivial groups in the convolution after $\mu_{H,1,2}$.
    Let $G_1,\ldots, G_{L}$ be the list of groups with the trivial group insertions,
    where $L = n(2m+1) = \Theta(n^3 \log \frac 1 \delta)$. 
    Let $G'_j$ be $\SU(4)$ sitting on the same 2 qubits as $G_j$;
    the artificial insertion of the trivial groups was to explain this replacement.
    Notice that $\nu_{\mathrm{LRQC}} = \avg_j \mu(G'_j)$ because $G'_j$ are arranged on a brickwork.
    Then,
    \begin{align}
        g(\nu_{\mathrm{LRQC}}, \ t) &\leq g(\avg_j\mu(G_j), \ t) & \text{by \Cref{fact:gfromoperatorinequality}}\nonumber\\
        &\leq 1-\frac{1}{4L}(1-g(\Asterisk_i\mu(G_i), \ t)) & \text{by \Cref{lem:local-vs-parallel}}\nonumber\\
        &= 1-\frac{1}{4L}\left(1-g\left([\Asterisk_i\mu(G_{\SWAP,i})]^{*m}*\mu_{H,1,2}*[\Asterisk_i\mu(G_{\SWAP,i})]^{*m}, \ t\right)\right)       \\
        &\leq 1-\frac{\delta}{8L} & \text{by \cref{eq:sparallelhaarestimate}} \, .\nonumber
    \end{align}
    Finally, we relate the gap of $\nu_{\mathrm{BRQC}}$ to the gap of $\nu_{\mathrm{LRQC}}$. Using~\cref{lem:local-vs-parallel} with $\ell=2$, we have
    \begin{equation}
        g(\nu_{\mathrm{BRQC}}, \ t)\leq 1-\frac{n}{16}(1-g(\nu_{\mathrm{LRQC}}, \ t))\leq 1-\frac{n\delta}{128L}. \qedhere
    \end{equation}
\end{proof}

By completely analogous manipulation we have the following for arbitrary architectures.
We state one for quantum random circuits and another for classical reversible random circuits.
We omit proofs.

\begin{proposition}\label{lem:quantum_arbitrary_arch}
    Let $T$ denote the edges of a connected architecture. 
    Let $\nu_{\mathrm{RQC},T}$ be a probability distribution on $\SU(2^n)$ 
    that first sample a pair $(i,j)\in T$ uniformly at random and 
    then apply a Haar random unitary $U\in \SU(4)$ to the qubits indexed by $i$ and $j$.
    If $g(\nu_{2,\mathrm{All}\to \mathrm{All}}, \ t)\leq 1-\delta$ for some $\delta>0$, 
    then $g(\nu_{\mathrm{RQC}, T}, \ t)\leq 1-\Omega\left(\delta n^{-7} / \log\frac 1 \delta \right)$. 
\end{proposition}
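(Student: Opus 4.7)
The plan is to adapt the proof of \cref{lemma:reductionfromalltoalltolocal} almost verbatim, with the $1$D-brickwork SWAPs replaced by a SWAP random walk supported on the edges of $T$. The three-stage pipeline is: reduce the permutation-twirled structure of $\nu_{2,\mathrm{All}\to\mathrm{All}}$ to a concrete $\mu(S_n)$-sandwich around a single Haar gate; approximate that $\mu(S_n)$ by iterating a SWAP walk on $T$; then apply the detectability-lemma machinery of \cref{lem:local-vs-parallel} in both directions to turn the resulting convolution into the uniform convex combination defining $\nu_{\mathrm{RQC},T}$.

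First, I would pick any edge $(a,b)\in T$ (possible since $T$ is connected, hence nonempty). Exactly as in the brickwork proof, $\mu(S_n) * \nu_{2,\mathrm{All}\to\mathrm{All}}$ equals $\mu(S_n) * \mu_{H,(a,b)} * \mu(S_n)$ as probability measures, so the right-invariance of $\mu(\SU(2^n))$ under $\mu(S_n)$ gives
\[
g\!\left(\mu(S_n) * \mu_{H,(a,b)} * \mu(S_n),\, t\right) \;\le\; g(\nu_{2,\mathrm{All}\to\mathrm{All}},\, t) \;\le\; 1-\delta.
\]

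Next, fix any ordering of the edges of $T$ and define $\Asterisk_T \deq \mu(G_{\SWAP,e_1}) * \cdots * \mu(G_{\SWAP,e_{|T|}})$. By \cref{lemma:gapforswap_arb_arch}, the convex combination $\avg_{e\in T}\mu(G_{\SWAP,e})$ has essential norm at most $1-\Omega(n^{-2}|T|^{-1})$. Since $\SWAP_e$ commutes with every SWAP that does not share a qubit, at most $\ell=O(n)$ others fail to commute with it, so \cref{lem:local-vs-parallel} boosts this to $g(\Asterisk_T,\, t,\, S_n) \le 1-\Omega(n^{-4})$. Choosing $m=\Theta(n^{4}\log(1/\delta))$ makes $g(\Asterisk_T^{*m},\, t,\, S_n) \le \delta/4$, and the triangle inequality as in \cref{eq:sparallelhaarestimate} then gives $g(\Asterisk_T^{*m}*\mu_{H,(a,b)}*\Asterisk_T^{*m},\,t) \le 1-\delta/2$. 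To balance the per-edge contributions I would insert $|T|-1$ trivial $2$-qubit subgroups next to $\mu_{H,(a,b)}$ (one for each $e\in T\setminus\{(a,b)\}$), obtaining a convolution of $L=|T|(2m+1)$ subgroup Haar measures, each supported on some edge of $T$. Applying \cref{lem:local-vs-parallel} in the convolution-to-average direction yields $g(\avg_j \mu(G_j),\, t) \le 1-\delta/(8L)$. Enlarging each subgroup to $\SU(4)$ on the same edge via \cref{fact:gfromoperatorinequality} can only shrink the essential norm, and since every $e\in T$ now appears exactly $2m+1$ times the enlarged uniform average equals $\nu_{\mathrm{RQC},T}$ exactly. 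Using $|T|\le \binom{n}{2}\le n^2$ finally yields $g(\nu_{\mathrm{RQC},T},\,t) \le 1-\Omega(\delta/(n^{7}\log(1/\delta)))$, as claimed (in fact, one even gets $n^{-6}$).

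The main obstacle, and the source of the worse exponent compared to the brickwork case, is the SWAP-mixing step. On the brickwork, SWAPs pack into two layers of pairwise commuting gates, so $\ell=2$ in \cref{lem:local-vs-parallel} and the convolution gap is $\Omega(n^{-2})$; on a general connected architecture each SWAP can collide with $\Theta(n)$ others, so $\ell=O(n)$ and the convolution gap of $\Asterisk_T$ degrades to $\Omega(n^{-4})$. This quadratic worsening, propagated through the increased $m$ and the larger $L=|T|(2m+1)$, accounts for the jump from the brickwork exponent $n^{-3}$ to $n^{-7}$ for arbitrary architectures; all other manipulations are routine adaptations of the brickwork argument.
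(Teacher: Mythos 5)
Your proof is correct and matches the ``completely analogous manipulation'' that the paper alludes to (the paper omits it): replace the $1$D-brickwork SWAPs in \cref{lemma:reductionfromalltoalltolocal} by the SWAP walk on $T$, invoke \cref{lemma:gapforswap_arb_arch} and \cref{lem:local-vs-parallel} with $\ell = \cO(n)$, pad the convolution with trivial groups so that every edge of $T$ appears equally often, and apply \cref{lem:local-vs-parallel} and \cref{fact:gfromoperatorinequality} to land on $\nu_{\mathrm{RQC},T}$. Your careful accounting even yields the slightly stronger bound $1-\Omega(\delta\, n^{-6}/\log\tfrac{1}{\delta})$ (using $|T|\le\binom{n}{2}$), which of course implies the $n^{-7}$ stated.
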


\begin{proposition}\label{lem:reversible_arbitrary_arch}
    Let $T$ denote the edges of any connected architecture. 
    Let $\nu_{\mathrm{rev},T}$ be a probability distribution on $\Sym(2^n)$ 
    that first uniformly sample a triple $(i,j,k)\in [n]^{3}$ 
    such that $(i,j), (j,k)\in T$ and then apply a uniformly random permutation from $\Sym(8)$ to the bits indexed by $i,j,k$. 
    If $g(\nu_{\mathrm{rev},\mathrm{All}\to \mathrm{All}}, \ t)\leq 1-\delta$ for some $\delta>0$, 
    then $g(\nu_{\mathrm{rev}, T}, \ t)\leq 1-\Omega\left(\delta n^{-7} / \log\frac 1 \delta \right)$. 
\end{proposition}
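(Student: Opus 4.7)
The plan is to adapt \cref{lemma:reductionfromalltoalltolocal} to the reversible setting: $\Sym(8)$ blocks replace Haar-random $\SU(4)$ gates, SWAP gates along edges of $T$ replace the $2$-local brickwork structure, and the edge set $T$ replaces the brickwork connectivity. I assume $n\geq 3$ so that $\cT \deq \{(i,j,k)\in[n]^3 : \{i,j\},\{j,k\}\in T,\ i\neq k\}$ is nonempty (which forces $|T|\geq 2$), and fix a reference triple $(i_0,j_0,k_0)\in\cT$. First, as in the quantum proof, $\nu_{\mathrm{rev},\mathrm{All}\to\mathrm{All}}$ is the distribution of $r(\pi)\,P\,r(\pi^{-1})$ with $\pi\sim\mu(\Sym(n))$ and $P\sim\mu(\Sym(8)_{1,2,3})$; the left--right invariance of $\mu(S_n)$ converts $\mu(S_n)*\nu_{\mathrm{rev},\mathrm{All}\to\mathrm{All}}*\mu(S_n)$ into $\mu(S_n)*\mu(\Sym(8)_{i_0,j_0,k_0})*\mu(S_n)$ (after choosing the triple by symmetry), and $M(\mu(S_n))P_H=P_H$ together with norm sub-multiplicativity yields $g(\mu(S_n)*\mu(\Sym(8)_{i_0,j_0,k_0})*\mu(S_n),\,t)\leq 1-\delta$.

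Second, I approximate each $\mu(S_n)$ by products of SWAPs along $T$. Set $\mu_{\mathrm{layer}}\deq\Asterisk_{(i,j)\in T}\mu(G_{\SWAP,i,j})$ in some fixed enumeration of $T$. \cref{lemma:gapforswap_arb_arch} gives $\Delta(\avg_{(i,j)\in T}\mu(G_{\SWAP,i,j}))\geq\Omega(n^{-2}|T|^{-1})$; invoking \cref{lem:local-vs-parallel} with $L=|T|$ and commutativity parameter $\ell=O(n)$ (the maximum number of edges of $T$ sharing a vertex with any given edge) gives $\Delta(\mu_{\mathrm{layer}})\geq\Omega(n^{-4})$. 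Choosing $m=\Theta(n^4\log(1/\delta))$ and using gap amplification gives $g(\mu_{\mathrm{layer}}^{*m},\,t,\,S_n)\leq\delta/4$, so the triangle inequality produces $g(\mu_{\mathrm{layer}}^{*m}*\mu(\Sym(8)_{i_0,j_0,k_0})*\mu_{\mathrm{layer}}^{*m},\,t)\leq 1-\delta/2$. This is a pure convolution of $L=2|T|m+1=\Theta(|T|n^4\log(1/\delta))$ subgroup Haar measures, each acting on at most $3$ bits, so the first inequality in \cref{lem:local-vs-parallel} yields a uniform average over these subgroups with spectral gap $\Omega(\delta/L)$. Enlarging each $G_{\SWAP,(i,j)}$ to $\Sym(8)_{\mathrm{ext}((i,j))}$, where $\mathrm{ext}((i,j))\in\cT$ is a triple containing $\{i,j\}$ (which exists because $T$ is connected with $|T|\geq 2$), can only decrease the essential norm by \cref{fact:gfromoperatorinequality}, producing an average $\sum_\tau p_\tau\,\mu(\Sym(8)_\tau)$ supported on at most $|T|+1$ triples in $\cT$.

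The main technical hurdle is the final comparison with $\nu_{\mathrm{rev},T}=\avg_{\tau\in\cT}\mu(\Sym(8)_\tau)$, which is uniform over all of $\cT$ whereas the enlarged average is concentrated on only $O(|T|)$ triples with generally non-uniform weights. The key observation is that each triple contains exactly three unordered pairs, so at most three edges of $T$ can share the same extension under $\mathrm{ext}$, yielding $p_\tau\leq O(m/L)=O(1/|T|)$. Writing each spectral gap as $\Delta=\min_\psi\sum_\tau q_\tau f_\psi(\tau)$ with $f_\psi(\tau)=1-\bra{\psi}(P_\tau-P_H)\ket{\psi}\geq 0$, the pointwise inequality $p_\tau\leq K\cdot q_\tau$ with $q_\tau=\Theta(1/|\cT|)$ and $K=O(|\cT|/|T|)$ gives $\Delta(\text{enlarged avg})\leq K\cdot\Delta(\nu_{\mathrm{rev},T})$, so $\Delta(\nu_{\mathrm{rev},T})\geq\Omega(|T|/|\cT|)\cdot\Omega(\delta/L)$. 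Since $|\cT|\leq\sum_{j\in[n]}\deg_T(j)^2\leq 2|T|\cdot\max_j\deg_T(j)=O(|T|n)$ and $|T|\leq\binom{n}{2}=O(n^2)$, the final bound is $\Delta(\nu_{\mathrm{rev},T})\geq\Omega(\delta/(|T|n^5\log(1/\delta)))\geq\Omega(\delta\,n^{-7}/\log(1/\delta))$, as required.
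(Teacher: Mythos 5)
Your proof is correct, and it fills in the details of the proof the paper omits (the paper states only ``By completely analogous manipulation we have the following for arbitrary architectures. \dots\ We omit proofs.''). The overall route mirrors that of \cref{lemma:reductionfromalltoalltolocal}: convert $\nu_{\mathrm{rev},\mathrm{All}\to\mathrm{All}}$ to $\mu(S_n)*\mu(\Sym(8)_{i_0,j_0,k_0})*\mu(S_n)$, approximate $\mu(S_n)$ by repeated SWAP layers along $T$ using \cref{lemma:gapforswap_arb_arch}, \cref{lem:local-vs-parallel}, and gap amplification, then switch from the convolution to the average via the quantum union bound and enlarge each local group via \cref{fact:gfromoperatorinequality}. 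The accounting ($m=\Theta(n^4\log\frac1\delta)$, $L=\Theta(|T|n^4\log\frac1\delta)$, $|\cT|=O(n|T|)$, $|T|=O(n^2)$) correctly yields the $\Omega(\delta n^{-7}/\log\frac1\delta)$ bound.

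The one place where the adaptation is not purely mechanical is the final comparison with $\nu_{\mathrm{rev},T}$. In the brickwork version of \cref{lemma:reductionfromalltoalltolocal} the SWAP positions and the $\SU(4)$ positions are both $2$-qubit, so the paper pads the convolution with $n-1$ trivial groups and obtains \emph{exactly} $\nu_{\mathrm{LRQC}}$ after enlargement. Here the SWAPs are $2$-local while $\Sym(8)$ is $3$-local, so the natural enlargement $G_{\SWAP,e}\hookrightarrow\Sym(8)_{\mathrm{ext}(e)}$ gives a distribution over triples that is supported on only $O(|T|)$ of the $|\cT|$ triples and is non-uniform. Your weight-comparison lemma --- observe $p_\tau\le K q_\tau$ with $K=O(|\cT|/|T|)$, and since $\one-M(\nu)=\sum_\tau (\text{weight})_\tau(\one-P_\tau)$ is a positive combination of positive operators, $\Delta$ scales at most by $K$ --- is a clean way to bridge this gap. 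The paper's padding trick would also work (pad to $\Theta(|\cT| m)$ positions so every triple appears equally often) and gives the same $\Omega(\delta|T|/(L|\cT|))$ bound, so the two are morally the same; your version is a bit more direct. One notational nit: when you write $f_\psi(\tau)=1-\bra\psi(P_\tau-P_H)\ket\psi$, it is cleanest to either keep the $P_H$ term and note $\sum_\tau q_\tau f_\psi(\tau)=1-\bra\psi(M(\nu_q)-P_H)\ket\psi$ with $M(\nu_q)-P_H\succeq 0$, or to drop $P_H$ and restrict $\ket\psi$ to the orthogonal complement of the Haar-trivial subspace; either way the pointwise domination $p_\tau\le Kq_\tau$ gives $\Delta(\sum p_\tau\mu_\tau)\le K\Delta(\sum q_\tau\mu_\tau)$ as you claim.
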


\subsection{Proof of \texorpdfstring{\Cref{thm:main_quantum}}{Theorem 1.5} and \texorpdfstring{\cref{corollary:main}}{Corolary 1.6}} \label{sec:proof_main_thm}

We first estimate the spectral gap for the convolution of uniform measures on local subgroups.

\begin{lemma}[Spectral gap for 3-qubit subgroups] \label{lem:main_18s}
    For any integer $n\geq 4$, there exist $\cO(n^3)$-many 3-qubit unitary subgroups $G_{Q,i} \subset \U(2^n)$ such that,
    \begin{align}
    g(\Asterisk_i\mu(G_{Q,i}), \ t)\leq \frac45 \quad \text{for all}\quad t \leq \Theta(2^{n/2}).
    \end{align}
\end{lemma}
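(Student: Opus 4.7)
The plan is to approximate the CPZPC ensemble from \cref{cor:gap_CPZPC} by a convolution of Haar measures on at-most-$3$-qubit subgroups. Since $g(\nu_\mathrm{CPZPC}, t) = \cO(t/2^{n/2})$ is already a small constant for $t \le c\cdot 2^{n/2}$ with a small enough constant $c$, it suffices to substitute each factor of $\mu(\Alt^U(2^n))$ inside the CPZPC ensemble by a convolution of 3-qubit subgroups that approximates it in operator norm to within a small constant, then apply the triangle inequality.

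To simulate $\mu(\Alt^U(2^n))$, I will invoke \cref{thm:UltimatePermutationDesign} to expand each Kassabov generator into a circuit of $\cO(n)$ elementary 3-bit reversible gates in which every bit is touched by $\cO(1)$ gates. \Cref{lemma:Sgateisgapped} then says that the convex combination $\nu_{\cI_\mathrm{Kas}}$ of Haar measures on the resulting $L = \cO(n)$ 3-qubit subgroups $G_T \cong \Sym(2^3)$ satisfies $g(\nu_{\cI_\mathrm{Kas}},\tau,\Alt(2^n)) \le 1 - \Omega(n^{-3})$. Because each $G_T$ shares a bit with only $\ell = \cO(1)$ of the others, the converse to the detectability lemma packaged in \cref{lem:local-vs-parallel} upgrades this convex combination into the convolution $\nu_\mathrm{gen} := \Asterisk_{T \in \cI_\mathrm{Kas}} \mu(G_T)$ with spectral gap $\Omega(L\ell^{-2}\cdot n^{-3}) = \Omega(n^{-2})$. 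I then amplify by iterating: setting $k = C n^2$ for a sufficiently large constant $C$, the distribution $\bar{\nu}_\mathrm{Alt} := \nu_\mathrm{gen}^{*k}$ satisfies $g(\bar{\nu}_\mathrm{Alt}, t, \Alt(2^n)) \le (1 - \Omega(n^{-2}))^k \le \eps_1$ for any prescribed constant $\eps_1 > 0$, by \cref{rem:conv}, and consists of a convolution of $\cO(n^3)$ 3-qubit subgroups.

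Putting the pieces together, I would define
\begin{equation*}
    \bar{\nu}_\mathrm{CPZPC} := \nutwodesign * \bar{\nu}_\mathrm{Alt} * \delta(Z) * \bar{\nu}_\mathrm{Alt} * \nutwodesign.
\end{equation*}
Since $\nutwodesign = \nu_\mathrm{BRQC}^{*\cO(n)}$ is itself a convolution of $\cO(n^2)$ 2-qubit Haar measures and $\delta(Z)$ contributes one 1-qubit factor, the whole ensemble is a convolution of $\cO(n^3)$ at-most-$3$-qubit factors. Writing the moment operator telescopically in the replacement $\mu(\Alt^U(2^n)) \mapsto \bar{\nu}_\mathrm{Alt}$ and using that every moment operator has operator norm at most $1$ gives
\begin{equation*}
    g(\bar{\nu}_\mathrm{CPZPC}, t) \le g(\nu_\mathrm{CPZPC}, t) + 2\,g(\bar{\nu}_\mathrm{Alt}, t, \Alt(2^n)) \le \cO(t/2^{n/2}) + 2\eps_1 \le \tfrac{4}{5}
\end{equation*}
once $t \le c\cdot 2^{n/2}$ and $\eps_1$ are chosen with small enough constants. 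The main obstacle is the tightness of the subgroup count: Kassabov's bounded Kazhdan constant together with \cref{lem:local-vs-parallel} yields only an $\Omega(n^{-2})$ gap per copy of $\nu_\mathrm{gen}$, so amplifying to a constant gap requires $\cO(n^2)$ convolutions and produces $\cO(n^3)$ subgroups in total, which just barely matches the statement; any improvement would require a stronger Kazhdan-to-convolution conversion or a shallower implementation of Kassabov's generators. A minor technicality is that $\delta(Z)$ is a single unitary rather than a Haar measure on a subgroup in the strict sense; I include it as one deterministic 1-qubit factor in the convolution, which does not affect the $\cO(n^3)$ count.
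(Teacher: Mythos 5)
Your route is essentially the paper's: expand Kassabov's generators into a list of $\cO(n)$ three-qubit subgroups via \cref{lemma:Sgateisgapped}, upgrade the convex-combination gap $\Omega(n^{-3})$ to a convolution gap $\Omega(n^{-2})$ via \cref{lem:local-vs-parallel}, amplify by taking $\cO(n^2)$ copies, and sandwich around $\delta(Z)$ together with $\nutwodesign$ before invoking \cref{cor:gap_CPZPC} and a telescoping triangle inequality. Up to renaming ($\eps_1$ versus the paper's $1/100$), this is the paper's proof of \cref{lem:main_18s}.

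The one place you go astray is exactly what you flag and then dismiss as a minor technicality. The point mass $\delta(Z)$ is not the Haar measure on any subgroup of $\U(2^n)$, so the ensemble $\bar\nu_\mathrm{CPZPC}$ you bound is \emph{not} of the form $\Asterisk_i\mu(G_{Q,i})$ required by the lemma. Its moment operator contains the factor $(Z\otimes\overline Z)^{\otimes t}$, which is a $\pm1$ involution rather than an orthogonal projector, and the downstream uses of this lemma --- feeding into \cref{lemma:reductiongeneraltorqc}, which itself invokes \cref{lem:local-vs-parallel} --- genuinely need each factor to be the Haar-average projector of a group. The paper's fix is to replace $\delta(Z)$ by $\nu_Z = \mu(\{\one,Z\})$, the Haar measure on the order-two group generated by $Z$; the resulting $\Asterisk_i\mu(G_{Q,i})$ is then the convex mixture $\tfrac12(\text{ensemble with }\delta(Z)) + \tfrac12(\text{ensemble with }\delta(\one))$, and \cref{rem:convex_gap} gives $g(\Asterisk_i\mu(G_{Q,i}),t) \le \tfrac12 + \tfrac12\bigl(2\eps_1 + \cO(t/2^{n/2})\bigr)$. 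The leading $\tfrac12$ from this last substitution is precisely why the lemma's constant sits at $\tfrac45$ rather than at the much smaller $2\eps_1 + \cO(t/2^{n/2})$ your calculation suggests. With this one step added, your argument is complete and coincides with the paper's.
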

 The proof is a combination of the circuit implementation of Kassabov's generators (\cref{thm:UltimatePermutationDesign}), the gap for unitary design given the full alternating group (\cref{cor:gap_CPZPC}), and detectability Lemmas (\cref{lem:local-vs-parallel})
\begin{proof}[Proof of~\cref{lem:main_18s}]
In the following, $c>0$ will denote a small constant and $C>0$ a large constant.
From~\Cref{lemma:Sgateisgapped} we obtain a list of $3$-qubit subgroups of $\Alt(2^n)$ for any $n\geq 4$, {\it i.e.}, $G_K = \left(G_{K,i} \cong \Sym(2^3)\right)_{i\in \cI_\mathrm{Kas}}$, such that 
\begin{align}
    g\left(\E_{i\in \cI_\mathrm{Kas}}\mu(G_{K,i}),~ \tau,~ \Alt(2^n) \right) = 1 - \Omega(n^{-3}).
\end{align}
Since $|\cI_{\mathrm{Kas}}| = \cO(n)$,
and any $i=(i_1, i_2, i_3)\in I_\mathrm{Kas}$ only intersects with at most $\cO(1)$ other elements in $\cI_\mathrm{Kas}$ (by~\cref{thm:UltimatePermutationDesign}), 
it follows from~\Cref{lem:local-vs-parallel} using $L=\cO(n)$ and $\ell=\cO(1)$ that 
\begin{equation}
    g(\Asterisk_{i\in I_\mathrm{Kas}} \mu(G_{K, i}), \ t)\leq 1-\Omega(n^{-2}).
\end{equation}
Recall from \cref{cor:gap_CPZPC} that
\begin{equation}
    g(\nutwodesign * \mu(\Alt^U(2^n)) * \delta(Z) * \mu(\Alt^U(2^n)) * \nutwodesign, \ t) \leq \cO\left(\frac{t}{2^{n/2}}\right),
\end{equation}
where $\nutwodesign = (\nu_{\mathrm{BRQC}})^{*C_2n}$ for some constant $C_2>0$; see \cref{eq:nutwodesign}.
Let $\nu_Z$ denote the probability measure of applying $Z$ with probability $1/2$ to the first qubit.
It is the Haar measure of a group $\{\one, Z\}$ of order~$2$.
We are going to show that a probability measure
\begin{equation}\label{eq:definitionoftildeG}
	\Asterisk_i\mu(G_{Q,i}) \deq (\nu_{\mathrm{BRQC}})^{*C_2n} *[\Asterisk_i\mu(G_{K,i})]^{* C_1n^{2}}*\nu_{Z}*[\Asterisk_i\mu(G_{K,i})]^{* C_1n^{2}}*(\nu_{\mathrm{BRQC}})^{*C_2n}
\end{equation}
for some large constants $C_1,C_2  > 0$ (by which we define the list $G_{Q,i}$) satisfies that 
\begin{equation}\label{eq:estimateforQparallel}
    g(\Asterisk_i\mu(G_Q), \ t)\leq \frac45,
\end{equation}
We first amplify the gap estimate from~\cref{lemma:Sgateisgapped}:
\begin{align}
    g\left([\Asterisk_i\mu(G_{K,i})]^{* C_1n^{2}}, \ t, \ \Alt^U(2^n)\right) &\leq g(\Asterisk_i\mu(G_{K,i}), \ t, \ \Alt^U(2^n))^{C_1 n^{2}} \leq \frac{1}{100},
\end{align}
where $C_1>0$ is chosen large enough.
We can now repeat the argument in~\cref{eq:firstPZPcalculation} twice,
which is basically triangle inequality,
for $\Asterisk_i\mu(G_{K,i})$ and obtain the bound:
\begin{equation}\label{eq:gapestimatewithdeltas}
    g\left((\nu_{\mathrm{BRQC}})^{*C_2n} *[\Asterisk_i\mu(G_{K,i})]^{* C_1n^{2}}*\delta(Z)*[\Asterisk_i\mu(G_{K,i})]^{* C_1n^{2}}*(\nu_{\mathrm{BRQC}})^{*C_2n}, \ t\right)
    \leq \frac{2}{100} + \cO\left(\frac{t}{2^{n/2}}\right).
\end{equation}
Lastly, we need to substitute $\delta(Z)$ with $\nu_Z$. Since $\Asterisk_i\mu(G_{Q,i})$ is a convex mixture of two probability distributions, one of which satisfies~\cref{eq:gapestimatewithdeltas}, we have
\begin{equation}
   g(\Asterisk_i\mu(G_{Q,i}), \ t)\leq \frac12+\frac{1}{2}\left(\frac{2}{100} + \cO\left(\frac{t}{2^{n/2}}\right)\right)\leq \frac45,
\end{equation}
for $t\leq c2^{n/2}$, for a constant $c>0$.
Enlarging groups does not decrease the gap (\Cref{fact:gfromoperatorinequality}).
\end{proof}

\begin{proof}[Proof of~\cref{thm:main_quantum}]
It follows from~\Cref{lemma:reductiongeneraltorqc} and~\cref{lem:main_18s} that
\begin{align}
    g(\nu_{2,\mathrm{All}\to\mathrm{All}}, \ t) = 1-\Omega(n^{-3}).
\end{align}
Then, \Cref{lemma:reductionfromalltoalltolocal} implies that
\begin{equation*}
     g(\nu_{\mathrm{LRQC}}, \ t) = 1 - \Omega(n^{-6}/\log n) \qquad \text{and} \qquad g(\nu_{\mathrm{BRQC}}, \ t) = 1-\Omega(n^{-5}/\log n ). \qedhere
\end{equation*}
\end{proof}

To prove \cref{corollary:main}, we need the following bootstrapping result from~\cite{brandao2016local}, which shows that a $\Omega(f(n))$ gap for some local random quantum circuits can be reduced to a $1/(n \cdot f(C\log(t)))$ gap.
We will use this to turn a $1/\poly(n)$ gap into a near-optimal $1/n\cdot \mathrm{polylog}(t)$ gap.
\begin{lemma}[Bootstrapping spectral gap from log-sized patches~\cite{brandao2016local}]\label{lemma:bhhreduction}
    For $t\leq \Theta(2^{2n/5})$,
    \begin{align*}
        1 - g(\nu_{\mathrm{LRQC,n}}, \ t) \ge \Omega\left( \frac{1 - g(\nu_{\mathrm{LRQC}, \cO(\log t)}, \ t)}{n \log t} \right) \,.
    \end{align*}
    In particular, if $1 - g(\nu_{\mathrm{LRQC,n}}, \ t) \geq \Omega(n^{-c})$ for some constant $c$, 
    then 
    \begin{align*}
        1 - g(\nu_{\mathrm{LRQC,n}}, \ t) \ge \Omega\left(n^{-1} (\log t)^{-c-1} \right) \,.
    \end{align*}
\end{lemma}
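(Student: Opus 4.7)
The plan is to follow the martingale-style argument of Nachtergaele~\cite{nachtergaele1996spectral}, as adapted to moment operators in \cite{brandao2016local}. First observe that the moment operator of $\nu_{\mathrm{LRQC},n}$ is a convex combination of local projectors: with $P_i := M(\mu(\SU(4)_{i,i+1}), t)$ the orthogonal projector onto the $\SU(4)$-trivial subspace on qubits $(i,i+1)$, we have $M(\nu_{\mathrm{LRQC},n}, t) = \frac{1}{n}\sum_{i=1}^{n} P_i$. Subtracting the Haar projector $P_H$, the quantity $1-g(\nu_{\mathrm{LRQC},n},t)$ is the spectral gap of the frustration-free ``Hamiltonian'' $H_n := \sum_{i=1}^{n} (P_i - P_H)$ restricted to the orthogonal complement of its ground space, normalized by $n$.

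Next I would apply Nachtergaele's recursion. Partition the chain into overlapping blocks of length $\ell = \Theta(\log t)$, where adjacent blocks share $\Theta(\log t)$ qubits. For each block $B$, let $\Pi_B = M(\mu(\SU(2^{\ell})_B), t)$ be the projector onto the Haar-trivial subspace on that block. Nachtergaele's inequality says that the gap of $H_n$ is at least $(1-\sqrt{\epsilon_\ell}) \cdot \Delta_\ell / \ell$, where $\Delta_\ell$ is the local gap on a block of length $\ell$, and $\epsilon_\ell := \max_{B, B'} \norm{\Pi_B \Pi_{B'} - P_H}_\infty^2$ measured over two overlapping adjacent blocks. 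Since a block of length $\ell$ contains $\ell-1$ local terms, the local gap satisfies $\Delta_\ell = \ell \cdot (1 - g(\nu_{\mathrm{LRQC}, \ell}, t))$, up to constants. After normalizing by $n$ to get back to $\nu_{\mathrm{LRQC},n}$, one obtains $1 - g(\nu_{\mathrm{LRQC},n}, t) \gtrsim \frac{(1-\sqrt{\epsilon_\ell})}{n \log t} \cdot \bigl(1 - g(\nu_{\mathrm{LRQC}, \ell}, t)\bigr)$.

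The critical input, and the main obstacle, is controlling $\epsilon_\ell$: this is the ``approximate orthogonality of permutations'' estimate. One needs that for two overlapping blocks $B, B'$ sharing $s$ qubits in a system of total size $r$, one has $\norm{\Pi_B \Pi_{B'} - P_H}_\infty \leq \mathrm{const} < 1$ as long as $t \leq 2^{2r/5}$ and $s = \Omega(\log t)$. This follows from the Schur--Weyl decomposition together with the bound of Harrow--Low (and refinements in~\cite{harrow2023approximate}) on the norm of the difference between the projector onto symmetric tensors of $\SU(2^r)$ restricted to a subregion and $P_H$: the constant $2/5$ in the condition $t \leq 2^{2n/5}$ comes precisely from this regime, where the moments of the relevant Weingarten-like matrix remain well-conditioned. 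Plugging in $\ell = C \log t$ for sufficiently large $C$ makes $\epsilon_\ell$ less than $1/4$, which yields the first inequality.

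Finally, the second (``in particular'') claim follows by substituting the hypothesis $1 - g(\nu_{\mathrm{LRQC},\ell}, t) \geq \Omega(\ell^{-c})$ into the first inequality with $\ell = \Theta(\log t)$:
\begin{equation*}
1 - g(\nu_{\mathrm{LRQC},n}, t) \geq \Omega\!\left(\frac{(\log t)^{-c}}{n \log t}\right) = \Omega\!\left(n^{-1} (\log t)^{-c-1}\right),
\end{equation*}
as desired. No step beyond the orthogonality lemma requires new ideas; the remaining work is bookkeeping of the Nachtergaele recursion and keeping track of the normalizations arising from the $1/n$ factor in the definition of $M(\nu_{\mathrm{LRQC},n}, t)$ versus the unnormalized local term sum $H_n$.
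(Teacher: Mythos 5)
Your proposal is correct and follows essentially the same route as the paper: the paper's proof is literally a transcription to \cite{brandao2016local}'s notation, invoking their Lemma~16 (identifying $1-g(\nu_{\mathrm{LRQC},n},t)$ with $\Delta(H_{n,t})/n$ for the frustration-free Hamiltonian $H_{n,t}=\sum_i(\one-P_{i,i+1})$) and Lemma~18 (Nachtergaele's martingale recursion applied to $H_{n,t}$ with $\cO(\log t)$-sized overlapping blocks, the overlap constant $\epsilon_\ell$ controlled by the approximate orthogonality of permutation states in Schur--Weyl duality, which is where the threshold $t\leq\Theta(2^{2n/5})$ enters). You unpack the contents of Lemma~18 rather than citing it as a black box, and your normalization bookkeeping between $H_{n,t}$ and $\nu_{\mathrm{LRQC},n}$, as well as the substitution yielding the ``in particular'' claim, are all handled correctly; the only cosmetic point is that you write the final bound with an extra $\log t$ in the denominator, but since that only weakens a lower bound it matches the (slightly loose) form of the lemma statement.
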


\begin{proof}[Proof (actually transcription of our statement to the language of \cite{brandao2016local})]
    For $i = 1, \dots, n-1$, define the projectors 
    $ P_{i,i+1} = \avg_{U \sim \mu(\SU(4))} (U_{i,i+1} \otimes \overline{U_{i,i+1}})^{\otimes t} $.
    Here, $U_{i,i+1}$ is a shorthand for the 2-qubit unitary $U$ on qubits $(i,i+1)$ and identity on the rest.
    Then we define 
    $ H_{n,t} = \sum_{i=1}^{n-1} (\one - P_{i,i+1}) $.
    We regard this as a geometrically 2-local Hamiltonian in one-dimensional array of $n$ qudits of dimension $2^{2t}$ each.
    By \cite[Lemma 16]{brandao2016local}, 
    $H_{n,t}$ is a frustration-free Hamiltonian 
    and its spectral gap $\Delta(H_{n,t})$ ({\it i.e.}, the smallest nonzero eigenvalue) 
    is related to the spectral gap of the random walk $\nu_{\mathrm{LRQC,n}}$: 
    \begin{align}
        1- g(\nu_{\mathrm{LRQC,n}}, \ t) = \frac{\Delta(H_{n,t})}{n} \,. \label{eqn:gap_to_ham}
    \end{align}
    Using spectral gap techniques~\cite{nachtergaele1996spectral}, 
    \cite[Lemma 18]{brandao2016local} shows that when $n \geq \lceil 2.5 \log_2 (4t) \rceil$, 
    \begin{equation*}
        \Delta(H_{n,t})
        \geq \frac{\Delta(H_{\lceil 2.5\log_2(4t)\rceil,t})}{4\lceil 2.5 \log_2(4t)\rceil} 
        = \Omega\left( \frac{\Delta(H_{\cO(\log t),t})}{\log(t)} \right) \,. \qedhere
    \end{equation*}
\end{proof}

This phenomenon is also true for all-to-all random quantum circuits:
\begin{lemma}[Bootstrapping spectral gap for all-to-all connectivity~\cite{haferkamp2021improved}]\label{lemma:HHJreduction}
    For $t\leq \Theta(2^{2n/5})$,
    \begin{align*}
    1 - g(\nu_{\mathrm{2,\mathrm{All}\to\mathrm{All},n}}, \ t) \ge \Omega\left( \frac{1 - g(\nu_{\mathrm{LRQC},\cO(\log t)}, \ t)}{n} \right) \,.
    \end{align*}
    In particular, if $g(\nu_{\mathrm{LRQC,n}}, \ t) \leq 1 - \Omega(n^{-c})$ for some constant $c$, then it also holds that 
    \begin{align*}
        g(\nu_{\mathrm{LRQC,n}}, \ t) \le 1 - \Omega\left( n^{-1}(\log t)^{-c} \right) \,.
    \end{align*}
\end{lemma}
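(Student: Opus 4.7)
\emph{Plan.} I would mirror the bootstrapping of \cref{lemma:bootstrapingforpermutations} in the unitary setting. Let $\Delta_Q(m) := 1 - g(\nu_{2,\mathrm{All}\to\mathrm{All},m}, t)$, and for each $i \in [n]$ let $Q_i := M(\mu(\SU(2^{n-1})_{[n]\setminus\{i\}}), t)$ be the orthogonal projector onto the Haar-invariant subspace on the $n-1$ qubits away from $i$. Let $\beta := \tfrac{1}{n}\sum_i \mu(\SU(2^{n-1})_{[n]\setminus\{i\}})$, with moment operator $M_\beta = \tfrac{1}{n}\sum_i Q_i$, and let $\Pi_H = M(\mu(\SU(2^n)), t)$. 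The reversible proof transports because the two key algebraic facts still hold: $Q_i$ and the two-qubit Haar projector $P_{\{j_1,j_2\}}$ commute whenever $i \notin \{j_1,j_2\}$, and in that case $P_{\{j_1,j_2\}} Q_i = Q_i$ since $\SU(4)_{\{j_1,j_2\}} \subseteq \SU(2^{n-1})_{[n]\setminus\{i\}}$.

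First, I would reproduce \cref{lemma:recursion} verbatim---replacing $\binom{n}{3}$ by $\binom{n}{2}$, $P_T$ by $P_{\{j_1,j_2\}}$, and $\Alt(2^n)$ by $\SU(2^n)$---to obtain the one-qubit recursion $\Delta_Q(n) \ge \tilde\delta(n)\,\Delta_Q(n-1)$, where $\tilde\delta(n) := 1 - g(\beta, t, \SU(2^n))$. Then I would reproduce \cref{lemma:gapboundonaux}: squaring the Hermitian operator $M_\beta - \Pi_H$ and separating the diagonal terms $Q_i^2 = Q_i$ from the off-diagonal terms $Q_i Q_j$ gives
\begin{align*}
\gamma^2 \;\le\; \frac{\gamma}{n} + \frac{n-1}{n}\bigl\|Q_1 Q_n - \Pi_H\bigr\|_\infty, \qquad \gamma := g(\beta, t, \SU(2^n)).
\end{align*}
The quantitative input needed here is the unitary approximate-orthogonality bound $\|Q_1 Q_n - \Pi_H\|_\infty \le \xi_n$ with $\xi_n \le \mathrm{poly}(t) \cdot 2^{-\Omega(n)}$, available from the orthogonality estimates for Haar moment projectors in~\cite[Lemma~23]{brandao2016local}---this is precisely where the constraint $t \le \Theta(2^{2n/5})$ enters. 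Solving the quadratic gives $\tilde\delta(n) \ge 1 - 1/n - \cO(\sqrt{\xi_n})$.

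Iterating the recursion from $n$ down to a base size $n_0 = \Theta(\log t)$ (taken large enough that $\sum_{m \ge n_0} \sqrt{\xi_m}$ is negligible relative to $1/n$), the telescoping product $\prod_{m=n_0+1}^n (1 - 1/m) = n_0/n$ yields $\Delta_Q(n) \ge (n_0/n)(1-o(1))\,\Delta_Q(n_0)$. For the base case, I would write $\nu_{2,\mathrm{All}\to\mathrm{All},n_0}$ as a convex combination that exposes LRQC: since the $n_0$ nearest-neighbor pairs are a subset of the $\binom{n_0}{2}$ all-to-all pairs, $\nu_{2,\mathrm{All}\to\mathrm{All},n_0} = \tfrac{2}{n_0-1}\,\nu_{\mathrm{LRQC},n_0} + (1 - \tfrac{2}{n_0-1})\,\nu_{\text{non-adj}}$, and \cref{rem:convex_gap} gives $\Delta_Q(n_0) \ge \tfrac{2}{n_0-1}\bigl(1 - g(\nu_{\mathrm{LRQC},n_0}, t)\bigr)$. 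The factor $n_0$ from the iteration and $1/(n_0-1)$ from the base case cancel, yielding the claimed $\Delta_Q(n) \ge \Omega\bigl((1 - g(\nu_{\mathrm{LRQC},\cO(\log t)}, t))/n\bigr)$; the ``in particular'' conclusion then follows by substituting the hypothesized $n_0^{-c}$ lower bound. The main obstacle is the approximate-orthogonality input used in the quadratic step---everything else mirrors the reversible proof essentially verbatim, and the bookkeeping is designed so that the two $\log t$-sized factors cancel exactly so that no spurious $\log t$ appears in the final bound.
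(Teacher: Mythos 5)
The paper does not prove \cref{lemma:HHJreduction}; it is imported as a citation to~\cite{haferkamp2021improved}, and the paper's own proof overview describes that reference's method only as ``a recursion over $n$.'' Your proposal is a faithful reconstruction of exactly that strategy, and it correctly transports the reversible-circuit bootstrap of \cref{lemma:bootstrapingforpermutations}, \cref{lemma:recursion}, and \cref{lemma:gapboundonaux} to the unitary setting: the algebraic identities $P_{\{j_1,j_2\}}Q_i = Q_i$ and $[P_{\{j_1,j_2\}},Q_i]=0$ for $i\notin\{j_1,j_2\}$ hold verbatim, $Q_i$ are Hermitian projectors so the quadratic-in-$\gamma$ bound $\gamma^2 \le \gamma/n + \tfrac{n-1}{n}\|Q_1Q_n-\Pi_H\|$ goes through, the telescoping $\prod_{m=n_0+1}^n(1-1/m)=n_0/n$ works, and the base-case convex decomposition with coefficient $n_0/\binom{n_0}{2}=2/(n_0-1)$ cancels the $n_0$ exactly as you intend. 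The only place I would flag is the attribution of the $t\le\Theta(2^{2n/5})$ constraint: you locate it in the unitary approximate-orthogonality estimate (your cited analogue of BHH Lemma~23), which is plausible but should be checked against the precise polynomial degree of $t$ appearing in that estimate, since whether the exponent is $2n/5$ or something else depends on that degree; this is a bookkeeping detail, not a gap in the structure. (You might also note that the ``in particular'' clause in the statement as printed appears to carry a typo: the conclusion should read $g(\nu_{2,\mathrm{All}\to\mathrm{All},n},t)\le 1-\Omega(n^{-1}(\log t)^{-c})$ rather than $g(\nu_{\mathrm{LRQC},n},t)$, which is what your substitution actually produces and is how \cref{corollary:main} uses it.)
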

\noindent\cref{corollary:main} follows directly from plugging 
the bound of~\cref{thm:main_quantum} into~\cref{lemma:bhhreduction,lemma:HHJreduction}.

\section{Efficient reversible circuits generating \texorpdfstring{$\Alt(2^n)$}{Alt(2ⁿ)}}\label{section:efficientreversiblecircuits}

In this section we are concerned with the circuit complexity of generators for alternating groups
that make the Kazhdan constants uniformly bounded below by a universal constant.
All that we need for other results of this paper is contained in \cref{thm:UltimatePermutationDesign},
so a reader may skip the rest of this section to understand the other sections.
We also describe a different set of generators for the alternating group from~\cite{caprace2023tame} 
in \cref{app:caprace_kassabov}. 
The generators in \cref{app:caprace_kassabov} are much simpler, 
but the generators in this section is more efficient and do not use any ancillas.
It is conceivable that the generators in \cref{app:caprace_kassabov}
can be implemented by efficient reversible circuit 
but it appears nontrivial to determine if it can be done without any ancillas
and to adapt it to $\Alt(2^n)$.
All these features will contribute to our main results on random circuits.

We use a result of Kassabov as follows.

\begin{theorem}[Theorem~2.1 of~\cite{Kassabov_2007_alt}]\label{thm:Kassabov2007Alt}
    There exists a real~$\eps > 0$ and an integer~$k \ge 1$
    such that for any integer $s > 6$
    there exists a generating set~$S$ of $k$ elements for~$\Alt((2^{3s}-1)^6)$
    such that the Kazhdan constant $\cK(\Alt((2^{3s}-1)^6); S)$ is at least~$\eps$.
\end{theorem}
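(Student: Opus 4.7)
The plan is to transfer a uniform Kazhdan bound for $\SL_3$ over finite rings up to the alternating group, using the chain-rule tools already recorded in \cref{lem:chain_rule_pii} and \cref{cor:GoodKazhdanSubgroupToBigger}. The choice $N = (2^{3s}-1)^6$ is convenient because, setting $q = 2^{3s}$, the multiplicative group $\F_q^*$ has exactly $q-1$ elements, so I may identify the underlying point set of $\Alt(N)$ with $(\F_q^*)^6$ and thereby write down natural actions of $\SL_3(\F_q)$ on it. The role of $s > 6$ is simply to make $q$ large enough that the combinatorial arguments below are not obstructed by degenerate small cases.

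First I would fix a bounded number of embeddings $\phi_1,\ldots,\phi_m : \SL_3(\F_q) \hookrightarrow \Alt((q-1)^6)$. The most natural ones come from partitioning the six coordinates of $(\F_q^*)^6$ into two blocks of three and letting $\SL_3(\F_q)$ act linearly on one block at a time (viewing a triple of nonzero field elements as a generic vector in $\F_q^3$, and absorbing images that land on the zero-locus via auxiliary generators to keep the action inside $(\F_q^*)^6$). Varying the partition gives a finite family of such copies. Let $S_0$ be the standard elementary-transvection generating set of $\SL_3(\F_q)$ and put $S = \bigcup_i \phi_i(S_0)$; its size depends only on $m$ and $|S_0|$, hence is independent of $s$.

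Second I would invoke the uniform Kazhdan estimate $\cK(\SL_3(\F_q); S_0) \ge \eps_0$ with $\eps_0 > 0$ independent of $q$, which is a classical consequence of property (T) for $\SL_3(\Z)$ together with the fact that the elementary Kazhdan constant survives passing to the congruence quotient. Given this, \cref{cor:GoodKazhdanSubgroupToBigger} reduces the theorem to a purely combinatorial claim: there is a constant $k_0$, independent of $s$, such that every element of $\Alt((q-1)^6)$ is a product of at most $k_0$ elements of $\bigcup_i \phi_i(\SL_3(\F_q))$.

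The main obstacle, and the real content of Kassabov's construction, is this bounded-generation statement. My strategy would be to first show that $\bigcup_i \phi_i(\SL_3(\F_q))$ achieves $3$-transitivity on $(\F_q^*)^6$ in bounded word length, by exhibiting explicit stabilizer-chain maneuvers that use two differently grouped $\SL_3$-actions to effectively decouple coordinates. Given bounded-length $3$-transitivity, every $3$-cycle is a conjugate of a fixed reference $3$-cycle, which one produces as a commutator of two elementary $\SL_3$ transvections with overlapping but non-identical support; chaining $\cO(1)$ such $3$-cycles yields all of $\Alt$ in bounded length. The technical nuisance is that each $\SL_3$-copy only moves three coordinates and acts linearly, so producing a single $3$-cycle with small support requires carefully coordinating embeddings with different partitions; this is exactly why the construction uses $6 = 2 \cdot 3$ coordinates rather than fewer, and it is also the step where one has no choice but to do a careful case analysis of $\SL_3$-orbit structure on the affine slice.
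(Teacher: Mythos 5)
The statement you're proving is Theorem~2.1 of~\cite{Kassabov_2007_alt}, which the paper simply cites; the proof lives in Kassabov's paper, and an exposition of the key construction appears in~\cref{section:efficientreversiblecircuits}. Your sketch, unfortunately, departs from Kassabov's route at a point where the departure is fatal.

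The central obstruction is one of cardinality. You take $q = 2^{3s}$ and embed a bounded number $m = \cO(1)$ of copies of $\SL_3(\F_q)$ into $\Alt((q-1)^6)$, then hope to show that every element of $\Alt((q-1)^6)$ is a product of at most $k_0 = \cO(1)$ elements of $\bigcup_i \phi_i(\SL_3(\F_q))$, so that \cref{cor:GoodKazhdanSubgroupToBigger} applies. But $\abs{\SL_3(\F_q)} \approx q^8$, so the set of all length-$k_0$ products over $m$ copies has cardinality at most $(m \, q^8)^{k_0} = q^{\cO(1)}$, which is merely exponential in~$s$. Meanwhile $\abs{\Alt((q-1)^6)} \approx ((q-1)^6/e)^{(q-1)^6}$ is doubly exponential in~$s$. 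No choice of $m$ and $k_0$ bounded independently of $s$ can close that gap. Your final step compounds this: you claim that once bounded-length $3$-transitivity is available, ``chaining $\cO(1)$ such $3$-cycles yields all of $\Alt$ in bounded length,'' but a generic element of $\Alt(N)$ requires $\Theta(N)$ three-cycles, not $\cO(1)$. So even granting everything else, the word-length bound needed for \cref{cor:GoodKazhdanSubgroupToBigger} diverges with $s$ and the Kazhdan constant estimate collapses.

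This is precisely the difficulty that Kassabov's construction is engineered to circumvent, and it is the reason the intermediate group is not a handful of copies of $\SL_3(\F_q)$ but rather $\Gamma = \SL(3s;\FF_2)^{\times (2^{3s}-1)^5}$ — a direct product of \emph{exponentially many} copies, embedded (in six ways $\pi_1,\ldots,\pi_6$) by acting on the axis-parallel lines of a six-dimensional hypercube of points. The order of $\Gamma$ is roughly $2^{(3s)^2 \cdot (2^{3s}-1)^5}$, doubly exponential in $s$, so bounded products of the $\pi_i(\Gamma)$ genuinely \emph{can} cover $\Alt((2^{3s}-1)^6)$; the paper verifies this via Pyber's double-transitivity bound. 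The real content of Kassabov's theorem is then that this enormous product group nevertheless has a generating set of $\cO(1)$ elements with a uniform Kazhdan constant, which is achieved by recognizing $\Gamma$ as $\EL\bigl(3m;\FF_2^{\times 2^{km^2}}\bigr)$, an elementary matrix group over a product ring, and in turn as a quotient of $\EL(3;\ZZ\langle \baa,\bbb,\byy_1,\ldots,\byy_k\rangle)$ with bounded generators (\cref{lem:product_ring_square_gen_set}, \cref{lem:EL3Generators}), inheriting a Kazhdan bound from property~(T) of universal lattices. Your proposal retains the correct outer shell (uniform Kazhdan constant for a linear source group plus bounded generation of $\Alt$), but is missing the ring-theoretic mechanism that makes the source group large enough for bounded generation to be even conceivable.
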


Our result is the following.

\kascircuitrev

\noindent
Here, a circuit is \textit{strongly explicit} if the circuit's description
(a list of tuples of the locations and the names of gates)
is obtained in $\poly(n)$ time given a number $n$.
By repeating randomly chosen generators from \cref{thm:UltimatePermutationDesign} in sequence, one can generate approximately $t$-wise independent permutations. 
However, it turns out that if one is only interested in approximately $t$-wise independent permutations (and not random circuits), there is a more efficient construction, which we describe in \cref{sec:nt_depth_permutations} and which might be of independent interest.

To a large extent, our exposition until we discuss circuits
is our own detailed, though still limited, guide to~\cite{Kassabov_2007_lattices},
focusing only on the necessary algebraic elements.
Many implicit proofs of~\cite{Kassabov_2007_lattices} will be made hopefully more accessible.
The circuits we derive are in some sense presented in~\cite{Kassabov_2007_alt,Kassabov_2007_lattices};
the contribution of the first half of this section is to spell them out explicitly 
and make the construction of \cite{Kassabov_2007_alt} via~\cite{Kassabov_2007_lattices} more accessible.
We remark that 
our circuit construction and its modification would not make sense 
if we used Kassabov's result in a blackbox fashion.
The remainder of this section constitutes a proof of \cref{thm:UltimatePermutationDesign}.

\subsection{Review of Kassabov's generators}

In~\cite{Kassabov_2007_alt}, an alternating group is identified 
with an even permutation group on a rather specially structured set of points. Namely, the alternating group permutes $K^6$ points
arranged in a 6-dimensional hypercube of linear dimension~$K = 2^{3s} - 1$ 
for a positive integer~$s$.
So, the group of interest is $\Alt((2^{3s}-1)^6) = \Alt(K^6)$. The number $2^{3s}-1$ naturally arises since 
\begin{align}
\SL(3s;\FF_2)   \quad \text{permutes}\quad \FF_2^{3s} \setminus \{ 0\}.
\end{align}
That is, the special linear group~$\SL(3s;\FF_2) = \GL(3s; \FF_2)$ acting on the vector space~$\FF_2^{3s}$ can be regarded as permutations. In particular, non-zero vectors are mapped bijectively to non-zero vectors, and hence, the set of $2^{3s}-1$ nonzero vectors in~$\FF_2^{3s}$ are permuted. 

The six-dimensional hypercube is a collection of $K^5$ ``lines'' parallel to a coordinate axis,
each line consisting of~$K = 2^{3s} -1$ points.
Each line is identified with the set of all nonzero vectors in~$\FF_2^{3s}$.
Define 
\begin{align}
    \bKK _s = (\FF_2^{3s} \setminus \{ 0\} )^{\times 6} \subseteq (\FF_2^{3s})^{\times 6}.
\end{align}
Every element of~$\Alt(\bKK_s)$ is a bijective function from~$\bKK_s$ onto itself.
The group $\Alt(\mathbf K_s)$ is \emph{not} the full alternating group~$\Alt(2^{18s})$,
but this will be resolved toward the end of this section.

\subsubsection{A large generating set for $\Alt((2^{3s}-1)^6)$}

The special linear group $\SL(3s;\FF_2)$ is generated by \emph{elementary matrices} of form
\begin{align}
    E_{i,j} = \one_{3s} + r e_{i,j}
\end{align}
where $1\le i \neq j \le 3s$, 
and $r$ is an element of the coefficient ring, which is currently just~$\FF_2$,
and $e_{i,j}$ is a matrix with a sole~$1$ at $(i,j)$.
(Think of Gauss eliminations.)
Every elementary row operation is order~$2$ (because of~$\FF_2$),
so its disjoint cycle decomposition as a permutation on the line of~$K$ points
consists of transpositions.
The number of the transpositions is~$2^{3s-2}$, which is always an even number because~$3s \ge 3$.
Therefore, $E_{i,j}$ gives an even permutation.
This defines an embedding (injective group homomorphism) $\SL(3s;\FF_2) \hookrightarrow \Alt(K)$,
and furthermore we have 
\begin{align}
    \Gamma = \SL(3s; \FF_2)^{\times K^5} \hookrightarrow \Alt(\mathbf K_s)
\end{align}
where $\Gamma$ is a direct product of $K^5 = (2^{3s}-1)^5$ copies of~$\SL(3s; \FF_2)$. 
We have not specified the axis to which the $K^5$ lines are parallel.
Indeed, there are six different choices of the axis,
and each gives an embedding of~$\Gamma$ into the alternating group of interest:
\begin{align}
    \pi_i : \Gamma \hookrightarrow \Alt(\mathbf K_s), \qquad i = 1,2,\ldots,6 .
\end{align}

It is asserted in~\cite{Kassabov_2007_alt} that the \emph{union} $\bigcup_{i=1}^6 \pi_i(\Gamma)$
generates the full~$\Alt(\mathbf K_s)$.
For a proof, we find it convenient to use a result of Pyber~\cite{Pyber1993}.
\footnote{Pyber notes that this result is weaker than what is implied by the classification of all doubly transitive groups,
but Pyber's result comes with a simple proof.}
Recall that an action of a group~$G$ on a set~$S$ is said to be \emph{doubly transitive}
if for any four elements~$a,b,a',b' \in S$ such that $a \neq b$ and $a' \neq b'$, 
there exists~$g \in G$ such that~$g \cdot a = a'$ and~$g \cdot b = b'$.
Recall also that a \emph{permutation group} of degree~$N$ is a subgroup of the symmetric group on $N$ letters,
so the symmetric group is the largest permutation group of degree~$N$,
which has order $N! \approx (N/e)^N$.
Any permutation group of degree~$N$ is endowed with the obvious group action on a set of~$N$ elements.
The result of Pyber is that every doubly transitive permutation group of degree~$N \ge 400$
either contains~$\Alt(N)$ or has order at most~$N^{33 (\log_2 N)^2}$.
Observe that this bound on the group order is much smaller than~$(N/e)^N$ asymptotically.

Let us show that 
the permutation group~$G = \langle \bigcup_{i=1}^6 \pi_i(\Gamma) \rangle$ of degree~$N = K^6$
is doubly transitive.
Note that the group $\SL(3s;\FF_2)$ is transitive on a line of~$K$ points
since two different nonzero vectors over~$\FF_2$ are linearly independent.
For any pair of distinct points $a=(a_1,\ldots, a_6)$ and $b=(b_1, \ldots, b_6)$ 
of the six-dimensional hypercube and for any $x, y\in \FF_2^{3s}\setminus \{0\}$, 
if $(a_2, \ldots, a_6)\neq (b_2, \ldots, b_6)$, then
there exists $g\in \Gamma$ such that
\begin{align}
    g\cdot a = (x, a_2, \ldots, a_6), \qquad   g\cdot b = (y, b_2, \ldots, b_6).
\end{align}
Hence, we can use this action to move any pair of points, $a$ and~$b$, to a fixed pair of points, say $x_0$ and $y_0$.
This shows the double transitivity: $(a,b) \leftrightarrow (x_0,y_0) \leftrightarrow (a',b')$.

On the other hand, we can lower bound the order of the group generated by~$\bigcup_i \pi_i(\Gamma)$.
This is obviously bigger than the order of~$\Gamma$.
Since the group~$\SL(3s;\FF_2)$ contains all upper triangular matrices 
where $(3s)(3s-1)/2$ off-diagonal elements are arbitrary,
we have $\abs \Gamma = \abs{\SL(3s;\FF_2)}^{K^5} \ge (2^{3s(3s-1)/2})^{K^5}$.
Since this lower bound is bigger than Pyber's upper bound for~$s \ge 1$,
the permutation group~$G$ must contain~$\Alt(\mathbf K_s)$.

A generating set for~$\Alt(\mathbf K_s)$ with respect to which the Kazhdan constant is uniformly bounded below,
is defined to be the \emph{union} $\bigcup_{i=1}^6 \pi_i(\Gamma_0)$ 
where $\Gamma_0 \subseteq \Gamma$ is a generating set for~$\Gamma$.
In view of the action of~$\Alt(\mathbf K_s)$ on $(\FF_2^{3s})^{\times 6}$,
different embeddings can be realized by permuting bits.
Hence, it will suffice for us to look into the generator set~$\Gamma_0$ of~$\Gamma$.
The group~$\Gamma$ is a direct product of many (exponential in~$s$) 
copies of~$\SL(3s; \FF_2)$.
Naively, a generating set for~$\Gamma$ must be large too,
but a certain quotient polynomial ring will save the situation.

\subsubsection{Product of many matrix rings}

One of the important ideas in~\cite{Kassabov_2007_alt} 
is that the direct product of copies of~$\SL(3s;\FF_2)$
is an image of some small rank matrix group.
To better understand this, we have to relate a product of linear groups to a linear group over a product ring.
A slightly abstract treatment will make the construction more accessible.

Let $A$ and $B$ be unital rings that may be noncommutative.
The product ring $A \times B$ is set-theoretically the Cartesian product set,
and the ring structure is given by component-wise operation.
The multiplicative identity is~$(1,1)$.
It is important to note that there exist \emph{orthogonal idempotents} $e_A = (1,0) = e_A^2$ and $e_B = (0,1) = e_B^2$ 
with $e_A e_B = 0$ and $e_A + e_B = (1,1)$.
These idempotents commute with every element of $A \times B$.
One may write that $(A\times B)e_A \cong A$ and $(A \times B) e_B \cong B$.
That is, the idempotents are projections (actually ring homomorphisms) onto the ring factors.
We are going to work with
\begin{definition}
    An \emph{elementary matrix group} over a unital ring~$R$ (that may be noncommutative) of rank~$m$,
    denoted by~$\EL(m;R)$,
    is a multiplicative matrix group generated by all elementary matrices $E_{i,j} = \one_m + r e_{i,j}$
    where $1 \le i \neq j \le m$ and $r \in R$.
\end{definition}
Note that even if~$r \in R$ is invertible (in a general ring), 
a diagonal matrix with diagonal entries~$r,1,1,\ldots,1$
may or may not be a member of the elementary matrix group. 
For a commutative unital ring~$R$, the special linear group~$\SL(m;R)$ is defined,
and $\EL(m;R)$ is a subgroup of~$\SL(m;R)$.
However, the two may be different in general.
If $R$ is a field, then~$\SL(m;R) = \EL(m;R)$.
\footnote{Proof: The first row of an $m \times m$ matrix from a special linear group over a field must be nonzero,
and hence by column operations (multiplication by elementary matrices on the right)
one can bring the row to, \emph{e.g.}, $(0,r,0,0,\ldots,0)$ where $r \neq 0$.
By another column operation, this row is brought to~$(1,r,0,0,\ldots,0)$
and then to~$(1,0,0,0,\ldots,0)$.
Then, the lower right $(m-1) \times (m-1)$ submatrix must have determinant~$1$,
as seen by the cofactor expansion of the determinant on the first row,
and inductively the whole matrix can be brought to a lower triangular matrix by column operations,
which in turn can be brought to the identity matrix.}
We will find a surjection from an elementary matrix group over a noncommutative ring of a constant rank 
onto~$\EL(m;\FF_2) = \SL(m;\FF_2)$.

Now, consider an elementary matrix group~$\EL(m; A \times B)$ over a product ring~$A \times B$.
Given $M \in \EL(m; A \times B)$, taking the entry-wise projection, we have two matrices $M e_A$ and $M e_B$
in the respective rings. 
Conversely, given $M_A \in \EL(m;A)$ and $M_B \in \EL(m;B)$,
the matrix $(M_A,M_B)$ constructed by putting corresponding elements into tuples,
is a member of $\EL(m;A \times B)$.
It is routine to check that this correspondence obeys the group operation of the matrix group~$\EL$,
and hence the idempotents give a group isomorphism
$\EL(m; A \times B) \cong \EL(m; A) \times \EL(m ; B)$.
It follows that for any positive integer~$k$
\begin{align}
    \EL(m; R )^{\times k} \cong \EL(m; R^{\times k} ) \label{eq:ProductGroupProductRing}
\end{align}
for any associative unital ring~$R$.
Hence, the product of many copies of a elementary matrix group
is just one elementary matrix group over a ring with many idempotents.
Note that the set of all idempotents has a partial ordering: $e \le e'$ if $e e' = e$,
so we may speak of minimal idempotents.

\begin{lemma}\label{lem:ProductRing}
    Let $p$ be a prime.
    The quotient polynomial ring~$\FF_p[x]/(x^p - x)$
    is ring isomorphic to the product ring~$\FF_p ^p$.
    The $p$ orthogonal minimal idempotents are labeled by $a \in \FF_p$:
    \begin{align}
        e_a = \prod_{j \in \FF_p \setminus\{a\}} \frac{x - j}{a - j} .
    \end{align}
\end{lemma}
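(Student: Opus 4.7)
The plan is to invoke the Chinese Remainder Theorem after factoring $x^p - x$ into distinct linear factors over $\FF_p$. By Fermat's little theorem, every $a \in \FF_p$ satisfies $a^p = a$, so each $a$ is a root of $x^p - x$. Since $x^p - x$ has degree $p$ and we have produced $p$ distinct roots, we obtain the complete factorization
\begin{equation}
    x^p - x = \prod_{a \in \FF_p} (x - a)
\end{equation}
in $\FF_p[x]$. The linear factors $x - a$ are pairwise coprime, since for $a \neq b$ in $\FF_p$ the element $b - a$ is a unit and $1 = (b-a)^{-1}\bigl((x - a) - (x - b)\bigr)$.

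Next, I would apply the Chinese Remainder Theorem for commutative rings: if $I_1, \ldots, I_p$ are pairwise coprime ideals in a commutative ring $R$ and $I_1 \cap \cdots \cap I_p = \bigcap_a I_a$ equals their product, then the natural map $R / \bigcap_a I_a \to \prod_a R/I_a$ is an isomorphism. Applied to $R = \FF_p[x]$ with $I_a = (x - a)$, this yields
\begin{equation}
    \FF_p[x]/(x^p - x) \;\cong\; \prod_{a \in \FF_p} \FF_p[x]/(x - a) \;\cong\; \FF_p^{\,p},
\end{equation}
where the second isomorphism uses that evaluation at $a$ gives $\FF_p[x]/(x-a) \cong \FF_p$.

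For the idempotents, the natural thing to do is to transport the standard minimal idempotents of $\FF_p^{\,p}$ — namely the tuples with a single $1$ in position $a$ and $0$ elsewhere — back through the isomorphism. This amounts to finding a polynomial of degree $< p$ that evaluates to $1$ at $a$ and to $0$ at every other element of $\FF_p$, which is precisely the Lagrange interpolation polynomial
\begin{equation}
    e_a = \prod_{j \in \FF_p \setminus \{a\}} \frac{x - j}{a - j} \, .
\end{equation}
A direct check confirms $e_a(a) = 1$, $e_a(j) = 0$ for $j \neq a$, so under the CRT isomorphism $e_a$ maps to the indicator tuple at position $a$, which is a minimal idempotent of $\FF_p^{\,p}$. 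The relations $e_a e_b = \delta_{ab} e_a$ and $\sum_a e_a = 1$ then hold in $\FF_p[x]/(x^p-x)$ because they hold in $\FF_p^{\,p}$.

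There is no real obstacle here; the only thing to verify carefully is that the product of the ideals $(x-a)$ in $\FF_p[x]$ is the principal ideal $(x^p - x)$, which follows from unique factorization in $\FF_p[x]$ together with the factorization displayed above. Everything else is a standard application of CRT plus Lagrange interpolation.
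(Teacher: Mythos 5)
Your proof is correct, but it takes a genuinely different route from the paper's. You invoke the Chinese Remainder Theorem after factoring $x^p - x = \prod_{a \in \FF_p}(x-a)$, obtain the isomorphism $\FF_p[x]/(x^p-x) \cong \FF_p^p$, and then identify the $e_a$ as the preimages of the standard basis idempotents via Lagrange interpolation — the standard textbook path. The paper instead works entirely by hand: it never explicitly appeals to CRT, but directly verifies the relations $e_a^2 = e_a$ and $e_a e_b = 0$ for $a \neq b$ in the quotient ring by comparing roots and degrees of the representing polynomials of degree $< p$ (using Fermat's little theorem to pin down the roots), and then concludes minimality by $\FF_p$-dimension counting. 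Your approach is cleaner and more modular, cleanly separating the ring isomorphism from the description of the idempotents; the paper's is more elementary and self-contained, essentially re-deriving the relevant special case of CRT from scratch, which fits its expository goal of keeping the algebraic background minimal. Both proofs correctly establish the claim, and neither has a gap.
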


\begin{proof}
    We have to show that $e_a^2 = e_a$ for each~$a$ and $e_a e_b = 0$ if $a \neq b$.
    Once this is shown, $\FF_p$-dimension counting implies that there is no smaller idempotent.
    In $\FF_p[x]/(x^p -x)$ every element is represented by a polynomial of degree at most $p-1$.
    In particular, $e_a^2$ is represented by a degree~$p-1$ polynomial~$f \in \FF_p[x]$.
    That is, $e_a^2 - f = r (x^p -x)$ for some $r \in \FF_p[x]$.
    Fermat's little theorem says that $y^p = y \bmod p$ for all integer~$y$.
    So, the three polynomials $f,e_a, e_a^2 \in \FF_p[x]$ all have the same $p-1$ distinct roots ($\FF_p \setminus \{a\}$).
    Therefore $f$ and $e_a$ must be the same up to an overall scalar.
    That is, $e_a^2$ is a multiple of $e_a$ in the quotient ring.
    Since they both evaluate to $1$ at $x = a$, the multiplier is~$1$.
    This shows that $e_a^2 = e_a$ in the quotient ring.
    Similarly, the product $e_a e_b$ vanishes for all $x \in \FF_p$,
    so it is a multiple of $(x-1)(x-2) \cdots (x-p)$,
    which is a degree~$p$ polynomial 
    that must be represented by a degree $p-1$ polynomial in the quotient ring,
    where the latter cannot have more than $p-1$ distinct roots unless it is identically zero.
    So, $(x-1)(x-2) \cdots (x-p) = x^p - x \in \FF_p[x]$, 
    and $e_a e_b$ is zero in the quotient ring.
\end{proof}

\begin{corollary}\label{cor:RPolyQuotient}
    Let $R$ be any algebra over~$\FF_p$ for a prime~$p$.
    Then, $R[x_1,\ldots,x_k] / (x_1^p - x_1, \ldots, x_k^p - x_k) \cong R^{\times p^k}$ as rings.
\end{corollary}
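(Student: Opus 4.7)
The plan is to prove the isomorphism by induction on $k$, where the inductive step reduces to a mild generalization of Lemma~\ref{lem:ProductRing} in which the base ring $\FF_p$ is replaced by an arbitrary $\FF_p$-algebra.

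First I will establish the following enhanced base case: for any $\FF_p$-algebra $S$, there is a ring isomorphism $S[x]/(x^p - x) \cong S^{\times p}$. The key observation is that the idempotents
\begin{equation*}
e_a = \prod_{j \in \FF_p \setminus \{a\}} \frac{x - j}{a - j}, \qquad a \in \FF_p,
\end{equation*}
already introduced in the proof of Lemma~\ref{lem:ProductRing}, make sense in $S[x]$ because each $(a-j)^{-1}$ lies in $\FF_p \subseteq S$. The orthogonality relations $e_a^2 = e_a$ and $e_a e_b = 0$ modulo $(x^p - x)$ were shown in Lemma~\ref{lem:ProductRing} to be polynomial identities in $\FF_p[x]$, so they survive the scalar extension $\FF_p \hookrightarrow S$. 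Next, Lagrange interpolation (valid in $S[x]$ since the denominators lie in $\FF_p^\times$) shows every class in $S[x]/(x^p - x)$ has a unique representative of the form $\sum_{a \in \FF_p} r_a e_a$ with $r_a \in S$, so the evaluation map $f \mapsto (f(a))_{a \in \FF_p}$ will give the desired ring isomorphism onto $S^{\times p}$.

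Second, I will iterate. For the inductive step from $k-1$ to $k$, note
\begin{equation*}
R[x_1,\ldots,x_k] / (x_1^p - x_1, \ldots, x_k^p - x_k) \cong \bigl(R[x_1,\ldots,x_{k-1}] / (x_1^p - x_1, \ldots, x_{k-1}^p - x_{k-1})\bigr)[x_k]/(x_k^p - x_k).
\end{equation*}
By the inductive hypothesis, the inner quotient is isomorphic to $R^{\times p^{k-1}}$, which is again an $\FF_p$-algebra. Applying the enhanced base case with $S = R^{\times p^{k-1}}$ then yields $(R^{\times p^{k-1}})^{\times p} \cong R^{\times p^k}$, as desired. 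The base case $k = 0$ is the trivial identification $R \cong R$.

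I anticipate no significant obstacle: the only mild subtlety is confirming that the idempotent identities of Lemma~\ref{lem:ProductRing} persist under base change to an arbitrary $\FF_p$-algebra, which is immediate because they are already polynomial equalities over $\FF_p$. Alternatively, one could phrase the entire argument as a tensor product computation $R[x_1,\ldots,x_k]/(x_i^p-x_i)_i \cong R \otimes_{\FF_p} \bigotimes_{i=1}^k \FF_p[x_i]/(x_i^p-x_i) \cong R \otimes_{\FF_p} \FF_p^{\times p^k} \cong R^{\times p^k}$, where the middle isomorphism is $k$ applications of Lemma~\ref{lem:ProductRing} and the last uses that tensoring an $\FF_p$-algebra with a finite product of copies of $\FF_p$ distributes over the product.
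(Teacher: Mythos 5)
Your proof is correct, and it is organized a bit differently from the paper's. The paper says ``it suffices to understand $k=2$'' and directly exhibits the product $\bigl(\prod_{i \neq a}\frac{x_1-i}{a-i}\bigr)\bigl(\prod_{j \neq b}\frac{x_2-j}{b-j}\bigr)$ as a minimal idempotent of the multivariate quotient, implicitly leaving it to the reader to verify orthogonality/completeness of the $p^k$ product idempotents and to iterate. You instead package the iteration as a clean induction on $k$, where the engine is the observation that \cref{lem:ProductRing} admits scalar extension to $S[x]/(x^p-x)\cong S^{\times p}$ for an arbitrary $\FF_p$-algebra $S$: the idempotents $e_a$ have $\FF_p$-coefficients so their orthogonality relations persist, each summand $S[x]/(x^p-x)\cdot e_a$ collapses to $S$ via $x e_a = a e_a$, and $\sum_a e_a = 1$. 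That base-change step is exactly what is needed to make the inductive step fire with $S = R^{\times p^{k-1}}$, and the tensor-product one-liner you give at the end is a compact alternative phrasing of the same thing. Both routes are sound; yours spares the reader from verifying minimality of multivariate product idempotents by hand, and it spells out why everything goes through for noncommutative $R$ (the $e_a$ being central), which the paper leaves implicit.
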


The quotient ring is a unital ring that may be noncommutative, but the indeterminants~$x_j$ are in the center.

\begin{proof}
    It suffices to understand $k=2$.
    Observe that for any $a,b \in \FF_p$,
    the polynomial 
    \begin{align}
        \left(\prod_{i \in \FF_p \setminus\{a\}} \frac{x_1 - i}{a - i}\right)
        \left(\prod_{j \in \FF_p \setminus\{b\}} \frac{x_2 - j}{b - j}\right)
    \end{align}
    is a minimal idempotent.
\end{proof}

\begin{remark}\label{rem:DeltaFunctionAlgGeom}
It is the best to regard the quotient ring $R[x_1,\ldots,x_k] / (x_1^p - x_1, \ldots, x_k^p - x_k)$
as the set of all $R$-valued functions on an algebraic variety,
which in this case is a $k$-dimensional box with $p$ points along each dimension.
Indeed, $\FF_p[x_1,\ldots,x_k]$ is the set of all polynomial functions valued in~$\FF_p$ over a $k$-dimensional affine space,
and the zero locus of the ideal $(x_i^p -x_i)$ consists of points whose coordinates are in~$\FF_p$ by Fermat's little theorem.
In this perspective, the idempotents are the functions that are supported on exactly one point,
at which the function assumes~$1$. The product of two different such delta functions must be zero,
and the role of this delta function is to read off a general function's value at the support point.
This remark will be useful below when we consider algorithmic aspects with product rings.
\end{remark}

\begin{lemma}\label{lem:product_ring_square_gen_set}
    For a prime~$p$, the product ring~$\Mat(m;\FF_p)^{\times p^{k m^2}}$ 
    of $p^{k m^2}$ copies of the matrix ring~$\Mat(m; \FF_p)$
    is generated by $k+2$ elements.
\end{lemma}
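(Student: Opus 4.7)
The plan is to work with the identification $\Mat(m;\FF_p)^{\times p^{km^2}} \cong \Mat(m;S)$, where $S \deq \FF_p^{\times p^{km^2}}$ plays the role of a commutative coefficient ring. I will exhibit the $k+2$ generators in two groups: two of them cut out the diagonally embedded $\Mat(m;\FF_p) \hookrightarrow \Mat(m;S)$ via scalar extension, while the remaining $k$ have entries in $S$ chosen so that, taken together, they carry all the idempotent data needed to recover $S$ itself.

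For the first two, I use the classical fact that $\Mat(m;\FF_p)$ is generated as an $\FF_p$-algebra by two elements, e.g.\ the cyclic permutation $\sigma$ (so $\sigma^m = \one$) and the rank-one matrix unit $e_{11}$: since $\sigma^{-1} = \sigma^{m-1}$ already lies in the subring generated by $\sigma$, every matrix unit is recovered as $e_{ij} = \sigma^{i-1} e_{11} \sigma^{m-(j-1)}$, and these span $\Mat(m;\FF_p)$. For the remaining $k$ generators, label the $N = p^{km^2}$ factors of $S$ by tuples $\alpha = (\alpha_{l,i,j})_{l \in [k],\, i,j \in [m]} \in \FF_p^{km^2}$ through any fixed bijection (which exists by counting), and define $y_l \in \Mat(m;S)$ by requiring its $(i,j)$-entry, read off in the factor of $S$ labeled $\alpha$, to equal $\alpha_{l,i,j} \in \FF_p$.

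The claim is that $\{\sigma, e_{11}, y_1, \ldots, y_k\}$ generates $\Mat(m;S)$ as an $\FF_p$-algebra. From $\sigma$ and $e_{11}$ we first recover all matrix units $e_{ij}$ in the scalar copy $\Mat(m;\FF_p) \subset \Mat(m;S)$. Writing $y_l = \sum_{i,j} e_{ij} \otimes (y_l)_{ij}$ with $(y_l)_{ij} \in S$, the relations $e_{ab} e_{cd} = \delta_{bc} e_{ad}$ give the extraction identity
\begin{align*}
    1 \otimes (y_l)_{ij} = \sum_{r=1}^m (e_{ri} \otimes 1)\, y_l\, (e_{jr} \otimes 1),
\end{align*}
exhibiting each scalar $(y_l)_{ij} \in S$ as a word in the $k+2$ generators. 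By construction, the resulting family of $km^2$ scalars $\{(y_l)_{ij}\}_{l,i,j}$ has joint values that bijectively exhaust $\FF_p^{km^2}$, and each satisfies $z^p = z$ componentwise. Hence \cref{cor:RPolyQuotient} (read through the function-on-points viewpoint of \cref{rem:DeltaFunctionAlgGeom}) produces a surjection $\FF_p[Y_{l,i,j}]/(Y_{l,i,j}^p - Y_{l,i,j}) \twoheadrightarrow \langle (y_l)_{ij} \rangle_{\FF_p} \subseteq S$ between $\FF_p$-algebras of the same dimension $p^{km^2}$, forcing equality; combined with the matrix units, this recovers all of $\Mat(m;S)$.

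The only real substance is the last dimension-counting step: certifying that $km^2$ commuting idempotent-valued elements of $S$ whose joint-evaluation map is a bijection onto $\FF_p^{km^2}$ actually generate $S \cong \FF_p^{\times p^{km^2}}$. Everything else is matrix-unit bookkeeping together with the classical 2-generator presentation of $\Mat(m;\FF_p)$.
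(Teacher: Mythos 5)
Your proof is correct and takes essentially the same route as the paper: you store coordinates in matrix entries, use a two-element generating set for $\Mat(m;\FF_p)$, and extract scalars with matrix units. The only difference is in the last step. The paper works in the free object $\Mat(m;\FF_p)[x_{1,1,1},\ldots,x_{k,m,m}]$, shows the $k+2$ elements generate \emph{that} ring, and then observes the product ring is a quotient, so generation is automatic; you instead work directly in $\Mat(m;S)$ with $S \cong \FF_p^{\times p^{km^2}}$ and must separately certify that the $km^2$ extracted scalars generate $S$. Your dimension-count at the end is correct in substance, but the logic should be phrased as injectivity rather than ``a surjection between algebras of the same dimension'': a priori you only know the target $\langle (y_l)_{ij}\rangle_{\FF_p}$ is a subalgebra of $S$, and what you actually prove (via the function-on-points picture and the fact that the joint-evaluation map hits every point of $\FF_p^{km^2}$) is that the map $\FF_p[Y_{l,i,j}]/(Y^p-Y) \to S$ has trivial kernel, hence image of dimension $p^{km^2}$, hence equal to $S$. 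The paper's route via the polynomial ring sidesteps that verification, which is why it reads a bit cleaner; your direct argument buys nothing extra here but is a perfectly valid alternative presentation of the same idea.
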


It follows that any ring homomorphic image of~$\Mat(m;\FF_p)^{\times p^{k m^2}}$ is also generated by $k+2$ elements.
In particular, $\Mat(m; \FF_p)^{\times l}$ is generated by~$k+2$ elements as long as $l \le p^{k m^2}$.

\begin{proof}
    Consider a quotient polynomial ring with coefficients in the matrix algebra:
    \begin{align*}
        \Mat(m;\FF_p)[x_1,\ldots, x_u] / (x_1^p - x_1, \ldots, x_u^p - x_u).
    \end{align*}
    This is clearly generated by $\Mat(m;\FF_p)$ and $u$ indeterminants.
    Following~\cref{cor:RPolyQuotient}, this quotient polynomial ring is isomorphic to $\Mat(m;\FF_p)^{\times p^u}$.
    Hence, it suffices to show that $\Mat(m;\FF_p)[x_1,\ldots, x_u]$ is generated by $k+2$ elements where $u=km^2$. 
    Notice that the matrix algebra~$\Mat(m; \FF_p)$ is generated by two elements
    \begin{align}
        A = \begin{pmatrix}
            0 & 1 &   &   &   &    \\
              & 0 & 1 &   &   &    \\
              &   &   & \ddots & \ddots & \\
              &   &   &        &   0    & 1\\
            1 &   &   &   &   &   0          
        \end{pmatrix}, \quad
        B = \begin{pmatrix}
            1 & 0 &   &   &   &    \\
            0 & 0 &  &   &   &    \\
              &   & 0 &  &  & \\
              &   &   &  \ddots &       & \\
             &   &   &   & 0  &  
        \end{pmatrix} . \label{eq:matricesAB}
    \end{align}
    The first is a cyclic permutation and the second is a diagonal matrix with sole nonzero element at the top left.
    By multiplying $B$ by $A$ from the left, we can bring the one to any row,
    and by multiplying $B$ by $A$ from the right, we can bring the one to any column.

    It remains to generate $\Mat(m;\FF_p)[x_1,\ldots, x_u]$ where $u = k m^2$
    by only $k+2$ elements.
    The trick~\cite[Footnote 5]{Kassabov_2007_lattices} is to use matrices to ``store'' indeterminants.
    Define
    \begin{align}
        y_i = \begin{pmatrix}
            x_{i,1,1} & x_{i,1,2} & \cdots & x_{i,1,m} \\
                      &           & \cdots &           \\
            x_{i,m,1} & x_{i,m,2} & \cdots & x_{i,m,m}
        \end{pmatrix} \label{eq:matrixYi}
    \end{align}
    for $i = 1,2,\ldots,k$.
    Since $A,B$ generate $\Mat(m;\FF_p)$, 
    we have a projector (a diagonal matrix) that singles out any row from~$y_i$ by acting on the left.
    Similarly, we can single out any column.
\end{proof}

When we say that a ring is generated by a set $\{y_j\}$,
we use addition and multiplication of an arbitrary finite number of elements from~$\{y_j\}$ in an arbitrary order.
This can be formalized by saying 
that there is a surjective ring homomorphism from the free ring~$\ZZ \langle \{ \byy_j \} \rangle$,
where $\byy_j$ are \emph{noncommutative} indeterminants with absolutely no relations among them.
The preceding lemmas are summarized by saying that
we have a chain of surjective ring homomorphisms:
\begin{align}
    \ZZ \langle \baa, \bbb, \byy_1, \ldots, \byy_k \rangle
    \xrightarrow{\quad \varphi_1 \quad}
    \Mat(m; \FF_p)[x_{1,1,1}, \ldots, x_{k,m,m}]
    \xrightarrow{\quad \varphi_2 \quad}
    \Mat(m; \FF_p)^{\times p^{k m^2}}
\end{align}
where
\begin{align}
    \varphi_1(\baa) = A, \quad    \varphi_1(\bbb) = B \text{ of \cref{eq:matricesAB}}, \quad \text{ and }
    \varphi_1(\byy_j) = y_j \text{ of \cref{eq:matrixYi}}.
\end{align}

\begin{remark}\label{rmk:ring_iso_2}
The second map~$\varphi_2$ needs further unpacking.
The last product ring is isomorphic to
\begin{align}
    \Mat(m; \FF_p)[x_{1,1,1}, \ldots, x_{k,m,m}] / (x_{i,a,b}^p - x_{i,a,b}).
\end{align}
There is a canonical ring homomorphism from $\Mat(m; \FF_p)[x_{1,1,1}, \ldots, x_{k,m,m}]$ onto this quotient ring,
from which the projection onto each component of the product ring $\Mat(m; \FF_p)^{\times p^{k m^2}}$ is achieved by a minimal idempotent.
The $p^{k m^2}$ components are indexed by a string~$(c_{1,1,1},\ldots,c_{k,m,m}) \in \FF_p^{k m^2}$.
The corresponding idempotent is 
$\prod_{i=1}^k \prod_{a=1}^m \prod_{b=1}^m \prod_{j \in \FF_p, j \neq c_{i,a,b}} \frac {x_{i,a,b} - j}{c_{i,a,b}-j}$.
While this formula looks clumsy,
\cref{rem:DeltaFunctionAlgGeom} provides a nice way of thinking about them.
The idempotents can be (and should be) regarded as
delta functions that evaluate to~$1$ at $(x_{i,a,b}) = (c_{i,a,b})$ and vanish elsewhere.
Hence, we can retrieve a matrix for each component of~$\Mat(m;\FF_p)^{\times p^{k m^2}}$ by
simply evaluating an $m \times m$ matrix with entries in $\FF_p[x_{1,1,1}, \ldots, x_{k,m,m}]$
at $x_{i,a,b} = c_{i,a,b}$.
\end{remark}

\begin{remark}\label{rmk:kas_gen_set_diag}
    If we remove all the off-diagonal variables in $y_i$ and only keep the diagonal variables,
    we can easily show that $\Mat(m;\FF_p)[x_1,\ldots, x_{km}]$ can be generated by a different set
    \footnote{
        The generating set for $\Mat(m;\FF_p)^{\times p^{k m^2}}$ in~\cref{lem:product_ring_square_gen_set} was the one used in~\cite[Theorem 8b]{Kassabov_2007_lattices} but its explicit form was never discussed. However, the generating set for $\Mat(m;\FF_p)^{\times p^{k m}}$ in~\cref{rmk:kas_gen_set_diag} was described 
        explicitly in~\cite[Lemma 4.1]{Kassabov_2007_lattices} but it appears that Kassabov did not use it.
        It worth mentioning that the proof of~\cite[Theorem 8b]{Kassabov_2007_lattices} is unaffected.
    }
    of $k+2$ elements: $\{A, B, z_i \,|\, i=1, 2, \cdots, k\}$ where
    \begin{align}
        z_i = \text{diag}(x_{i,1,1}, \ x_{i,2,2}, \ \cdots, \ x_{i,m,m}) = \begin{pmatrix}
            x_{i,1,1} & &  & 0 \\
             & x_{i,2,2} & & \\
             & & \ddots & \\
            0 & & & x_{i,m,m}
        \end{pmatrix}. 
    \end{align}
    This observation is the key to~\cref{sec:kas_depth_one} when we reduce each generator's circuit depth to $1$.
\end{remark}

\subsubsection{Generating elementary matrix groups}

Above we have discussed how a certain product ring of matrices is generated.
We have to use these ring generators to generate an elementary matrix group~$\EL$,
which in turn generates an alternating group.
The surjective ring homomorphisms~$\varphi_1, \varphi_2$ 
give a chain of \emph{surjective} group homomorphisms
\begin{align}
    &\EL(3; \ZZ \langle \baa, \bbb, \byy_1, \ldots, \byy_k \rangle ) \nonumber\\
    &\xrightarrow{\quad \varphi_1 \quad}
    \EL(3; \Mat(m; \FF_p)[x_{1,1,1}, \ldots, x_{k,m,m}]) 
    =
    \EL(3m; \FF_p[x_{1,1,1}, \ldots, x_{k,m,m}] ) \label{eq:ChainOfEL}\\
    &\xrightarrow{\quad \varphi_2 \quad}
    \EL(3m; \FF_p^{\times p^{k m^2}} ) =
    \EL(3m; \FF_p)^{\times p^{k m^2}}\nonumber
\end{align}
where we have abused notations to overload $\varphi_1,\varphi_2$ 
to denote the group homomorphisms defined by
applying the ring homomorphisms to every entry of a matrix.
There are two equalities in this chain of mappings,
the second of which is just \cref{eq:ProductGroupProductRing}.
The first equality needs a proof.

\begin{lemma}
    Let $R$ be any unital ring that may be noncommutative.
    Then, for any $n,m \ge 2$ we have
    \begin{align}
        \EL(n; \Mat(m; R)) = \EL(nm; R ).
    \end{align}
\end{lemma}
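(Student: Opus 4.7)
My plan is to use the standard block-matrix identification: an element $X \in \Mat(n; \Mat(m;R))$ can be viewed as an $nm \times nm$ matrix over $R$ by ``unfolding'' each $m \times m$ matrix entry into an $m \times m$ block, giving a natural ring isomorphism $\Mat(n; \Mat(m;R)) \cong \Mat(nm; R)$. Under this identification, I will show containment in both directions by expressing generators of one elementary matrix group as products of generators of the other.

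For the containment $\EL(n; \Mat(m; R)) \subseteq \EL(nm; R)$, consider a generator $g = \one_n + M e_{i,j}$ with $i \neq j$ and $M \in \Mat(m;R)$. Under the block identification, $g$ equals $\one_{nm} + \sum_{a,b=1}^{m} M_{a,b} E_{(i-1)m+a,\,(j-1)m+b}$, where $E_{p,q}$ are the $nm \times nm$ matrix units over $R$. Because $i \neq j$, all row indices $(i-1)m+a$ lie in the $i$-th block and all column indices $(j-1)m+b$ lie in the $j$-th, disjoint block; hence every pair of the elementary factors $\one_{nm} + M_{a,b} E_{(i-1)m+a,\,(j-1)m+b}$ satisfies the commuting condition (the column index of one never matches the row index of another), and their ordered product is exactly $g$.

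For the reverse containment $\EL(nm; R) \subseteq \EL(n; \Mat(m;R))$, take a generator $\one_{nm} + r E_{a,b}$ with $a \neq b$. Write $a = (i-1)m + a'$ and $b = (j-1)m + b'$ with $1 \le a', b' \le m$. If $i \neq j$, then this is precisely $\one_n + (r e_{a',b'})\, e_{i,j}$ in the block picture, which is a generator of $\EL(n; \Mat(m;R))$. If $i = j$ (so $a, b$ are in the same block and $a' \neq b'$), I will use the assumption $n \ge 2$ to pick an auxiliary block index $k \neq i$ and any $c' \in \{1,\dots,m\}$, and invoke the commutator identity
\begin{equation*}
    [\one_{nm} + r E_{a,\,(k-1)m+c'},\ \one_{nm} + E_{(k-1)m+c',\,b}] = \one_{nm} + r E_{a,b},
\end{equation*}
valid because the three indices $a, b, (k-1)m+c'$ are distinct. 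Both factors on the left lie in blocks $(i,k)$ and $(k,i)$ respectively with $i \neq k$, so by the previous case they already belong to $\EL(n; \Mat(m;R))$, and hence so does their commutator.

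The argument is essentially a careful bookkeeping exercise, and there is no serious obstacle; the only mild subtlety is handling the same-block case for the reverse inclusion, which requires the commutator trick and the use of the hypothesis $n \ge 2$ to supply a third block index. (The hypothesis $m \ge 2$ is not strictly needed, but $n \ge 2$ is essential.)
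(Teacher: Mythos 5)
Your proof is correct and is essentially the same as the paper's: the easy inclusion $\EL(n;\Mat(m;R)) \subseteq \EL(nm;R)$ follows by factoring a block generator into $m^2$ single-entry elementary matrices (all with row indices in block $i$ and column indices in block $j \ne i$, so the cross terms vanish), and the reverse inclusion handles a same-block position via the commutator $E F E^{-1} F^{-1} = \one + r E_{a,b}$ routed through an index in a different block, which is exactly the paper's $E = \one + r e_{i,m+1}$, $F = \one + e_{m+1,j}$ trick. Your observation that $m \ge 2$ is not actually needed (the $m=1$ case is vacuous) is also correct.
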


In the left-hand side
we have $n \times n$ matrices whose elements are $m \times m$ matrices.
So, it is naturally an $nm \times nm$ matrix, and algorithmically there is nothing to do.

\begin{proof}
    It is obvious that ``$\subseteq$'' is true;
    the set of all elementary matrices on the right-hand side is bigger.
    For the other inclusion, we have to show that any elementary matrix~$\one_{nm} + r e_{i,j}$ with $i \neq j$ and $r \in R$
    is a product of matrices of form~$\one_{nm} + M$ where $M$ is zero 
    except for one of the $n^2 - n$ off-diagonal $m \times m$ blocks.
    The only nontrivial case is where the entry $r$ at $(i,j)$ 
    is placed in one of the diagonal $m \times m$ blocks.
    It suffices to consider $1 \le i < j \le m$.
    Define $E = \one + r e_{i,m+1}$ and $F = \one + e_{m+1,j}$.
    Then, $E^{-1} = \one - r e_{i,m+1}$ and $F^{-1} = \one - e_{m+1,j}$.
    These are clearly members of the left-hand side,
    and so is $E F E^{-1} F^{-1} 
    = \one + r e_{i,j} $.
\end{proof}

It remains to find a generating set for the first elementary matrix group 
$\EL(3; \ZZ \langle \baa, \bbb, \byy_1, \ldots, \byy_k \rangle )$
over a free noncommutative ring.
This will in turn specify generators of $\EL(3m; \FF_p)^{\times p^{km^2}}$.
We will eventually set $p = 2$.

A ring has two operations (addition and multiplication), while a group has only one.
Hence, it is not immediately obvious how ring generation is related to group generation.
But this is not hard.

\begin{lemma}\label{lem:EL3Generators}
    An elementary matrix group over a free ring with $l$ generators,
    $\EL(3; \ZZ \langle \bww_1,\ldots,\bww_l \rangle )$, is generated by
    the following $4l+6$ elements and their inverses:
    \begin{align}
        E_{a,b}(1) \text{ where }a \neq b, \qquad E_{1,2}(\bww_j),\quad E_{2,3}(\bww_j),
        \quad E_{2,1}(\bww_j),\quad E_{3,2}(\bww_j).
    \end{align}
\end{lemma}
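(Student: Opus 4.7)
The plan is to invoke the classical Steinberg-type commutator relations that hold in any elementary matrix group, and use them to build up arbitrary $E_{i,j}(r)$ from the asserted generating set. The two identities I will rely on are
\begin{align*}
    E_{i,j}(r) E_{i,j}(s) = E_{i,j}(r+s) \qquad\text{and}\qquad [E_{i,j}(r), E_{j,k}(s)] = E_{i,k}(rs) \text{ when } i,j,k \text{ are distinct,}
\end{align*}
which follow from direct $3\times 3$ matrix computation (the commutator identity needs that the relevant third index makes the off-diagonal products vanish). Since every element of the free ring $\ZZ\langle \bww_1,\ldots,\bww_l\rangle$ is a $\ZZ$-linear combination of words in $\bww_j$, the first identity reduces the problem to generating $E_{i,j}(w)$ for every word $w$ in the free monoid on $\bww_1,\ldots,\bww_l$, and every $i \neq j$ in $\{1,2,3\}$ (the constant~$\pm 1$ contribution being handled by the generators $E_{a,b}(1)^{\pm 1}$).

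The main step is a simultaneous induction on the length of~$w$ showing that, for each $i\neq j$, the element $E_{i,j}(w)$ lies in the subgroup generated by our proposed set. For the base case $|w|=1$, i.e.\ $w = \bww_j$, four of the six positions are generators by hypothesis, and the remaining two are obtained as the single commutators
\begin{align*}
    E_{1,3}(\bww_j) = [E_{1,2}(\bww_j), E_{2,3}(1)], \qquad E_{3,1}(\bww_j) = [E_{3,2}(\bww_j), E_{2,1}(1)],
\end{align*}
which use the six ``constant'' generators $E_{a,b}(1)$. For the inductive step, write $w = w_1 w_2$ as a concatenation of two shorter words; then each of the six quantities $E_{i,j}(w)$ arises as a commutator $[E_{i,k}(w_1), E_{k,j}(w_2)]$ for some third index $k \in \{1,2,3\}\setminus\{i,j\}$, and the factors are available by the inductive hypothesis applied to $w_1$ and $w_2$. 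Cycling the choice of the intermediate index gives the formulas
\begin{align*}
    E_{1,3}(w) &= [E_{1,2}(w_1), E_{2,3}(w_2)], \\
    E_{1,2}(w) &= [E_{1,3}(w_1), E_{3,2}(w_2)], \\
    E_{2,3}(w) &= [E_{2,1}(w_1), E_{1,3}(w_2)],
\end{align*}
and analogously for the transposed positions.

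Finally, combining the word-by-word construction with the addition rule lets me express $E_{i,j}(r)$ for every $r$ in the free ring as a product of elements from the proposed generating set and their inverses, and the total generator count is $6 + 4l$ as claimed. I do not anticipate a serious obstacle: the potential subtlety is bookkeeping to ensure that the induction simultaneously supplies all six off-diagonal entries at each step, but the base case is already arranged to provide them, so the recursion closes cleanly. The only stylistic choice is whether to present the six inductive commutator identities compactly (by symmetry under $\Sym(3)$ acting on the index set) or to list them explicitly; I would opt for the latter for readability.
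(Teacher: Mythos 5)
Your proof is correct and uses the same key idea as the paper: the Steinberg commutator relation $[E_{i,k}(r),E_{k,j}(s)]=E_{i,j}(rs)$ together with additivity $E_{i,j}(r+s)=E_{i,j}(r)E_{i,j}(s)$, which the paper writes out as explicit $3\times 3$ matrix identities. The paper stops after stating these two identities and asserts the conclusion in one line; you supply the bookkeeping the paper leaves implicit, namely the induction on word length that shows all six off-diagonal positions are obtained simultaneously.
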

\begin{proof}    
    The following matrix identities show that matrix multiplication encodes ring operations.
    \begin{align}
    \begin{pmatrix}
        1 & 0 & \bxx \byy \\
        0 & 1 & 0 \\
        0 & 0 & 1
    \end{pmatrix}
    &=
    \begin{pmatrix}
        1 & -\bxx & 0 \\
        0 & 1 & 0 \\
        0 & 0 & 1
    \end{pmatrix}
    \begin{pmatrix}
        1 & 0 & 0 \\
        0 & 1 & -\byy \\
        0 & 0 & 1
    \end{pmatrix}
    \begin{pmatrix}
        1 & \bxx & 0 \\
        0 & 1 & 0 \\
        0 & 0 & 1
    \end{pmatrix}
    \begin{pmatrix}
        1 & 0 & 0 \\
        0 & 1 & \byy \\
        0 & 0 & 1
    \end{pmatrix} , \label{eq:RingOperationsInEL}\\
    \begin{pmatrix}
        1 & \bxx + \byy\\
        0 & 1 \\
    \end{pmatrix}
    &=
    \begin{pmatrix}
        1 & \bxx \\
        0 & 1 
    \end{pmatrix}
    \begin{pmatrix}
        1 & \byy \\
        0 & 1
    \end{pmatrix}.
    \nonumber
    \end{align}
    These imply that for any elements $\bxx,\byy$ of a ring
    \begin{align}
        E_{a,c}(\bxx \byy) &= E_{a,b}(-\bxx) E_{b,c}(-\byy) E_{a,b}(\bxx) E_{b,c}(\byy) &\text{with $a,b,c$ all distinct},\\
        E_{a,b}(\bxx + \byy) &= E_{a,b}(\bxx) E_{a,b}(\byy) &\text{with $a,b$ distinct}.\nonumber
    \end{align}
    Therefore, these generate every elementary matrix~$E_{a,b}(r)$ for an arbitrary element~$r$ of the free ring
    and any~$a \neq b$.
\end{proof}

\subsubsection{A small generating set for  $\SL(3s;\F_2)^{\times 2^{15 s}}$}\label{sec:kas_small_set}

We finally give generators for $\SL(3s;\FF_2)^{\times K^5}$ where $K = 2^{3s} - 1$.
We are actually going to give generators for a larger group~$\SL(3s;\FF_2)^{\times 2^{15s}}$.

Combining~\cref{eq:ChainOfEL} and~\cref{lem:EL3Generators} we see that
$\EL(3s;\FF_2)^{\times 2^{k s^2}}$ is generated by $4(k+2)+6$ elements.
For $s \ge 7$ and $k = 3$, surely the $26$ generators suffice.
This is a choice in~\cite[Corollary~3.1]{Kassabov_2007_alt}.
In this section, we take $k = 1$ and $s \ge 15$, in which case $18$ generators suffice and it is easy to explain.
Generalization to any larger value of~$k$ will be obvious.

Let us spell out (some of) the generators for reversible circuit considerations below.
Recall the two matrices $A,B$ from~\cref{eq:matricesAB} with~$m$ there equal to~$s$ here.
\begin{align}
    \varphi_1( E_{1,2}(1) ) &= 
    \begin{pmatrix}
        \one_{s} & \one_s & 0 \\
        0 & \one_s & 0 \\
        0 & 0 & \one_s
    \end{pmatrix}, &
    \varphi_1(E_{1,2}( \byy_1 )) &=
    \begin{pmatrix}
        \one_{s} & (x_{1,a,b})_{a,b=1}^s & 0 \\
        0 & \one_s & 0 \\
        0 & 0 & \one_s
    \end{pmatrix},\\
    \varphi_1(E_{1,2}( \baa )) &=
    \begin{pmatrix}
        \one_{s} & A & 0 \\
        0 & \one_s & 0 \\
        0 & 0 & \one_s
    \end{pmatrix}, &
    \varphi_1(E_{1,2}( \bbb )) &=
    \begin{pmatrix}
        \one_{s} & B & 0 \\
        0 & \one_s & 0 \\
        0 & 0 & \one_s
    \end{pmatrix}. \nonumber
\end{align}
As we discussed in~\cref{rmk:ring_iso_2},
the homomorphism~$\varphi_2$ is nothing but ``evaluating'' these matrices,
viewed as functions of~$x_{1,a,b}$,
at each factor (one of $2^{15s}$ factors) indexed by a $15s$-bit string.
The generators of form~$E_{i,j}(1)$, $E_{i,j}(\baa)$, $E_{i,j}(\bbb)$ 
are ``constant functions'' of~$x_{1,a,b}$.
This means that these generators are uniform across all factors of~$\SL(3s;\FF_2)^{\times 2^{15s}}$:
\begin{align}\label{eq:TheSLLargePowerGenerators}
    (\varphi_2 \circ \varphi_1)( E_{1,2}(1) ) &= 
    \begin{pmatrix}
        \one_{s} & \one_s & 0 \\
        0 & \one_s & 0 \\
        0 & 0 & \one_s
    \end{pmatrix}^{\times 2^{15s}}, &
    (\varphi_2 \circ \varphi_1)(E_{1,2}( \baa )) &=
    \begin{pmatrix}
        \one_{s} & A & 0 \\
        0 & \one_s & 0 \\
        0 & 0 & \one_s
    \end{pmatrix}^{\times 2^{15s}},\\
    (\varphi_2 \circ \varphi_1)(E_{1,2}( \bbb )) &=
    \begin{pmatrix}
        \one_{s} & B & 0 \\
        0 & \one_s & 0 \\
        0 & 0 & \one_s
    \end{pmatrix}^{\times 2^{15s}}. & \nonumber
\end{align}
Here, the superscript $\times 2^{15s}$ on the matrices 
denotes a direct product of $2^{15s}$ copies, not $2^{15s}$-th power.
For $s>15$, there will be more ``variables''~$x_{1,a,b}$ than the bits
that index a factor of~$\SL(3s;\FF_2)^{\times 2^{15s}}$.
This just means that the extra variables, {\it i.e.}, whenever $a > 15$, are always set to zero. 
Since we take $k=1$, let us write $\byy$ instead of $\byy_1$ and $(x_{a,b})$ instead of $(x_{1,a,b})$. 
Then,
\begin{align}
    \label{eq:TheSLLargePowerGenerators_y}
    (\varphi_2 \circ \varphi_1)(E_{1,2}( \byy )) &= \bigtimes_{x \in \{0,1\}^{15s}} 
    \begin{pmatrix}
        \one_{s} & (x_{a,b})_{a,b} & 0 \\
        0 & \one_s & 0 \\
        0 & 0 & \one_s
    \end{pmatrix} 
\end{align}
where $x = (x_{1,1},x_{1,2},\ldots,x_{1,s},x_{2,1},\ldots, x_{15,s}) \in \{0,1\}^{15s}$.

\subsection{Circuits} \label{sec:kas_ckt}

In the opening of this section,
we have explained that we lay $K^6 = (2^{3s}-1)^6$ points in a six-dimensional hypercube $\bKK_s$,
on which $\Alt(\bKK_s)$ acts.
We are in fact interested in the permutation group on~$(K+1)^6 = 2^{18s}$ points,
so each permutation is a bijection from~$\FF_2^{18s}$ onto itself.
Here, we show circuits for specific bijections corresponding to the generators 
of~$\SL(3s;\FF_2)^{\times 2^{15 s}}$ above.

\subsubsection{Depth-$\cO(n)$ circuits}

Recall that we index each factor in $\SL(3s;\FF_2)^{\times 2^{15 s}}$ 
by $x=(x_{1,1},x_{1,2},\ldots,x_{1,s},x_{2,1},\ldots, x_{15,s})\in \{0,1\}^{15s}$. 
It suffices to construct the circuits for one embedding 
(of $\SL(3s;\FF_2)^{\times 2^{15 s}}$ into the permutation group on~$\FF_2^{18s}$),
say $\pi_1(\Sigma_0)$ where $\Sigma_0$ denotes the generating set 
constructed in the preceding subsection.
The other five embeddings are implemented simply by permuting the sets of $3s$ bits.
Given the embedding $\pi_1$, 
all $2^{15s}$ factors act on the same $3s$ bits, denoted as $y_1, y_2, \ldots, y_{3s}$, 
but the specific action of each factor is controlled by the other $15s$ bits $x \in \{0,1\}^{15s}$.
More specifically, if the control bits equal $x$, then we apply the factor indexed by $x$ to $y_1, \cdots, y_{3s}$.
Put differently, a bitstring $(y_1,\ldots,y_{3s})$ is one of the $K+1$ points on a ``line'' of the six-dimensional cube.
The $15s$ control bits~$x$ specify one of $(K+1)^5$ lines.

We first construct the circuits for $(\pi_1\circ\varphi_2\circ\varphi_1)(E_{i,j}(w))$ where $w=1,\baa,\bbb$. 
The controls for these generators are vacuous. 
For example, following~\cref{eq:TheSLLargePowerGenerators}, the action of $(\pi_1\circ\varphi_2\circ\varphi_1)(E_{2,1}(\baa))$ is to apply
\begin{align}
    \begin{pmatrix}
        \one_{s} & 0 & 0 \\
        A & \one_s & 0 \\
        0 & 0 & \one_s
    \end{pmatrix}
\end{align}
directly to bits $y_1, \cdots, y_{3s}$, independent of~$x$.
One CNOT corresponds to a unique elementary matrix of form~$E_{i,j}(1)$,
where, if the action is matrix-vector multiplication on column vectors,
$j$ is the control bit and $i$ is the target bit.
Our specific ``constant function'' generators in~\cref{eq:TheSLLargePowerGenerators}
are each a product of $E_{i,j}(1)$ 
where for each $j$ there is at most one $i$ that enters into the product,
and for each~$i$ there is at most one~$j$ that enters into the product.
This means that all the CNOTs that comprise a ``constant function'' generator
act on disjoint pairs of bits,
and hence all can be implemented parallel (as long as an arbitrary pair of bits may be acted upon by a CNOT).
So, in all-to-all connectivity of bits,
the depth of the CNOT network implementing any ``constant function'' 
generator of~\cref{eq:TheSLLargePowerGenerators} is~$1$. 
See~\cref{fig:circ_b_cnot,fig:circ_one_cnot,fig:circ_a_cnot} for some circuit diagrams.

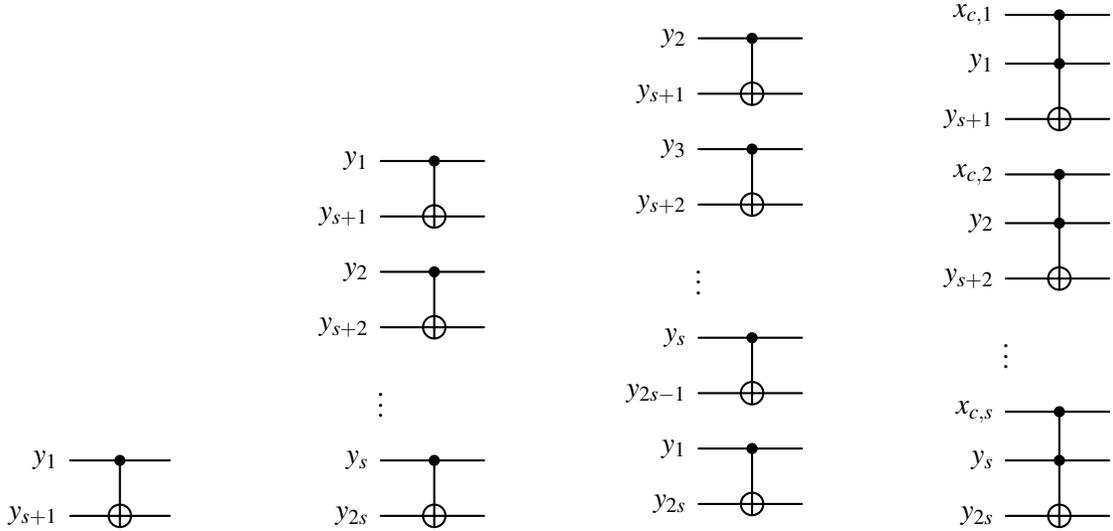
\begin{figure}[!ht]
    \centering
    \begin{subfigure}[b]{0.245\textwidth}
        \begin{equation*}
            \begin{quantikz}[slice style=blue] 
            \lstick{$y_1$} & \ctrl{1} & \\
            \lstick{$y_{s+1}$} & \targ{} &
            \end{quantikz}
        \end{equation*}
        \caption{$(\pi_1\circ\varphi_2\circ\varphi_1)(E_{2,1}(\bbb))$}
        \label{fig:circ_b_cnot}
    \end{subfigure}
    \hfill
    \begin{subfigure}[b]{0.245\textwidth}
        \begin{equation*}
            \begin{quantikz}[slice style=blue] 
            \lstick{$y_1$} & \ctrl{1} & \\
            \lstick{$y_{s+1}$} & \targ{} & \\
            \lstick{$y_2$} & \ctrl{1} & \\
            \lstick{$y_{s+2}$} & \targ{} & \\
            \vdots  \\
            \lstick{$y_{s}$} & \ctrl{1} & \\
            \lstick{$y_{2s}$} & \targ{} & 
            \end{quantikz}
        \end{equation*}
        \caption{$(\pi_1\circ\varphi_2\circ\varphi_1)(E_{2,1}(1))$}
        \label{fig:circ_one_cnot}
    \end{subfigure}
    \hfill
    \begin{subfigure}[b]{0.245\textwidth}
        \begin{equation*}
            \begin{quantikz}[slice style=blue] 
            \lstick{$y_2$} & \ctrl{1} & \\
            \lstick{$y_{s+1}$} & \targ{} & \\
            \lstick{$y_3$} & \ctrl{1} & \\
            \lstick{$y_{s+2}$} & \targ{} & \\
            \vdots  \\
            \lstick{$y_{s}$} & \ctrl{1} & \\
            \lstick{$y_{2s-1}$} & \targ{} & \\
            \lstick{$y_1$} & \ctrl{1} & \\
            \lstick{$y_{2s}$} & \targ{} &
            \end{quantikz}
        \end{equation*}
        \caption{$(\pi_1\circ\varphi_2\circ\varphi_1)(E_{2,1}(\baa))$}
        \label{fig:circ_a_cnot}
    \end{subfigure}
    \hfill
    \begin{subfigure}[b]{0.245\textwidth}
        \begin{equation*}
        \begin{quantikz}[slice style=blue] 
        \lstick{$x_{c,1}$} & \ctrl{1} & \\
        \lstick{$y_1$} & \ctrl{1} & \\
        \lstick{$y_{s+1}$} & \targ{} & \\
        \lstick{$x_{c,2}$} & \ctrl{1} & \\
        \lstick{$y_2$} & \ctrl{1} & \\
        \lstick{$y_{s+2}$} & \targ{} & \\
        \vdots  \\
        \lstick{$x_{c,s}$} & \ctrl{1} & \\
        \lstick{$y_{s}$} & \ctrl{1} & \\
        \lstick{$y_{2s}$} & \targ{} &
        \end{quantikz}
    \end{equation*}
    \caption{$(\pi_1\circ\varphi_2\circ\varphi_1)(E_{2,1}(\bzz_c))$}
    \label{fig:circ_depth_one}
    \end{subfigure}
    \caption{Depth-$1$ circuit implementations}
    \label{fig:kas_circuit_constant}
\end{figure}

Following~\cref{eq:TheSLLargePowerGenerators_y}, the action of $(\pi_1\circ\varphi_2\circ\varphi_1)(E_{2,1}(\byy))$ is to apply
\begin{align}
    \begin{pmatrix}
        \one_{s} & 0 & 0 \\
        (x_{a,b})_{a,b} & \one_s & 0 \\
        0 & 0 & \one_s
    \end{pmatrix}
\end{align}
to bits $y_1, \cdots, y_{3s}$ where $x = (x_{1,1},x_{1,2},\cdots,x_{1,s},x_{2,1},\cdots, x_{15,s})$ equal the values of the control bits. 
If $x_{a,b}=1$, the corresponding matrix demands the presence of a CNOT. 
If $x_{a,b} = 0$, the corresponding matrix demands the absence of a CNOT.
Hence, the implementation is a network of Toffoli gates (control-control-NOT). 
Concretely, for any $a\in \{1,\cdots, 15\}$, $b\in \{1,\cdots, s\}$, there is a Toffoli gate with the two controls at~$x_{a,b}$ and~$y_{b}$, 
and the NOT at~$y_{s+a}$.
In the circuit for each~$(\pi_1\circ\varphi_2\circ\varphi_1)(E_{i,j}(\byy))$, there are exactly~$15s$ Toffoli gates.
See~\cref{fig:kas_circuit_toffoli} for the circuit.

\begin{figure}[!ht]
    \begin{center}
    \begin{equation*}
        \begin{quantikz}[slice style=blue] 
        \lstick{$x_{1,1}$} &\ctrl{12}&& \ \ldots \ &&&&\ \ldots \ &&\\
        \lstick{$x_{1,2}$} &&\ctrl{11}& \ \ldots \ &&&&\ \ldots \ &&\\
        \vdots \\
        \lstick{$x_{1,s}$} &&& \ \ldots \ &\ctrl{9}&&&\ \ldots \ &&\\
        \lstick{$x_{2,1}$} &&& \ \ldots \ &&\ctrl{9}&&\ \ldots \ &&\\
        \lstick{$x_{2,2}$} &&& \ \ldots \ &&&\ctrl{8}&\ \ldots \ &&\\
        \vdots \\
        \lstick{$x_{15,s}$} &&& \ \ldots \ &&&&\ \ldots \ &\ctrl{8}&\\
        \lstick{$y_{1}$} &\control{}&& \ \ldots \ &&\control{}&&\ \ldots \ &&\\
        \lstick{$y_{2}$} &&\control{}& \ \ldots \ &&&\control{}&\ \ldots \ &&\\
        \vdots \\
        \lstick{$y_{s}$} &&& \ \ldots \ &\control{}&&&\ \ldots \ &\control{}&\\
        \lstick{$y_{s+1}$} &\targ{}&\targ{}& \ \ldots \ &\targ{}&&&\ \ldots \ &&\\
        \lstick{$y_{s+2}$} &&& \ \ldots \ &&\targ{}&\targ{}&\ \ldots \ &&\\
        \vdots \\
        \lstick{$y_{s+15}$} &&& \ \ldots \ &&&&\ \ldots \ &\targ{}&
        \end{quantikz}
    \end{equation*}
    \end{center}
    
    \caption{The circuit for $(\pi_1\circ\varphi_2\circ\varphi_1)(E_{2,1}(\byy))$ which consists of $15s$ Toffoli gates. }
    \label{fig:kas_circuit_toffoli}
\end{figure}
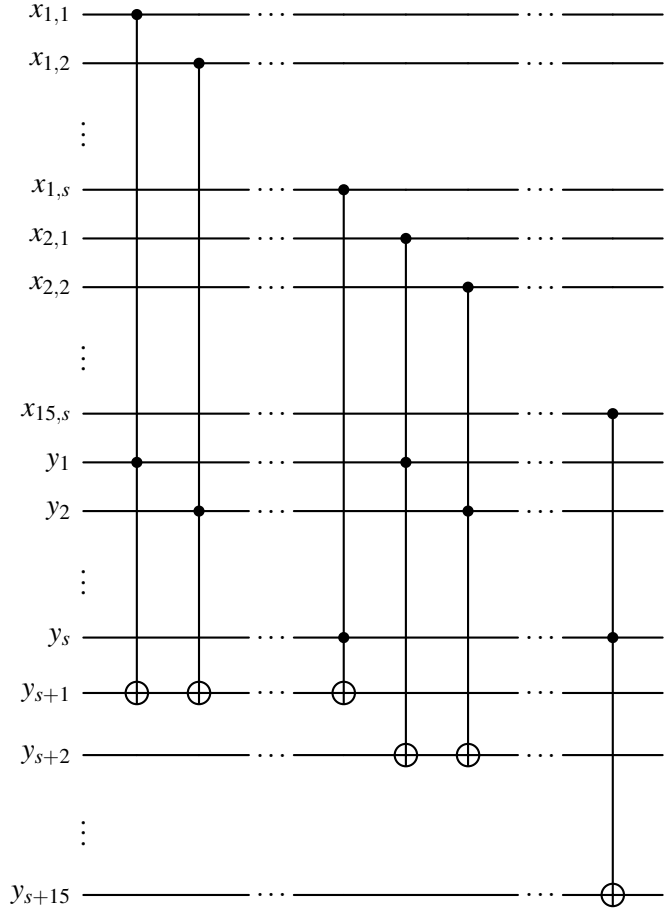

\subsubsection{Depth-$1$ circuits}
\label{sec:kas_depth_one}

Let us consider the generators described in~\cref{rmk:kas_gen_set_diag}. 
If we take $k=15$, then for any integer $s\geq 1$, the number of generators for $\EL(3s;\FF_2)^{\times 2^{15s}}$ is $4(15+2)+6 = 74$. 
Let us overload the notations $\varphi_1$ and $\varphi_2$ for the corresponding group homomorphisms. Then, $\varphi_1(\baa) = A$, $\varphi_1(\bbb) = B$, and $\varphi_1(\bzz_c) = z_c$ for $c=1, \cdots, 15$. 

The circuits for implementing $(\pi_1\circ \varphi_2\circ \varphi_1)(E_{2,1}(w))$ where $w=1, \baa, \bbb$ remain the same as in~\cref{sec:kas_small_set}, so each generator has circuit depth $1$. The action of $(\pi_1\circ \varphi_2\circ \varphi_1)(E_{2,1}(\bzz_c))$ is to apply
\begin{align}
    \begin{pmatrix}
        \one_{s} & 0 & 0 \\
        (x_{c,a,a})_{a=1}^s & \one_s & 0 \\
        0 & 0 & \one_s
    \end{pmatrix}
\end{align}
to bits $y_1, \cdots, y_{3s}$ where $x_{c,1,1},x_{c,2,2},\cdots, x_{c,s,s}$ take the values of the corresponding control bits $x_{c,1},x_{c,2},\cdots, x_{c,s}$.
Concretely, for each $a\in \{1,\cdots, s\}$, there is a Toffoli gate with the two controls at~$x_{c,a}$ and~$y_{a}$, 
and the NOT at~$y_{s+a}$.
Hence, for each $c\in \{1, \cdots, 15\}$, $(\pi_1\circ \varphi_2\circ \varphi_1)(E_{i,j}(\bzz_c))$ can be implemented using $s$ Toffoli gates which act on nonoverlapping triples of bits.
So, the depth of these $s$ Toffoli gates is~$1$; see~\cref{fig:circ_depth_one} for a circuit diagram.

\subsection{The action outside~\texorpdfstring{$\bKK_s$}{Kₛ}}
\label{sec:kas_exterior}

So far, we have focused on the ``interior'' of the hypercube $\bKK_s$, but the ``exterior'' of $\bKK_s$ is also important for our purpose for reversible circuits. Consider the decomposition of $\FF_2^{3s}$ into two components
\begin{align}
\bKK(0) := \{ 0^{3s} \}\quad \text{and}\quad \bKK(1) := \FF_2^{3s} \setminus \{ 0^{3s} \}.   
\end{align}
Then, we decompose $\FF_2^{18s}$ into $2^6$ components labeled by binary $6$-tuples $w = (w_1, w_2,\ldots,w_6) \in \FF_2^6$ as
\begin{align}
 \FF_2^{18s} = \left(\bKK(0)\sqcup \bKK(1)\right)^{6} = \bigsqcup_{w \in \FF_2^6} \bKK(w_1)\sqcup \cdots \sqcup \bKK(w_6)=:\bigsqcup_{w \in \FF_2^6}\bKK(w).
\end{align}
In fact, Kassabov's generating set $S$ have the property that each component $\bKK(w)$ is preserved,
\begin{align}
    S \bKK(w) =\bKK(w)\quad \text{for each}\quad w \in \FF_2^6.
\end{align}
That is, the permutation group $\cP$ generated by the circuits 
is a subgroup of $\prod_{w \in \FF_2^6} \Alt(\bKK(w)) \cong \bigtimes_{w \in \FF_2^6} \Alt(\bKK(w))$.
In this decomposition, indeed, the set $\mathbf K_s$ featured earlier is equal to $\bKK(1^6)$. The goal of this section is to study the effect of the circuit on other components with $w \ne 1^6$ (which was not the main focus of~\cite{Kassabov_2007_alt}). We have shown that the canonical map $\cP \to \Alt(\bKK(1^6))$ is surjective
by using Pyber's result~\cite{Pyber1993}. 

\begin{proposition}[A good Kazhdan constant for $\Alt(\bKK(1^6))$]\label{prop:KassabovCircuit}
For $s \ge 15$,
there exist a generating set $S$ for a subgroup~$\cP$ of $\prod_{w \in \FF_2^6} \Alt(\bKK(w))$,
consisting of $\abs{S} = \cO(1)$ elements, 
each of which can be written as a product of $\cO(n)$ CNOT and Toffoli gates.
The canonical projection $\xi : \cP \to \Alt(\bKK(1^6))$ is surjective.
The Kazhdan constant $\cK(\Alt(\bKK(1^6)),\xi(S))$ is uniformly bounded away from zero.
\end{proposition}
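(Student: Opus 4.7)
The plan is to build $S$ from the six families of reversible circuits that implement each of the embeddings $\pi_i(\Sigma_0)$, $i=1,\ldots,6$, appearing in Kassabov's union $\bigcup_i \pi_i(\Gamma)$. Concretely, let $\Sigma_0$ be the constant-size generating set of $\EL(3s;\FF_2)^{\times 2^{15s}}$ from \cref{sec:kas_small_set} (or \cref{sec:kas_depth_one}). For $i=1$, \cref{sec:kas_ckt} already gives a reversible circuit of $\cO(n)$ CNOT and Toffoli gates for each $(\pi_1\circ\varphi_2\circ\varphi_1)(g)$, $g\in\Sigma_0$. For $i\in\{2,\ldots,6\}$, I would rewire these circuits so that the block of $3s$ ``action bits'' is the $i$-th block rather than the first. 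Taking $S$ to be the union of all six families gives $|S|=6|\Sigma_0|=\cO(1)$ circuits, each of size $\cO(n)$, with every bit touched only $\cO(1)$ times. Note also that every gate uses at most one control from outside its target block, so the gate footprints are compatible with the strongly explicit description.

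The first substantive step is to verify that each circuit preserves every component $\bKK(w)$. I would split by cases on the axis $i$ the circuit is associated with. If $w_i=0$, then on $\bKK(w)$ every bit of block $i$ equals zero, and every CNOT/Toffoli in the $i$-th circuit has at least one control that is a block-$i$ bit (the ``$y_b$''-type control in the constructions of \cref{sec:kas_ckt}); hence no gate fires and the input is unchanged. If $w_i=1$, the action on block $i$ is an $\FF_2$-linear invertible transformation (read off the matrix $(\varphi_2\circ\varphi_1)(E_{i,j}(\cdot))$), so a nonzero block $i$ stays nonzero; the other blocks are never modified because targets lie only in block $i$.

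The second step is surjectivity of $\xi$. Restricting each circuit to $\bKK(1^6)$ recovers exactly the element $\pi_i(g')$, where $g'$ is the image of $g\in\Sigma_0$ under the natural projection $\EL(3s;\FF_2)^{\times 2^{15s}}\twoheadrightarrow \Gamma=\SL(3s;\FF_2)^{\times K^5}$. Thus $\xi(S)=\bigcup_{i=1}^6 \pi_i(\Sigma_0')$, and the Pyber argument in \cref{sec:kas_small_set} already shows that this set generates $\Alt(\bKK(1^6))$ once $s\ge 1$.

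The third step is the Kazhdan bound, which is essentially immediate: $\xi(S)$ is precisely a generating set of the form used in \cref{thm:Kassabov2007Alt}, so that theorem furnishes a universal $\eps>0$ (independent of $s$) with $\cK(\Alt(\bKK(1^6));\xi(S))\ge \eps$. In case the generating set appearing in \cref{thm:Kassabov2007Alt} is not literally $\xi(S)$, each element of one is a length-$\cO(1)$ word in the other (since both arise from $\Sigma_0$ after applying the same bit-permutation embeddings), so \cref{lem:shortproduct} still yields a $\Omega(1)$ Kazhdan constant.

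The main technical wrinkle I anticipate is the assertion that $\cP\subseteq \prod_w \Alt(\bKK(w))$ rather than only $\prod_w \Sym(\bKK(w))$; one has to check that each circuit restricted to each $\bKK(w)$ is an \emph{even} permutation. This will follow by a direct parity count: a CNOT or Toffoli on $18s$ bits decomposes into transpositions indexed by configurations of the free bits, and intersecting these transpositions with $\bKK(w)$ leaves a count that is a power of two at least $2$ whenever the restriction is nontrivial (using $s\ge 15$). Apart from this bookkeeping, the proposition is an assembly of ingredients already in place.
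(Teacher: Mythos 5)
Your plan reproduces the paper's approach: the paper itself states \cref{prop:KassabovCircuit} as a summary of \cref{thm:Kassabov2007Alt} together with the circuit constructions of \cref{sec:kas_ckt}, asserting the block-invariance and evenness of the generators without spelling out the verification. Your filled-in verification is correct, with one small imprecision in the parity step: the number of transpositions of the restricted gate is not a ``power of two'' but rather $2^{3s-2}$ (for a CNOT) or $2^{6s-3}$ (for a Toffoli) times an odd product of terms $|\bKK(w_k)| \in \{1, 2^{3s}-1\}$; the point is just that this count is even for all $s\ge 1$, which gives the needed evenness.
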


However, even though Kassabov's generating set works well for $\Alt(\bKK(1^6))$, we wanted a good Kazhdan constant for the full set of bit strings; we must introduce extra generators. So far, we have paid little attention to how the circuit acts on the other components $\bKK(w)$. Without repeating a proof of Kazhdan constants from scratch, our strategy is to make use of the good Kazhdan constant of $\Alt(\bKK(1^6))$ by introducing extra generators that mix different components $\bKK(w)$. Technically, to use the short product lemma (\cref{lem:shortproduct}), the nontrivial action on the components $\bKK(w)$ appears to cause some technical issues. Fortunately, we find a way to avoid a detailed foray into the particular nontrivial actions by introducing extra multiply control-NOT to trivialize the action on exterior points.

\subsubsection{Trivializing the action on the exterior points}

We modify the circuit of the preceding subsections and trivialize its action on $\bigcup_{w \neq 1^6} \bKK(w)$,
so that not only do we have a surjection $\cP \to \Alt(\bKK(1^6))$ but also an equality $\cP' = \Alt(\bKK(1^6))$
where $\cP'$ is a permutation group generated by the modified circuits.

It would be instructive to consider two- and three-dimensional cases first,
instead of the six dimensions.
Generalization to the six (or any constant) dimensions will be straightforward.

\begin{figure}[!ht]
    \centering
    \includegraphics[width=\textwidth, trim={0mm 120mm 95mm 0mm}, clip]{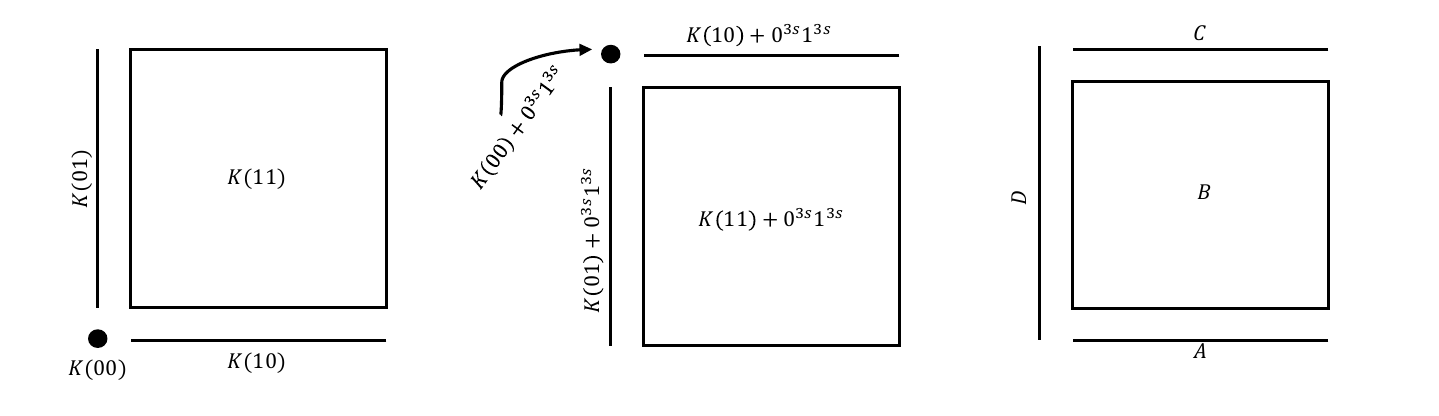}
    \caption{
        For illustration, we regard $\FF_2^{3s}$ as a set of integers in $[0,2^{3s}-1]$.
        Then, $(\FF_2^{3s})^2$ is a square of integral points in a two-dimensional plane.
        This square is divided into four regions $\bKK(00),\bKK(01),\bKK(10),\bKK(11)$ as depicted in the leftmost figure.
        The group $\cP$ generated by the circuits for Kassabov's generators permutes points within each region.
        The group $\cP'$ generated by the modified circuits have the same action on $\bKK(11)$ as $\cP$
        but fixes all the points on $\bKK(00),\bKK(10),\bKK(01)$.
        By conjugating the circuits by $X(01)$ we have a permutation group that preserves each region of the middle figure.
        We merge them to generate a bigger group $\Alt(A\cup B \cup C)$.
    }
    \label{fig:alt-merger}
\end{figure}

In the two-dimensional case 
we have a square of points labeled by~$(u_1,u_2) \in (\FF_2^{3s})^{\times 2}$.
See \cref{fig:alt-merger}.
Under~$\pi_1(\Gamma)$, the first coordinate is acted upon by a product of the matrix groups,
and the undesired action is on the line of points~$(u_1, 0)$.
Each ``line'' of $K+1$ points is horizontal in \cref{fig:alt-merger}.
The ``constant function'' generators 
$\varphi_1(E_{i,j}(1))$, $\varphi_1(E_{i,j}(\baa))$ and $\varphi_1(E_{i,j}(\bbb))$
act nontrivially on each ``line.''
Since every generator has order~$2$,
the trivialization is achieved by an extra layer that applies the element of~$\SL(3s;\FF_2)$ on
the $u_1$ register (consisting of $3s$ bits)
if and only if $u_2$ register is zero.
That is, the goal of our modification is to trivialize the action on $\bKK(10)$ under $\pi_1(\Gamma)$ 
(and on $\bKK(01)$ under $\pi_2(\Gamma)$).
A CNOT conditioned on some set of bits being all zero
is implemented by applying a bit flip~$X$ on all the controlling bits,
applying multiply controlled NOT (that is activated if and only if the controlling bits are all~$1$),
and applying the bit flip again.
The ``nonconstant function'' $\varphi_1(E_{i,j}(\byy_c))$ will be automatically trivial on this line in the boundary
because the control bits in~$u_2$ for Toffoli gates are all zero.
It is clear that the action of this modified circuit on $\bKK(11)$
remains equal to the original.

In the three-dimensional case
we have a cube of points labeled by~$(u_1,u_2,u_3)$.
Under~$\pi_1(\Gamma)$, 
the first coordinate is acted upon by a product of the matrix groups.
The ``exterior'' of concern consists of two faces $(u_1, 0, u_3)$ and $(u_1, u_2, 0)$.
These two faces intersect along~$(u_1,0,0)$, 
which is a new feature we encounter in $3$ dimensions but not in $2$ dimensions.
For the first face~$(u_1,0,u_3)$, 
we amend the circuit by adding another layer~$A$ of gates
that applies a circuit~$C$ controlled by all bits in the second register~$u_2 = 0$ being zero.
(There may be already a Toffoli gate in~$C$ that reads a bit in the second register,
in which case the count of the controlling bits in the gate of~$A$ 
corresponding to this Toffoli gate is one less 
than that for a Toffoli gate in~$C$ that does not touch any bit from the second register.)
We add another layer for the second face~$(u_1,u_2,0)$.
By inclusion-exclusion,
the action on the intersection $(u_1,0,0)$ of the two faces will not have been trivialized,
so we need a third layer that is controlled on all bits in the second and third registers.

It should now be clear how to proceed all the way up to the six dimensions (or any constant $d$ dimensions).
The exterior $\bKK(w \neq 1^6)$ will consist of $(d-1)$-dimensional hypercubes,
which intersect along lower dimensional hypercubes.
There must be layers of amendments for all intersections.
The number of additional layers of multiply controlled NOT gates
will be $\binom{d-1}{1} + \binom{d-1}{2} + \cdots + \binom{d-1}{d-2}$,
a constant.
Each amendment layer has $\CO(s)$ Toffoli, CNOT, or X gates,
all controlled by at most $15 s$ bits.

Our next task is to express the amendment layer by elementary gates ($X$, CNOT, and Toffoli).
To this end, consider the NOT gate controlled by (i.e., activated on) $k$ bits all being~$1$.
\begin{align}
C^k X := \ket{1^k}\bra{1^k} \otimes X +   ( 1- \ket{1^k}\bra{1^k}) \otimes I,
\end{align}
which is a reversible transformation on $k+1$ bits.

\begin{lemma}[\cite{GidneyPost} and~\cref{fig:gidney_combo}]\label{lem:gidney}
    For $m \ge 2$, the reversible transformation $(C^m X)\otimes \one$ on $m+2$ bits
    is a composition of $\cO(m)$ Toffoli gates.
    Every bit is acted upon by $\cO(1)$ different Toffoli gates of the circuit.
\end{lemma}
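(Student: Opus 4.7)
The plan is to follow the explicit construction of Gidney \cite{GidneyPost} depicted in \cref{fig:gidney_combo}, which implements $C^m X$ with a single extra ancilla bit via a composition of ``V-shape'' Toffoli ladders. The single ancilla bit of the statement (corresponding to the $\otimes\one$ factor on an otherwise idle bit) is the only clean ancilla available; I intend to get around this shortage by reusing some of the $m$ control bits as \emph{dirty} ancillas in suitable sub-circuits, exploiting the fact that a properly structured ladder will restore them to their original values regardless of their initial state.

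First I would establish the standard dirty-ancilla ladder lemma: for $k \ge 3$, if we have $k$ control bits, one target bit, and $k-2$ extra bits in \emph{arbitrary} computational-basis states, then $C^k X$ on the controls and target can be implemented by $\cO(k)$ Toffoli gates that touch each bit at most $\cO(1)$ times and restore each dirty ancilla to its initial value at the end. The construction is the familiar ``compute AND chain up, flip target, uncompute AND chain down'' pattern; correctness on dirty ancillas follows because every dirty bit is XOR-updated an even number of times by cancelling Toffoli pairs.

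Second, with only one \emph{clean} ancilla available, I would split the $m$ controls into two nearly-equal halves $L$ and $R$, and apply three sub-circuits in sequence:
\begin{enumerate}
\item Apply a dirty-ancilla ladder on the controls $L$, using the $R$-controls together with the (currently unused) target bit as dirty ancillas, to write $\bigwedge_{i\in L} c_i$ into the clean ancilla.
\item Apply a dirty-ancilla ladder on the clean ancilla together with the $R$-controls, using the $L$-controls as dirty ancillas, to flip the target iff the ancilla and all $R$-controls are $1$.
\item Uncompute sub-circuit (i) to restore the clean ancilla.
\end{enumerate}
The net effect is to flip the target exactly when all $m$ controls are $1$, while every control, the target, and the ancilla end in their initial states.

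Finally, I would verify the quantitative claims: each sub-circuit uses $\cO(m)$ Toffoli gates by the ladder lemma, giving $\cO(m)$ total; and each bit participates in at most two of the three sub-circuits, being touched $\cO(1)$ times in each. The main obstacle is the careful bookkeeping for step~2, where the $L$-controls are used as dirty ancillas despite also being the controls whose value is being tested: one has to track the XOR updates through the ladder and its partial inverse to confirm that every $L$-control returns exactly to its initial value before the target Toffoli fires, so that the conjunction being computed on the target is indeed $\bigwedge_{i\in L} c_i \wedge \bigwedge_{i\in R} c_i$.
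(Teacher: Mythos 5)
Your high-level plan — split the $m$ controls into halves, use a V-shaped Toffoli ladder with dirty ancillas for each half-sized multi-controlled NOT, and count $\cO(m)$ gates and $\cO(1)$ touches per bit — is indeed the structure behind Gidney's construction, which the paper presents only as a citation plus \cref{fig:gidney_combo}. Your dirty-ancilla ladder lemma in the first paragraph corresponds to \cref{fig:gidney_multi_control_linear}, and the resource accounting (using the ``other'' half of the controls plus the target as borrowed bits) is correct.

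However, the second paragraph has a genuine gap. You describe the extra bit (the $\otimes\one$ factor) as ``the only clean ancilla available'' and build a three-step compute/use/uncompute scheme that writes $\bigwedge_{i\in L}c_i$ onto it. This only implements $(C^m X)\otimes\one$ on the slice where that bit starts in state $0$. But $(C^m X)\otimes\one$ is a specific permutation of $\{0,1\}^{m+2}$ that acts as the identity on the last bit for \emph{every} initial value — that bit is a \emph{dirty} (borrowed, restored) ancilla, not a clean one. With a dirty ancilla initially holding $a_0$, your step (i) produces $a_0\oplus\bigwedge_L c_i$ on the ancilla, so step (ii) flips the target whenever $\bigl(a_0\oplus\bigwedge_L c_i\bigr)\wedge\bigwedge_R c_i = 1$ rather than whenever $\bigwedge_L c_i\wedge\bigwedge_R c_i=1$, and the resulting circuit is not $(C^m X)\otimes\one$. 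This matters in the paper's application (\cref{lem:CmXell} and \cref{prop:KassabovGenCleanedUp}), where the borrowed bits are live data bits, never known to be $0$.

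The repair is exactly Gidney's four-gate pattern in \cref{fig:gidney_multi_control_constant}: apply $C^{L}X_a$, then $C^{R\cup\{a\}}X_T$, then $C^{L}X_a$ again, then $C^{R\cup\{a\}}X_T$ again. The two target flips cancel unless the ancilla's value was toggled between them, which happens if and only if $\bigwedge_L c_i=1$; the net effect is $T\mapsto T\oplus\bigl(\bigwedge_L c_i\bigr)\bigl(\bigwedge_R c_i\bigr)$ with the ancilla restored, for an arbitrary initial ancilla value. Each of the four gates is then implemented with your dirty-ladder lemma, keeping the $\cO(m)$ gate count and $\cO(1)$-touches-per-bit bound. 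In short: replace the three-step compute/use/uncompute pattern by the four-step flip/test/flip/test pattern, and the argument goes through.
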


\begin{figure}[!ht]
    \centering
    \begin{subfigure}{\textwidth}
    \begin{equation*}
        \begin{quantikz}[slice style=blue] 
        \lstick{$A$}&\ctrl{8}&&\ctrl{8}&&\rstick{$A$}\\
        \lstick{$B$}&\control{}&&\control{}&&\rstick{$B$}\\
        \lstick{$C$}&\control{}&&\control{}&&\rstick{$C$}\\
        \lstick{$D$}&\control{}&&\control{}&&\rstick{$D$}\\
        \lstick{$E$}&&\ctrl{4}&&\ctrl{4}&\rstick{$E$}\\
        \lstick{$F$}&&\control{}&&\control{}&\rstick{$F$}\\
        \lstick{$G$}&&\control{}&&\control{}&\rstick{$G$}\\
        \lstick{$T$}&&\targ{}&&\targ{}&\rstick{$T+ABCDEFG$}\\
        \lstick{$a$}&\targ{}&\control{}&\targ{}&\control{}&\rstick{$a$}\\
        \end{quantikz}
    \end{equation*}
    \vspace{-.6cm}
    \caption{$C^m X$ using one ancilla bit~$a$ and $\cO(1)$ $C^{\lceil m/2\rceil} X$ and $C^{\lfloor m/2\rfloor} X$ gates~\cite[Theorem 23]{AaronsonGrierSchaeffer2015}.}
    \vspace{.7cm}
    \label{fig:gidney_multi_control_constant}
    \end{subfigure}
    \hfill
    \begin{subfigure}{\textwidth}
        \begin{equation*}
        \begin{quantikz}[slice style=blue] 
        \lstick{$A$}&&&&\ctrl{2}&&&&&&\ctrl{2}&&&\rstick{$A$}\\
        \lstick{$B$}&&&&\control{}&&&&&&\control{}&&&\rstick{$B$}\\
        \lstick{$a_1$}&&&\ctrl{2}&\targ{}&\ctrl{2}&&&&\ctrl{2}&\targ{}&\ctrl{2}&&\rstick{$a_1$}\\
        \lstick{$C$}&&&\control{&}&&\control{}&&&&\control{}&&\control{}&&\rstick{$C$}\\
        \lstick{$a_2$}&&\ctrl{2}&\targ{}&&\targ{}&\ctrl{2}&&\ctrl{2}&\targ{}&&\targ{}&\ctrl{2}&\rstick{$a_2$}\\
        \lstick{$D$}&&\control{}&&&&\control{}&&\control{}&&&&\control{}&\rstick{$D$}\\
        \lstick{$a_3$}&\ctrl{2}&\targ{}&&&&\targ{}&\ctrl{2}&\targ{}&&&&\targ{}&\rstick{$a_3$}\\
        \lstick{$E$}&\control{}&&&&&&\control{}&&&&&&\rstick{$E$}\\
        \lstick{$T$}&\targ{}&&&&&&\targ{}&&&&&&\rstick{$T+ABCDE$}\\
        \end{quantikz}
    \end{equation*}
    \vspace{-.6cm}
    \caption{$C^m X$ using $m-2$ ancillas and $\cO(m)$ Toffoli gates.}
    \label{fig:gidney_multi_control_linear}
    \end{subfigure}
    \caption{\cite{GidneyPost} shows an implementation of $(C^m X)\otimes \one$ using $\cO(m)$ Toffoli gates by combining~\cref{fig:gidney_multi_control_constant,fig:gidney_multi_control_linear}.}
    \label{fig:gidney_combo}
\end{figure}
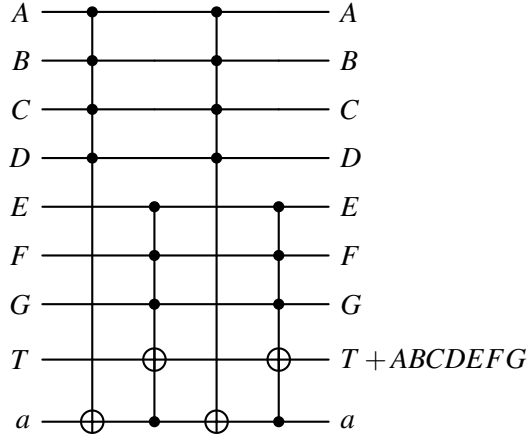
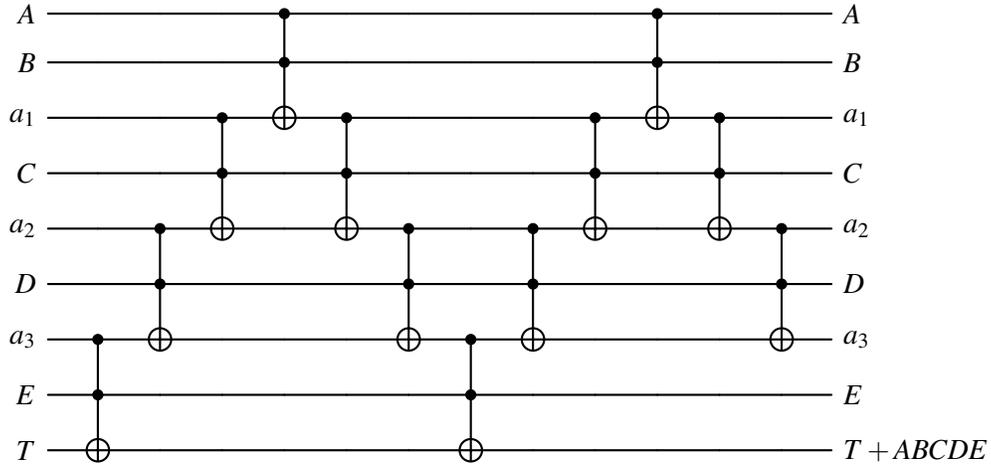

The use of one ancilla is necessary because on $m+1$ bits $C^m X$ is an odd permutation
while any Toffoli gate on $4$ or more bits is an even permutation.
There are other versions of such a construction. 
See~\cite{AaronsonGrierSchaeffer2015} and references therein.
\footnote{Note also that $C^m X$ can be implemented by $\cO(m^2)$ 1- and 2-qubit \textit{unitary} gates 
without any ancillas~\cite[\S8.1.3]{KSV}, but here we are only allowing reversible \textit{classical} circuits.}

For any operator $W$ on $f$ bits, let us write $C^m W$ to denote an operator on $m+f$ bits
which applies $W$ if and only if all the $m$ control bits are $1$.

\begin{lemma}\label{lem:CmXell}
    $C^m(X^{\otimes \ell})$  with $\ell \ge 2$
    is a composition of $\cO(m)$ Toffoli gates and $\cO(\ell)$ CNOT gates,
    without any ancilla bit.
    Every bit in this circuit is acted upon by $\cO(1)$ Toffoli and CNOT gates.
\end{lemma}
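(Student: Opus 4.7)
The plan is to reduce $C^m(X^{\otimes \ell})$ to a single application of Lemma~\ref{lem:gidney} by conjugating with a short CNOT ladder on the target register. Let $P$ denote the reversible transformation obtained by applying, in this temporal order, the CNOTs $\mathrm{CNOT}_{T_1 \to T_2}, \mathrm{CNOT}_{T_2 \to T_3}, \ldots, \mathrm{CNOT}_{T_{\ell-1} \to T_\ell}$. A quick check shows that $P$ sends the bit string $(1,0,\ldots,0)$ to the all-ones string $(1,1,\ldots,1)$, which translates into the Pauli identity $P X_{T_1} P^{-1} = X_{T_1} X_{T_2} \cdots X_{T_\ell}$. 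Since $P$ touches only the target register, it commutes with any operator on the controls, and hence
\begin{equation*}
    C^m(X^{\otimes \ell}) \;=\; P \cdot \bigl(C^m(X_{T_1}) \otimes \one^{\otimes \ell-1}\bigr) \cdot P^{-1}.
\end{equation*}

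The circuit therefore has three phases: (a) apply $P^{-1}$, which is the same $\ell-1$ CNOTs executed in reverse temporal order; (b) implement $C^m X$ on $T_1$ using $T_2$ as the borrowed ancilla supplied by Lemma~\ref{lem:gidney}, costing $\cO(m)$ Toffoli gates; (c) apply $P$. A direct trace-through confirms correctness: after phase (a) the targets hold $(T_1^{(0)},\ T_1^{(0)} \oplus T_2^{(0)},\ T_2^{(0)} \oplus T_3^{(0)},\ \ldots,\ T_{\ell-1}^{(0)} \oplus T_\ell^{(0)})$; phase (b) XORs the AND $A := A_1 \wedge \cdots \wedge A_m$ into $T_1$ while returning $T_2$ to its pre-(b) value; and in phase (c) the cascade of CNOTs $T_1 \to T_2 \to \cdots \to T_\ell$ propagates this XOR through the chain, each step cancelling the accumulated partial XOR of original values so that every $T_k$ ends as $T_k^{(0)} \oplus A$.

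The gate counts and per-bit degrees are immediate from the structure. Phase (b) contributes $\cO(m)$ Toffolis in which every bit (control, target, or ancilla) appears in $\cO(1)$ gates by Lemma~\ref{lem:gidney}. Phases (a) and (c) together contribute $2(\ell-1) = \cO(\ell)$ CNOTs, and within each ladder every target bit is touched at most twice (once as source, once as target, except the two endpoints). The control bits are untouched by (a) and (c). Summing over the three phases, every bit of the full circuit is acted upon by $\cO(1)$ gates, and no bit outside the original $m+\ell$ is used, because $T_2$ plays the role of the ancilla inside $C^m X$ — this is legitimate precisely because Lemma~\ref{lem:gidney} is stated as $(C^m X) \otimes \one$, so the ancilla is restored regardless of its state. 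There is no real obstacle here; the only care needed is the trivial boundary case $\ell = 2$, where each ladder degenerates to a single CNOT and the argument still goes through unchanged.
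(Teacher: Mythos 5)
Your proof is correct and runs on the same rails as the paper's: conjugate a single-target $C^m X$ by a CNOT staircase (to turn $X$ into $X^{\otimes\ell}$) and invoke Lemma~\ref{lem:gidney} for the inner $C^m X$. The one place you diverge is in how the hypothesis $\ell \ge 2$ is used to find the single borrowed ancilla that Lemma~\ref{lem:gidney} needs. The paper first writes $C^m(X^{\otimes\ell})$ as a product of two multiply-controlled gates, each on a proper subset of the $\ell$ targets, so that each factor has a target bit \emph{outside} its own CNOT ladder to serve as the dirty ancilla. You skip that split entirely by noticing that $T_2$, although it participates in the conjugating ladder, is a perfectly valid borrowed ancilla for the inner $C^m X_{T_1}$ block: Lemma~\ref{lem:gidney} restores the ancilla regardless of its input value, so the fact that $T_2$ holds the intermediate value $T_1^{(0)} \oplus T_2^{(0)}$ during phase~(b) is immaterial, and the subsequent CNOT $T_1 \to T_2$ in phase~(c) correctly propagates the flip. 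This is a genuine (if small) simplification — one invocation of the Gidney construction and one ladder, rather than two of each — and your direct trace-through of all three phases cleanly verifies it. The gate count ($\cO(m)$ Toffolis, $2\ell-2$ CNOTs) and the $\cO(1)$ per-bit degree both check out for the same reasons the paper gives.
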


\begin{proof}
    Since $\ell \ge 2$, we can write $C^m (X^{\otimes \ell})$ 
    as a product of two operators to spare an ancilla for each,
    so we may assume that we have an ancilla.
    The operator $C^m (X^{\otimes \ell})$ is conjugate to $C^m X$ by a circuit of CNOT
    that toggles between $X$ and $X^{\otimes \ell}$.
    There are many such CNOT circuits, but an easy version has a CNOT in a staircase fashion,
    and has $\ell-1$ CNOTs.
    Hence, in total we need $\cO(m)$ Toffoli gates for $C^m X$ and, in addition,
    $2\ell -2$ CNOTs for $C^m(X^{\otimes \ell})$.
\end{proof}

We arrive at the promised circuit that has trivial action outside $\bKK(1^6)$.
A feature is that the gate complexity is only a constant multiple of the uncleaned version.

\begin{proposition}[A modified generating set with trivial action outside $\bKK(1^6)$]\label{prop:KassabovGenCleanedUp}
    For $s \ge 15$,
    the group~$\Alt(\bKK(1^6))$ is generated by elements of $S'$
    where $\abs{S'} = \cO(1)$
    and each member of $S'$ is a circuit of $\cO(n)$ Toffoli and CNOT gates,
    in which every bit is acted upon by $\cO(1)$ Toffoli and CNOT gates.
\end{proposition}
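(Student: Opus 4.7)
The plan is to start with Kassabov's generating set $S$ from \cref{prop:KassabovCircuit} and, for each $g\in S$, construct a modified circuit $g'$ that has the same action as $g$ on $\bKK(1^6)$ but acts trivially on every other component $\bKK(w)$. Since the restriction $\xi:\cP\to\Alt(\bKK(1^6))$ of \cref{prop:KassabovCircuit} is surjective with a uniformly good Kazhdan constant and $\xi(g')=\xi(g)$ by design, the resulting set $S'=\{g':g\in S\}$ generates $\Alt(\bKK(1^6))$ inside $\prod_w\Alt(\bKK(w))$, with $|S'|=|S|=\cO(1)$.

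To construct $g'$, I formalize the inclusion–exclusion scheme sketched in \cref{sec:kas_exterior} for $d=2,3$ and extend it to $d=6$. I fix the embedding $\pi_1$; the other five embeddings are handled identically by permuting the six registers. The key algebraic input is that every Kassabov generator $g$ has order two, because each elementary matrix $E_{i,j}(r)$ over $\FF_2$ is an involution. For each nonempty $T\subseteq\{2,\ldots,6\}$, let $g_T$ denote the subcircuit of $g$ obtained by deleting every Toffoli whose controls lie in the registers indexed by $T$, and let $L_T$ denote the amendment layer that executes $g_T$ conditioned on all $3s|T|$ bits of those registers being simultaneously zero. Define
\begin{equation*}
g' \;=\; g \cdot \prod_{\emptyset\ne T\subseteq\{2,\ldots,6\}} L_T
\end{equation*}
in any fixed order. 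On $\bKK(1^6)$ no amendment activates because each of the five other registers contains a nonzero bit, so $g'$ reproduces $g$. On a face with $w_1=0$ every gate of $g$ and of every $g_T$ preserves the all-zero register $1$ (all reads from register $1$ return zero and $\SL$-matrices fix $0$), so $g'$ acts as the identity there. On a face with $w_1=1$ and $w_i=0$ exactly for $i\in T_0\ne\emptyset$, the activating contributions are precisely $\{L_T:T\subseteq T_0\}$ together with $g$ itself (the case $T=\emptyset$); each restricts on that face to the same involution $g_{T_0}$ (any remaining gate with controls in $T_0\setminus T$ is already inactive on this face), and the product of $2^{|T_0|}$ copies of an involution is the identity.

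For the gate count, each amendment $L_T$ would naively cost $\cO(n^2)$ by expanding each of the $\cO(n)$ gates of $g_T$ into a multiply-controlled gate via \cref{lem:gidney}. The improvement to $\cO(n)$ uses the fact that on any face where $L_T$ activates, the $3s|T|\le 15s$ bits in registers indexed by $T$ are known to be zero and can therefore be borrowed as clean ancillas: I precompute the AND of the zero-control condition once per layer using $\cO(n)$ Toffolis (via \cref{lem:gidney,lem:CmXell}), use the resulting single bit as one extra control for each of the $\cO(s)$ gates of $g_T$ (adding $\cO(1)$ elementary gates per original gate), and uncompute at the end using $\cO(n)$ more Toffolis. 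This yields $\cO(n)$ elementary gates per layer, with $\cO(1)$ bit incidence inherited from $g_T$ and the staged Toffoli tree. Since there are $2^5-1=31$ layers and one copy of $g$, the final $g'$ has $\cO(n)$ NOT, CNOT, and Toffoli gates and every bit is acted upon $\cO(1)$ times, as required.

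The main obstacle is verifying the inclusion–exclusion cancellation on all $63$ non-$\bKK(1^6)$ components at once and scheduling the borrowed-ancilla precomputation so that the $\cO(1)$ bit-incidence bound survives the $\cO(1)$ transitions between layers; both reduce to routine bookkeeping once the involution property of $g_{T_0}$ and the containment structure $T\subseteq T_0$ are pinned down as in the second paragraph.
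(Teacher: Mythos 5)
Your overall plan — modify Kassabov's circuits to act trivially off $\bKK(1^6)$, take $S'=\{g'\}$, and invoke the surjectivity from \cref{prop:KassabovCircuit} — matches the paper's structure, and your inclusion--exclusion formalization of the amendment layers (indexing by nonempty $T\subseteq\{2,\ldots,6\}$, observing that each $g_{T_0}$ is an involution since all the constituent gates commute, and verifying that the $2^{|T_0|}$ activating contributions cancel) is a correct and useful spelling-out of what \S5.3.1 only sketches.

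The gap is in your third paragraph, on the gate count. You propose to ``precompute the AND of the zero-control condition once per layer, \ldots use the resulting single bit as one extra control, \ldots and uncompute,'' justifying the precomputation by borrowing the registers-$T$ bits as clean ancillas ``on any face where $L_T$ activates.'' But $L_T$ is a fixed reversible circuit and must also act as the identity on faces where it does \emph{not} activate, and those are exactly the faces where the registers-$T$ bits are \emph{not} zero. If the output bit of the AND tree is $a$ with unknown initial value $a_0$ and the circuit performs $a \mathrel{\oplus{=}} b'$ (with $b'$ the conjunction of the remaining zero-conditions), then on inputs with $a_0=1$ and $b'=0$ the control bit reads $1$ and $g_T$ fires spuriously; this breaks the cancellation on some points of the exterior. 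Compute--use--uncompute genuinely requires a clean target, which you do not have. The paper's proof sidesteps this precisely because $W^2=\one$: it never materializes a clean AND but instead uses the operator identity
\begin{equation*}
\parens*{\textstyle\prod_i (CU^{(i)})_{2i-1,2i}}\parens*{C^m \textstyle\prod_i (XI)_{2i-1,2i}}\parens*{\textstyle\prod_i (CU^{(i)})_{2i-1,2i}}\parens*{C^m \textstyle\prod_i (XI)_{2i-1,2i}} = C^m W,
\end{equation*}
which is valid with \emph{dirty} ancillas on bits $2i-1$, and then decomposes the $CU^{(i)}$ and $C^m X^{\otimes\ell}$ factors via \cref{lem:gidney} and \cref{lem:CmXell}. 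You correctly flag the involution property as the key algebraic input, but you only use it for the $2^{|T_0|}$-fold cancellation; it is also needed, and is in fact the linchpin, for this ancilla-free implementation of each controlled layer.
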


\begin{proof}
    Recall from \cref{prop:KassabovCircuit} 
    that each element~$W$ of $S$ is a product of $\cO(n)$ Toffoli and CNOT gates.
    As we have explained above, each amendment layer for $W$ is $W$ controlled by at most $15s$ bits,
    {\it i.e.}, $C^m W$ for some $m \le 15 s$.
    Since CNOT and Toffoli gates have order 2, it follows that $W^2 = I$.
    
    We use the identity $(CU) (XI) (CU) (XI) = IU$ where $CU$ is the controlled $U$.
    If the $X$ in this identity is controlled by $m$ control bits and $U^2 = I$,
    then $U$ is implemented if and only if all the $m$ control bits are~$1$.
    This generalizes to many nonoverlapping $U$'s in the following way.
    Let us use subscripts to denote bit labels.
    If $U^{(i)}$ are gates in $W$, we have
    \begin{align}
        \parens*{ \prod_{i}(CU^{(i)})_{2i-1,2i} }
        \parens*{ C^m \prod_{i}(XI)_{2i-1,2i} }
        \parens*{ \prod_{i}(CU^{(i)})_{2i-1,2i} }
        \parens*{ C^m \prod_{i}(XI)_{2i-1,2i} }
        =
        C^m \prod_i U^{(i)}_{2i}  = C^m W \, .
    \end{align}
    \Cref{lem:CmXell} says that the second and fourth factors have gate complexity $\cO(n)$.
    The controlled version of individual gates in $W$ can be further decomposed into Toffoli gates by~\cref{lem:gidney}.
    The number of ancillas in this construction is proportional to the number of gates in $W$.
    It is easy to find a sufficient number of ancillas because we can always work with, 
    say, one-tenth of the gates of $W$ at a time.
\end{proof}

Note that the depth of each member of $S'$ is dominated by the depth of $C^m X$.
The circuit of CNOTs that turns~$X$ to $X^{\otimes \ell}$ can be made to have depth $\cO(\log \ell)$.

\subsection{Bounded generation of~\texorpdfstring{$\Alt(2^n)$}{Alt(2ⁿ)} from~\texorpdfstring{$\Alt(\mathbf K_s)$}{Alt(Kₛ)}}

Here, we show how to obtain a constant-sized generating set for $\Alt(2^{18s})$
from that on the subset $\Alt(\mathbf K_s)$ (with trivial action on other components)
by enlarging the generating set with bit flips. Another use of short product lemma (\cref{lem:shortproduct}) then give general $2^n$ from $2^{18s}$.

We will be handling many subgroups below, 
and we will have to distinguish an abstract group (an isomorphism class) from a concrete group.
If $N$ is a natural number, $\Alt(N)$ denotes an abstract alternating group of order $N!/2$.
If $A$ is a finite set of cardinality $\abs A = N$,
then $\Sym(A)$ is the group of all bijective functions from~$A$ to~$A$,
which is isomorphic to, but not the same as, $\Sym(N)$. 
$\Alt(A)$ is then a subgroup of $\Sym(A)$, defined to be the image of $\Alt(N)$ under $\Sym(N) \xrightarrow{\cong} \Sym(A)$.
If $A \subseteq B$ are finite sets, then $\Alt(A)$ is a subgroup of $\Alt(B)$.
If two subgroups $H_1, H_2$ of a group $G$ are element-wise commuting, 
then the product $H_1 H_2 := \{h_1 h_2 \in G \,|\, h_i \in H_i \}$ is a subgroup of~$G$,
and $H_1 H_2$ is isomorphic to, but not the same as, the Cartesian product group $H_1 \times H_2$.
Whenever we speak of disjoint finite sets, we regard that they are subsets of the union of them.

\subsubsection{Conjugations and commutators in permutation groups}
This section introduces ways to generate permutation groups by its subgroups that will be useful for short product lemma (\cref{lem:shortproduct}).
\begin{lemma}[Generating $\Alt(N)$ by commutators]\label{lem:threecommutators}
    If $N \ge 5$, every member of $\Alt(N)$ is a product of $3$ commutators.
\end{lemma}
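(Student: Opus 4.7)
The plan is to exploit the cycle structure of $\sigma \in \Alt(N)$ together with the fact that every $3$-cycle in $\Alt(N)$ is itself a commutator, and then to reduce the total commutator count to three by carefully grouping the resulting $3$-cycles into pairwise disjoint batches. The starting observation is the explicit identity
\[
(a,b,c) = [(a,b)(d,e),\, (a,c)(d,e)],
\]
valid for any five distinct $a,b,c,d,e \in \{1,\dots,N\}$, which exist because $N \ge 5$. Both factors $(a,b)(d,e)$ and $(a,c)(d,e)$ lie in $\Alt(N)$ since each is a product of two disjoint transpositions. This shows every $3$-cycle is a single commutator.

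Next I would generalize this to products of pairwise disjoint $3$-cycles. Given disjoint $\tau_i = (a_i,b_i,c_i)$ for $i = 1, \dots, m$, set $\alpha = \prod_i (a_i,b_i)$ and $\beta = \prod_i (a_i,c_i)$. A block-by-block computation (applied separately on each $\{a_i,b_i,c_i\}$) gives $(\alpha\beta)^2 = \prod_i \tau_i$. When $m$ is even, $\alpha,\beta \in \Alt(N)$ and hence $[\alpha,\beta] = \prod_i \tau_i$. When $m$ is odd, we multiply both $\alpha$ and $\beta$ by the same workspace transposition $(d,e)$ with $d,e \notin \bigcup_i \{a_i,b_i,c_i\}$; this fixes parity, preserves the identity $(\alpha\beta)^2 = \prod_i \tau_i$, and keeps both factors in $\Alt(N)$. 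Hence any product of pairwise disjoint $3$-cycles is a single commutator.

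The final step is to write $\sigma = \pi_1 \pi_2 \pi_3$ with each $\pi_j$ a product of pairwise disjoint $3$-cycles. For each cycle in $\sigma$ I would use the standard rewriting $(x_1,\dots,x_{2k+1}) = (x_1,x_2,x_3)(x_3,x_4,x_5) \cdots (x_{2k-1},x_{2k},x_{2k+1})$ (and an analogous rewriting for pairs of even-length cycles, which always come in pairs since $\sigma \in \Alt(N)$), and then assemble the resulting $3$-cycles into three ordered ``phases'' so that within each phase the supports are disjoint. Because each $3$-cycle in this decomposition shares at most one index with its immediate neighbors along a cycle, a greedy scheme of depositing every third $3$-cycle into the same phase makes the supports within each phase pairwise disjoint, and the remaining conjugations can be absorbed using workspace indices from $\{1,\dots,N\}$ not used by the current phase.

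The main obstacle is the regrouping in the last step: reordering non-commuting $3$-cycles changes the product, so the three phases must respect the sequential order in which they are multiplied, and for pathological $\sigma$ whose support covers all of $\{1,\dots,N\}$ no workspace indices are immediately available. In that boundary case I would spend one of the three commutators on a ``clearing'' move, multiplying $\sigma$ by a single $3$-cycle (itself a commutator by Step~1) chosen to free at least five indices, and then apply Steps~2--3 to the modified element within the effective workspace. Throughout, the hypothesis $N \ge 5$ is essential: it guarantees both the existence of the workspace quintuple in Step~1 and the parity-fixing transposition in Step~2. An alternative shortcut is to invoke Ore's theorem for alternating groups, which gives the much stronger conclusion that every element of $\Alt(N)$ with $N \ge 5$ is a single commutator, but the direct combinatorial argument above keeps the proof self-contained.
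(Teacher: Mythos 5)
Your Steps~1 and~2 are correct and cleanly done: the identity $(a,b,c) = [(a,b)(d,e),\,(a,c)(d,e)]$ works for $N\ge5$, and the argument that a product of pairwise disjoint $3$-cycles is a single commutator (with the parity-fixing transposition when $m$ is odd, provided two spare indices exist outside the union of supports) is also valid. These are genuinely different building blocks from what the paper uses. The paper never touches $3$-cycles; instead it argues that any permutation consisting of disjoint odd-length cycles of total length $\le N-2$ is conjugate to its own inverse by an \emph{even} permutation, so that its square is a commutator, and since odd-length cycles have square roots the permutation itself is a commutator. It then splits an arbitrary $\pi\in\Alt(N)$ into an odd-cycle part and an even-cycle part and handles the even cycles via the rewriting $(1,\ldots,k)(k{+}1,\ldots,m)=(1,\ldots,k,k{+}1)(k,k{+}1,\ldots,m)$, yielding the $3$-commutator bound by short casework. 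That route avoids having to rearrange overlapping $3$-cycles entirely.

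The gap in your proposal is Step~3, and it is genuine: the claim that ``a greedy scheme of depositing every third $3$-cycle into the same phase'' recovers the product does not hold. In the decomposition $(x_1,\ldots,x_{2k+1})=(x_1,x_2,x_3)(x_3,x_4,x_5)\cdots$, consecutive $3$-cycles share an index and do not commute, so you cannot reorder them into phases without changing the product. You acknowledge this and propose to ``absorb the remaining conjugations using workspace indices,'' but this is not a proof, and the required conjugations in fact propagate overlaps rather than resolve them. For a concrete example, take $(1,\ldots,9)=\tau_1\tau_2\tau_3\tau_4$ with $\tau_i$ the four overlapping $3$-cycles; to form Phase~1 $=\{\tau_1,\tau_4\}$ one must commute $\tau_4$ past $\tau_2\tau_3$, which replaces $\tau_4=(7,8,9)$ by $(\tau_2\tau_3)\tau_4(\tau_2\tau_3)^{-1}=(3,8,9)$, whose support now collides with $\tau_1=(1,2,3)$. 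A different phase assignment can be made to work for this particular example, but that requires a case-specific choice, and you have not shown that such a choice exists uniformly for all cycle structures. Compounding this, your ``clearing move'' for the case where $\sigma$ has full support spends one of the three commutators before Step~3 has even started, and Step~3 has not been shown to finish in two. To make this route rigorous you would need a theorem of the form ``every $\sigma\in\Alt(N)$ is a product of three permutations, each a union of pairwise disjoint $3$-cycles, with at least two unused indices available whenever a phase has an odd number of $3$-cycles''; that statement is plausible but is exactly the hard part, and it is not established by the greedy scheme as written.

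Finally, the shortcut you mention at the end, invoking Ore's theorem that every element of $\Alt(N)$ for $N\ge5$ is a single commutator, would of course suffice, but it is a much deeper result than what the lemma needs, and the paper deliberately gives an elementary proof of the weaker $3$-commutator bound instead.
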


We use the standard cycle notation for permutations. 
For example, $(1,2,3)$ means $1 \mapsto 2 \mapsto 3 \mapsto 1$.
The product of permutation follows the rule of function composition.
So, $(1,2)(2,3) = (1,2,3)$.
By convention, every cycle has length $\ge 2$ and we do not call the identity permutation a cycle.

\begin{proof}
    We begin with the case where $\pi\in\Alt(N)$ is a product of disjoint odd-length cycles whose lengths' sum is $\le N - 2$.
    Recall that the conjugacy class of $\Sym(N)$ is determined by the \emph{cycle shape} of $\pi$.
    (The cycle shape is the lengths of cycles in its disjoint cycle form.)
    Conjugation is nothing but relabeling the ``letters'' that $\Sym(N)$ permutes.
    Hence, there exists $b \in \Sym(N)$ such that $b \pi b^{-1} = \pi^{-1}$.
    Since $\pi$ has total length $\le N-2$, we can always ensure $b$ is even.
    Then, $b \pi b^{-1} \pi^{-1} = \pi^{-2}$ is a commutator of~$\Alt(N)$.
    Since a square root exists for any odd-length cycle, 
    we conclude that $\pi$ is a commutator.
    
    An arbitrary member~$\pi \in \Alt(N)$ is a product $\pi_1 \pi_2 = \pi_2 \pi_1$ 
    where $\pi_1$ consists of disjoint odd-length cycles 
    and $\pi_2$ of disjoint even-length cycles. 

    (i) Suppose that $\pi_2$ is nonidentity.
    Then the length requirement ``$\le N-2$'' is satisfied for~$\pi_1$,
    and hence $\pi_1$ is a commutator.
    A product $(1,2,\ldots,k)(k+1,k+2,\ldots,m)$ of two disjoint even-length cycles (both $k,m$ are even),
    is equal to $(1,2,\ldots,k,k+1)(k,k+1,\ldots,m-1,m)$, which is a product of two odd-length cycles.
    Generally, $\pi_2$ is a product of $2$ nonidentity permutations, 
    each of which consists of disjoint odd-length cycles.
    These odd-length cycles satisfy the length requirement ``$\le N -2$.''
    Therefore, $\pi$ is a product of $3$ commutators.

    (ii) Suppose that $\pi_2$ is identity and $\pi_1$ is not a cycle.
    Then, $\pi = \pi_1$ must be a product of two nonidentity permutations,
    each consisting of disjoint odd-length cycles and satisfies the length requirement.
    Therefore, $\pi_1$ is a product of $2$ commutators.
    
    (iii) Suppose $\pi_2$ is identity and $\pi_1$ is a cycle, which must be odd-length.
    If its length is $\le N -2$, then $\pi = \pi_1$ is a commutator.
    If not, we can break it into $2$ or $3$ odd-length cycles (not disjoint),
    each of which has length $\le N-2$,
    so $\pi$ is a product of $3$ commutators.
\end{proof}

Next, we also show that $\Alt(N)$ can be generated by conjugating a smaller alternating group with the full group.

\begin{lemma}[Conjugating a smaller alternating group]\label{lem:fourpqpinv}
    Let $A$ and $B$ be disjoint finite sets such that $\abs{B} \ge \abs{A} + 6 \ge 8$.
    Every element of~$\Alt(A\cup B)$ is a product of $4$ elements, 
    each of which takes the form $PQP^{-1}$ for some $P \in \Alt(A \cup B)$ and $Q \in \Alt(B)$.
\end{lemma}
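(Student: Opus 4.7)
The plan is to reduce to the case where $g$ preserves $A$ setwise by peeling off a small-support permutation that moves $A' := g^{-1}(A)$ onto $A$. So, I would first construct $\tau_1 \in \Alt(A \cup B)$ that maps $A'$ onto $A$ as a set and is supported on a controllably small subset of $A \cup B$. Choose any bijection $A' \setminus A \to A \setminus A'$ and let $\tau_1$ be the product of the corresponding $k := |A \setminus A'|$ disjoint transpositions; if $k$ is odd, include one extra transposition inside $B \setminus A'$ (which has at least $|B| - |A| \ge 6$ elements) to enforce $\tau_1 \in \Alt(A \cup B)$. Then $h := g \tau_1^{-1}$ lies in $\Alt(A \cup B)$ and satisfies $h(A) = A$ set-wise, so it suffices to write each of $\tau_1$ and $h$ as a product of $2$ conjugates of elements of $\Alt(B)$.

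For $\tau_1$, I would split its $r \le |A| + 1$ disjoint transpositions into two groups of sizes $a$ and $b = r - a$ with both even (picking $a \in \{r/2, r/2 \pm 1\}$ as the parity of $r/2$ dictates), giving $\tau_1 = \mu_1 \mu_2$ with each $\mu_i$ an even permutation supported on a set $C_i \subseteq A \cup B$ of size at most $r + 2 \le |A| + 3 \le |B| - 3$. Extend each $C_i$ to a size-$|B|$ subset $B_i \subseteq A \cup B$, choose $P_i \in \Sym(A\cup B)$ with $P_i(B) = B_i$, and adjust $P_i$ by composing with a transposition inside $B$ (which does not change $P_i(B)$) to force $P_i \in \Alt(A \cup B)$. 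Setting $Q_i := P_i^{-1} \mu_i P_i \in \Alt(B)$ then expresses $\mu_i = P_i Q_i P_i^{-1}$ in the desired form.

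For $h$, I would exploit its block-diagonal structure. Since $h(A) = A$ as a set, write $h = \bar h_A \cdot \bar h_B$, where $\bar h_A, \bar h_B$ extend the block restrictions of $h$ to $A$ and $B$ by the identity on the other part; evenness of $h$ forces $\bar h_A$ and $\bar h_B$ to have matching parities. If both are even, then $\bar h_B \in \Alt(B)$ directly, and $\bar h_A$ is supported on $A$, which embeds into any size-$|B|$ subset $B_0 \supseteq A$ (possible since $|A| \le |B|$), so each is a single conjugate of $\Alt(B)$. If both are odd, pick any transposition $t = (b_1, b_2)$ with $b_1, b_2 \in B$ and rewrite $h = (\bar h_A t)(t \bar h_B)$: the first factor is even and supported on $A \cup \{b_1, b_2\}$ of size $|A|+2 \le |B|$, while $t \bar h_B \in \Alt(B)$. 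Either way, $h$ is a product of $2$ conjugates, so $g = h \tau_1$ is a product of at most $4$, as claimed.

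The main obstacle is the parity bookkeeping: simultaneously ensuring $\tau_1 \in \Alt$, that each $P_i$ can be chosen in $\Alt(A \cup B)$, that the split of $\tau_1$ uses only even-size groups of transpositions, and that the two block restrictions of $h$ can be balanced against each other. The hypothesis $|B| \ge |A| + 6 \ge 8$ is exactly what is needed for this slack: the six-element reserve in $B \setminus A'$ provides room for dummy transpositions to correct parity, and together with $|B| \ge |A|+3$ guarantees that every support encountered in the construction fits inside a size-$|B|$ block of $A \cup B$.
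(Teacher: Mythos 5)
Your proposal is correct, but it takes a genuinely different route from the paper's proof. The paper decomposes an arbitrary $\pi \in \Alt(A\cup B)$ directly via its disjoint cycle decomposition: sort the cycles by length, let $q'_3$ be the longest cycle, let $q_1, q_2$ be the products of the odd- and even-indexed remaining cycles, and split $q'_3$ into two shorter cycles $q_3, q_4$ sharing one letter. The sortedness gives $\mathrm{len}(q_i) \le \tfrac{1}{2}\abs{A\cup B} + 1 \le \abs{B} - 2$, after which parity is repaired by multiplying adjacent (or commuting) factors by transpositions, pushing the support to at most $\abs{B}$. Your proof instead makes a \emph{stabilizer reduction}: peel off a small correction $\tau_1$ (a product of $\le \abs{A}+1$ transpositions moving $A' = g^{-1}(A)$ onto $A$), so that $h = g\tau_1^{-1}$ fixes $A$ setwise and factors into two commuting blocks $\bar h_A, \bar h_B$; then $\tau_1$ is split into two even halves of transpositions, and the block factorization of $h$ is balanced with a shared transposition $t$ if both blocks are odd. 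Each of the four resulting factors has support at most $\abs{B}$ and is even, and the conjugating permutations are fixed to be even by an extra disjoint transposition. Both approaches burn the same $\abs{B} \ge \abs{A}+6$ slack for the support/parity bookkeeping and arrive at the bound of $4$; yours is arguably more structural (reducing to the setwise stabilizer of $A$), the paper's is more uniformly combinatorial and handles all elements on the same footing via the sorted cycle lengths.
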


The upper bound $4$ depends on the size of $B$ relative to $A$.
A similar statement is true if $\abs B$ is at least a constant fraction of $\abs A$ in which case the bound will be larger than $4$.
For an extreme case, if $\abs{B} = 3 \le \abs{A} - 2$,
then an element of form $PQP^{-1}$ can be any cycle of length~$3$,
and $\Alt(A \cup B)$ is generated by $3$-cycles,
but we need $\Theta(\abs{A})$ $3$-cycles to express a cycle of length~$\abs{A}$.

\begin{proof}[Proof of~\cref{lem:fourpqpinv}]
    A nontrivial element of~$\Alt(A \cup B)$
    is a product of disjoint cycles of lengths $\ell_1 \le \ell_2 \le \cdots \le \ell_m$,
    where $1 \le m \le \frac 1 2 \abs{A \cup B}$ and $2 \le \ell_1 \le \ell_m \le \abs{A \cup B}$.
    A permutation $(1,2)(3,4,5)(7,8,9,10)(11,12,13,14,15) \in \Alt(15)$, for example,
    has $m = 4$ and the lengths are $2,3,4,5$.
    Let $q'_3$ be the last cycle of length $\ell_m$,
    and let $q_1$ and $q_2$ be the products of all the odd- and even-indexed cycles except the last, respectively.
    In the example, $q_1 = (1,2)(7,8,9,10)$, $q_2=(3,4,5)$, and $q'_3 = (11,12,13,14,15)$.
    The product $q_1 q_2 q'_3$ is the original.
    The sum of the lengths of the cycles in $q_1$ is at most $\frac 1 2 \abs{A \cup B}$,
    and so is the sum of those in~$q_2$.
    By assumption, these sums are each at most $\abs{B} - 3$.
    For the last cycle, we write it as a product of two cycles $q_3$ and $q_4$ whose lengths differ by at most~$1$.
    In the example, $q_3 = (11,12,13)$ and $q_4 = (13,14,15)$.
    Then, both $q_3$ and $q_4$ have length $ \le \frac 1 2 \abs{A \cup B} +1 \le \abs{B}-2$.
    Zero, two, or four of $q_1,q_2,q_3,q_4$ may be odd permutations,
    in which case we multiply them by transpositions so that 
    each of them is an even permutation of total length at most $\abs{B}$,
    while the product of the four remains unchanged.
    Therefore, for $i = 1,2,3,4$, 
    there exists $P_i \in \Sym(A\cup B)$ and $Q_i \in \Alt(B)$ such that $P_i Q_i P_i^{-1} = q_i$.
    Since $\abs{A} \ge 2$, we may multiply $P_i$ by a transposition disjoint from $Q_i$ to ensure $P_i \in \Alt(A\cup B)$
    while not changing $P_i Q_i P_i^{-1}$.
\end{proof}

Finally, here is a way to combine two subgroups to generate a larger group.

\begin{lemma}[Two subgroups]\label{lem:GeneratingLargerSymmetricGroupFromSmallerOnes}
    Let $A,B,C$ be disjoint finite sets such that $\abs{B} \ge \abs{C}$.
    The permutation groups~$\Sym(A \cup B)$ and~$\Sym(B \cup C)$ are subgroups of~$\Sym(A \cup B \cup C)$.
    Then, every element~$p \in \Sym(A \cup B \cup C)$ is a product $p = \ell \cdot r \cdot \ell'$
    for some $\ell,\ell \in \Sym(A \cup B)$ and $r \in \Sym(B \cup C)$.
    Similarly, if $\abs{B} \ge 2$, every element of~$\Alt(A \cup B \cup C)$ 
    is a product of three elements from~$\Alt(A \cup B) \cup \Alt(B \cup C)$.
\end{lemma}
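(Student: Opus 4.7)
My plan is to prove the first (symmetric-group) claim by a direct counting/bijection argument, and then upgrade to the alternating group by absorbing parities into a transposition supported on $B$.

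For the symmetric-group claim, I want to produce $\ell,\ell' \in \Sym(A\cup B)$ (viewed as fixing $C$ pointwise) so that $r := \ell^{-1} p\, \ell'^{-1}$ fixes $A$ pointwise and therefore lies in $\Sym(B\cup C)$. Note that $r$ fixes $A$ pointwise iff for every $a\in A$ there is some $x_a\in A\cup B$ with $\ell'^{-1}(a)=x_a$ and $p(x_a)=\ell(a)\in A\cup B$. So the task reduces to finding an $|A|$-element set $X\subseteq A\cup B$ with $p(X)\subseteq A\cup B$, then choosing $\ell'$ to biject $X$ with $A$ and $\ell$ to biject $A$ with $p(X)$ in the matching way (extending arbitrarily on $B$).

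Such an $X$ exists by a short inclusion–exclusion: $|p^{-1}(A\cup B)\cap (A\cup B)| \ge |A\cup B| + |A\cup B| - |A\cup B\cup C| = |A|+|B|-|C| \ge |A|$, where the last inequality uses the hypothesis $|B|\ge |C|$. I pick $X$ inside this intersection; then $p(X)\subseteq A\cup B$ automatically, and both $\ell'$ (sending $X$ to $A$) and $\ell$ (sending $A$ to $p(X)$ by $a\mapsto p(\ell'^{-1}(a))$) can be completed to permutations of $A\cup B$ by an arbitrary bijection on the complementary sets of size $|B|$. By construction $p = \ell\, r\, \ell'$.

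For the alternating-group claim, the only issue is fixing parities. Assuming $|B|\ge 2$, fix a transposition $\tau$ supported on two elements of $B$; then $\tau\in \Sym(A\cup B)\cap \Sym(B\cup C)$, and the substitutions $(\ell,r)\mapsto (\ell\tau,\tau r)$ or $(r,\ell')\mapsto (r\tau,\tau\ell')$ preserve the product $\ell r\ell'$ (since $\tau^2 = \one$) but flip the parities of the two modified factors. Because $p\in\Alt(A\cup B\cup C)$, the number of odd factors among $\ell,r,\ell'$ is even. A brief case check (0 odd, or any pair of odd factors) shows that one or two applications of these swaps turn all three factors even, placing $\ell,\ell'\in\Alt(A\cup B)$ and $r\in\Alt(B\cup C)$.

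I do not anticipate a genuine obstacle; the only delicate step is the counting inequality establishing existence of $X$, which is exactly where the hypothesis $|B|\ge |C|$ is used, and verifying that the parity-swap trick covers all cases when transferring from $\Sym$ to $\Alt$.
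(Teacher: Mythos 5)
Your proof is correct. The argument differs slightly in flavor from the paper's: the paper constructs $\ell$ so that $\ell^{-1}p$ sends $C$ into $B\cup C$, then constructs $r \in \Sym(B\cup C)$ to fix $C$ pointwise, and reads off $\ell' = r^{-1}\ell^{-1}p$ as the remainder; you instead use inclusion--exclusion to locate an $|A|$-element set $X \subseteq A\cup B$ with $p(X)\subseteq A\cup B$, build $\ell,\ell'$ simultaneously from $X$, and read off $r = \ell^{-1}p\ell'^{-1}$ as the remainder. Both hinge on the same pigeonhole fact (the hypothesis $|B|\ge|C|$ guarantees room), so the approaches are morally dual --- the paper tames $C$, you tame $A$ --- and each is equally elementary. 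Your inclusion--exclusion bound $|p^{-1}(A\cup B)\cap(A\cup B)|\ge |A|+|B|-|C|\ge |A|$ is correct, and the completion of $\ell,\ell'$ to permutations of $A\cup B$ works because the complementary sets all have size $|B|$. For the alternating-group upgrade, the parity-swap with a transposition $\tau$ on two points of $B$ is exactly the paper's idea, and your explicit case check (0 or 2 odd factors, with at most two applications of the swap) is a more careful spelling-out of what the paper compresses into ``we can multiply $\ell,r$ by a transposition of $\Sym(B)$ if needed''; in the paper's version, once $\ell$ and $r$ are forced even, $\ell' = r^{-1}\ell^{-1}p$ is automatically even since $p$ is. Either way the conclusion matches the lemma statement.
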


For example, if $0 < \abs A < \abs B \le 3\abs A/2$,
there exist two embeddings of~$\Sym(A)$ into~$\Sym(B)$ 
such that every element of~$\Sym(B)$ 
is a product of at most three elements of the union of the two embedded subgroups.

An argument for a similar purpose is found in~\cite[\S5]{Kassabov_2007_alt},
but \Cref{lem:GeneratingLargerSymmetricGroupFromSmallerOnes}
is our own alternative.

\begin{proof}
    Since $\abs{C} \le \abs{B}$,
    there exists a permutation~$\ell^{-1} \in \Sym(A \cup B)$ 
    such that the images~$(\ell^{-1} \cdot p)(c)$ for all~$c \in C$ land in~$B \cup C$.
    Then, there exists a permutation~$r^{-1} \in \Sym(B \cup C)$ such that 
    $(r^{-1} \cdot \ell^{-1} \cdot p)(c) = c$ for all~$c \in C$.
    Hence, the permutation $r^{-1} \cdot \ell^{-1} \cdot p$ is equal to some~$\ell' \in \Sym(A \cup B)$.

    The proof for alternating groups is similar.
    The only modification is to make sure that $\ell,r$ are even permutations.
    Since $\abs{B} \ge 2$, we can multiply $\ell,r$ by a transposition of~$\Sym(B)$ if needed.
\end{proof}

\subsubsection{From $(2^{3s}-1)^6$ to $2^{18s}$}

Observe that $\Alt(\bKK_s) = \Alt(\bKK(1^6))$ can be embedded in $2^6$ ways into $\Alt(\FF_2^{18s})$
by toggling any of six registers (each of $3s$ bits)
by flipping $3s$ bits (X gates) there.
More formally,
for each $u = (u_1,\ldots,u_6) \in \FF_2^6$, 
we define a bit flip operator $X(u)$ acting on all $3s$ bits in each register $i$ if $u_i = 1$:
\begin{align}
    X(u) := (X^{\otimes 3s})^{ u_1}\otimes \cdots \otimes(X^{\otimes 3s})^{ u_6}.
\end{align}
For example, $X(010100)$ is an operator flipping $6s$ bits in the second and fourth registers
and $X(111111)$ flips all $18s$ bits.
If we conjugate the circuits for $S'$ of \cref{prop:KassabovGenCleanedUp} by $X(u)$,
then we obtain a group~$\cP'(u) = X(u) \cP' X(u) \subset \Alt(\FF_2^{18s})$ where $\cP' = \Alt(\bKK(1^6))$.

\begin{proposition}\label{prop:AltTwoToTheEighteenEssByCircuit}
    For $s \ge 15$,
    every element of $\Alt(\FF_2^{18s})$ is a product of $3^6$ elements of $\bigcup_{u \in \FF_2^6} \cP'(u)$.
\end{proposition}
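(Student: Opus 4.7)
The plan is to prove the proposition by induction on the dimension, where the case of $6$ coordinates is the stated claim. The main tool is \cref{lem:GeneratingLargerSymmetricGroupFromSmallerOnes}, applied once per coordinate to reduce a $d$-dimensional problem to a $(d-1)$-dimensional one; to keep the recursion self-similar, I will thread a ``spectator'' finite factor $T$ through the induction and prove the following strengthened claim by induction on~$d$: for every $d\geq 1$ and every finite set $T$ with $|T|\geq 1$, every element of $\Alt((\FF_2^{3s})^d \times T)$ is a product of $3^d$ elements of $\bigcup_{u\in \FF_2^d} \Alt(A_u^{(d)}\times T)$, where $A_u^{(d)} \deq X(u)\,\bKK(1)^d$ is the set of tuples $(y_1,\dots,y_d)$ with $y_i\neq 0^{3s}$ when $u_i=0$ and $y_i\neq 1^{3s}$ when $u_i=1$.

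For the base case $d=1$, write $\FF_2^{3s} = \{0^{3s}\}\sqcup R \sqcup \{1^{3s}\}$ with $R \deq \FF_2^{3s}\setminus\{0^{3s},1^{3s}\}$, and apply \cref{lem:GeneratingLargerSymmetricGroupFromSmallerOnes} to $\Alt(\FF_2^{3s}\times T)$ with $A=\{1^{3s}\}\times T$, $B=R\times T$, $C=\{0^{3s}\}\times T$; since $|R|=2^{3s}-2\geq 2$ the hypotheses $|B|\geq|C|$ and $|B|\geq 2$ both hold, so every element is a product of $3$ elements from $\Alt(A_0^{(1)}\times T)\cup \Alt(A_1^{(1)}\times T)$. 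For the inductive step, given $\pi\in \Alt((\FF_2^{3s})^{d+1}\times T)$, split the last coordinate the same way and apply the lemma again, now with $A,B,C$ replaced by $(\FF_2^{3s})^d\times\{1^{3s}\}\times T$, $(\FF_2^{3s})^d\times R\times T$, $(\FF_2^{3s})^d\times\{0^{3s}\}\times T$, writing $\pi$ as a product of $3$ elements from $\Alt((\FF_2^{3s})^d\times V'\times T)\cup \Alt((\FF_2^{3s})^d\times (V\setminus\{1^{3s}\})\times T)$, where $V\deq\FF_2^{3s}$ and $V'\deq V\setminus\{0^{3s}\}$. Each factor is of the form $\Alt((\FF_2^{3s})^d\times T')$ with $T'$ a nonempty spectator, so the induction hypothesis expresses it as a product of $3^d$ elements from $\bigcup_{u'\in\FF_2^d}\Alt(A_{u'}^{(d)}\times T')$. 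Since $A_{u'}^{(d)}\times V' = A_{(u',0)}^{(d+1)}$ and $A_{u'}^{(d)}\times (V\setminus\{1^{3s}\}) = A_{(u',1)}^{(d+1)}$, these exhaust $\bigcup_{u\in\FF_2^{d+1}}\Alt(A_u^{(d+1)}\times T)$, giving total product length $3\cdot 3^d = 3^{d+1}$. Taking $d=6$ and $T$ a singleton, and identifying $\cP'(u)=\Alt(A_u^{(6)})$ (with trivial action on the complement, as established in \cref{sec:kas_exterior}), yields the proposition.

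The main obstacle is identifying the right form of the inductive claim: a naive induction on $\Alt((\FF_2^{3s})^d)$ fails because one application of \cref{lem:GeneratingLargerSymmetricGroupFromSmallerOnes} produces alternating groups on sets like $(\FF_2^{3s})^{d-1}\times V'$, whose last factor is no longer of the ``pure'' form $\FF_2^{3s}$. Threading a spectator $T$ absorbs this irregularity at each step and lets the peeling argument be reapplied. The remaining ingredients are bookkeeping: verifying $|R|\geq 2$ (immediate from $s\geq 15$, though only $s\geq 1$ is strictly needed at this step) and matching the boxes $A_{u'}^{(d)}\times T'$ with $A_u^{(d+1)}\times T$ across iterations.
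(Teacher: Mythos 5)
Your proof is correct and follows essentially the same approach as the paper's: peel off one coordinate at a time, applying \cref{lem:GeneratingLargerSymmetricGroupFromSmallerOnes} once per coordinate, accumulating the factor of $3^6$. The paper works out only the one- and two-dimensional cases explicitly and asserts the six-dimensional pattern; your explicit induction with the spectator set $T$ is a cleaner way to make the recursion rigorous (the paper achieves the same effect implicitly by restricting to alternating groups on slabs such as $\Alt(ABC)$ with $D$ ignored), but the key lemma, the coordinate-by-coordinate decomposition, and the resulting count are all the same.
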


\begin{proof}
    As the six dimensions might obscure the idea,
    we work out one- and two-dimensional analogs first,
    after which the proof will be clear. 
    Consider $A \sqcup B \sqcup C = \FF_2^{3s}$ where
    \begin{align}
            A &= \bKK(0) = \{0^{3s}\}\nonumber\\
            B &= \bKK(1) \cap (\bKK(1) + 1^{3s}) = \FF_2^{3s} \setminus \{0^{3s},1^{3s}\} \\
            C &= A + 1^{3s} = \{1^{3s}\}.\nonumber
    \end{align}
        Observe that $B + 1^{3s} = B$.
        We have $\cP = \cP(0) = \Alt(B \cup C)$
        and $\cP(1) = \Alt(A \cup B)$.
        \cref{lem:GeneratingLargerSymmetricGroupFromSmallerOnes}
        says that every element of $\Alt(A \cup B \cup C)$ is a product of $3$ elements of $\Alt(A \cup B) \cup \Alt(B \cup C)$.
    
    In the two-dimensional case, 
    we have $\bKK(00), \bKK(01), \bKK(10), \bKK(11)$ that are each preserved by $\cP(00)$.
    Consider $(\FF_2^{3s})^2 = A \sqcup B \sqcup C \sqcup D$ where
    \begin{align}
            A &= \bKK(10), \nonumber \\
        B &= \bKK(11) \cap (\bKK(11) + 0^{3s} 1^{3s}),\\
        C &= A + 0^{3s}1^{3s},\nonumber\\
        D &= \bKK(00) \cup \bKK(01). \nonumber
    \end{align}
    See \cref{fig:alt-merger}.
    It is routine to check that $\cP(01) = \Alt(AB)$ and preserves each of $C$ and $D$.
    (We omitted $\cup$ to write $AB$ instead of $A \cup B$.)
    The union $\cP(00) \cup \cP(01)$ generates a group $\cP'(0) = \Alt(ABC)$.
    Applying \cref{lem:GeneratingLargerSymmetricGroupFromSmallerOnes} while ignoring~$D$,
    we see that every element of $\Alt(ABC) = \Alt(\bKK(10) \cup \bKK(11))$
    is a product of $3$ elements of $\cP(00) \cup \cP(01)$.
    By a parallel argument, we see that 
    $\langle \cP(10) \cup \cP(11) \rangle = \Alt(ABC + 1^{3s}0^{3s})$,
    and every element of the latter is a product of $3$ elements of $\cP(10) \cup \cP(11)$.
    Now, we are back to the one-dimensional case with $\cP'(0)$ and $\cP'(1)$.
    We conclude that every element of $\Alt(ABCD)$ is a product of $3^2$ elements 
    of $\cP(00) \cup \cP(10) \cup \cP(01) \cup \cP(11)$.

    To get to the six-dimensional case, 
    we proceed inductively by including one $\bKK(0)=\{0^{3s}\}$ at a time, 
    proceeding from the right most register.
    Schematically,
\begin{align}
    \bKK(1)^6 
        &\to \bKK(1)^5 \times (\bKK(1)\sqcup \bKK(0)) &&=  \bKK(1)^5 \times \FF_2^{3s} \nonumber\\
        &\to \bKK(1)^4 \times (\bKK(1)\sqcup \bKK(0)) \times \FF_2^{3s} &&= \bKK(1)^4\times \FF_2^{6s}\\
        &\cdots \nonumber\\
        &\to (\bKK(1)\sqcup \bKK(0))\times \FF_2^{15s} &&= \FF_2^{18s}.\nonumber
\end{align}
    Since we invoke \cref{lem:GeneratingLargerSymmetricGroupFromSmallerOnes} six times recursively,
    every element of~$\Alt(\FF_2^{18s})$ is a product of $3^6 = \cO(1)$ elements 
    of $\bigcup_{u \in \FF_2^6} X(u) \cP X(u)$.
\end{proof}

\subsubsection{From $2^{18s}$ to $2^n$}

Now, we show how to relax the number of qubits to be a nonmultiple of 18.
In particular, we show how to generate $\Alt(\FF_2^n)$ from two alternating subgroups that act on $n-1$ bits, 
leaving the remaining $1$ bit intact.
To this end, we need to clarify three embeddings of an abstract group $\Alt(2^{n-1})$ 
into a concrete group $\Alt(\FF_2^n)$.
The first is by an obvious bijection $\FF_2^{n-1} \to \FF_2^{n-1} \times \{0\}$
giving a group isomorphism $\phi_0 : \Alt(2^{n-1}) \xrightarrow{~\cong~} \Alt(\FF_2^{n-1} \times \{0\})$.
Of course, we also have the second embedding $\phi_1 : \Alt(2^{n-1}) \xrightarrow{~\cong~} \Alt(\FF_2^{n-1} \times \{1\})$.
The third way is to let $\Alt(2^{n-1})$ act on $n$-bit strings such that it leaves the last bit intact.
This third way amounts to the ``diagonal'' subgroup 
\begin{align}
 \Alt(\FF_2^{n-1}) \otimes \one 
 = \big\{ \phi_0(x) \phi_1(x) \in \Alt(\FF_2^{n-1} \times \{0\}) \Alt(\FF_2^{n-1} \times \{1\}) ~\big|~ x \in \Alt(2^{n-1}) \big\}
\end{align}
where we have used the tensor product symbol\footnote{This is just a notation; we do not define a tensor operation for groups.}
since we regard $\FF_2^n$ as a basis of $(\CC^2)^{\otimes n}$.
Similarly, we also think of an embedding of $\Alt(2^{n-1})$ which leaves the first bit intact:
$\one \otimes \Alt(\FF_2^{n-1})$.
In a similar manner, we will also consider $\one \otimes \Alt(\FF_2^{n-1}) \otimes \one \subset \Alt(\FF_2^{n+1})$.

\begin{proposition}\label{prop:increasingNinAltTwoToTheN}
    For $n \ge 5$,
    every element of $\Alt(\FF_2^{n+1})$ 
    is a product of $2592$ elements of $(\Alt(\FF_2^{n}) \otimes \one) \cup (\one \otimes \Alt(\FF_2^{n}) )$.
\end{proposition}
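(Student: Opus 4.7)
Write $A_0 = \{w \in \FF_2^{n+1} : w_{n+1} = 0\}$ and $A_1 = \{w \in \FF_2^{n+1} : w_{n+1} = 1\}$, each of size $2^n$. Under the bijection $A_0 \leftrightarrow A_1$ flipping the last bit, $G_1 = \Alt(\FF_2^n) \otimes \one$ is precisely the diagonal copy of $\Alt(\FF_2^n)$ inside $\Alt(A_0) \times \Alt(A_1)$: each element acts by the same $\sigma$ on both halves. By contrast, $G_2 = \one \otimes \Alt(\FF_2^n)$ preserves the partition of $\FF_2^{n+1}$ by the \emph{first} bit and is free to swap points of $A_0$ with points of $A_1$. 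The overall plan is to use commutators of $G_1$ with $G_2$ to ``break the diagonal'' of $G_1$, thereby accessing the one-sided subgroups $\Alt(A_0) \otimes \one$ and $\one \otimes \Alt(A_1)$ as short words in $G_1 \cup G_2$; to assemble $\Alt(A_0) \times \Alt(A_1)$ by a short product of the two; and finally to extend to $\Alt(\FF_2^{n+1})$ via $G_2$-shifts together with \cref{lem:GeneratingLargerSymmetricGroupFromSmallerOnes}.

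The technical heart is to realize $\Alt(A_0) \otimes \one$ as bounded-length products. Since by \cref{lem:threecommutators} every element of $\Alt(A_0)$ is a product of three commutators taken in $\Alt(A_0)$, and each commutator is a word of length four, it suffices to realize an individual commutator of $3$-cycles on $A_0$ as a bounded word in $G_1 \cup G_2$. For any triple $\{(a,0),(b,0),(c,0)\} \subset A_0$ with its ``shadow'' $\{(a,1),(b,1),(c,1)\} \subset A_1$, pick $g_1 = \sigma \otimes \one$ with $\sigma = (a,b,c)$ the corresponding $3$-cycle on $\FF_2^n$; by construction this acts as the desired $3$-cycle on the target triple \emph{and} on its shadow. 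Now engineer $g_2 = \one \otimes \tau \in G_2$ so that $g_2$ fixes the shadow pointwise and maps the target triple into a \emph{disjoint} triple in $A_0$. A direct trace shows that $[g_1, g_2]$ then has support confined to six points of $A_0$ and equals a product of two disjoint $3$-cycles there; composing with one more commutator of the same form cancels the unwanted ``error'' $3$-cycle and yields a pure $3$-cycle commutator on $A_0$ -- the building block needed by \cref{lem:threecommutators}. A symmetric argument handles $\one \otimes \Alt(A_1)$.

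With both one-sided subgroups realized as short words in $G_1 \cup G_2$, the full product $\Alt(A_0) \times \Alt(A_1)$ is a short product of two such words. To extend to all of $\Alt(\FF_2^{n+1})$, which must be free to mix $A_0$ and $A_1$, we invoke \cref{lem:GeneratingLargerSymmetricGroupFromSmallerOnes}: partition $\FF_2^{n+1} = X \sqcup Y \sqcup Z$ with $X, Z \subset A_0$ of small equal size and $Y$ comprising the remainder, so that every element of $\Alt(\FF_2^{n+1}) = \Alt(X \cup Y \cup Z)$ is a product of three elements from $\Alt(X \cup Y) \cup \Alt(Y \cup Z)$. Each of $\Alt(X \cup Y)$ and $\Alt(Y \cup Z)$ is a $G_2$-conjugate (by a bounded-length element of $G_2$ that shifts the $A_0$/$A_1$ split across a few bits) of a subgroup of the form $\Alt(A_0) \times \Alt(A_1)$, and so is itself a bounded-length product in $G_1 \cup G_2$. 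Multiplying the various constants -- three from \cref{lem:GeneratingLargerSymmetricGroupFromSmallerOnes}, three commutators from \cref{lem:threecommutators} per one-sided subgroup, four factors per commutator, and several more factors for the $G_2$-conjugations, the error-cycle cancellation, and parity corrections -- yields the stated bound $2592$.

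The main obstacle is the commutator construction above: the permutation $\tau \in \Alt(\FF_2^n)$ must simultaneously fix three specified shadow points pointwise, map three specified source points onto a disjoint triple with matching first bits, preserve the last-bit structure needed so that the commutator's support lands entirely in $A_0$, and be even. The hypothesis $n \ge 5$ is used precisely to ensure that $|\FF_2^n| = 2^n \ge 32$ is large enough to accommodate all these pointwise constraints together with an extra transposition for the parity correction. The remaining steps -- the combinatorial partition, the $G_2$-conjugation of $\Alt(A_0) \times \Alt(A_1)$, and the final application of \cref{lem:GeneratingLargerSymmetricGroupFromSmallerOnes} -- are more mechanical, but careful bookkeeping of the multiplicative constants is needed to land exactly at $2592$.
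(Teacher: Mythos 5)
Your overall skeleton (realize a one-sided alternating group, then combine via \cref{lem:GeneratingLargerSymmetricGroupFromSmallerOnes}) matches the paper, but the core technical step and the final assembly both contain genuine gaps that the paper avoids by a fundamentally different mechanism.

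The critical flaw is in the sentence ``Since by \cref{lem:threecommutators} every element of $\Alt(A_0)$ is a product of three commutators taken in $\Alt(A_0)$, and each commutator is a word of length four, it suffices to realize an individual commutator of $3$-cycles on $A_0$.'' This reduction does not hold. \cref{lem:threecommutators} produces three commutators $[x_i,y_i]$ with $x_i,y_i$ \emph{arbitrary} elements of $\Alt(A_0)$, not $3$-cycles, and to realize the word $x_iy_ix_i^{-1}y_i^{-1}$ you would already need bounded-length realizations of arbitrary elements of $\Alt(A_0)$ --- which is exactly what you are trying to prove. Your commutator construction $[g_1,g_2]$ with $g_1=\sigma\otimes\one$, $g_2=\one\otimes\tau$ does correctly produce a product of two disjoint $3$-cycles on $A_0$ when $\sigma$ is a $3$-cycle, but (a) it gives only such special elements, not general commutators, and (b) it cannot be run for $\sigma$ of large support, since you need $\tau$ to move $\{(w,0):w\in\supp\sigma\}$ off itself inside $A_0$ while fixing the shadow, which is impossible once $\supp\sigma$ occupies more than half of $\FF_2^n$. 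No bounded number of disjoint-$3$-cycle pairs reaches an arbitrary element of $\Alt(A_0)$, so this route cannot close the argument. The paper sidesteps this entirely: it works one dimension lower, defines ``primitives'' $PQP^{-1}$ with $P$ an \emph{arbitrary} element of $\Alt(\FF_2^n)$ and $Q$ diagonal, uses \cref{lem:threecommutators} to get a product of $6$ primitives, applies \cref{lem:fourpqpinv} \emph{twice} to bootstrap up to all of $\Alt(\FF_2^n)$ with $96$ primitives, and then lifts to $\Alt(\FF_2^{n+1})$ by the ``control-on-the-first-bit'' trick $h_i\mapsto h'_i$, which converts each primitive into exactly three generators. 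You never invoke \cref{lem:fourpqpinv}, and without it or an analogous bounded-generation step, the bounded word length is unreachable.

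The final assembly is also wrong as stated. You partition $\FF_2^{n+1}=X\sqcup Y\sqcup Z$ with $X,Z\subset A_0$ small and $Y$ the remainder, and assert that $\Alt(X\cup Y)$ is a $G_2$-conjugate of (a subgroup of) $\Alt(A_0)\times\Alt(A_1)$. This cannot be: $\Alt(X\cup Y)$ is a simple alternating group on $\approx 2^{n+1}$ points, while any $G_2$-conjugate of $\Alt(A_0)\times\Alt(A_1)$ is still a product of two alternating groups on $2^n$ points each; these have different orders and different normal-subgroup structure, so they are not conjugate and neither contains the other. The paper's combination step instead uses the four ``half'' subsets $A=\{0\}\times\FF_2^n$, $B=\{1\}\times\FF_2^n$, $C=\FF_2^n\times\{0\}$, $D=\FF_2^n\times\{1\}$, realizes each of $\Alt(A),\Alt(B),\Alt(C),\Alt(D)$ as $288$-element products via the control trick, and then applies \cref{lem:GeneratingLargerSymmetricGroupFromSmallerOnes} twice to get $\Alt(A\cup C)$, $\Alt(B\cup D)$, and finally $\Alt(\FF_2^{n+1})$, yielding exactly $3^3\cdot 96 = 2592$. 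Your bookkeeping of constants ($3\cdot 3\cdot 4$ times ``several more factors'') is too vague to recover this bound even in outline.
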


\begin{proof}
    In this proof, we call an element of form $g h g^{-1} \in \Alt(\FF_2^n)$ a \emph{primitive}
    if $g \in \Alt(\FF_2^n)$ and $h \in \Alt(\FF_2^{n-1}) \otimes \one$.
    A conjugate of a primitive is a primitive.
    
    For $x,y \in \Alt(2^{n-1})$,
    we have $\phi_0(y) \in \Alt(\FF_2^{n-1} \times \{ 0 \}) \Alt(\FF_2^{n-1} \times \{ 1 \})$
    and $\phi_0(x)\phi_1(x) \in \Alt(\FF_2^{n-1}) \otimes \one$.
    Then,
    \begin{align}
        \phi_0(y)\cdot (\phi_0(x)\phi_1(x))\cdot \phi_0(y)^{-1}\cdot (\phi_0(x)\phi_1(x))^{-1} = \phi_0(yxy^{-1}x^{-1}) \in \Alt(\FF_2^{n-1} \times \{0\} )
    \end{align}
    is a commutator and is a product of $2$ primitives.
    By \cref{lem:threecommutators}, every element of $\Alt(\FF_2^{n-1} \times \{0\} )$ is a product of $3$ 
    such commutators or $6$ primitives.

    Let $B = \FF_2^{n-1} \times \{0\}$ and $A = \FF_2^{n-2} \times \{(0,1)\}$.
    They are disjoint because of the different last bit,
    and $\abs B = 2 \abs A \ge \abs{A} + 6 \ge 8$.
    \cref{lem:fourpqpinv} implies that every element of~$\Alt(A \cup B)$ is a product of $4 \cdot 6 = 24$ primitives.
    Let $B' = A \cup B$ and $A' = \FF_2^n \setminus B'$.
    We have $\abs{B'} = 2^{n-1} + 2^{n-2}$ and $\abs{A'} = 2^{n-1} - 2^{n-2}$.
    \cref{lem:fourpqpinv} implies that any element $g$ of $\Alt(\FF_2^n)$ is a product of
    $4$ conjugates of elements of $B'$,
    each of which is a product of $24$ primitives.
    Hence, $g$ is a product of $4 \cdot 24 = 96$ primitives.

    We have shown that $r \in \one \otimes \Alt(\FF_2^n)$ is a product $\prod_{i=1}^{96} (g_i \phi(h_i) g_i^{-1})$
    where $g_i \in \one \otimes \Alt(\FF_2^n)$,
    $h_i \in \Alt(\FF_2^{n-1})$,
    and $\phi : \Alt(\FF_2^{n-1}) \xrightarrow{~\cong~} \one \otimes \Alt(\FF_2^{n-1}) \otimes \one$.
    Let $h'_i \in \Alt(\FF_2^n) \otimes \one$ 
    be the image of $h_i$ under the isomorphism $\Alt(\FF_2^{n-1}) \to \Alt(\{1\} \times \FF_2^{n-1}) \otimes \one$.
    This $h'_i$ amounts to $h_i$ controlled on the first bit.
    Replacing every $\phi(h_i)$ with $h'_i$,
    we have $r' = \prod_{i=1}^{96} (g_i h'_i g_i^{-1}) \in \Alt(\{1\} \times \FF_2^n)$
    which ranges over the entire $\Alt(\{1\} \times \FF_2^n)$.
    Therefore, we conclude that every element of $\Alt(\{1\} \times \FF_2^n)$ is a product of $3 \cdot 96$ elements
    of $(\Alt(\FF_2^{n}) \otimes \one) \cup (\one \otimes \Alt(\FF_2^{n}) )$.
    The same is true for $\Alt(\{0\} \times \FF_2^n)$, 
    $\Alt(\FF_2^n \times \{0\})$, and $\Alt(\FF_2^n \times \{1\})$
    by symmetry.
    
    Let $A = \{0\} \times \FF_2^n$, $B = \{1\} \times \FF_2^n$, $C = \FF_2^n \times \{0\}$, and $D = \FF_2^n \times \{1\}$.
    So, we have inclusions of $\Alt(A)$, $\Alt(B)$, $\Alt(C)$, and $\Alt(D)$ into $\Alt(\FF_2^{n+1})$.
    Note that $A \cap B = \emptyset = C \cap D$,
    but $\abs{A \cap C} = \abs{B \cap C} = \abs{A \cap D} = \abs{B \cap D} = 2^{n-1}$.
    Applying \cref{lem:GeneratingLargerSymmetricGroupFromSmallerOnes}
    to disjoint sets $A \setminus C, A \cap C, C \setminus A$,
    we obtain $\Alt(A \cup C) \cong \Alt(3 \cdot 2^{n-1})$,
    each element of which is a product of $3^2 \cdot 96$ elements from $(\Alt(\FF_2^{n}) \otimes \one) \cup (\one \otimes \Alt(\FF_2^{n}) )$.
    Similarly, we have $\Alt(B \cup D) \cong \Alt(3 \cdot 2^{n-1})$.
    Since $\abs{(A \cup C) \cap (B \cup D)} = 2^n$,
    we have $\Alt((A \cup C) \cup (B \cup D)) = \Alt(\FF_2^{n+1})$,
    each element of which is a product of $3^3 \cdot 96 = 2592$ elements 
    from $(\Alt(\FF_2^{n}) \otimes \one) \cup (\one \otimes \Alt(\FF_2^{n}) )$.
\end{proof}

\subsection{Proof of \texorpdfstring{\cref{thm:UltimatePermutationDesign}}{Theorem 5.2}}
We combine the discussion to give a full proof of~\cref{thm:UltimatePermutationDesign}.
\begin{proof}
    If $n = 18s$ for some $s \ge 15$,
    then \cref{prop:AltTwoToTheEighteenEssByCircuit} 
    says that every element of $\Alt(2^n)$ is a product of $\cO(1)$ elements of the union of $\cO(1)$ subgroups,
    each of which is generated by some bit flip conjugation of the generating set $S'$ of \cref{prop:KassabovGenCleanedUp}. 
    Hence, the Kazhdan constant is lower bounded by a constant by \cref{thm:Kassabov2007Alt} and \cref{cor:GoodKazhdanSubgroupToBigger}.
    If $n$ is not a multiple of $18$ or $n < 18 \cdot 15 = 270$,
    then we use \cref{prop:increasingNinAltTwoToTheN} recursively $17$ or fewer times
    to bound the Kazhdan constant from below by a constant.
    This complete the proof of \cref{thm:UltimatePermutationDesign}.
\end{proof}

\section{Circuit complexity and generalizations of Shannon's argument}\label{section:circuitcomplexity}

In this section, we prove that the circuit complexity of random classical and quantum circuits 
grows linearly for an exponentially long time.
This follows from our results that $\cO(t\cdot\poly(n))$-depth random (classical or quantum) circuits 
form $t$-designs via a similar counting argument as Shannon's proof 
that most Boolean functions have exponential circuit complexity. 
We have already stated these results as \cref{cor:lineargrowthreversible,cor:lineargrowth} in the introduction.
Here, we restate these corollaries for convenience and give a formal proof. 
Note that similar arguments already appeared multiple times in the literature, 
especially in the quantum setting ({\it e.g.},~\cite{brandao2016local,brandao2021models}), 
although the exact definitions and arguments may differ from ours.

Let us first briefly recall Shannon's argument~\cite{Shannon1949synthesis}: 
let $f:\{0,1\}^n\to \{0,1\}$ be a Boolean function. 
There are $2^{2^n}$ such functions in total. 
Let $M_R$ be the set of Boolean circuits that use $R$ AND/OR/NOT gates. 
For each gate, there are (at most) 2 input wires, 
which can be chosen from the $n$ input wires or some output wire of some gate. 
Therefore, $\left|M_R\right|\leq \left(\cO(n+R)^2\right)^R$. 
This implies that when $R=c\cdot 2^n / n$ for some small constant $c>0$, 
the probability of a function $f$ chosen uniformly at random having circuit complexity less than $R$ 
is at most $\left|M_R\right| / 2^{2^n}\leq 2^{-\Omega (2^n)}$.


In the following, we generalize this argument in the following sense:
we are going to compare the number of all functions in the ensemble generated by our random circuits of $L$ gates
to the number of all functions that can be expressed by $R$ gates.
The latter is exponential in $R$ and we already know an upper bound that is exponential in~$R$.
We have to show a lower bound for the former that is exponential in $L$.
Since our ensemble is generated by a specific mechanism, the random circuit,
we have to rule out the possibility that some small number of functions 
are produced in a very many different ways in our mechanism.
That is, we have to show that our random walk does not visit a small number of places too often.
To this end, we invoke our result that a random circuit of $L$ gates is $t$-design with $t = L / \poly(n)$.
A distribution~$\nu$ on the space of all permutations that is a $t$-design with a small multiplicative error,
is so uniform that the distribution has significant probability mass near every point in a set of $\exp(t)$ permutations.
This gives a desired lower bound on the number of elements in our ensemble generated by $L$ gates.
In the unitary case, essentially the same counting is performed
by discretizing the space of unitaries by $\delta$-nets.
All this is expressed in terms of probabilities in the proofs below.

\begin{definition}[Reversible circuit complexity]
    Let $f:\{0,1\}^n\to\{0,1\}^n \in \Alt(2^n)$ be a one-to-one function. 
    The reversible circuit complexity $C_R(f)$ 
    is defined as the size of the smallest reversible circuit on $n$ bits (using 3-bit reversible gates) 
    that computes $f$, 
    {\it i.e.},
    \begin{equation}
        C_R(f) = \min\{m:\text{there is a reversible circuit with }m\text{ gates that computes }f\}.
    \end{equation}
\end{definition}

\noindent
The condition that $f$ be an even permutation 
is necessary (and sufficient) because every $3$-bit reversible gate on $n \ge 4$
bits is an even permutation.

\revlineargrowth*

\begin{proof}
Let $N = 2^n$. Let $\pi\sim\nu$ be a permutation drawn from an approximate permutation $t$-design $\nu$ with multiplicative error $\eps$, where $t\leq c_0 N$ for some small constant $c_0>0$. For any fixed permutation $\sigma\in \Sym(N)$ and any fixed distinct $t$-tuple $x_1,x_2,\dots,x_t\in\{0,1\}^n$, we have
\begin{align}
\begin{split}
    \Pr_{\pi\sim\nu}[\pi(x)=\sigma(x),\,\forall x\in\{0,1\}^n]&\leq  \Pr_{\pi\sim\nu}[\pi(x_1)=\sigma(x_1),\dots, \pi(x_t)=\sigma(x_t)]\\
    &\leq\frac{1+\eps}{N(N-1)\cdots(N-t+1)}\\
    &\leq \Omega(N)^{-t}.
    \end{split}
\end{align}
Denote by $M_R$ the set of all permutations that can be represented by a reversible circuit acting on $n$ bits using $R$ reversible $3$-bit gates, which has size at most
\begin{equation}
    \left|M_R\right|\leq \left(\binom{n}{3}8!\right)^R = \left(\mathcal O(n^3)\right)^R.
\end{equation}
By union bound,
\begin{equation}
    \Pr_{\pi\sim\nu}[\exists \sigma\in M_R\,\,\text{s.t.}\,\, \pi=\sigma]\leq \sum_{\sigma\in M_R}\Pr_{\pi\sim\nu}[\pi=\sigma]\leq 2^{\cO(\log n)\cdot R} \cdot 2^{-\Omega(n)t}.
\end{equation}
Choose $R=c_1\cdot nt/\log n$ for some sufficiently small constant $c_1>0$. Then the probability of a random $\pi\sim\nu$ having circuit complexity less than $R$ is at most $2^{-\Omega(n)t}$.

Recall that $\nurevalltoall$ denotes the distribution of choosing a random set of 3 bits, and then applying a random gate from $\Sym(2^3)$ on those bits. For any $t\leq N-2$, \cref{thm:main_rev} together with \cref{lemma:equivalenceSandAlt} imply that $g(\nurevalltoall, \ t, \ \Sym(2^n)) \leq 1-\Omega(n^{-3})$. By \cref{lemma:fromgtopermutationdesign}, this implies that $\nurevalltoall^{*L}$ is an approximate permutation $t$-design with constant multiplicative error when $L\geq c_2 n^4 t$ for some constant $c_2>0$. In other words, when $L=\cO(2^n)$, a random reversible circuit with $L$ gates is an approximate permutation $t$-design with constant multiplicative error for $t=\Omega (L/n^4)$, and therefore has circuit complexity $R=\Omega (L/n^3\log n)$ with probability at least $1-2^{-\Omega(L/n^3)}$.
\end{proof}

Next, we define the quantum circuit complexity of a unitary matrix. Note that, unlike Boolean functions, the set of unitary matrices is continuous, so a reasonable notion of quantum circuit complexity should be ``robust,'' 
{\it i.e.}, allow some approximation error. Here we think of this error as some small constant.

\begin{definition}[Quantum circuit complexity]\label{def:quantumcircuitcomplexity}
Let $U\in\SU(2^n)$ be a unitary on $n$ qubits. Fix a constant $\delta>0$. The quantum circuit complexity $C_{Q,\delta}(U)$ is defined as the size of the smallest quantum circuit on $n$ qubits (using 2-qubit unitary gates with all-to-all connectivity) that approximates $U$ within $\delta$ error, {\it i.e.},
\begin{equation}
    C_{Q,\delta}(U)=\min\{m:\text{there is a quantum circuit }V\text{ with }m\text{ gates such that }\norm{U-V}_\infty\leq\delta\}.
\end{equation}
We also consider a stronger notion of circuit complexity: the smallest quantum circuit $U$ that prepares the state $\ket{\psi}=U\ket{0^n}$, {\it i.e.},
\begin{equation}
    C_{Q,\delta}(\ket{\psi})=\min\left\{m:\text{there is a quantum circuit }V\text{ with }m\text{ gates such that }\left|\bra{0^n}V^\dag \ket{\psi}\right|^2\geq 1-\delta^2\right\}.
\end{equation}
\end{definition}
By definition, $C_{Q,\delta}(U)\geq C_{Q,\delta}(U\ket{0^n})$. The second notion is much stronger as it says that no small circuit can approximate $U$ even on a single input. Our circuit lower bound works for this stronger notion.

\quantumlineargrowth*

This follows in a similar spirit to~\cref{cor:lineargrowthreversible} but is more technically involved. 

\begin{proof}
    Let $\nu$ be an approximate $t$-design with constant multiplicative error $\eps$. For any fixed vector $\ket{\phi}$ and $\delta_1\in(0,1)$ we have
    \begin{align}
    \begin{split}
        \Pr_{U\sim\nu}[\left|\bra{\phi}U\ket{0^n}\right|^2\geq 1-\delta_1]&= \Pr_{U\sim\nu}[\left|\bra{\phi}U\ket{0^n}\right|^{2t}\geq (1-\delta_1)^t]\\
        &\leq\frac{1}{(1-\delta_1)^t} \avg_{U\sim\nu}\left|\bra{\phi}U\ket{0^n}\right|^{2t}\\
        &\leq\frac{1+\eps}{(1-\delta_1)^t} \avg_{U\sim\mu(\SU(2^n))}\left|\bra{\phi}U\ket{0^n}\right|^{2t}\\
        &=\frac{1+\eps}{(1-\delta_1)^t} \binom{2^n+t-1}{t}^{-1}\\
        &\leq \frac{1+\eps}{(1-\delta_1)^t} \cdot\frac{t!}{2^{nt}}.
        \end{split}
    \end{align}
Here, the second line is by Markov's inequality; the third line follows from \cref{def:unitary_designs}; the fourth line uses properties of the symmetric subspace ({\it e.g.},~\cite[Proposition 6]{harrow2013church}).

Consider a $\delta_2$-net over $\SU(4)$ (a finite set $G_{\delta_2}\subseteq\SU(4)$ such that $\forall V\in\SU(4)$, $\exists V'\in G_{\delta_2}$ such that $\norm{V-V'}_\infty\leq \delta_2$). By standard arguments (see {\it e.g.},~\cite{milman1986asymptotic}) such a net can be constructed using $|G_{\delta_2}|=(\cO(1)/\delta_2)^{16}$ elements. Let $M_{G_{\delta_2},R}$ be the set of quantum circuits constructed using $R$ gates from $G_{\delta_2}$, which has size at most
\begin{equation}
\left|M_{G_{\delta_2},R}\right| \leq \left(\binom{n}{2}|G_{\delta_2}|\right)^R.
\end{equation}
By triangle inequality, for any quantum circuit $V$ with $R$ gates, there exists $V'\in M_{G_{\delta_2},R}$ such that $\norm{V-V'}_\infty\leq R \delta_2$. This implies that for any vector $\ket{\psi}$,
\begin{equation}
\begin{aligned}
    &\exists V\text{ with }R\text{ gates such that }\left|\bra{0^n}V^\dag \ket{\psi}\right|^2\geq 1-\delta^2\\
    \Rightarrow\,\,&\exists V'\in M_{G_{\delta_2},R}\text{ such that }\left|\bra{0^n}V'^\dag \ket{\psi}\right|^2\geq 1-\delta^2 - 2R\delta_2.
\end{aligned}
\end{equation}
Let $\delta_2 = \frac{\delta^2}{2R}$. Then,
\begin{align}
\begin{split}
&\Pr_{U\sim\nu}\left[\exists V\text{ with }R\text{ gates s.t. }\left|\bra{0^n}V^\dag U\ket{0^n}\right|^2\geq 1-\delta^2\right]\\
\leq &\Pr_{U\sim\nu}\left[\exists V'\in M_{G_{\delta_2},R}\text{ s.t. }\left|\bra{0^n}V'^\dag U\ket{0^n}\right|^2\geq 1-2\delta^2\right]\\
\leq & \left|M_{G_{\delta_2},R}\right|\cdot\Pr_{U\sim\nu}\left[\left|\bra{\phi}U\ket{0^n}\right|^2\geq 1-2\delta^2\right]\\
\leq & \left|M_{G_{\delta_2},R}\right|\cdot\frac{1+\eps}{(1-2\delta^2)^t} \cdot\frac{t!}{2^{nt}}\\
\leq &\left(\cO(n^2)\left(\frac{\cO(1)R}{\delta^2}\right)^{16}\right)^R \cdot\frac{\cO(1)}{(1-2\delta^2)^t} \cdot\frac{t!}{2^{nt}}.
\end{split}
\end{align}
This implies that there is a small constant $c_1>0$ such that when $R=c_1\cdot nt/\log(nt)$, we have
\begin{equation}
\Pr_{U\sim\nu}\left[\exists V\text{ with }R\text{ gates s.t. }\left|\bra{0^n}V^\dag U\ket{0^n}\right|^2\geq 1-\delta^2\right]\leq 2^{-\Omega(nt)}.
\end{equation}

Recall that $\nu_{\mathrm{2,\mathrm{All}\to\mathrm{All}},n}$ denotes the distribution of choosing a random set of 2 qubits, and then applying a Haar random 2-qubit gate on those qubits. For any $t\leq \cO(2^{n/2})$, \cref{thm:main_quantum} says that $$g(\nu_{\mathrm{2,\mathrm{All}\to\mathrm{All}},n}, \ t, \ \SU(2^n)) \leq 1-\Omega(n^{-3}).$$ By \cref{lemma:fromgtounitarydesign}, this implies that $\nu_{\mathrm{2,\mathrm{All}\to\mathrm{All}},n}^{*L}$ is an approximate unitary $t$-design with constant multiplicative error when $L\geq c_2 n^4 t$ for some constant $c_2>0$. In other words, when $L=\cO(2^{n/2})$, a random quantum circuit with $L$ gates is an approximate unitary $t$-design with constant multiplicative error for $t=\Omega (L/n^4)$, and therefore has circuit complexity $R=\Omega (L/n^4)$ with probability at least $1-2^{-\Omega(L/n^3)}$.
\end{proof}

\begin{remark}\label{rem:nonunitaryGates}
    In the two proofs above on the circuit complexity lower bounds,
    we have counted the number $\abs{M_R}$ of (sufficiently) different global operations 
    given a limited number $R$ of elementary gates.
    What matters is the result that $\abs{M_R}$ is only exponential in $R$.
    So, essentially the same counting holds even if we considered more general gate sets
    as long as the number of local choices of gates was modest, say, $\poly(n,R)$.
    This gives enough room for nonunitary gates that can even be local quantum channels or nonreversible classical gates that use ancillas,
    and still we would have the complexity lower bound on a richer class of implementation of a target unitary or permutation.
    For example, in the quantum setting we conclude that a typical unitary generated by the random unitary circuit of depth~$L$
    cannot be approximated in operator norm by any network of local tensors 
    unless the depth is at least linear in~$L$,
    where the local tensors are all arbitrary, with only restriction that its Frobenius norm is at most a constant.
    The last condition is to ensure that the local gate set is compact.
\end{remark}

\bibliographystyle{alphaurl}

\newcommand{\etalchar}[1]{$^{#1}$}

\appendix

\section{Overlap Theorem for Permutations}\label{section:permutationoverlap}

This section proves the following for the product of overlapping chunks of random permutations.
As used in the main text, we write $\Sym(A)$ for a finite set $A$ the full symmetric group of order $\abs{A}!$ 
which permutes elements of~$A$.
If $A = \{1,2,\ldots,N\}$, we often write $\Sym(N) = \Sym(A)$.

\begin{theorem}[Overlapping permutations]\label{thm:permutationoverlap}
    For finite sets $A$, $B$, and $C$,
    let $\{\ket a \otimes \ket b \otimes \ket c ~|~ a \in A, b \in B, c \in C\}$
    be an orthonormal basis for $\CC^{\abs{A}} \otimes \CC^{\abs{B}} \otimes \CC^{\abs{C}}$.
    Then, 
    \begin{align}
            \lnorm{ 
                \avg_{\substack{\pi_{AB} \sim \mu(\Sym(A\times B)) \\ \pi_{BC} \sim \mu(\Sym(B \times C))}} P(\pi_{AB})^{\otimes t} P(\pi_{BC})^{\otimes t} 
                -  
                \avg_{\pi_{ABC} \sim \mu(\Sym(A \times B \times C))} P(\pi_{ABC})^{\otimes t}
            }_{\infty} 
            \le  
            \CO\left( 
                \frac{\left( t\log t + \log \abs{B} \right)^{3}}{\sqrt{\abs{B}}}
            \right)
    \end{align}
    where we implicitly meant $P(\pi_{AB}) = P(\pi_{AB}) \otimes \one_C$  and $P(\pi_{BC}) = \one_A \otimes P(\pi_{BC})$.
\end{theorem}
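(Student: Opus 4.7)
My starting point is the standard fact that for $\pi \sim \mu(\Sym(N))$, the moment operator
\[
    M_N^{(t)} \;=\; \avg_{\pi \sim \mu(\Sym(N))} P(\pi)^{\otimes t}
\]
is the orthogonal projector onto the $\Sym(N)$-invariant subspace of $(\CC^N)^{\otimes t}$, spanned by ``partition vectors'' $\ket{v_\sigma} = \sum_{x \vdash \sigma} \ket{x_1, \ldots, x_t}$ indexed by set partitions $\sigma$ of $[t]$, where $x \vdash \sigma$ means $x_i = x_j$ iff $i$ and $j$ lie in the same block of $\sigma$. Concretely, one can write $M_N^{(t)} = \sum_{\sigma, \tau} W_N(\sigma,\tau) R_\sigma R_\tau^\dagger$, where $R_\sigma$ are the (non-orthogonal) partition operators and $W_N$ is a Gram-like inverse whose entries are rational functions of $N$ whose leading behavior in $N^{-1}$ is controlled by the block counts $|\sigma|,|\tau|$. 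I apply this with $N = |A||B|$, $N = |B||C|$, and $N = |A||B||C|$ to the three moment operators in the statement. The asymmetric factorization $(AB)$ vs.\ $(BC)$ can be handled by regrouping tensor factors, which turns $P(\pi_{AB})^{\otimes t}$ into a partition-indexed sum acting nontrivially on the $(A,B)$ registers and trivially on $C$, and analogously for $\pi_{BC}$.

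\textbf{Reducing to a large-$|B|$ expansion.} The plan is to treat $|B|$ as a formal parameter and analyze both sides as operator-valued rational functions of $|B|$, following the ``large-$N$-interpolation'' principle of \cite{chen2024efficient, chen2024efficient_sums, chen2024new}. On each side, the operator expansion in partition operators has coefficients that are rational functions of $|B|$ with denominators being products of falling factorials $(|B|)_k = |B|(|B|-1)\cdots(|B|-k+1)$. The crucial point is that, in the formal $|B|\to\infty$ limit, both the LHS and the RHS collapse to the same leading partition-algebra expression on the $(A,C)$-indices; equivalently, the two sides agree in the ``free product'' limit where partitions on $B$ are effectively unconstrained. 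Thus the discrepancy $\Delta_{|B|}$ is itself a rational function of $|B|$ with vanishing leading order, and the task reduces to bounding its operator norm using its pole/degree structure together with a polynomial-growth inequality (e.g.\ a Bernstein/Markov-type bound as in \cite{polynomials_inequalities}). This is precisely what yields the $1/\sqrt{|B|}$ scaling: the standard deviation-type fluctuations in a partition-algebra $B$-trace are of order $|B|^{-1/2}$.

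\textbf{Bounding the residual in operator norm.} Once the interpolation identity is in place, I would bound $\|\Delta_{|B|}\|_\infty$ by expanding the rational-function discrepancy into a sum over pairs of partitions $(\sigma,\tau)$ of $[t]$ and controlling each term by: (i) the norm of $R_\sigma R_\tau^\dagger$, which grows at most as $(|A||B||C|)^{\max(|\sigma|,|\tau|)/2}$; (ii) the coefficient, which after cancellation with the leading part is suppressed by a factor $|B|^{-1/2}$ times a polynomial in $t$ and $\log|B|$. Summing over the $B_t \le t^t$ partitions of $[t]$ gives the $(t\log t + \log|B|)^3$-type polylog factor by a careful Stirling-type count.

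\textbf{Main obstacle.} The crux—and the reason the unitary overlap argument of \cite{brandao2016local} does not port over—is that the partition operators $R_\sigma$ are not approximately orthogonal (cf.\ Remark 3.6 in the main text), so the quick ``diagonal terms dominate'' reasoning available in the Schur--Weyl permutation basis is unavailable. My best bet is to avoid fighting this non-orthogonality head-on: instead of trying to diagonalize in the partition basis, use the large-$|B|$ interpolation to turn the combinatorial problem into a statement about rational functions in a single variable, and extract the operator-norm bound from degree/pole data rather than term-by-term estimates. Getting the polylog exponent to be exactly $3$ (rather than some larger constant) will be the most delicate part of the estimate and likely requires a fairly sharp bookkeeping of which partitions can arise as ``cross terms'' in the product $M^{(t)}_{|A||B|}\cdot M^{(t)}_{|B||C|}$.
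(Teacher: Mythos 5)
Your high-level strategy is on the right track: you correctly identify the large-$|B|$ interpolation principle and a Markov/Bernstein-type polynomial bound as the engine of the proof, and you correctly diagnose that the non-orthogonality of the partition vectors blocks the straightforward Schur--Weyl argument from \cite{brandao2016local}. Lemma~\ref{lemma:comb_bound_partitions} and the basis $\ket{\unn O'_\Pi}$ (orthogonal for $N\ge t$) are precisely the ingredients the paper uses in place of your Gram-inverse $W_N$.

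However, there are several concrete gaps between your sketch and a working proof.

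First and most importantly, you propose to treat the \emph{operator-valued} discrepancy $\Delta_{|B|}$ as a rational function of $|B|$ and then ``extract the operator-norm bound from degree/pole data'' via a polynomial-growth inequality. Markov's inequality applies to scalar polynomials; you cannot invoke it directly on an operator-valued function. The paper's resolution is to first reduce the operator norm to a scalar quantity: one fixes arbitrary unit vectors $\ket{\psi_{\mathrm{left}}}$, $\ket{\psi_{\mathrm{right}}}$ in the respective invariant subspaces (WLOG, since $\Delta_{|B|}$ is a difference of projectors), expands them in the orthogonal basis $\{\ket{\nor O'_\Pi}\}$ on the $(AB)$ and $(BC)$ registers, and defines a scalar interpolating function $f(x) = \bra{\phi^{N'}_{\mathrm{left}}} \DD_{AXC} \ket{\phi^{N'}_{\mathrm{right}}}$ by replacing $B$ with a variable-size register $X$ while keeping the $A,C$-components of the vectors fixed. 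This step is essential and is absent from your sketch.

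Second, the interpolation variable must be $x = 1/\sqrt{|B|}$, not $1/|B|$. The normalization constants $\norm{\ket{\unn O'^{(X)}_\Pi}} = \sqrt{N'(N'-1)\cdots}$ involve square roots, and the prefactor $N'^{-(|\Pi_1|+|\Pi_2|-2|\Pi_3|)/2}$ appearing in $\braket{\phi_{\mathrm{left}}^{N'}|\phi_{\mathrm{right}}^{N'}}$ can be an odd power of $N'^{-1/2}$. Your ``falling-factorial'' description would miss this, and it matters: the final $|B|^{-1/2}$ scaling comes from evaluating a degree-$r$ polynomial in $x$ at $x = |B|^{-1/2}$ after Markov, not from a variance-type heuristic.

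Third, your middle paragraph's term-by-term plan---bounding each summand by $\|R_\sigma R_\tau^\dagger\|$, which you estimate as $(|A||B||C|)^{\max(|\sigma|,|\tau|)/2}$, and summing over $B_t^2$ pairs of partitions---would not give a bound that is uniform in $|A|,|C|$, nor the right dependence on $|B|$. The paper avoids this by working entirely through the scalar function $f$, Taylor-truncating the square-root normalization ratios to a polynomial of degree $q + \cO(t)$ (\cref{lemma:Taylor}), and only then counting partitions via $\sum_{\Sigma\succeq\Pi} |K^{-1}_{\Pi\Sigma}| \le t!$.

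Fourth, a technical point you do not address: the function $f$ is only naturally defined on the discrete set $\{1/\sqrt{N'} : N' \in \N, N' \ge t\}$, so one cannot apply Markov's inequality (which is an interval statement) directly. This requires an additional lemma (\cref{cor:1/sqrtN_spacing}, using \cref{lemma:uniform_sparse}) showing that a degree-$q$ polynomial bounded at these sample points is bounded on the whole interval, at the cost of an extra factor of $q$; this is in fact where the exponent in $(t\log t + \log|B|)^3$ comes from, not from a Stirling-type count of partitions as you suggest.
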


That is, whenever the overlap region is reasonably large ($\labs{B} \ge \poly(t)$), 
we can effectively emulate a larger permutation acting jointly on all $A\times B \times C$, 
by a product of individually smaller permutations on subsystems $A \times B$ and $B \times C$. 
This phenomenon is at the heart of attaining linear $n$-dependence for the gap of random quantum circuits 
where $n$ is the number of qubits.

Recall that for the overlap lemma for unitaries~\cite{brandao2016local}, 
the idea is to analyze the $+1$-eigenvectors of $\avg_U  (U \otimes \overline U)^{\otimes t}$, 
which are labeled by permutations on $t$ elements due to Schur--Weyl duality.
The permutation group on $N$ letters is much smaller than the $N$-dimensional unitary group,
so the $+1$-eigenspace of $\avg_\pi P(\pi)^{\otimes t}$ is much larger.
We will find a basis for this eigenspace by \emph{partitions} of the set of $t$ elements.
We include minimal exposition regarding spanning sets of this eigenspace in \cref{sec:paritions_permutations}. 
Further details can be found in, {\it e.g.},~\cite{halverson2020set} in the algebra literature, 
but we quote some results from~\cite{low2010pseudo} and~\cite{chen2024efficient}.

In \cref{sec:asym_limit} we first derive a \textit{weaker} version (\cref{lemma:permutationoverlap_LargeN}) 
of the overlap theorem in the easy limit $\labs{B} \to \infty$ 
where $\avg P(\pi_{AB})^{\otimes t}$, $\avg P(\pi_{BC})^{\otimes t}$, and $\avg P(\pi_{ABC})^{\otimes t}$ 
are ``factorized'' and the norm in \cref{thm:permutationoverlap} tends to zero.
Our interest is of course in finite $B$,
but, a priori, the limit does not tell us anything to this end.
Hence, in \cref{sec:interpolate},
we exploit an additional structure 
that the expression is a \emph{low degree polynomial} in the inverse dimension $1/\abs{B}$ (\cref{lem:partition_props}~\cref{item:low-degree}), 
and we tame the finite-$\labs{B}$ behavior~\cite{chen2024efficient} from the limit,
circumventing the need for fine-grained combinatorics of set partitions.
The fundamental reason that such a ``soft'' interpolation argument is conceptually plausible 
is the Markov's other inequality,
saying that a bounded, low-degree polynomial cannot change too quickly. 
This inequality is also used in the \textit{polynomial method} in query complexity theory.

\subsection{Partitions and eigenvectors}
\label{sec:paritions_permutations}

We have to describe the eigenvectors of 
\begin{align}
    \avg_{\pi \sim \mu(\Sym(N))} P(\pi)^{\otimes t}\quad \text{where}\quad P(\pi)\ket{i} = \ket{\pi(i)} \quad \text{for each}\quad i\in [N].
\end{align}
For the eigenvalue~$+1$, the eigenspace is spanned by vectors 
that are indexed by \emph{partitions} $\Pi$ of $[t] = \{1,2,\ldots, t\}$.
Detailed exposition can be found in~\cite[Section 4.2.2]{low2010pseudo} and~\cite[Section III.D]{chen2024efficient},
but here we quote what we will use below.

A partition~$\Pi = \{S_1, \ldots \}$ of a set $S$ 
is a collection of disjoint subsets~$S_i$ whose union is~$S = \bigsqcup S_i$.
We will only consider partitions of~$[t]$.
Some definitions will be handy later:
\begin{itemize}
    \item $\Pi \vdash [t]$ means that $\Pi$ is a partition of~$[t]$. 
    An element of $\Pi$ is a subset of~$[t]$ and is called a \emph{block}.
    \item If $\Pi \vdash [t]$, we write $i \sim_\Pi j$ to mean that there is a block of $\Pi$ that contains both $i$ and $j$.
    It is an equivalence relation on $[t]$ defined by $\Pi$.
    \item The number of blocks of $\Pi$ is $\abs{\Pi}$.
    \item For two partitions $\Pi_1$ and $\Pi_2$ of a common set, 
    we say that $\Pi_1$ refines $\Pi_2$ and write $\Pi_1 \preceq \Pi_2$ 
    if $ i \sim_{\Pi_1} j$ always implies $i \sim_{\Pi_2} j$.
    That is, a refinement is obtained by splitting some blocks, 
    and therefore $\abs{\Pi_1} \ge \abs{\Pi_2}$ whenever $\Pi_1 \preceq \Pi_2$.
\end{itemize}
For example, 
\begin{align}
    \{\{1\},\{2\},\{3\}, \{4\} \} \preceq \{\{1,2\},\{3\}, \{4\} \} \preceq \{\{1,2,4\}, \{3\} \}
    = \Pi \vdash [4], \quad \text{and}\quad \labs{\Pi} =2.
\end{align}

\newcommand{\unn}[1]{{\color{red}{\tilde{#1}}}} 
\newcommand{\nor}[1]{#1} 

Given a partition~$\Pi$ of $[t]$, we consider two sets of tuples:
\begin{align}
    M_{\Pi} &\deq  \{ (m_1, \dots, m_t) \in [N]^t ~|~ m_i = m_j \text{ if } i \sim_\Pi j \}\,. \\
    M'_{\Pi} &\deq  \{ (m_1, \dots, m_t) \in [N]^t ~|~ m_i = m_j \text{ if and only if } i \sim_\Pi j \}\subseteq M_\Pi \,.\nonumber
\end{align}
The set $M_\Pi$ contains all tuples that have the same entry on indices that are in the same block of the partition.
The set $M'_{\Pi}$ comes with the additional restriction that the entries on indices in different blocks be \emph{distinct}.
We then define the following unnormalized states,
which will span the $+1$-eigenspace of $\avg_{\pi\in \Sym(N)} P(\pi)^{\otimes t}$ (see \cref{lem:partition_props}):
\begin{align}
    \ket{\unn O_{\Pi}} \deq   \sum_{\vec{m} \in M_{\Pi}} \ket{\vec{m}} 
    \qquad \text{and} \qquad 
    \ket{\unn O'_{\Pi}} \deq   \sum_{\vec{m} \in M'_{\Pi}} \ket{\vec{m}} \,. \label{eq:unnOOprimeDefinition}
\end{align}
As an example, for $\Pi = \{\{1,2,4\}, \{3\} \}$
\begin{align}
    \ket{\unn O_{\Pi}} = \sum_{i,j=1}^{N} \ket{i}\ket{i} \ket{j} \ket{i}
    \quad \text{and}\quad 
    \ket{\unn O'_{\Pi}} = \sum_{i,j=1}^{N} \indicator(i\ne j)\ket{i}\ket{i} \ket{j} \ket{i}
\end{align}
where $\indicator$ assumes~$1$ if the clause is true and $0$ otherwise.
Due to the distinctness constraint, 
the states $\ket{\unn O'_{\Pi}}$ are orthogonal for different partitions $\Pi$.
We will also use their normalized versions.
\begin{align}
    \ket{\nor O_{\Pi}} \deq \frac{\ket{\unn O_{\Pi}}}{\norm{\ket{\unn O_{\Pi}}}} 
    \qquad \text{and} \qquad 
    \ket{\nor O'_{\Pi}} \deq \frac{\ket{\unn O'_{\Pi}}}{\norm{\ket{\unn O_{\Pi}}}} \,. \label{eqn:normalized_part_states}
\end{align}

We collect a few basic properties of these states.
See ~\cite[Section 4.2.2]{low2010pseudo} and~\cite[Section III.D]{chen2024efficient} and the references therein for details and proofs. For our purposes, we will only work in the \emph{stable} range $N \ge t$ where the basis are better behaved. 
\begin{lemma}[Eigenspace basis by partitions] \label{lem:partition_props}
    ~
    \begin{enumerate}
        \item The set $\{ \ket{\unn O_\Pi}\}_{\Pi \vdash [t]}$ is a (nonorthogonal) basis
        for the $+1$-eigenspace of $\avg_{\pi \in \Sym(N)} P(\pi)^{\ot t}$. \label{item:span}
        \item If $N \geq t$, the states $\{ \ket{\nor O'_\Pi}\}_{\Pi \vdash [t]}$ 
        form an orthonormal basis for the $+1$-eigenspace of $\avg_{\pi \in \Sym(N)} P(\pi)^{\ot t}$. \label{item:ortho_basis}
        \item $\norm{\ket{\unn O_{\Pi}}}^2 = \abs{M_{\Pi}} = N^{\labs{\Pi}}$ and
        \footnote{If $N-\labs{\Pi}+1 \le 0$, then $M'_{\Pi}$ vanishes. 
            We do not use this regime.
        }
        $\norm{\ket{\unn O'_{\Pi}}}^2 = \abs{M'_\Pi}= N(N-1)\cdots (N-\abs{\Pi}+1)$.
        \label{item:low-degree}
    \end{enumerate}
\end{lemma}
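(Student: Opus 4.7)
The plan is to prove the three items in the order (iii), (ii), (i), since the later statements build on the earlier ones. Item (iii) will reduce to a direct combinatorial count; item (ii) will identify the $+1$-eigenvectors with characteristic functions of $\Sym(N)$-orbits in $[N]^t$; and item (i) will then follow from a unit-triangular change of basis relating $\{\ket{\tilde O_\Pi}\}$ to $\{\ket{\tilde O'_\Pi}\}$ in the refinement partial order.

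For (iii), I will count $\abs{M_\Pi}$ by choosing one value in $[N]$ independently for each of the $\abs\Pi$ blocks, yielding $N^{\abs\Pi}$; similarly $\abs{M'_\Pi}$ counts ordered $\abs\Pi$-tuples of distinct labels in $[N]$, yielding the falling factorial $N(N-1)\cdots(N-\abs\Pi+1)$. Since each of $\ket{\tilde O_\Pi},\ket{\tilde O'_\Pi}$ is a uniform $\{0,1\}$-sum over its support, the squared norms coincide with these cardinalities. For (ii), I will use that $\avg_{\pi\in\Sym(N)} P(\pi)^{\otimes t}$ is the orthogonal projector onto the $\Sym(N)$-invariant subspace of $(\CC^N)^{\otimes t}$, so an expansion $\sum_{\vec m} c_{\vec m}\ket{\vec m}$ is a $+1$-eigenvector iff the coefficients are constant on $\Sym(N)$-orbits of $[N]^t$. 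The orbit of $\vec m$ is determined precisely by its equality pattern, i.e., the partition $\Pi$ of $[t]$ with $i\sim_\Pi j$ iff $m_i=m_j$, because $\Sym(N)$ acts transitively on tuples sharing a common pattern whenever $N\geq\abs\Pi$. Hence the orbits are exactly the sets $M'_\Pi$, the states $\ket{\tilde O'_\Pi}$ are their (unnormalized) characteristic functions, they are pairwise orthogonal from disjoint support, and in the stable range $N\geq t\geq\abs\Pi$ each is nonzero. Normalizing via (iii) gives the orthonormal basis claim.

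For (i), the key combinatorial identity will be $M_\Pi=\bigsqcup_{\Pi'\succeq\Pi} M'_{\Pi'}$: any $\vec m\in M_\Pi$ has its own equality pattern $\Pi'$, which is necessarily coarser than $\Pi$ (since all equalities forced by $\Pi$ already hold), and then $\vec m\in M'_{\Pi'}$. Summing characteristic functions yields
\begin{equation}
    \ket{\tilde O_\Pi} \;=\; \sum_{\Pi'\succeq\Pi} \ket{\tilde O'_{\Pi'}},
\end{equation}
which expresses $\ket{\tilde O_\Pi}$ in the orthonormal basis from (ii) via a unit-upper-triangular matrix in the refinement poset (the $\Pi'=\Pi$ contribution is $1$). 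Such a matrix is invertible, so $\{\ket{\tilde O_\Pi}\}_{\Pi\vdash[t]}$ is also a basis for the same $+1$-eigenspace, proving (i).

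The single point requiring care is the stability hypothesis $N\geq t$ in (ii): if $N<t$ then some partitions have $\abs\Pi>N$, which forces $M'_\Pi=\emptyset$ and $\ket{\tilde O'_\Pi}=0$, destroying both the orthonormality after normalization and the invertibility of the triangular change-of-basis matrix used in (i). Under $N\geq t$ every $\abs\Pi\leq t\leq N$, so all falling factorials in (iii) are strictly positive and the orbit-transitivity argument in (ii) applies uniformly — this is the only substantive obstacle to anticipate, and it is resolved purely by invoking the stated hypothesis.
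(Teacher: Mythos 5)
Your proposal is correct and self-contained. The paper itself does not prove this lemma; it defers to the cited references (\cite[Section 4.2.2]{low2010pseudo} and \cite[Section III.D]{chen2024efficient}), so there is no in-paper argument to compare against line by line. Your route is the standard one and is consistent with the machinery the paper does develop: your identity $\ket{\tilde O_\Pi}=\sum_{\Pi'\succeq\Pi}\ket{\tilde O'_{\Pi'}}$ is exactly the paper's inclusion--exclusion \cref{lem:incl_excl}, and your identification of the $+1$-eigenspace with functions constant on $\Sym(N)$-orbits of $[N]^t$ (orbits $=$ equality patterns $=$ the sets $M'_\Pi$) is the content behind \cref{item:ortho_basis}. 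Two small points worth recording. First, as literally written in \cref{eqn:normalized_part_states} the state $\ket{O'_\Pi}$ is divided by $\norm{\ket{\tilde O_\Pi}}=N^{\abs{\Pi}/2}$ rather than by its own norm, so it is orthogonal but sub-normalized; your reading (normalize by the falling factorial from item (iii)) is the one under which item (ii) is literally an \emph{orthonormal} basis, and it is what the paper needs later when it writes $\avg_\pi P(\pi)^{\otimes t}=\sum_\Pi \proj{O'_\Pi}$. Second, you are right that linear independence of $\{\ket{\tilde O_\Pi}\}$ in item (i) also requires $N\ge t$ (for $N<t$ there are $B_t$ vectors but the invariant subspace has dimension equal to the number of partitions with at most $N$ blocks, so the family can only span); the paper's blanket restriction to the stable range $N\ge t$ covers this, and your explicit flagging of it is a point in your favor rather than a gap.
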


\begin{remark}[Factorization]
    For a finite set $S$, we define the states $\ket{\unn O_\Pi^{(S)}}$ and $\ket{\unn O'^{(S)}_\Pi}$ as above, 
    but with permutations on~$S$ rather than $[N]$.
    If $S = A \times B$ is a Cartesian product,
    we have
        \begin{align}
            \ket{\unn O^{(A \times B)}_\Pi} = \ket{\unn O^{(A)}_\Pi}\otimes \ket{\unn O^{(B)}_\Pi}. \label{eqn:factorisation}
        \end{align}  
        We will write $\ket{\unn O^{(AB)}_\Pi} = \ket{\unn O^{(A \times B)}_\Pi}$ for short, 
        and similarly $\ket{\unn O'^{(AB)}_\Pi} = \ket{\unn O'^{(A \times B)}_\Pi}$.
        Note that the property~\cref{eqn:factorisation} only holds for nonorthogonal basis states
        $\ket{\unn O^{(A \times B)}_\Pi}$, 
        not for the orthogonal basis states $\ket{\unn O'^{(A \times B)}_\Pi}$.
\end{remark}

\noindent
One can switch between the orthogonal states $\ket{\unn O'_{\Pi}}$ and nonorthogonal states $\ket{\unn O_{\Pi}}$ 
using a generalized inclusion-exclusion principle for partial orders.

\begin{lemma}[Inclusion-exclusion] \label{lem:incl_excl} 
    If $N \ge t$, then for any partition $\Pi \vdash [t]$ 
    \begin{align}
        \ket{\unn O_{\Pi}}
        &= \sum_{\Sigma \vdash [t] } K_{\Pi\Sigma} \ket{\unn O'_{\Sigma}}\quad \text{where}\quad K_{\Pi\Sigma} = \indicator(\Pi \preceq \Sigma) = \begin{cases} 1 & \text{ if } \Pi \preceq \Sigma \\ 0 & \text{otherwise} \end{cases} \notag \,,\\
        \ket{\unn O'_{\Sigma}} &=  \sum_{\Pi \vdash [t] ,\, \Pi \succeq \Sigma} (K^{-1})_{\Sigma \Pi} \ket{\unn O_{\Pi}}\,.\label{eq:Kinv} 
    \end{align}
\end{lemma}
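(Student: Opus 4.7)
The plan is to first prove the forward formula directly from the definitions of $M_\Pi$ and $M'_\Sigma$, and then obtain the inverse formula by a linear-algebra argument using the triangularity of the matrix $K$ in the refinement partial order.

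For the forward direction, I will classify tuples in $M_\Pi$ by the equivalence relation they induce. Given any tuple $\vec m=(m_1,\ldots,m_t)\in [N]^t$, declare $i\sim j$ whenever $m_i=m_j$; this is an equivalence relation on $[t]$ whose associated partition I will call $\Sigma(\vec m)$. By construction, $\vec m\in M'_{\Sigma(\vec m)}$. Moreover, $\vec m\in M_\Pi$ if and only if $i\sim_\Pi j\Rightarrow m_i=m_j$, which is exactly the statement $\Pi\preceq\Sigma(\vec m)$. Hence
\begin{equation*}
M_\Pi \;=\; \bigsqcup_{\Sigma\vdash[t],\,\Sigma\succeq\Pi} M'_\Sigma,
\end{equation*}
a disjoint union (different $\Sigma$ correspond to different equivalence relations on the indices, so an element of $M'_\Sigma$ cannot also lie in $M'_{\Sigma'}$ for $\Sigma\neq\Sigma'$). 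Summing $\ket{\vec m}$ over both sides gives $\ket{\tilde O_\Pi}=\sum_{\Sigma\succeq\Pi}\ket{\tilde O'_\Sigma}=\sum_\Sigma K_{\Pi\Sigma}\ket{\tilde O'_\Sigma}$, which is the first claim. Note this identity holds without the hypothesis $N\geq t$, since the argument is purely combinatorial.

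For the inverse direction, view $K=(K_{\Pi\Sigma})$ as a square matrix indexed by partitions of $[t]$. The refinement relation $\preceq$ is a partial order on the set of partitions, so by choosing any linear extension (for instance, ordering partitions by $|\Pi|$ in decreasing order, breaking ties arbitrarily), the matrix $K$ becomes upper triangular with $1$'s on the diagonal. In particular $K$ is invertible over $\mathbb Z$, and its inverse $K^{-1}$ is also upper triangular for the same ordering, with $(K^{-1})_{\Sigma\Pi}$ supported on $\Pi\succeq\Sigma$. Applying this inverse to the first identity yields
\begin{equation*}
\ket{\tilde O'_\Sigma} \;=\; \sum_{\Pi\vdash[t],\,\Pi\succeq\Sigma}(K^{-1})_{\Sigma\Pi}\ket{\tilde O_\Pi},
\end{equation*}
provided the $\ket{\tilde O'_\Sigma}$ are linearly independent (otherwise the expansion in terms of them is not unique and the identity requires extra justification). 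This is where the assumption $N\geq t$ enters: by \cref{lem:partition_props}~\cref{item:ortho_basis}, the normalized states $\ket{O'_\Sigma}$ form an \emph{orthonormal} set in the stable range $N\geq t$, and hence the $\ket{\tilde O'_\Sigma}$ are linearly independent. (Explicitly, the M\"obius function of the partition lattice gives $(K^{-1})_{\Sigma\Pi}=\mu(\Sigma,\Pi)$, but we will not need a closed form.)

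The only mild subtlety is keeping track of which direction of the partial order plays the role of ``upper triangular,'' but this is a bookkeeping issue rather than a real obstacle; the substantive content is the disjoint-union decomposition in the first step, which is essentially the standard ``partition tuples by their equality pattern'' trick.
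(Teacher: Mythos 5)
Your proof is correct and matches the reasoning the paper sketches informally after the lemma: the forward identity is the disjoint decomposition $M_\Pi = \bigsqcup_{\Sigma \succeq \Pi} M'_\Sigma$ by equality pattern (which the paper illustrates only with a $t=2$ example), and the inverse follows because $K$ is upper-triangular with $1$'s on the diagonal under any linear extension of $\preceq$, hence invertible over $\ZZ$. One small slip worth flagging: the linear-independence caveat in your inverse step is unnecessary and slightly miscasts the role of $N \ge t$. Once $K$ is known to be invertible, the inverse formula is a formal matrix computation,
\begin{align}
\sum_{\Pi} (K^{-1})_{\Sigma\Pi}\ket{\unn O_\Pi}
= \sum_{\Pi} (K^{-1})_{\Sigma\Pi}\sum_{\Sigma'} K_{\Pi\Sigma'}\ket{\unn O'_{\Sigma'}}
= \sum_{\Sigma'} (K^{-1}K)_{\Sigma\Sigma'}\ket{\unn O'_{\Sigma'}}
= \ket{\unn O'_\Sigma},
\end{align}
which requires no independence of the $\ket{\unn O'_\Sigma}$ (and in fact both identities hold for any $N$, with some terms being zero vectors when $N < t$). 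The conclusion you reach is right, but the appeal to \cref{lem:partition_props}~\cref{item:ortho_basis} is not doing any work.
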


\noindent 
For example,
\begin{align}
    \sum_{i,j=1}^{N}\ket{i}\ket{j} = \ket{\unn O_{\{\{1\},\{2\}\} }} &= \ket{\unn O'_{\{\{1\},\{2\}\} }} +\ket{\unn O'_{\{\{1,2\}\} }}
    =\sum_{i,j=1}^{N}\indicator(i\ne j)\ket{i}\ket{j} +\sum_{i=1}^{N}\ket{i}\ket{i}.
\end{align}
The matrix $K$ is invertible 
because if we sort all partitions according to the number of blocks,
then $K$ is upper-triangular with $1$ on the diagonal.
Since $K$ is an integral matrix with determinant~$1$,
the inverse~$K^{-1}$ is also integral and upper-triangular.
A reader may wish to note that $K^{-1}$ is the M\"obius inversion of~$K$ 
associated with the partial order~$\preceq$ on all partitions of~$[t]$.
Trivially but importantly, the matrix~$K$, with entries labeled by partitions $\Pi,\Sigma$, depends only on~$t$, not $N$.
The entries of~$K^{-1}$ can be bounded as follows.

\begin{lemma}[{Combinatorial bounds on partitions~\cite[Corollary 4.2.6]{low2010pseudo}}]\label{lemma:comb_bound_partitions}
    For each partition $\Pi\vdash [t]$,
    \begin{align}
        \sum_{\Sigma \succeq \Pi} 1
        \le 
        \sum_{\Sigma \succeq \Pi} \labs{K^{-1}_{\Pi\Sigma}} 
        = 
        \labs{\Pi}! 
        \le 
        t! \; .
    \end{align}
\end{lemma}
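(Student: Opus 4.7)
The claim is a combinatorial identity about the entries of $K^{-1}$, where $K$ is the zeta matrix of the partial order $\preceq$ on set partitions of $[t]$ (compare \cref{lem:incl_excl}). My plan is to interpret $K^{-1}$ as the classical Möbius function on the partition lattice and then reduce to a standard cycle-counting identity.

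First, the leftmost inequality is immediate once we know that $|K^{-1}_{\Pi\Sigma}| \ge 1$ whenever $\Pi \preceq \Sigma$. Since $K$ is upper triangular with $1$'s on the diagonal in any linear extension of $\preceq$, so is $K^{-1}$, giving $K^{-1}_{\Pi\Pi} = 1$; for $\Pi \prec \Sigma$ the off-diagonal entries will turn out to be nonzero integers, so their absolute values are at least~$1$. Hence $\sum_{\Sigma \succeq \Pi} 1 \le \sum_{\Sigma \succeq \Pi} |K^{-1}_{\Pi\Sigma}|$, and all that remains is to evaluate the right-hand side.

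Next I would identify $K^{-1}_{\Pi\Sigma}$ with the Möbius function $\mu(\Pi,\Sigma)$ of the partition lattice. Because the interval $[\Pi,\Sigma]$ only depends on how the blocks of $\Pi$ are grouped into blocks of $\Sigma$, the lattice $\{\Sigma' : \Sigma' \succeq \Pi\}$ is canonically isomorphic to the full partition lattice on $[|\Pi|]$, and $\mu(\Pi,\Sigma)$ factors over the blocks of $\Sigma$. The standard Möbius-function computation for the partition lattice (see, e.g., Stanley's \emph{Enumerative Combinatorics} vol.~1, \S3.10) gives
\begin{equation*}
  |K^{-1}_{\Pi\Sigma}| \;=\; \prod_{B \in \Sigma} (r_B - 1)! \,,
\end{equation*}
where $r_B$ is the number of blocks of $\Pi$ contained in the block $B$ of $\Sigma$. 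In particular every $|K^{-1}_{\Pi\Sigma}|$ is a positive integer, justifying the inequality above.

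Finally, putting $m = |\Pi|$ and reindexing the sum over $\Sigma \succeq \Pi$ by partitions $\tau$ of $[m]$ (identifying $\tau$ with the grouping of $\Pi$'s blocks into $\Sigma$'s blocks), I need to show
\begin{equation*}
  \sum_{\tau \vdash [m]} \prod_{b \in \tau} (|b|-1)! \;=\; m! \,.
\end{equation*}
This is the classical cycle-decomposition identity: every permutation $\sigma \in \Sym(m)$ gives rise, via its cycle decomposition, to a set partition $\tau$ of $[m]$ together with a cyclic ordering on each block, and a cycle on a set of size $k$ admits exactly $(k-1)!$ cyclic orderings. Summing over the induced partition $\tau$ and then over the $\prod_b (|b|-1)!$ cyclic orderings of its blocks enumerates $\Sym(m)$ bijectively, yielding the count $m!$. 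Combining this with $|\Pi|! \le t!$ completes the lemma. There is no serious obstacle here: the ``content'' of the lemma is the recognition that $K^{-1}$ is the partition-lattice Möbius function, after which both the nonvanishing of the entries and the sum evaluation are standard.
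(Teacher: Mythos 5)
Your proof is correct. The paper does not prove this lemma itself — it is quoted from Low's thesis \cite[Corollary 4.2.6]{low2010pseudo} — and your derivation (identifying $K^{-1}$ with the M\"obius function of the partition lattice, so that $\labs{K^{-1}_{\Pi\Sigma}} = \prod_{B\in\Sigma}(r_B-1)! \ge 1$, and then evaluating $\sum_{\tau\vdash[m]}\prod_{b\in\tau}(\abs{b}-1)! = m!$ by counting permutations via their cycle decompositions) is the standard argument and matches the one in the cited reference. All three steps — the lower bound from the entries being nonzero integers, the reindexing of $\{\Sigma \succeq \Pi\}$ by partitions of $[\abs{\Pi}]$, and the final bound $\abs{\Pi}! \le t!$ — check out.
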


\subsection{Infinite \texorpdfstring{$B$}{B} limit} 
\label{sec:asym_limit}

We prove \cref{thm:permutationoverlap} in the limit where the dimension $|B|$ of the overlap goes to infinity.
This will follow from the fact that in this limit, the states $\ket{\nor O_\Pi}$ become orthogonal.

\begin{lemma}[Asymptotic orthogonality of partitions]\label{lemma:asym_ortho_partitions} 
    For any fixed $t$,
    \begin{align}
        \lim_{N \to \infty}
        \lnorm{
            \avg_{\pi \in \Sym(N)} [P(\pi)^{\otimes t} ] - \sum_{\Pi \vdash [t]} \proj{\nor O_\Pi}
        }_{\infty} 
        = 0 .
    \end{align}
\end{lemma}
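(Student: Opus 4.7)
\textbf{Proof proposal for \cref{lemma:asym_ortho_partitions}.}

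The plan is to use the inclusion--exclusion relation of \cref{lem:incl_excl} to rewrite each $\ket{\nor O_\Pi}$ in the orthogonal family $\{\ket{\nor O'_\Sigma}\}_{\Sigma \vdash [t]}$, and then observe that the ``off-diagonal'' contributions carry explicit negative powers of $N$, so they vanish in the limit. The total number of partitions of $[t]$ is the Bell number $B_t$, a constant for fixed $t$, so no sophisticated combinatorial control is needed.

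Concretely, first note that by \cref{lem:partition_props}~(i),~(ii) the operator $\avg_{\pi \in \Sym(N)} P(\pi)^{\otimes t}$ equals the orthogonal projector $P_+$ onto the span of $\{\ket{\unn O'_\Pi}\}_{\Pi \vdash [t]}$ (for $N \ge t$). Since the $\ket{\unn O'_\Pi}$ are mutually orthogonal (they live in disjoint computational-basis subspaces indexed by the ``exact equality'' pattern $\Pi$),
\begin{align*}
P_+ \;=\; \sum_{\Pi \vdash [t]} \frac{\ket{\unn O'_\Pi}\bra{\unn O'_\Pi}}{\norm{\unn O'_\Pi}^2} \;=\; \sum_{\Pi \vdash [t]} \frac{\ket{\nor O'_\Pi}\bra{\nor O'_\Pi}}{\norm{\nor O'_\Pi}^2} \,.
\end{align*}
By \cref{lem:partition_props}~(iii), $\norm{\nor O'_\Pi}^2 = N(N-1)\cdots(N-|\Pi|+1)/N^{|\Pi|} = 1 - O_t(1/N)$, so
\begin{align*}
P_+ \;=\; \sum_{\Pi \vdash [t]} \ket{\nor O'_\Pi}\bra{\nor O'_\Pi} \;+\; E_1, \qquad \norm{E_1}_\infty = O_t(1/N).
\end{align*}

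Second, dividing the inclusion--exclusion formula $\ket{\unn O_\Pi} = \sum_{\Sigma \succeq \Pi} \ket{\unn O'_\Sigma}$ by $\norm{\unn O_\Pi} = N^{|\Pi|/2}$ and using $\ket{\unn O'_\Sigma} = N^{|\Sigma|/2}\ket{\nor O'_\Sigma}$, we get
\begin{align*}
\ket{\nor O_\Pi} \;=\; \ket{\nor O'_\Pi} \;+\; \sum_{\Sigma \succ \Pi} N^{(|\Sigma| - |\Pi|)/2} \ket{\nor O'_\Sigma} \;=\; \ket{\nor O'_\Pi} + \ket{\delta_\Pi},
\end{align*}
where $\Sigma \succ \Pi$ forces $|\Sigma| < |\Pi|$, so every correction coefficient satisfies $|N^{(|\Sigma|-|\Pi|)/2}| \le N^{-1/2}$. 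Since $\norm{\nor O'_\Sigma}\le 1$ and there are at most $B_t$ terms, $\norm{\delta_\Pi} \le B_t \, N^{-1/2}$.

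Finally, expanding the outer product,
\begin{align*}
\sum_{\Pi} \ket{\nor O_\Pi}\bra{\nor O_\Pi}
\;=\; \sum_\Pi \ket{\nor O'_\Pi}\bra{\nor O'_\Pi}
\;+\; \sum_\Pi \Big( \ket{\nor O'_\Pi}\bra{\delta_\Pi} + \ket{\delta_\Pi}\bra{\nor O'_\Pi} + \ket{\delta_\Pi}\bra{\delta_\Pi}\Big),
\end{align*}
each of the cross terms has operator norm $O_t(N^{-1/2})$, and summing $B_t$ of them still yields $O_t(N^{-1/2})$. Combined with $\norm{E_1}_\infty = O_t(1/N)$ from the first step, this gives the desired vanishing in the $N \to \infty$ limit, with the explicit quantitative rate $O_t(1/\sqrt N)$. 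The argument is purely bookkeeping with the two normalization conventions; the main (mild) obstacle is only keeping the two families $\ket{\nor O_\Pi}$ and $\ket{\nor O'_\Pi}$ straight and verifying that the correction has strictly positive power of $1/\sqrt N$ for every $\Sigma \neq \Pi$, which follows because refinements strictly increase block count.
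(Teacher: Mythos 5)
Your proposal is correct and takes essentially the same route as the paper: both proofs rest on $\avg_\pi P(\pi)^{\ot t} = \sum_\Pi \proj{\nor O'_\Pi}$ (up to the normalization subtlety) and the inclusion--exclusion comparison $\ket{\nor O_\Pi} = \ket{\nor O'_\Pi} + O_t(N^{-1/2})$. Where the paper's three-line proof merely asserts $\lim_N \norm{\nor O_\Pi - \nor O'_\Pi} = 0$ and silently treats $\ket{\nor O'_\Pi}$ as orthonormal, you actually \emph{derive} that convergence with an explicit $O_t(N^{-1/2})$ rate by expanding the $K$-matrix and observing that every $\Sigma \succ \Pi$ contributes $N^{(|\Sigma|-|\Pi|)/2} \le N^{-1/2}$, and you correctly track the $E_1 = O_t(1/N)$ correction arising from $\norm{\nor O'_\Pi}^2 = \prod_{j<|\Pi|}(1-j/N) \ne 1$ under the paper's normalization convention (which divides by $\norm{\unn O_\Pi}$ rather than $\norm{\unn O'_\Pi}$). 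That extra care is sound bookkeeping and, if anything, patches a small imprecision in the source; the quantitative rate you obtain is also a useful byproduct that the paper does not state.
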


\begin{proof}
    We can freely interchange $\ket{\nor O'_\Pi}$ and $\ket{\nor O_\Pi}$ 
    since they are equal in the large-$N$ limit:
    \begin{align}
        \lim_{N\to\infty} 
        \norm{
            \ket{\nor O_\Pi} - \ket{ \nor O'_\Pi}
        } = 0 \label{eq:OO'_largeN}.
    \end{align}
    The lemma then follows from \cref{lem:partition_props} \cref{item:ortho_basis}, 
    which implies $\avg_{\pi \in \Sym(N)} [P(\pi)^{\otimes t} ]= \sum_{\Pi \vdash [t]} \proj{\nor O'_\Pi}$ for $N \geq t$.
\end{proof}

\begin{lemma}[Asymptotic overlap lemma]\label{lemma:permutationoverlap_LargeN}
    \begin{align}
        \lim_{\abs{B} \to \infty}
        \norm*{ 
                \avg_{\substack{\pi_{AB} \sim \mu(\Sym(A\times B)) \\ \pi_{BC} \sim \mu(\Sym(B \times C))}} P(\pi_{AB})^{\otimes t} P(\pi_{BC})^{\otimes t} 
                -  
                \avg_{\pi_{ABC} \sim \mu(\Sym(A \times B \times C))} P(\pi_{ABC})^{\otimes t}
            }_{\infty} 
        = 0 .
    \end{align}
\end{lemma}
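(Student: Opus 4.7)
The plan is to apply Lemma A.4 three times --- once to each of the three moment operators --- and then to reduce the resulting product using the factorization property (A.7). Since $|A|$ and $|C|$ are fixed while $|B|\to\infty$, each of $|AB|$, $|BC|$, and $|ABC|$ tends to infinity, so Lemma A.4 gives operator-norm convergence
\begin{align*}
\avg_{\pi_{AB}} P(\pi_{AB})^{\otimes t} \otimes \one_C^{\otimes t}
&\longrightarrow \sum_{\Pi_1 \vdash [t]} \ket{\nor O^{(AB)}_{\Pi_1}}\!\bra{\nor O^{(AB)}_{\Pi_1}} \otimes \one_C^{\otimes t},\\
\one_A^{\otimes t} \otimes \avg_{\pi_{BC}} P(\pi_{BC})^{\otimes t}
&\longrightarrow \one_A^{\otimes t} \otimes \sum_{\Pi_2 \vdash [t]} \ket{\nor O^{(BC)}_{\Pi_2}}\!\bra{\nor O^{(BC)}_{\Pi_2}},\\
\avg_{\pi_{ABC}} P(\pi_{ABC})^{\otimes t}
&\longrightarrow \sum_{\Pi \vdash [t]} \ket{\nor O^{(ABC)}_\Pi}\!\bra{\nor O^{(ABC)}_\Pi}.
\end{align*}
A standard triangle inequality (if $\|X-\tilde X\|,\|X'-\tilde X'\|\le \varepsilon$ and $\|X\|,\|X'\|\le 1$, then $\|XX'-\tilde X\tilde X'\|= O(\varepsilon)$) reduces the lemma to verifying that the product of the first two right-hand sides equals the third right-hand side in the $|B|\to\infty$ limit.

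The verification is a direct calculation using factorization. From (A.7) and $\|\ket{\unn O^{(X)}_\Pi}\|^2 = |X|^{|\Pi|}$, the normalized states also factor: $\ket{\nor O^{(XY)}_\Pi} = \ket{\nor O^{(X)}_\Pi}\otimes \ket{\nor O^{(Y)}_\Pi}$. Reordering tensor factors into $A^{\otimes t}\otimes B^{\otimes t}\otimes C^{\otimes t}$ and composing yields
\begin{align*}
\sum_{\Pi_1,\Pi_2} \angles{\nor O^{(B)}_{\Pi_1}|\nor O^{(B)}_{\Pi_2}} \cdot \ket{\nor O^{(A)}_{\Pi_1}}\!\bra{\nor O^{(A)}_{\Pi_1}} \otimes \ket{\nor O^{(B)}_{\Pi_1}}\!\bra{\nor O^{(B)}_{\Pi_2}} \otimes \ket{\nor O^{(C)}_{\Pi_2}}\!\bra{\nor O^{(C)}_{\Pi_2}}.
\end{align*}
A direct count gives $\angles{\nor O^{(B)}_{\Pi_1}|\nor O^{(B)}_{\Pi_2}} = |B|^{|\Pi_1\vee \Pi_2| - (|\Pi_1|+|\Pi_2|)/2}$, where $\Pi_1\vee\Pi_2$ is the join of the two partitions. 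The exponent is non-positive, and vanishes iff $\Pi_1=\Pi_2$ (since $\Pi_1$ refines $\Pi_1\vee\Pi_2$ with the same block count forces equality). Hence the cross terms decay as $|B|\to\infty$, leaving $\sum_\Pi \ket{\nor O^{(A)}_\Pi}\!\bra{\nor O^{(A)}_\Pi}\otimes \ket{\nor O^{(B)}_\Pi}\!\bra{\nor O^{(B)}_\Pi}\otimes \ket{\nor O^{(C)}_\Pi}\!\bra{\nor O^{(C)}_\Pi}$, which by factorization equals $\sum_\Pi \ket{\nor O^{(ABC)}_\Pi}\!\bra{\nor O^{(ABC)}_\Pi}$ as desired.

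The argument is essentially a soft limit computation with no combinatorial obstacle, because Lemma A.4 has already absorbed the nontrivial structure into the orthonormality of the $\ket{\nor O'_\Pi}$ basis, and factorization makes the $B$-subsystem behave independently of $A$ and $C$ at leading order. The real difficulty lies in the \emph{quantitative} version, Theorem A.1, where one must track the $|B|$-dependence of all these errors explicitly; the authors defer that task to Section A.2, where the polynomial-interpolation method of~\cite{chen2024efficient,chen2024new} is used to bootstrap from this qualitative asymptotic statement to the $\tilde{\CO}(t^3/\sqrt{|B|})$ bound.
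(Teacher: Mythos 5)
Your proof is correct and follows essentially the same route as the paper's: apply \cref{lemma:asym_ortho_partitions} to pass from moment operators to sums of rank-one projectors, factorize over the tensor factors, and show that the $B$-overlap $\braket{\nor O^{(B)}_{\Pi_1}|\nor O^{(B)}_{\Pi_2}}\to\indicator[\Pi_1=\Pi_2]$ kills the cross terms. The only (minor) difference is that you establish this last convergence by the direct count $\braket{\nor O^{(B)}_{\Pi_1}|\nor O^{(B)}_{\Pi_2}}=|B|^{|\Pi_1\vee\Pi_2|-(|\Pi_1|+|\Pi_2|)/2}$, whereas the paper derives it from \cref{eq:OO'_largeN} (the limit $\ket{\nor O_\Pi}\to\ket{\nor O'_\Pi}$) together with the orthonormality of $\{\ket{\nor O'_\Pi}\}$; both are elementary and equivalent.
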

\begin{proof}
From \cref{lemma:asym_ortho_partitions} and~\cref{eqn:factorisation} we get that 
\begin{align*}
    0 &= \lim_{|B| \to \infty} 
    \norm*{
        \avg [P(\pi_{AB})^{\otimes t} P(\pi_{BC})^{\otimes t}] - 
        \left(
            \sum_{\Pi \vdash [t]}\proj{\nor O ^{(A)}_\Pi} \otimes \proj{\nor O^{(B)}_\Pi}
        \right)
        \left(
            \sum_{\Pi \vdash [t]} \proj{\nor O^{(B)}_\Pi} \otimes \proj{\nor O^{(C)}_\Pi}
        \right)
    }_\infty \\
\intertext{
    From \cref{eq:OO'_largeN} we have 
    $\braket{\nor O^{(B)}_\Pi | \nor O^{(B)}_{\Sigma}} \to \one[\Pi = \Sigma]$ as $\abs B \to \infty$, 
    so we can simplify:
}
    &= \lim_{|B| \to \infty} 
    \lnorm{
        \avg [P(\pi_{AB})^{\otimes t} P(\pi_{BC})^{\otimes t}] 
        - 
        \sum_{\Pi \vdash [t]}
        \proj{\nor O^{(A)}_\Pi} \otimes \proj{\nor O^{(B)}_\Pi} \otimes \proj{\nor O^{(C)}_\Pi}
    }_\infty \\
    &=\lim_{|B| \to \infty} 
    \lnorm{
        \avg [P(\pi_{AB})^{\otimes t} P(\pi_{BC})^{\otimes t}] 
        - 
        \sum_{\Pi \vdash [t]} \proj{\nor O^{(ABC)}_\Pi}
    }_{\infty}
    \tag{by~\cref{eqn:factorisation}}\\
    &= \lim_{|B| \to \infty} 
    \lnorm{
        \avg [P(\pi_{AB})^{\otimes t} P(\pi_{BC})^{\otimes t}] - \avg [P(\pi_{ABC})^{\otimes t}]
    }_{\infty},\tag{by~\cref{lemma:asym_ortho_partitions}}
\end{align*}
as advertised.
\end{proof}

\begin{remark}[Partitions are far from being orthogonal if $t > \log N$]\label{rem:partition_non_orthogonal}
    One should be cautious in interpreting the $N\rightarrow \infty$ calculation in \cref{lemma:asym_ortho_partitions} 
    since the states $\ket{\nor O_{\Pi}}$, 
    are far from being orthogonal even for small~$t$
    in the sense that $\norm*{\sum_{\Pi \vdash [t]} \proj{\nor O_\Pi}}_\infty   \ge  (2^{t-1}-1)/N$.
    Since $\ket{\nor O_\Pi}$ form a basis, this norm would be close to~$1$ if they were nearly orthogonal.
    (This contrasts to the unitary case~\cite{brandao2016local}.)
    To understand the lower bound on the norm, 
    consider a partition $\Sigma = \{ \{1,\cdots, t\} \} \vdash [t]$ with a single block
    and another partition~$\chi = \{ S_1, S_2 \}$ with two blocks.
    The associated states are $\ket{\unn O_\Sigma} = \sum_{i=1}^N \ket{i}^{\otimes t}$
    and $\ket{\unn O_\chi} = \sum_{i,j=1}^N \ket{i}^{\otimes \abs{S_1}}\ket{j}^{\otimes \abs{S_2}}$.
    Hence,
    \begin{align}
        \braket{\nor O_\Sigma | \nor O_\chi}\!\braket{\nor O_\chi | \nor O_\Sigma} = \frac{1}{N} \, .
    \end{align}
    However, there are $(2^{t} - 2)/2$ choices of such~$\chi$. Therefore,
    \begin{align}
        \norm*{\sum_{\Pi \vdash [t]} \proj{\nor O_\Pi} }
        \ge
        \bra{\nor O_\Sigma} \left( \sum_{\Pi \vdash [t]} \proj{\nor O_\Pi} \right)\ket{\nor O_\Sigma} 
        \ge 
        \frac{2^{t-1}-1}{N} \, .
    \end{align}
\end{remark}

\subsection{Polynomial interpolation}
\label{sec:interpolate}

The operator of our interest is
\newcommand{\DD}{{\mathds{D}}}
\begin{align}
    \DD_{ABC} = \avg_{\substack{\pi_{AB} \sim \mu(\Sym(A\times B)) \\ \pi_{BC} \sim \mu(\Sym(B \times C))}} 
    P(\pi_{AB})^{\otimes t} P(\pi_{BC})^{\otimes t} 
    -  
    \avg_{\pi_{ABC} \sim \mu(\Sym(A \times B \times C))} P(\pi_{ABC})^{\otimes t}
\end{align}
The norm $\norm{\DD_{ABC}}$, if regarded as a function of $x = 1 / \abs B$, assumes zero at $x=0$ 
by~\cref{lemma:permutationoverlap_LargeN}.
The expression~$\DD_{ABC}$ contains multiple parts involving $1/\abs B$,
some of which we will regard as fixed values and some other as an infinitesimal variable~$x$.
This is Step~1 in the proof of~\cref{thm:permutationoverlap} below.
We will then investigate a series expansion of a function that is closely related to~$\norm \DD$ in the infinitesimal~$x$,
truncate it to obtain a polynomial approximation of a certain low degree (Step~2 and~3),
and bound the error term using Markov's other inequality (Step~4),
which relates the derivative of a polynomial by its values and degree.
The series expansion in~$x$ can be thought of as an interpolation 
between the limit value $0 = \lim_{\abs{B} \to \infty} \norm \DD$ at~$x = 0$ 
and the value $\norm{\DD_{ABC}}$ of interest at~$x = 1/\sqrt{\abs B}$.

\begin{proof}[Proof of~\cref{thm:permutationoverlap}]
Since the goal is to estimate $\norm{\DD_{ABC}}$,
we are going to bound the magnitude of~$\bra{\psi_\text{left}^N} \DD_{ABC} \ket{\psi_\text{right}^N}$
for arbitrary $\ket{\psi_\text{left}^N}$ and $\ket{\psi_\text{right}^N}$.
Observe that the averages, $\avg_{\pi_{AB}} P(\pi_{AB})^{\otimes t}$ and two others that appear in $\DD_{ABC}$, 
are each a projector,
and the image of $\avg_{\pi_{ABC}} P(\pi_{ABC})^{\otimes t}$ is contained in the intersection of the other two.
Hence, we may assume without loss of generality that $\ket{\psi_\text{left}^N}$ 
is in the image of $\avg_{\pi_{AB}} P(\pi_{AB})^{\otimes t}$
and that $\ket{\psi_\text{right}^N}$ is in the image of $\avg_{\pi_{BC}} P(\pi_{BC})^{\otimes t}$.

\paragraph{Step~1: defining an interpolating function~$f$.}

Using the assumption that $\ket{\psi_\text{left, right}^N}$ are in the image of the projectors,
we expand them in the basis~$\{\ket{\nor O'_\Pi}\}_\Pi$:
\begin{align}
        \ket{\psi_\text{left}^N} &= \sum_{\Pi\vdash [t]} \ket{\nor{O'}^{(AB)}_\Pi} \otimes \ket{\unn u^{(C)}_{\Pi}}
        \quad \text{such that}\quad 
        \norm{\ket{\psi_\text{left}^N}}^2 = \sum_{\Pi\vdash [t]} \braket{\unn u^{(C)}_{\Pi} | \unn u^{(C)}_{\Pi}} = 1,\label{eq:uCandvA}\\
        \ket{\psi_\text{right}^N} &=\sum_{\Pi\vdash [t]} \ket{\unn v^{(A)}_{\Pi}} \otimes \ket{\nor{O'}^{(BC)}_{\Pi}}
        \quad \text{such that}\quad 
        \norm{\ket{\psi_\text{right}^N}}^2 = \sum_{\Pi\vdash [t]} \braket{\unn v^{(A)}_{\Pi} | \unn v^{(A)}_{\Pi}} = 1 \nonumber
\end{align}
Now we choose which part to regard as fixed and which part as varying.
Our tripartite system has been $ABC$, 
but we will consider $X$ in place of~$B$ where $\abs X = N'$ may differ from $\abs B = N$.
This notation is a reminder that $x = 1/\sqrt{N'}$ will be regarded as an infinitesimal variable.
We define
\begin{align}
    \ket{\phi_\text{left}^{N'}} 
    =
    \sum_{\Pi\vdash [t]} \ket{\nor O'^{(AX)}_{\Pi}}\otimes \ket{ \unn u^{(C)}_{\Pi}} 
    \quad \text{and} \quad 
    \ket{\phi_\text{right}^{N'}} 
    =
    \sum_{\Pi\vdash [t]}  \ket{\unn v^{(A)}_{\Pi}} \otimes \ket{\nor O'^{(X C)}_{\Pi}} \, .
\end{align}
That is, we retain $\ket{\unn u^{(C)}_\Pi}$ and $\ket{\unn v^{(A)}_\Pi}$
but replace $B$ with $X$ from $\ket{\psi_\text{left, right}^N}$ using 
\cref{eq:unnOOprimeDefinition,eqn:normalized_part_states}
with $A$ and $C$ unchanged.
This makes sense because $\Pi$ are partitions of~$[t]$, having nothing to do with~$B$ or~$X$,
and consequently the number of basis states $\ket{\nor{O'}^{(XC)}_{\Pi}}$ depends only on~$t$, not~$\abs X$.

We now define a complex-valued function of $x = 1/ \sqrt{N'}$ by
\begin{align}
    f(x) 
    = 
    \Bra{\phi_\text{left}^{N'}} 
        \DD_{AXC}
    \Ket{\phi_\text{right}^{N'}}
    =
    \Bra{\phi_\text{left}^{N'}} 
        \avg_{\pi_{AX}, \pi_{XC}} P(\pi_{AX})^{\otimes t} P( \pi_{XC})^{\otimes t} -  \avg_{\pi_{AXC}} P(\pi_{AXC})^{\otimes t} 
    \Ket{\phi_\text{right}^{N'}}.\label{eqn:fwrittenup}
\end{align}
The function~$f$ depends on the subsystems $A,C$,
and the normalized vectors $\ket{\psi_\text{left}^{N}}$ and $\ket{\psi_\text{right}^{N}}$.
By the construction and the limit result of~\cref{lemma:permutationoverlap_LargeN},
we have
\begin{align}
    f\parens{\frac 1 {\sqrt N}} &= \bra{\psi_\text{left}^N} \DD_{ABC} \ket{\psi_\text{right}^N} \, , \\
    \abs{ f(x) } &\leq 2 \, \quad\text{if $x = \frac{1}{\sqrt{N'}}$ for any integer~$N' \ge 1$},\label{eqn:fbounded}\\ 
    \lim_{N' \to \infty} f\parens{\frac{1}{\sqrt{N'}}} &= 0 \,. \label{eqn:ftozero}
\end{align}

\paragraph{Step $2$: the $N'$-dependence in $f$.}

The domain of the function~$f$ is \emph{not} a continuous real interval,
but a discrete set~$\{ \frac 1 {\sqrt n} \}_n \subset (0,1]$.
We will extend the domain of~$f$ to a continuous interval $[0, \frac 1 {\sqrt t}]$ by an obvious formula,
which we will later use to find a polynomial approximation.

Let us examine $f(x)$ term by term.
\begin{align}
f(\frac{1}{\sqrt{N'}}) 
&= 
\bra{\phi_\text{left}^{N'}} \DD_{AXC} \ket{\phi_\text{right}^{N'}} 
= 
\braket{\phi_\text{left}^{N'} | \phi_\text{right}^{N'}} 
- 
\bra{\phi_\text{left}^{N'}} \avg_{\pi_{AX}} P(\pi_{AX C})^{\otimes t} \ket{\phi_\text{right}^{N'}} \label{eqn:fwrittenout}
\end{align}
For the first term, we use the factorization of basis vectors: 
\begin{align}
    \ket{\unn O'^{(BC)}_\Pi} &= \sum_{\Sigma \succeq \Pi }K^{-1}_{\Pi\Sigma}\ket{\unn O^{(BC)}_{\Sigma}}\tag{by inclusion-exclusion:~\cref{lem:incl_excl}}\\
    &= \sum_{\Sigma\vdash [t] }K^{-1}_{\Pi\Sigma}\ket{\unn O^{(B)}_{\Sigma}}\otimes \ket{\unn O^{(C)}_{\Sigma}}\tag{by factorization:~\cref{eqn:factorisation}}\\
    &= \sum_{\Sigma\vdash [t] }K^{-1}_{\Pi\Sigma} \sum_{\Pi_{B} \succeq \Sigma} \ket{\unn O'^{(B)}_{\Pi_{B}}}\otimes \ket{\unn O^{(C)}_{\Sigma}}.\tag{by inclusion-exclusion:~\cref{lem:incl_excl}}
\end{align}
Therefore, the state $\ket{\phi_\text{right}^{N'}}$ can be written in terms of $\ket{\unn O'^{(X)}_{\chi}}$:
\begin{align}
    \ket{\phi_\text{right}^{N'}} &=\sum_{\Pi \vdash [t]} \ket{ \unn v^{(A)}_{\Pi}} \otimes \ket{\nor O'^{(XC)}_{\Pi}}
    =\sum_{\Pi\vdash [t]} \frac{\ket{ \unn v^{(A)}_{\Pi}} }{\norm*{\ket{ \unn O'^{(XC)}_{\Pi}}}} \otimes \sum_{\Sigma \succeq \Pi }K^{-1}_{\Pi\Sigma} \sum_{\chi \succeq \Sigma} \ket{\unn O'^{(X)}_{\chi}} \otimes \ket{\unn O^{(C)}_{\Sigma}}. \label{eq:phirightexpansion}
\end{align}
Similarly,
\begin{align}
    \bra{\phi_\text{left}^{N'}} 
    &= \sum_{\Pi \vdash [t]} \sum_{\Sigma \succeq \Pi} \sum_{\chi \succeq \Sigma} 
    \frac{K^{-1}_{\Pi\Sigma}}{\norm*{\ket{\unn O'^{(AX)}_\Pi}}} \bra{\unn O^{(A)}_\Sigma} \bra{\unn O'^{(X)}_\chi} \bra{\unn u^{(C)}_\Pi} \, .
\end{align}
Hence,
\begin{align}
    \braket{ \phi_\text{left}^{N'} | \phi_\text{right}^{N'} } 
    &=
    \sum_{\Pi_{1}, \Pi_2, \Pi_{3} \vdash [t]} c(\Pi_1, \Pi_2, \Pi_3)\cdot 
    \frac{
        \sqrt{\abs A}^{\abs{\Pi_1}} \sqrt{\abs C}^{\abs{\Pi_2}}\cdot\norm*{\ket{\unn O'^{(X)}_{\Pi_3}}}^2
    }{
        \norm*{\ket{\unn O'^{(AX)}_{\Pi_1}}} \cdot \norm*{\ket{\unn O'^{(XC)}_{\Pi_2}}}
    }
\end{align}
where the coefficients are
\begin{align}
c(\Pi_1,\Pi_2,\Pi_3) 
    &=
    \sum_{\substack{\Sigma_1,\Sigma_2 \, : \\
    \Pi_1 \preceq \Sigma_1 \preceq \Pi_3,\\
    \Pi_2 \preceq \Sigma_2 \preceq \Pi_3}} 
        K^{-1}_{\Pi_1,\Sigma_1}  K^{-1}_{\Pi_2,\Sigma_2} 
        \frac{\braket{\unn O^{(A)}_{\Sigma_1} | \unn v^{(A)}_{\Pi_1} }}{\sqrt{\abs A}^{\abs{\Pi_1}}}
        \frac{\braket{\unn u^{(C)}_{\Pi_2} | \unn O^{(C)}_{\Sigma_2}}}{\sqrt{\abs C}^{\abs{\Pi_2}}} \, ,\label{eq:denominators_ABBC}\\
    \abs{c(\Pi_1,\Pi_2,\Pi_3)}
    &\le
    \sum_{\Sigma_1 \succeq \Pi_1, \Sigma_2 \succeq \Pi_2 } 
        \abs{K^{-1}_{\Pi_1,\Sigma_1}} \abs{ K^{-1}_{\Pi_2,\Sigma_2} } \qquad \text{by \cref{eq:uCandvA} and \cref{lem:partition_props} \cref{item:low-degree}} \nonumber\\
    &\le
    (t!)^2
    \qquad \text{by \cref{lemma:comb_bound_partitions}} . \nonumber
    \nonumber
\end{align}
Importantly, the coefficients $c(\Pi_1, \Pi_2, \Pi_3)$ are \emph{independent of $N'$}.
The factors $\sqrt{\abs A}^{\abs{\Pi_1}}$ and $\sqrt{\abs C}^{\labs{\Pi_2}}$ are also independent of $N'$
but are inserted to make later formulas more convenient.

Similarly, for the second term in \cref{eqn:fwrittenout}, 
we have from \cref{lem:partition_props} \cref{item:ortho_basis}
\begin{align}
     \avg_{\pi_{AXC}} P(\pi_{AXC})^{\otimes t} 
     &=
     \sum_{\Pi \vdash [t]} \proj{\nor O'^{(AXC)}_\Pi} 
     = 
     \sum_{\Pi \vdash [t]} \frac{ \proj{\unn O'^{(AXC)}_\Pi} }{ \braket{\unn O'^{(AXC)}_{\Pi} | \unn O'^{(AXC)}_{\Pi}}}
\end{align}
where
\begin{align}
    \ket{\unn O'^{(AXC)}_\Pi} 
    &= \sum_{\Sigma \succeq \Pi } K^{-1}_{\Pi\Sigma}\ket{\unn O^{(AXC)}_{\Sigma}}
    = \sum_{\Sigma \succeq \Pi } \sum_{\chi \succeq \Sigma} 
    K^{-1}_{\Pi\Sigma} 
    \ket{\unn O^{(A)}_{\Sigma}} \ket{\unn O'^{(X)}_{\chi}} \ket{\unn O^{(C)}_{\Sigma}}.
\end{align}
Inserting these expansions into $\bra{\phi_\text{left}^{N'}} \avg [P(\pi_{AX C})^{\otimes t}] \ket{\phi_\text{right}^{N'}}$, 
we obtain
\begin{multline}
    \bra{\phi_\text{left}^{N'}} \avg [P(\pi_{A X C})^{\otimes t}] \ket{\phi_\text{right}^{N'}} \\
    = \sum_{\Pi_{1}, \Pi_2, \Pi_{3}, \Pi_{4},\Pi_5 \vdash [t]} 
    d(\Pi_1, \Pi_2, \Pi_3, \Pi_4,\Pi_5) 
    \frac{
        \sqrt{\labs{A}}^{\labs{\Pi_1}}
        \sqrt{\labs{C}}^{\labs{\Pi_2}}
        \sqrt{\labs{A}\labs{C}}^{2\labs{\Pi_3}}
        \norm*{\ket{\unn O'^{(X)}_{\Pi_4}}}^2 
        \norm*{\ket{\unn O'^{(X)}_{\Pi_5}}}^2
    }{
        \norm*{\ket{\unn O'^{(AX)}_{\Pi_1}}} \cdot 
        \norm*{\ket{\unn O'^{(XC)}_{\Pi_2}}} \cdot 
        \norm*{\ket{\unn O'^{(AXC)}_{\Pi_3}}}^2
    } \label{eq:denominators_ABC}
\end{multline}
where the $N'$-independent coefficients are 
\begin{align}
d(\Pi_1, \Pi_2, \Pi_3, \Pi_4, \Pi_5) 
&=
\parens*{
    \sum_{\Sigma_A, \Pi_A} K^{-1}_{\Pi_1 \Sigma_A} K^{-1}_{\Pi_3 \Pi_A} 
    \frac{\braket{ \unn O^{(A)}_{\Sigma_A} | \unn O^{(A)}_{\Pi_A}}}{\sqrt{\abs A}^{\abs{\Pi_1} + \abs{\Pi_3}}}
    \frac{\braket{ \unn u ^{(C)}_{\Pi_1} | \unn O^{(C)}_{\Pi_A} }}{\sqrt{\abs C}^{\abs{\Pi_3}}}
    \indicator[\Pi_4 \succeq \Sigma_A]
} \times \nonumber \\
&\quad\qquad \parens*{
    \sum_{\Sigma_C, \Pi_C} K^{-1}_{\Pi_2 \Sigma_C} K^{-1}_{\Pi_3 \Pi_C} 
    \frac{\braket{ \unn O^{(C)}_{\Pi_C} | \unn O^{(C)}_{\Sigma_C} }}{\sqrt{\abs C}^{\abs{\Pi_3}+\abs{\Pi_2}}}
    \frac{\braket{ \unn O^{(A)}_{\Pi_C} | \unn v ^{(A)}_{\Pi_2} }}{\sqrt{\abs A}^{\abs{\Pi_3}}}
    \indicator[\Pi_5 \succeq \Sigma_C]
} \, ,
\\
\abs{d(\Pi_1, \Pi_2, \Pi_3, \Pi_4, \Pi_5)} 
&\le
(t!)^4 \, .\nonumber
\end{align}

\paragraph{Step $3$: Low-degree approximation.}

It is now apparent that the $N'$-dependence of~$f$ comes through the norms,
which can be explicitly written in a simple formula if $N' \ge t$
since in this regime the number of blocks ($\le t$) in any partition does not exceed~$N'$.
Indeed, \cref{lem:partition_props} \cref{item:low-degree} gives nondivergent formulas
\begin{align}
    \norm*{\ket{\unn O'^{(X)}_{\Pi}}} 
    &= \sqrt{N'(N'-1) \cdots (N'-\abs{\Pi}+1)} 
    = \sqrt{N'}^{\abs{\Pi}} \sqrt{
        \parens*{1-\frac{1}{N'}}
        \parens*{1-\frac{2}{N'}}
        \cdots 
        \parens*{1-\frac{\labs{\Pi}-1}{N'}}
    } \,,
    \label{eq:normexpr}\\
    \frac{1}{\norm*{\ket{\unn O'^{(AX)}_{\Pi}}}} 
    &= 
    \frac{1}{\sqrt{\labs{A}N'}^{\labs{\Pi}}} \frac{1}{\sqrt{(1-\frac{1}{\labs{A}N'})(1-\frac{2}{\labs{A}N'})\cdots (1-\frac{\labs{\Pi}-1}{\labs{A}N'})}}\,. \nonumber
\end{align}
We have remarked that the domain of definition of~$f(x)$ is a discrete set,
but it is now obvious how to extend the domain to a continuous interval~$[0,1/\sqrt t] \ni x$ by~\cref{eq:normexpr}.
The $N'$-dependent part of~\cref{eq:denominators_ABBC} is then
\begin{align}
    \frac{
        \sqrt{\labs{A}}^{\labs{\Pi_1}}\sqrt{\labs{C}}^{\labs{\Pi_2}}\cdot\norm*{\ket{\unn O'^{(X)}_{\Pi_3}}}^2
    }{
        \norm*{\ket{\unn O'^{(AX)}_{\Pi_1}}} \cdot \norm*{\ket{\unn O'^{(XC)}_{\Pi_2}}}
    } 
    &=
    \frac{1}{\sqrt{N'}^{\labs{\Pi_1}+\labs{\Pi_2}-2\labs{\Pi_3}}}
    \cdot 
    \underbrace{\left(
        \frac{\sqrt{\labs{A} N'}^{\labs{\Pi_1}}}{\norm*{\ket{\unn O'^{(AX)}_{\Pi_1}}}}
        \cdot 
        \frac{\sqrt{\labs{C} N'}^{\labs{\Pi_2}}}{\norm*{\ket{\unn O'^{(XC)}_{\Pi_2}}}}
        \cdot
        \frac{\norm*{\ket{\unn O'^{(X)}_{\Pi_3}}}^2}{N'^{\labs{\Pi_3}}} 
    \right)}_{=: \xi_2(1/N') = \xi(1/ \sqrt{N'})} \label{eq:NprimeDependentPart1}
\end{align}
where $\xi$ and $\xi_2$ satisfy $\xi(0)=\xi_2(0)=1$ and $\xi(x)=\xi_2(x^2)$ for all $x \in [0,1/\sqrt t]$.
The real functions $\xi, \xi_2$ have polynomial approximations supplied by the following.

\begin{lemma}[Taylor approximation]\label{lemma:Taylor} 
    For integers $p \geq r \geq 0$ and reals
    $\{a_1, \ldots, a_r, b_1,\ldots, b_{p-r} \} \subset [0, w] \subset \RR$, 
    we let
    \begin{align}
        g(z):= \frac{\prod^{r}_i\sqrt{1 - a_i z}}{\prod^{p-r}_j\sqrt{1 - b_j z}}.
    \end{align}
    Then, the Taylor expansion~$g_q(z)$ of $g(z)$ up to order~$q$ satisfies
    \begin{align}
        \labs{ g(z) - g_q(z)}
        \le 3^p 2^{-q}\quad \text{for all}\quad z \in \left[0,\frac{1}{5 w}\right].
    \end{align}
\end{lemma}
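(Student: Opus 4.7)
The plan is to realize $g$ as an analytic function on a complex disk of radius a constant times $1/w$ around the origin, bound it uniformly there, and then apply a Cauchy-type estimate on the Taylor remainder. The Taylor approximation $g_q(z)$ is understood to be about $z=0$.

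First I would extend $g$ to the closed complex disk $D_R := \{z \in \mathbb{C} : |z| \leq R\}$ with $R = 2/(5w)$. Since every coefficient $c \in \{a_1,\ldots,a_r,b_1,\ldots,b_{p-r}\}$ lies in $[0,w]$, any $z \in D_R$ satisfies $|cz| \leq 2/5$, so $1 - cz$ lies in the closed disk of radius $2/5$ about $1$, which is compactly contained in the open right half-plane. The principal branch of the square root is then analytic on this disk, so each factor $(1-a_i z)^{1/2}$ and $(1-b_j z)^{-1/2}$ extends analytically to all of $D_R$, and hence so does $g$.

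Next I estimate $M := \sup_{z \in D_R} |g(z)|$. For $z \in D_R$ and any coefficient $c \in [0,w]$, the bound $|cz| \leq 2/5$ gives $|1 - cz| \in [3/5,\, 7/5]$. Therefore each numerator factor satisfies $|(1-a_i z)^{1/2}| \leq \sqrt{7/5}$ and each denominator factor satisfies $|(1-b_j z)^{-1/2}| \leq \sqrt{5/3}$. Multiplying the $r$ numerator factors and the $p-r$ denominator factors gives
\begin{equation*}
M \leq (\sqrt{7/5})^{r}\,(\sqrt{5/3})^{p-r} \leq (\sqrt{5/3})^{p} \leq 3^{p}.
\end{equation*}

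Finally, since $g$ is analytic on $D_R$ with $|g|\leq M$, Cauchy's estimates yield $|g^{(k)}(0)|/k! \leq M R^{-k}$ for every $k \geq 0$. For real $z \in [0, 1/(5w)] = [0, R/2]$, the Taylor remainder of order $q$ obeys
\begin{equation*}
|g(z) - g_q(z)| \leq \sum_{k > q} \frac{|g^{(k)}(0)|}{k!}\, z^{k} \leq M\sum_{k>q} \left(\tfrac{|z|}{R}\right)^{k} \leq M \sum_{k>q} 2^{-k} = M\cdot 2^{-q} \leq 3^{p}\,2^{-q},
\end{equation*}
which is the desired inequality. The whole argument is routine complex analysis; no step looks like a real obstacle. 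The only minor care is picking the radius $R$ large enough (here $R = 2\cdot(1/(5w))$) so that the evaluation point $z = 1/(5w)$ sits safely inside $D_R$ with $|z|/R \leq 1/2$, while keeping each $|cz|$ bounded away from $1$ so the square roots stay analytic and bounded.
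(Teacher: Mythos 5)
Your proof is correct but takes a genuinely different route from the paper's. The paper stays in the reals: after normalizing $w=1$, it observes that the Maclaurin coefficients of $\sqrt{1-y}$ and $1/\sqrt{1-y}$ are bounded by $1$ in magnitude, so each factor is termwise dominated by $\sum_{k\ge 0} y^k = 1/(1-y)$, hence the product $g$ is termwise dominated by $(1-x)^{-p}$; the truncation error of $g$ is then bounded by that of $(1-x)^{-p}$, which the paper controls via the Lagrange form of Taylor's remainder, giving $2^{p+q}(1/5)^{q+1}(4/5)^{-(p+q+1)} = (5/2)^p 2^{-q-2} \le 3^p 2^{-q}$. You instead extend $g$ analytically to the complex disk of radius $R = 2/(5w)$, bound $|g|$ uniformly there by $(\sqrt{5/3})^p \le 3^p$, apply Cauchy's coefficient inequalities, and sum the resulting geometric tail $\sum_{k>q} 2^{-k} = 2^{-q}$. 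Both arguments are clean and give the same conclusion (your intermediate constant $(\sqrt{5/3})^p$ is actually a bit tighter than the paper's $(5/2)^p/4$). The paper's route is more elementary (no complex analysis, just series domination and the real Lagrange remainder), while yours is shorter and makes the geometric-decay mechanism transparent: the radius of analyticity is at least $1/w$, the evaluation point sits at a constant fraction of $R$, and Cauchy estimates do the rest. Both are valid; neither has a gap.
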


\noindent
The proof is straightforward by Taylor expansion and is found below.
Note that the error bound depends on the total number~$p$ of factors in $g$,
regardless of whether a factor is in the numerator or denominator.

There are $4t$ factors in $\xi_2(z) = \xi_2(1/N')$ where a factor is of form~$(1 - c z)^{\pm 1/2}$.
Therefore, $\xi_2(z)$ can be approximated by a degree-$q$ real polynomial 
with error $3^{4t} 2^{-q}$ for $z \in [0,(5 t)^{-1}]$.
For $\xi(x) = \xi_2(x^2)$, the same approximation error bound holds with a polynomial of degree $2q$.
The prefactor $N'^{-(\abs{\Pi_1} + \abs{\Pi_2} - 2\abs{\Pi_3})/2}$ in \cref{eq:NprimeDependentPart1} 
has degree between zero and $2t$ as a monomial in $N'^{-1/2}$
because $\Pi_3 \succeq \Pi_1$ and $\Pi_3 \succeq \Pi_2$, 
and hence $\abs{\Pi_3} \le \abs{\Pi_1}$ and $\abs{\Pi_3} \le \abs{\Pi_2}$.
Therefore, we have a polynomial $h_{2q+2t}(z) \in \CC[z]$ of degree at most $2q + 2t$
such that for all $N' \ge t$
\begin{align*}
    \abs*{
        \braket{\phi_\text{left}^{N'}|\phi_\text{right}^{N'}} - h_{2q+2t}(\frac{1}{\sqrt{N'}})
    } \le 
    \left( 
        \sum_{\Pi_1, \Pi_2, \Pi_3 \vdash [t]} |c(\Pi_1, \Pi_2, \Pi_3)| 
    \right)  \frac{3^{4t}}{2^q}  \, .
\end{align*}

To bound the sum of $\abs{c(\Pi_1,\Pi_2,\Pi_3)}$ we recall \cref{eq:denominators_ABBC}.
This sum contains at most~$(t!)^3$ terms by \cref{lemma:comb_bound_partitions}.
We see that the degree-$(2q + 2t)$ polynomial $h$ satisfies
\begin{align}
\labs{\braket{\phi_\text{left}^{N'}|\phi_\text{right}^{N'}} - h(\frac{1}{\sqrt{N'}})} \le (t!)^5 \cdot 
\frac{3^{4t}}{2^q}
\label{eq:poly_ABBC}
\end{align}
for all $N' \ge t$. 
Using analogous steps to approximate the norms that appear in~\cref{eq:denominators_ABC}, 
we can also find a 
polynomial~$g$ of degree $q+ \cO(t)$ that satisfies
\begin{align}
    \labs{\bra{\phi} \avg [ P(\pi_{ABC})^{\otimes t}]\ket{\psi} - g(\frac{1}{\sqrt{N'}})} 
    \le (t!)^{9} 
    \frac{3^{\cO(t)}}{2^q}
    \,. \label{eq:poly_ABC}
\end{align}
Combining~\cref{eq:poly_ABBC,eq:poly_ABC}, 
we get that for any integer $q \ge 1$, the polynomial $p(x) = h(x) + g(x)$ of degree $q + \cO(t)$ satisfies 
\begin{align}
    \labs{f(\frac{1}{\sqrt{N'}}) - p(\frac{1}{\sqrt{N'}})} \leq (t!)^{9} 
    \frac{3^{\cO(t)}}{2^q}
\end{align}
for all $N' \ge t$. 

\paragraph{Step $4$: Bounding $|f(1/\sqrt{N})|$ using Markov's other inequality.}

We choose $q = \Theta(t \log t + \log N)$.
Then, the polynomial~$p$ has degree $r = \Theta(t \log t + \log N)$ and achieves an approximation error
\begin{align}
    \labs{f(\frac{1}{\sqrt{N'}}) - p(\frac{1}{\sqrt{N'}})} \leq \frac{1}{N} \quad \text{for all}\quad N'\ge t. 
\end{align}
Combined with~\cref{eqn:ftozero,eqn:fbounded}, this implies 
\begin{align}
    \labs{\lim_{N' \to \infty} p(\frac{1}{\sqrt{N'}})} \leq \frac{1}{N}
    \quad \text{and} \quad 
    \sup_{N' \ge t} |p(\frac{1}{\sqrt{N'}})| \leq 2 + \frac{1}{N} \leq 3 \,. \label{eqn:P_props}
\end{align}
Triangle inequality implies
\begin{align}
    \labs{f(\frac{1}{\sqrt{N}})} \leq \labs{p(\frac{1}{\sqrt{N}})} + \frac{1}{N} \leq \labs{\Re p(\frac{1}{\sqrt{N}})} + \labs{\Im p(\frac{1}{\sqrt{N}})} + \frac{1}{N} \,. \label{eqn:fboundp}
\end{align}
Both real and imaginary parts of~$p$ are real polynomials of degree at most~$r$.

\begin{lemma}\label{cor:1/sqrtN_spacing}
Let $p \in \RR[z]$ be a real polynomial in a variable~$z$ of degree~$q$.
Let $N_0 = 4 q^2$.
Then, for all integers $N \geq N_0$,
\begin{align}
    \labs{p(\frac{1}{\sqrt{N}}) - p(0)} \leq \frac{8 q^3}{\sqrt{N}} \sup_{N' \in \N, N' \ge N_0}\labs{p(\frac{1}{\sqrt{N'}})}.
\end{align}
\end{lemma}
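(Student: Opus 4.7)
The plan is to combine A.\,A.~Markov's inequality for derivatives of polynomials with the density of the discrete set $\{1/\sqrt{N'} : N' \ge N_0\}$ inside the continuous interval $I := [0, 1/\sqrt{N_0}] = [0, 1/(2q)]$. Let $D := \sup_{N' \in \N, N' \ge N_0} |p(1/\sqrt{N'})|$ denote the discrete sup and $M := \sup_{x \in I} |p(x)|$ the continuous sup on $I$.

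The first step is to bound $M$ by a constant multiple of $D$. By Markov's inequality rescaled to the interval $I$ of length $1/(2q)$, every real polynomial $p$ of degree $q$ satisfies $\sup_{x \in I} |p'(x)| \le 4 q^3 \cdot M$. On the other hand, consecutive points of the discrete set obey
\begin{equation*}
    \frac{1}{\sqrt{N'}} - \frac{1}{\sqrt{N'+1}} \le \frac{1}{2 (N')^{3/2}} \le \frac{1}{2 N_0^{3/2}} = \frac{1}{16 q^3}
\end{equation*}
for every $N' \ge N_0$, so every point of $I$ lies within distance at most $1/(16 q^3)$ of some point of the discrete set. Picking an $x^* \in I$ with $|p(x^*)| = M$ and the nearest discrete point $1/\sqrt{N'}$, the mean value theorem gives $M \le D + (1/(16 q^3)) \cdot 4 q^3 M = D + M/4$, hence $M \le (4/3) D$.

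For the second step, apply the fundamental theorem of calculus along $[0, 1/\sqrt{N}] \subseteq I$ together with the Markov bound on $|p'|$:
\begin{equation*}
    \left| p\!\left(\tfrac{1}{\sqrt N}\right) - p(0) \right|
    \le \frac{1}{\sqrt N} \cdot \sup_{x \in I} |p'(x)|
    \le \frac{1}{\sqrt N} \cdot 4 q^3 M
    \le \frac{1}{\sqrt N} \cdot 4 q^3 \cdot \tfrac{4}{3} D
    \le \frac{8 q^3}{\sqrt N} D,
\end{equation*}
which is the asserted inequality.

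The only nontrivial point is the density argument used to promote the discrete sup to the continuous sup: we need the largest gap in the discrete set inside $I$ to be strictly smaller than $1/(q^2 \cdot \text{length}(I))$ so that the Markov bound absorbs it with room to spare. The choice $N_0 = 4 q^2$ is precisely what makes $1/(16 q^3) \cdot 4 q^3 = 1/4 < 1$, so no circular reasoning occurs when we bound $M$ in terms of $M$ itself. All remaining steps are routine.
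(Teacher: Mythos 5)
Your proof is correct and follows essentially the same approach as the paper: both bound the continuous supremum $M$ over the interval by the discrete supremum $D$ via the self-consistency argument $M \le D + (\text{gap}) \cdot (\text{Markov derivative bound}) \cdot M$, and then apply the mean value theorem together with Markov's inequality to get the final estimate. The paper packages the density-to-continuum step as a separate lemma (\cref{lemma:uniform_sparse}) and works with the rescaled function $\tilde f(x) = f(x/\sqrt{N_0})$ on $[0,1]$, whereas you keep Markov's inequality on the original interval $I = [0, 1/(2q)]$ and do the absorption inline; these are cosmetic differences in bookkeeping, and your constants ($M \le \tfrac43 D$, giving $\tfrac{16}{3} q^3 / \sqrt{N} \le 8q^3/\sqrt N$) land within the stated bound.
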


\noindent This is proved below. T
ogether with~\cref{eqn:P_props} we find that 

\begin{align}
    \labs{\Re p(\frac{1}{\sqrt{N}})} \leq \labs{\lim_{N' \to \infty} p(\frac{1}{\sqrt{N'}})} + \frac{10 r^3}{\sqrt{N}} \sup_{N' \ge t} \labs{p(\frac{1}{\sqrt{N'}})} \leq \cO(\frac{r^3}{\sqrt{N}})
\end{align}
and analogously 
\begin{align}
    \labs{\Im p(\frac{1}{\sqrt{N}})} \leq \cO(\frac{r^3}{\sqrt{N}}) \,.
\end{align}
Plugging this into~\cref{eqn:fboundp} we complete the proof of~\cref{thm:permutationoverlap}.
\end{proof}

\subsection{Deferred proofs}

\begin{proof}[Proof of \cref{lemma:Taylor}]
    We begin with Taylor expansions
    \begin{align}
        \sqrt{1-y} &= 1 - \frac{1}{2} y  - \frac{1}{8} y^2 \cdots, \qquad
        \frac{1}{\sqrt{1-y}} = 1 + \frac{1}{2 }y  +  \frac{3}{8}y^2+\cdots  .
    \end{align}
    where all the coefficients are bounded by~$1$ in magnitude.
    Therefore, both series are dominated by $1, y, y^2, \ldots$ where $ y \in [0,1)$,
    whose sum is $1/(1-y)$.
    
    Normalizing~$x$ by~$w$, 
    we may assume that $\abs{a_i}, \abs{b_j} \le 1, \abs{x} \le \frac 1 {5}$.
    It follows that the Taylor expansion of the target function 
    $(\prod_i (1 - a_i x)^{1/2})(\prod_j(1 - b_j x)^{-1/2})$ where $x \in [0, 1)$
    is dominated by the power series of $(1-x)^{-p}$.
    Hence, the truncation error of the Taylor series of the target function is
    upper bounded by that of $(1-x)^{-p}$.
    Taylor's theorem implies that if $h_q(x)$ is the Taylor series of $(1-x)^{-p}$ up to order~$q$,
    then for all $x \in [0,\frac 1 5]$ we have 
    \begin{align}
        \abs{(1-x)^{-p} - h_q(x)} 
        &\le 
        \max_{z \in [0,\frac 1 5]} \frac{p(p+1)(p+2) \cdots (p+q)}{(q+1)!}\frac{z^{q+1}}{(1-z)^{p+q+1}} \\
        &=
        \binom{p+q}{p-1}\frac{(1/5)^{q+1}}{(4/5)^{p+q+1}} \le 2^{p+q}\frac{(1/5)^{q+1}}{(4/5)^{p+q+1}} = \frac 1 {2^{q+2}}\parens*{\frac 5 2}^p. \nonumber \qedhere
    \end{align}
\end{proof}

To prove \cref{cor:1/sqrtN_spacing}, we need some other lemmas.

\begin{lemma}[Markov's other inequality{~\cite[Theorem 5.1.8.]{polynomials_inequalities}}]\label{lemma:Markovs}
        Let $f \in \RR[x]$ be a real polynomial of degree~$q$. Then,
    \begin{equation}
        \max_{x\in [0,1]} \abs{f'(x)} \leq 2q^2 \max_{x\in [0,1]} \labs{f(x)}.
    \end{equation}
\end{lemma}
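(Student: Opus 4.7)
The plan is to reduce the claim to the classical Markov brothers' inequality on the interval $[-1,1]$, which states that for any real polynomial $g$ of degree at most $q$,
\[
    \max_{y \in [-1,1]} |g'(y)| \leq q^2 \max_{y \in [-1,1]} |g(y)|.
\]
This is the sharp inequality, with equality attained by the Chebyshev polynomial $T_q$ of the first kind at $y = \pm 1$. Its proof is classical and contained in the reference~\cite{polynomials_inequalities} already cited in the statement, so we will invoke it as a black box and carry out only the affine rescaling to $[0,1]$.

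For the rescaling, define $g(y) = f((y+1)/2)$ for $y \in [-1,1]$. Then $g \in \RR[y]$ has degree $q$, and the change of variable $x = (y+1)/2$ is a bijection between $[-1,1]$ and $[0,1]$. Hence $\max_{y\in[-1,1]} |g(y)| = \max_{x\in[0,1]} |f(x)|$, and by the chain rule $g'(y) = \tfrac{1}{2} f'((y+1)/2)$, so $\max_{y\in[-1,1]} |g'(y)| = \tfrac{1}{2} \max_{x\in[0,1]} |f'(x)|$. Applying the classical inequality to $g$ gives $\tfrac{1}{2} \max_{x\in[0,1]} |f'(x)| \leq q^2 \max_{x\in[0,1]} |f(x)|$, which is exactly the claimed bound with constant $2q^2$.

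The only nontrivial ingredient is the classical Markov inequality itself, which is the main obstacle but entirely standard. The usual proof argues by contradiction: if $|g'(y_0)| > q^2 \|g\|_\infty$ at some $y_0 \in [-1,1]$, one compares $g$ with the extremal $\|g\|_\infty \cdot T_q$ and uses the equioscillation of $T_q$ (which attains $\pm 1$ at $q+1$ points) to produce, via an interlacing/sign-change count, a nonzero polynomial of degree at most $q$ with more than $q$ zeros, a contradiction. Since this argument is purely classical and does not depend on anything specific to the present paper, we will simply cite~\cite[Theorem 5.1.8]{polynomials_inequalities} and restrict our work to the trivial affine rescaling above. The factor of $2$ in the stated bound arises precisely from the derivative transforming as $g'(y) = f'(x)/2$ under $x = (y+1)/2$; no sharper constant is needed for the applications in \cref{cor:1/sqrtN_spacing}.
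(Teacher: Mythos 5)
Your proof is correct and matches the paper's implicit argument: the paper simply cites the classical Markov inequality from Borwein--Erd\'elyi and relies on the affine rescaling from $[-1,1]$ to $[0,1]$, which you carry out explicitly and correctly identify as the source of the factor of $2$. Nothing more is needed.
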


\noindent
Instead of using the maximum over the continuous interval $[0,1]$ in \cref{lemma:Markovs}, 
we can restrict to a subset of $[0,1]$ using the following simple fact.

\begin{lemma}[Uniform bounds from samples]\label{lemma:uniform_sparse}
    Let $f \in \RR[x]$ be a real polynomial of degree~$q$ 
    and let $U\subseteq [0,1]$ be a subset.
    Suppose that for any $x \in [0,1]$ we have $\mathrm{dist}(x,U) = \inf_{u \in U} \abs{u - x} \le \Delta < \frac 1 {2q^2}$.
    Then,
    \begin{align}
        \max_{x\in [0,1]} \abs{f(x)} \leq \frac{\sup_{u \in U} \abs{f(u)}} {1-2q^2\Delta}. 
    \end{align}
\end{lemma}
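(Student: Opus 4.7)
The plan is to combine Markov's other inequality (\cref{lemma:Markovs}) with the hypothesis that every point of $[0,1]$ lies within distance $\Delta$ of the sample set~$U$. The idea is that, because $f$ has low degree, its derivative is controlled by its own supremum, so values of $f$ cannot change much on a short interval; thus the sup over $[0,1]$ cannot be much larger than the sup over $U$.

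Concretely, fix an arbitrary $x \in [0,1]$ and pick $u \in U$ with $|x - u| \le \Delta$ (possible by the distance hypothesis, up to an arbitrarily small slack that can be absorbed since $f$ is continuous). By the fundamental theorem of calculus,
\begin{equation*}
    |f(x) - f(u)| = \left| \int_u^x f'(s)\,ds \right| \le \Delta \cdot \max_{s \in [0,1]} |f'(s)|.
\end{equation*}
Applying \cref{lemma:Markovs} to $f$ yields $\max_{s \in [0,1]} |f'(s)| \le 2q^2 \max_{s \in [0,1]} |f(s)|$. Therefore
\begin{equation*}
    |f(x)| \le |f(u)| + 2q^2 \Delta \cdot \max_{s \in [0,1]} |f(s)|
    \le \sup_{u' \in U} |f(u')| + 2q^2 \Delta \cdot \max_{s \in [0,1]} |f(s)|.
\end{equation*}

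Taking the supremum over $x \in [0,1]$ on the left-hand side and rearranging, using that $1 - 2q^2 \Delta > 0$ by the hypothesis $\Delta < 1/(2q^2)$, gives
\begin{equation*}
    \max_{x \in [0,1]} |f(x)| \le \frac{\sup_{u \in U} |f(u)|}{1 - 2q^2 \Delta},
\end{equation*}
as claimed. There is no real obstacle here; the only subtlety is that the infimum in $\mathrm{dist}(x,U)$ may not be attained, but continuity of $f$ lets us take $u$ within $\Delta + \varepsilon$ for arbitrary $\varepsilon > 0$ and let $\varepsilon \to 0$ at the end.
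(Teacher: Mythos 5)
Your proof is correct and takes essentially the same approach as the paper's: both bound $|f(x)-f(u)|$ by $\Delta\cdot\max|f'|$, invoke Markov's other inequality to get $\max|f'|\le 2q^2\max|f|$, and rearrange. The paper evaluates at the maximizer of $|f|$ rather than taking a supremum at the end, and uses the difference quotient rather than the fundamental theorem of calculus, but these are cosmetic differences; your version is also slightly more careful about the fact that the nearest point in $U$ need not exist.
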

\begin{proof}
    Suppose $\max_{x \in [0,1]} \abs{f(x)} = \abs{f(y)}$ for some $y \in [0,1]$.
    Let $y' \in U$ be a nearest point to~$y$.
    If $\abs{f(y)} = \abs{f(y')}$, there is nothing to prove.
    Assume that $\abs{f(y)} > \abs{f(y')}$ so $0 < \abs{y - y'} \le \Delta$.
    Then, by \cref{lemma:Markovs}
    \begin{align}
        \frac{\labs{f(y)}-\labs{f(y')}}{\labs{y-y'}} &\le \labs{\frac{f(y)-f(y')}{y-y'}} 
        \le \max_{x\in [0,1]}|f'(x)| 
        \le 2q^2 \labs{f(y)}
    \end{align}
    Rearranging, we complete the proof.
\end{proof}

\begin{proof}[Proof of \cref{cor:1/sqrtN_spacing}]
Let $U = \{\sqrt{N_0/N} \; | \; N \in \N, N \geq N_0 \} \subset [0,1]$.
Note that the distance from any $x \in [0,1]$ to a nearest element in $U$ is at most
\begin{align}
    \Delta \leq \sqrt{N_0/N_0} - \sqrt{N_0/(N_0+1)} \leq 1/N_0 \leq 1/(4 q^2) \,. \label{eqn_U_gap_bound}
\end{align}
Define a rescaled function $\tilde f(x) = f(x/\sqrt{N_0})$.
Then 
\begin{align}
    \labs{f(\frac{1}{\sqrt{N}}) - f(0)} 
    &= \labs{\tilde f(\sqrt{N_0/N}) - \tilde f(0)} \nonumber\\
    &\leq \sqrt{N_0/N} \; \max_{x \in [0,1]} |\tilde f'(x)| \\
    &\leq 2q^2 \sqrt{N_0/N} \; \sup_{x \in [0,1]} |\tilde f(x)| &\text{by \cref{lemma:Markovs}} \nonumber\\
    &\leq 4 q^2 \sqrt{N_0/N} \; \sup_{x \in U} |\tilde f(x)| &\text{by \cref{lemma:uniform_sparse} and~\cref{eqn_U_gap_bound}} \nonumber\\
    &= 4 q^2 \sqrt{N_0/N} \;\; \sup_{N' \in \N, N' \geq N_0} |f(\frac{1}{\sqrt{N'}})| & \text{by defn. of $U$ and $\tilde f$.} \nonumber
\end{align}
The lemma now follows.
\end{proof}

\section{Permutation \texorpdfstring{$t$}{t}-designs with multiplicative error \texorpdfstring{$\eps$}{ε} in circuit depth \texorpdfstring{$\cO(nt + \log(1/\eps))$}{O(nt+log(1/ε))}} \label{sec:nt_depth_permutations}

Our main focus in this work are random circuits.
However, using our analysis of Kassabov's generators in \cref{section:efficientreversiblecircuits}, we can also obtain permutation $t$-designs in circuit depth $\cO(nt + \log(1/\eps))$ (i.e.~without logarithmic factors).
This was claimed informally in~\cite{alon2012almost}, but as far as we know no proof has appeared in the literature.
As this result might be of independent interest, we give a proof here.

\begin{theorem}\label{thm:linear_perm_design}
    For any integers $n\geq 1$ and $t\leq \cO(2^{n/6.1})$, there exists an explicit set $S$ such that each element in $S$ is a product of NOT, CNOT, and Toffoli gates with circuit depth $\cO(1)$ assuming all-to-all connectivity, and
    \begin{align}
        g(\mu(S), \ \tau, \ \Alt(2^n)) = 1 - \Omega(1),
    \end{align}
    where $\tau : \pi \mapsto P(\pi)^{\otimes t}$ and $P(\pi) \ket z = \ket{\pi(z)}$ for any $z \in \{0,1\}^n$ and $\pi \in \Sym(2^n)$. As a result, there exists a constant $C>0$ such that $\mu(S)^{*k}$ is an approximate permutation $t$-design on $n$ bits with multiplicative error $\eps$ when $k \geq C(nt+\log(1/\eps))$. 
\end{theorem}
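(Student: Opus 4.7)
The plan is to combine \cref{thm:UltimatePermutationDesign} with the spectral gap bound from Kazhdan constants in \cref{lem:gapviakazhdan} and the conversion to permutation designs in \cref{lemma:fromgtopermutationdesign}. The only genuinely new ingredient is converting the size-$\cO(n)$ circuits of Kassabov's generators into \emph{depth}-$\cO(1)$ circuits, for which we exploit the extra property asserted in \cref{thm:UltimatePermutationDesign} that every bit is acted upon by only $\cO(1)$ gates.

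\textbf{Step 1: Parallelization to constant depth.} By \cref{thm:UltimatePermutationDesign}, there is a constant-size generating set $S_0$ of $\Alt(2^n)$ with Kazhdan constant $\cK(\Alt(2^n); S_0) \ge \eps_0 > 0$, where each $g \in S_0$ is a sequence of $m = \cO(n)$ NOT, CNOT, and Toffoli gates with the property that every bit is touched by only $\cO(1)$ of these gates. For each such $g$, build its conflict graph $H_g$ whose vertices are the gates $g_1,\ldots,g_m$ and whose edges connect pairs that share a bit. Since each gate acts on at most $3$ bits and each bit participates in at most $D = \cO(1)$ gates, the maximum degree of $H_g$ is at most $3D = \cO(1)$. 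A greedy vertex coloring produces a partition of the gates into $\cO(1)$ color classes $L_1,\ldots,L_d$, each consisting of gates on pairwise disjoint supports. Gates on disjoint supports commute, so applying the gates in each $L_i$ simultaneously (using all-to-all connectivity) yields a depth-$\cO(1)$ circuit implementing exactly the same permutation as $g$. Let $S = S_0 \cup S_0^{-1} \cup \{\one\}$; inversion and adding identity preserve both the constant depth and the constant size, and $\cK(\Alt(2^n); S) \ge \cK(\Alt(2^n); S_0) = \Omega(1)$.

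\textbf{Step 2: Spectral gap.} Apply \cref{lem:gapviakazhdan} to the symmetric, identity-containing set $S$. Since $|S| = \cO(1)$ and $\cK(\Alt(2^n); S) \ge \Omega(1)$, we obtain
\[
    g(\mu(S), \ \tau, \ \Alt(2^n)) \ \le \ 1 - \frac{\cK(\Alt(2^n); S)^2}{2|S|} \ = \ 1 - \Omega(1),
\]
uniformly in $n$ and $t$. (The range $t \le \cO(2^{n/6.1})$ in the theorem statement only enters the downstream conversion to designs; the Kazhdan bound itself is $t$-independent.)

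\textbf{Step 3: From gap to design.} By \cref{lemma:fromgtopermutationdesign}, there is a constant $C>0$ such that $\mu(S)^{*k}$ is an $\eps$-multiplicative-error permutation $t$-design for $k \ge C(nt + \log(1/\eps))$. Each step of the walk is one draw from $\mu(S)$, implemented by a depth-$\cO(1)$ reversible circuit of NOT, CNOT, and Toffoli gates, so $k$ steps assemble into a circuit of total depth $\cO(nt + \log(1/\eps))$.

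The only non-bookkeeping step is Step 1, and it is not really an obstacle: the crucial property ``each bit is touched by $\cO(1)$ gates'' is already built into \cref{thm:UltimatePermutationDesign} precisely so that the conflict-graph coloring argument yields constant-depth parallelization under all-to-all connectivity. Everything else is direct invocation of results established earlier in the paper.
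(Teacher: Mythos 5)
Your Step 1 contains a genuine gap: the conflict-graph coloring argument does not produce a constant-depth circuit that implements the same permutation. Coloring the gates of $g = g_m \cdots g_2 g_1$ so that each color class has pairwise disjoint supports lets you parallelize \emph{within} a color class, but the color classes cannot simply be concatenated: doing so reorders gates of different colors that share a bit, and these do not commute. The correct notion is the depth of the conflict \emph{DAG} (with $g_i \to g_j$ when $i<j$ and the gates share a bit), and this is not controlled by the chromatic number of the conflict graph. A staircase $g_1$ on $\{1,2\}$, $g_2$ on $\{2,3\}$, \ldots, $g_{n-1}$ on $\{n-1,n\}$ touches each bit at most twice, yet its conflict DAG is a directed path of length $n-1$, so it cannot be parallelized below depth $\Theta(n)$. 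Concretely, the circuits promised by \cref{thm:UltimatePermutationDesign} are built via the multiply-controlled NOTs of \cref{lem:gidney,lem:CmXell} (see \cref{fig:gidney_multi_control_linear}), which \emph{are} such staircases; they have $\cO(n)$ gates and $\cO(1)$ touches per bit, but depth $\Theta(n)$, not $\cO(1)$. The ``$\cO(1)$ touches per bit'' property in \cref{thm:UltimatePermutationDesign} is there to feed the detectability lemma (\cref{lem:local-vs-parallel}) with a constant $\ell$; it is not a parallelizability guarantee.

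Because of this, the route through \cref{thm:UltimatePermutationDesign} does not give depth-$\cO(1)$ generators for $\Alt(2^n)$. The paper instead starts from the genuinely depth-$1$ circuits of \cref{sec:kas_depth_one}, which generate (a group surjecting onto) $\Alt(\mathbf K_s)$ for $\mathbf K_s = (\{0,1\}^{3s}\setminus\{0\})^{\times 6}$ but act nontrivially outside $\mathbf K_s$; trivializing that exterior action (as in \cref{prop:KassabovGenCleanedUp}) is exactly what forces linear depth. To keep depth $\cO(1)$ while still getting a gap for all of $\Alt(2^n)$, one takes $S = \{x y z : x,z \in P_X,\ y \in S_0\}$, i.e.\ sandwiches the Kassabov generators between uniformly random bit-flip layers $P_X$ (depth $1$). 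One then needs \cref{lem:SmallBad,lemma:gapXRX}: the random flips concentrate the walker in the ``good'' subspace up to error $\cO(t\,2^{-n/6})$, so the spectral gap of $\mu(P_X) * \mu(\Alt(\mathbf K_s)) * \mu(P_X)$ on $\Alt(2^n)$ is $\Omega(1)$ for $t \le \cO(2^{n/6.1})$. This is where the constraint on $t$ actually enters, whereas in your writeup that constraint plays no role, which is a second symptom that the argument has strayed from what the claim requires. Finally, the case $n = 18s$ is extended to general $n$ via the overlap theorem (\cref{thm:permutationoverlap}), another ingredient your proposal omits.

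Your Steps 2 and 3 are sound in isolation — $\cK(\Alt(2^n);S)=\Omega(1)$ with $|S|=\cO(1)$ does give $g(\mu(S),\tau,\Alt(2^n))=1-\Omega(1)$ via \cref{lem:gapviakazhdan}, and the conversion to multiplicative-error designs via \cref{lemma:fromgtopermutationdesign} is correct — but they inherit the false premise from Step 1 that the generating set has depth $\cO(1)$.
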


Even though we are equipped with depth-$1$ implementations of the expanding generators (\cref{sec:kas_depth_one}),
we encounter additional hurdles toward the proof of a linear-depth permutation $t$-design as in~\cref{thm:linear_perm_design}.
The first is that Kassabov's generators implemented by reversible gates on $n$ bits
do not generate $\Alt(2^n)$, 
but only a subgroup of $\Alt(2^n)$ that maps onto $\Alt((2^{3s}-1)^6)$ for $s \le n/18$ (more in~\cref{sec:kas_exterior}).

Instead of trivializing the action on the exterior, in this appendix, we overcome this hurdle by showing that the alternating group $\Alt((2^{3s}-1)^6)$ 
combined with some random bit flips is good enough.
We then extend our result for $18s$ bits only to any $n\geq 1$ bits using the overlap lemma in~\cref{section:permutationoverlap}.

\subsection{Extending almost full alternating groups with random bit flips}

\newcommand{\projgood}{{\Pi_\mathrm{good}}}
\newcommand{\projbad}{{\Pi_\mathrm{bad}}}

The ``almost full'' alternating group $\Alt((2^{3s}-1)^6)$ is a subgroup of $\Alt(2^n)$
and permutes bit strings in
\begin{align}
    \mathbf K_s = \left\{(z_1, \cdots, z_6) \in (\{0,1\}^{3s})^{\times 6}~\middle|~ z_1, z_2,\ldots,z_6 \neq 0^{3s} \right\} 
    \subset \{0,1\}^{18s}.
\end{align}
Note that $|\mathbf K_s| = (2^{3s}-1)^6$.
Let $\projgood$ be the orthogonal projector onto the $\CC$-span of $\{ \ket{z} ~|~ z \in \mathbf K_s \}$,
and $\projbad = \one - \projgood$ denote the orthogonal complement.
The result of \cref{sec:kas_ckt} is that each Kassabov generator is 
a circuit consisting of gates, each of which commutes with $\projgood$.

Since we want $\Alt(2^{18s})$, not just $\Alt(\mathbf K_s)$, 
we need some permutation that does not commute with $\projgood$.
The simplest is a bit flip, which is a Pauli operator~$X$ on some bit.
Let $P_X = \{I, X\}^{\otimes n}$ be the group of all bit flip operators,
whose action on~$\{0,1\}^n$ gives a subgroup of $\Alt(2^n)$.
Define
\begin{align}
    \cX : \rho \mapsto \avg_{\pi \sim \mu(P_X)} P(\pi)^{\otimes t}\, \rho \, P(\pi)^{\dag \otimes t} .
\end{align}
Since \cref{cor:AdditiveToMultiplicativeError} is stated in terms of channels,
we identify the vector space~$(\CC^2)^{\otimes nt}$ with 
a vector space of all diagonal $2^{nt} \times 2^{nt}$ matrices:
a vector $\ket{\psi} = \sum_{x\in\{0,1\}^{nt}} \psi_x \ket{x}$ 
is identified with a matrix $\rho_{\psi} = \sum_{x} \psi_x \ket{x}\!\bra{x}$.
The vector norm of the former is the Schatten $2$-norm of the latter.
The action of $P(\pi)^{\otimes t}$ on $\ket{\psi}$ is equivalent to 
$P(\pi)^{\otimes t}\rho_{\psi}P(\pi)^{\dagger \otimes t}$.
In addition, it will be convenient to define a channel $\cD$ that strips off all the off-diagonal elements
and keeps the diagonal elements only; this is just a notational tool to focus on diagonal matrices, i.e.~those for which $\rho = \cD(\rho)$.

\begin{lemma}\label{lem:SmallBad}
    Let $n = 18s \ge 18$
    and $\rho=\sum_{x\in \{0,1\}^{nt}}\rho_x\ket{x} \! \bra{x}$ be a diagonal matrix.
    Then,
    \begin{align}
        \norm*{\left(\one - \projgood^{\otimes t} \right) \cdot \cX (\rho)}_{1}
        \le
        6 t \cdot 2^{-n/6} \norm{\rho}_1 \, .
    \end{align}
    In addition, 
    for any self-adjoint moment operator~$H = \avg_{\pi \sim \nu} P(\pi)^{\otimes t}$ 
    over a distribution~$\nu$ on~$\Alt(2^n)$,
    if $H$ commutes with~$\projgood^{\otimes t}$,
    then we have 
    \begin{align}
        &\norm*{
            M_X H (\one - \projgood^{\otimes t}) M_X
        }_\infty
        \le
        6t \cdot 2^{-n/6}\\
        &\quad \text{ where } \quad M_X = M(\mu(P_X), \tau) \deq \avg_{\pi \sim \mu(P_X)} P(\pi)^{\otimes t}\, . \nonumber
    \end{align}
    Here and throughout this section, we use the representation $\tau: \pi \mapsto P(\pi)^{\ot t}$ as in \cref{sec:gap_revckt}.
\end{lemma}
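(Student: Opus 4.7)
The plan is to split the lemma into two parts: the first bound is a direct probabilistic calculation in the computational basis, while the second is a reduction to the first via operator inequalities that crucially exploit the commutation hypothesis on $H$.

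For the first bound, each element of $P_X$ is $X^u$ for some $u \in \{0,1\}^n$, so
\[
\cX(\rho) \;=\; \avg_{u} (X^u)^{\otimes t}\, \rho\, (X^u)^{\otimes t},
\]
which, for diagonal $\rho = \sum_y \rho_y \ket{y}\!\bra{y}$, averages each diagonal entry over the orbit $y \mapsto y \oplus u^{\otimes t}$. Writing $\bra{y'} \cX(\rho) \ket{y'} = \avg_u \rho_{y' \oplus u^{\otimes t}}$, summing absolute values over all $y' \notin \bKK_s^{\times t}$, and substituting $z = y' \oplus u^{\otimes t}$, I get
\[
\norm*{(\one - \projgood^{\otimes t}) \cdot \cX(\rho)}_1 \;\leq\; \sum_{z} \abs{\rho_z} \cdot \Pr_{u \sim \{0,1\}^n}\!\bigl[\,z \oplus u^{\otimes t} \notin \bKK_s^{\times t}\,\bigr].
\]
Two union bounds then finish this step: first over the $t$ copies, and then, within each copy, over the $6$ blocks of $3s$ bits in $z_i \oplus u$. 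Since $u$ is uniform on $\{0,1\}^n$, each $3s$-block of $z_i \oplus u$ is uniform on $\{0,1\}^{3s}$ and vanishes with probability $2^{-3s} = 2^{-n/6}$, giving $6t \cdot 2^{-n/6}$.

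For the second bound, the point of departure is that $M_X$ is an orthogonal projector ($M_X^2 = M_X = M_X^\dag$, since $P_X$ is a finite abelian group of self-inverse unitaries). The key idea is to use $[H, \projgood^{\otimes t}] = 0$ together with $Q^2 = Q$ for $Q := \one - \projgood^{\otimes t}$ in order to sandwich $H$ between two copies of $Q$:
\[
M_X H (\one - \projgood^{\otimes t}) M_X \;=\; M_X H Q^2 M_X \;=\; M_X Q H Q M_X.
\]
Submultiplicativity, $\norm{H}_\infty \leq 1$, and $\norm{M_X Q}_\infty = \norm{Q M_X}_\infty$ then give
\[
\norm*{M_X H (\one - \projgood^{\otimes t}) M_X}_\infty \;\leq\; \norm{Q M_X}_\infty^2 \;=\; \norm{Q M_X Q}_\infty,
\]
where the last equality uses $M_X^2 = M_X = M_X^\dag$.

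It remains to estimate $\norm{Q M_X Q}_\infty$. In the computational basis, $Q M_X Q$ is Hermitian, positive semidefinite, and entrywise non-negative, with entries
\[
(Q M_X Q)_{y', y} \;=\; \frac{1}{2^n}\, \indicator\bigl[\,y, y' \notin \bKK_s^{\times t} \text{ and } y' = y \oplus u^{\otimes t} \text{ for some } u \in \{0,1\}^n\,\bigr].
\]
Its maximum row sum equals $\max_{y \notin \bKK_s^{\times t}} \Pr_u[y \oplus u^{\otimes t} \notin \bKK_s^{\times t}]$, which is precisely the probability bounded in the first part. Gershgorin's theorem (applied to a Hermitian non-negative matrix) then yields $\norm{Q M_X Q}_\infty \leq 6t \cdot 2^{-n/6}$, completing the proof. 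The main obstacle, and the reason the commutation hypothesis on $H$ is needed, is sandwiching $H$ by $Q$ on \emph{both} sides: without this, submultiplicativity only yields $\norm{Q M_X}_\infty \leq \sqrt{6t \cdot 2^{-n/6}}$, losing a square root.
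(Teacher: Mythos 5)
Your first bound is correct and essentially matches the paper's argument: both reduce to a per-copy probability that a shifted $3s$-block vanishes, and the factor $6t$ arises from a double union bound (the paper does the $t$-copy part as an operator inequality $\one - \projgood^{\otimes t} \preceq \sum_i \one^{\otimes(i-1)}\otimes\projbad\otimes\one^{\otimes(t-i)}$ and then computes the single-copy probability exactly, but the content is the same).

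Your second bound, however, takes a genuinely different and more elementary route. The paper identifies $M_X H Q M_X$ (with $Q = \one - \projgood^{\otimes t}$) with a self-adjoint superoperator on diagonal matrices and invokes \cref{cor:AdditiveToMultiplicativeError} — the noncommutative Riesz--Thorin plus duality machinery — to convert the $1\to1$ bound from the first part into the operator-norm bound. You instead exploit three pieces of structure directly: $M_X$ is an orthogonal projector (since $P_X$ is a group), $Q$ is an orthogonal projector, and $H$ commutes with $Q$. This lets you symmetrize, $M_X H Q M_X = M_X (Q H Q) M_X$, then apply submultiplicativity with $\|H\|_\infty \le 1$ to land on $\|Q M_X Q\|_\infty$, which you control via Gershgorin on the Hermitian, entrywise non-negative matrix $Q M_X Q$. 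The computation is correct: the row sums of $Q M_X Q$ are exactly the probabilities you bounded in part one, and you correctly note that the commutation hypothesis is what saves the square root. Your approach avoids the channel/superoperator framework and the interpolation theorem entirely, at the cost of being specialized to the situation where $M_X$ has explicit non-negative matrix elements in the computational basis; the paper's route, while heavier, is the same general-purpose tool (\cref{cor:AdditiveToMultiplicativeError}) used elsewhere, e.g.\ for \cref{lemma:gapofCPFPC}. Both are valid; yours is a nice self-contained alternative for this particular lemma.
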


In words, random bit flips bring a bit string into the ``good'' subspace almost always.

\begin{proof}
    Write $x = (x_1,\ldots,x_t)$ where $x_i \in \FF_2^n$.
    We can assume $\rho \succeq 0$ and $\norm{\rho}_1 = \Tr(\rho) = 1$.
    By linearity of expectation,
    $
        \cX(\rho) 
        = \sum_{x_1,\cdots, x_t \in \FF_2^n} \rho_x \, 
            \avg_{y \in \FF_2^n} 
                \ket{x_1 + y, \cdots, x_t + y} \! \bra{x_1 + y, \cdots, x_t + y}
    $.
    The operator $(\projbad \otimes \one^{\otimes (t-1)})\cX(\rho)$ is diagonal, 
    so the $1$-norm is just the trace:
    \begin{align}
        \norm{
            (\projbad \otimes \one^{\otimes (t-1)})\cX(\rho)
        }_1 
        &= 
        \sum_{x_1,\cdots, x_t \in \FF_2^n} \rho_x \, \avg_{y \in \FF_2^n} 
            \bra{x_1+y} \projbad \ket{x_1+y} 
            \braket{x_2 + y | x_2 + y} 
            \cdots 
            \braket{x_t + y | x_t + y} \nonumber\\
        &= 
        \sum_{x_1,\cdots, x_t \in \FF_2^n} \rho_x \, \frac{\Tr(\projbad)} {2^n}
        = 
        \frac{2^n - (2^{3s}-1)^6}{2^n} 
        \le 
        \frac{6}{2^{n/6}}.
    \end{align}
    An obvious operator inequality
    \begin{equation}
        \one - \projgood^{\otimes t} \preceq \sum_{i=1}^{t} \one^{\otimes (i-1)}\otimes \projbad \otimes \one^{\otimes (t-i)}
    \end{equation}
    gives the first claim.
    
    For the second claim, we identify the operator~$M_X H(\one - \projgood^{\otimes t}) M_X$ 
    with a self-adjoint superoperator~$\Phi$ acting on the diagonal matrices. 
    Spelling out $\Phi$, we have
    \begin{align}
        \Phi(\rho) = \avg_{x,z \sim \mu(P_X),~ y \sim \nu} P(x)^{\otimes t} P(y)^{\otimes t}(\one - \projgood^{\otimes t}) P(z)^{\otimes t} (\cD \rho) P(z)^{\dagger \otimes t} (\one - \projgood^{\otimes t})P(y)^{\dagger \otimes t} P(x)^{\dagger \otimes t} .
    \end{align}
    By definition, $\norm{\Phi}_{2\to 2} = \norm{M_X H(\one - \projgood^{\otimes t}) M_X}_\infty$.
    The claim then follows from~\cref{cor:AdditiveToMultiplicativeError}.
\end{proof}

\begin{lemma}\label{lemma:gapXRX}
    If $n=18s$, then
    \begin{align}
        \norm[\big]{
            M_X M_K M_X - M_A
        }
        \le 36 t / 2^{n/6} \, ,
    \end{align}
    where
    \begin{align}
        M_A &= M(\mu(\Alt(2^n)), \tau ) \deq \avg_{\pi \sim \mu(\Alt(2^n))} P(\pi)^{\otimes t},\\
        M_K &= M(\mu(\Alt(\mathbf K_s)), \tau ) \deq \avg_{\pi \sim \mu(\Alt(\mathbf K_s))} P(\pi)^{\otimes t}\, . \nonumber
    \end{align}

\end{lemma}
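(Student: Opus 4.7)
Plan: First observe the three key structural facts. Bit flips lie in $\Alt(2^n)$ for $n\ge 2$ (each is a product of $2^{n-1}$ disjoint transpositions, which is even), so bi-invariance of the Haar measure gives $M_X M_A = M_A M_X = M_A$. Since $\Alt(\mathbf K_s) \subseteq \Alt(2^n)$, $M_A \preceq M_K$ as projectors. And the projector $P = \projgood^{\otimes t}$ commutes with $M_K$ because every $\pi \in \Alt(\mathbf K_s)$ fixes bits outside $\mathbf K_s$. With these, I split
\begin{equation*}
    M_X M_K M_X - M_A = M_X M_K(\one-P)M_X + \bigl(M_X M_K P M_X - M_A\bigr),
\end{equation*}
and bound the two summands separately. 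The first is immediate from the second part of Lemma~\ref{lem:SmallBad} applied with $H = M_K$: $\|M_X M_K(\one-P)M_X\|_\infty \le 6t/2^{n/6}$.

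For the second summand, I use the partition-basis machinery. On the good subspace $P\cdot\CC^{(2^n)^{\otimes t}}\cong(\CC^{\mathbf K_s})^{\otimes t}$, $M_K P$ is exactly the Haar projector of $\Alt(\mathbf K_s)$ acting on $(\CC^{\mathbf K_s})^{\otimes t}$. Since $|\mathbf K_s| \ge t$ in our regime $t \le \cO(2^{n/2})$, Lemma~\ref{lem:partition_props} provides the orthonormal decompositions
\begin{equation*}
    M_K P = \sum_{\Pi\vdash [t]}\ket{O'^{(\mathbf K_s)}_\Pi}\!\bra{O'^{(\mathbf K_s)}_\Pi},\qquad M_A = \sum_{\Pi\vdash [t]}\ket{O'^{(\{0,1\}^n)}_\Pi}\!\bra{O'^{(\{0,1\}^n)}_\Pi}.
\end{equation*}
Both decompositions are block-diagonal with respect to the partition $\{0,1\}^{nt} = \bigsqcup_\Pi M'_\Pi(\{0,1\}^n)$ of computational basis states by equality pattern, and $M_X$ preserves each $\Pi$-block since it shifts every coordinate by the same $y \in \{0,1\}^n$. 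Hence
\begin{equation*}
    \|M_X M_K P M_X - M_A\|_\infty = \max_{\Pi \vdash [t]}\bigl\|M_X\ket{O'^{(\mathbf K_s)}_\Pi}\!\bra{O'^{(\mathbf K_s)}_\Pi}M_X - \ket{O'^{(\{0,1\}^n)}_\Pi}\!\bra{O'^{(\{0,1\}^n)}_\Pi}\bigr\|_\infty,
\end{equation*}
reducing the task to a per-partition comparison of rank-one operators.

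The main technical step is then to show, for each $\Pi$ with $k = |\Pi|$, that $\|M_X\ket{O'^{(\mathbf K_s)}_\Pi} - \ket{O'^{(\{0,1\}^n)}_\Pi}\| = \cO(t/2^{n/6})$. Substituting $\vec x' = \vec x + y^{\otimes t}$ in the definition of $M_X\ket{\widetilde O'^{(\mathbf K_s)}_\Pi}$ gives
\begin{equation*}
    M_X\ket{\widetilde O'^{(\mathbf K_s)}_\Pi} = \frac{1}{N}\sum_{\vec x'\in M'_\Pi(\{0,1\}^n)} c(\vec x')\ket{\vec x'},\quad c(\vec x') = \left|\bigcap_{j=1}^{k}(v_j + \mathbf K_s)\right|,
\end{equation*}
where $v_1,\ldots,v_k \in \{0,1\}^n$ are the distinct entries of $\vec x'$. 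Inclusion--exclusion with $|\mathbf K_s^c| \le 6N/2^{n/6}$ gives $1-6k/2^{n/6} \le c(\vec x')/N \le 1$ uniformly in $\vec x'$; combined with the normalization ratio $\|\widetilde O'^{(\mathbf K_s)}_\Pi\|^2/\|\widetilde O'^{(\{0,1\}^n)}_\Pi\|^2 = \prod_{j=0}^{k-1}(K-j)/(N-j) = 1 - \cO(t/2^{n/6})$, this yields the claimed vector-norm bound. The standard inequality $\|\ket{a}\!\bra{a} - \ket{b}\!\bra{b}\| \le (\|a\|+\|b\|)\|a-b\|$ converts this into a per-partition operator-norm bound of the same order, and summing with the bad-subspace contribution produces the advertised $36t/2^{n/6}$.

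The main obstacle is essentially careful bookkeeping of constants: the inclusion--exclusion factor $k \le t$ in $c(\vec x')$, the normalization perturbation of order $t/2^{n/6}$, the factor of $2$ in the rank-one operator-norm inequality, and the additive combination with the $6t/2^{n/6}$ bad-subspace bound must all be tracked explicitly so that they combine into the claimed $36t/2^{n/6}$ without over-counting.
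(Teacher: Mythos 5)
Your proposal is correct and takes a genuinely different route from the paper. The paper's proof works at the level of superoperators: it passes through the diagonal-restriction channel $\cD$, applies \cref{cor:AdditiveToMultiplicativeError} to reduce the $2\to 2$ (spectral) norm to a $1\to 1$ (trace) norm, exploits the $(t{-}2)$-transitivity of both $\Alt(\mathbf K_s)$ and $\Alt(2^n)$ so that the two Haar twirls send a basis state to trace-normalized identity operators on orbit sets, and then compares those two identities by a rank count $2(a-r)/a$. You instead work directly with the spectral norm, using two structural observations: (i) the second part of \cref{lem:SmallBad} with $H=M_K$ cleanly isolates the "bad" contribution $M_X M_K(\one-P)M_X$; and (ii) all four operators $M_X$, $M_K$, $P$, $M_A$ are simultaneously block-diagonal with respect to the decomposition of $\{0,1\}^{nt}$ by equality pattern, so that within each $\Pi$-block $M_K P$ and $M_A$ are rank-one projectors from the partition basis of \cref{lem:partition_props}. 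This reduces the good-subspace term to a per-partition comparison $\bigl\|\,\ket{M_X O'^{(\mathbf K_s)}_\Pi}\!\bra{M_X O'^{(\mathbf K_s)}_\Pi} - \proj{O'^{(\{0,1\}^n)}_\Pi}\,\bigr\|$, which you estimate by explicit inclusion--exclusion counting of the shift-fiber $c(\vec x')=|\bigcap_j(v_j+\mathbf K_s)|$. The two approaches are roughly dual: the paper exploits transitivity to avoid the partition basis entirely, while your approach imports the partition machinery from \cref{sec:paritions_permutations} to make the estimate elementary and block-by-block.

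Two points worth flagging. First, your reduction implicitly needs $|\mathbf K_s|\ge t$ so that \cref{lem:partition_props}\cref{item:ortho_basis} applies on the smaller alphabet; this is harmless (the claim is vacuous unless $t\le 2^{n/6}/18\ll|\mathbf K_s|$) but should be said. Second, and more substantively, your final sentence defers the constant tracking, and this step is not entirely innocuous: a naive Pythagoras argument ($\|a'-b\|^2 = \|a'\|^2 + 1 - 2\braket{a'|b}$ with the crude bounds $\|a'\|^2\le 1$ and $\braket{a'|b}=K'/N'\ge 1-\cO(t/2^{n/6})$) would only give $\|a'-b\|=\cO(\sqrt{t/2^{n/6}})$, which is much too weak. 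What saves the argument is exactly the coefficient-by-coefficient estimate you indicate: since $\braket{a'-\braket{a'|b}b\,|\,b}=0$ and the off-$b$ component $a'-\braket{a'|b}b$ has entrywise coefficients $\frac{1}{K'}\bigl(c(\vec x')/N - c^2\bigr)$ bounded in magnitude by $\cO(k/2^{n/6})/K'$, one gets $\|a'-\braket{a'|b}b\|=\cO(t/2^{n/6})$ and separately $1-\braket{a'|b}=\cO(t/2^{n/6})$, which together give the linear (not square-root) dependence. Once that is in place, the rank-one inequality and the $6t/2^{n/6}$ bad-subspace term do combine to something $\le 36t/2^{n/6}$. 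So the proof sketch is sound, but the "bookkeeping" you wave at is a real pitfall one must navigate; a reader who naively uses $\|a'-b\|^2 \approx 2(1-\braket{a'|b})$ will get the wrong exponent.
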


This means that the spectral gap for $\mu(P_X) * \mu(\Alt(\mathbf K_s)) * \mu(P_X)$ 
is a constant as long as $t \ll 2^{n/6}$.

\begin{proof}
    Define
    \begin{align}
        \cR &: \rho \mapsto \avg_{\pi \sim \mu(\Alt(\mathbf K_s))} P(\pi)^{\otimes t}\, \rho \, P(\pi)^{\dag \otimes t},\\
        \cA &: \rho \mapsto \avg_{\pi \sim \mu(\Alt(2^n))} P(\pi)^{\otimes t}\, \rho \, P(\pi)^{\dag \otimes t} \, .\nonumber
    \end{align}
    
    Now, the essential norm of interest can be written as
    \begin{align}
        \norm*{
            \avg_{\pi\in \mu(P_X) * \mu(\Alt(\mathbf K_s)) * \mu(P_X)} P(\pi)^{\otimes t} 
            -
            \avg_{\pi\in \mu(\Alt(2^n))} P(\pi)^{\otimes t}
        }
        &=
        \norm*{
            \cD (\cX \cR \cX - \cA) \cD
        }_{2\to 2}\\
        &\le
        \norm*{
            (\cR \cX - \cA) \cD
        }_{1 \to 1} & \text{by~\cref{cor:AdditiveToMultiplicativeError}} .\nonumber 
    \end{align}
    The left and right invariance of~$\mu(\Alt(2^n))$ implies $\cX \cA = \cA = \cA \cX$,
    and hence, assuming $\norm{\cD \rho}_1 = 1$,
    \begin{align}
        \norm*{ ( \cR\cX - \cA )(\cD \rho) }_1 
        &=
        \norm*{ ( \cR - \cA ) \circ  \cX (\cD \rho) }_1 \\
        &\leq 
        \norm*{(\cR - \cA)\left(\projgood^{\otimes t} \cdot \cX(\cD\rho)\right)
        }_1 
        + 
        \norm{\cR - \cA}_{1 \to 1} \cdot 6 t \cdot 2^{-n/6}  & \text{by \cref{lem:SmallBad}} \nonumber\\
        &\leq 
        \norm[\big]{(\cR - \cA)\big(\underbrace{\projgood^{\otimes t} \cdot \cX(\cD\rho)}_{\tilde \rho} \big)
        }_1 
        + 
        12 t \cdot 2^{-n/6} \, .\nonumber
    \end{align}
    Clearly, $\tilde \rho$ is supported on the span of $x = (x_1,\ldots,x_t)$ with all $x_i \in \mathbf K_s$
    and has $1$-norm at most~$1$.
    
    We have to bound $\norm{(\cR - \cA)(\tilde \rho)}_1$ or $\norm{(\cR - \cA)(\ket x \! \bra x)}_1$ 
    for any $x \in \mathbf K_s^{\times t}$.
    Since we have the invariant probability measures on the two groups~$\Alt(\mathbf K_s)$ and $\Alt(2^n)$,
    we see that for each $x = (x_1,\ldots,x_t)$, both $\cR(\ket x\!\bra x)$ and $\cA(\ket x \! \bra x)$ are
    proportional to some projectors, normalized to have trace 1.
    The images of these projectors are the spans of the orbits of~$x$ under the action~$P(\pi)^{\otimes t}$.
    
    To understand the orbit, we recall that for any integer $N \ge 3$ 
    the group $\Alt(N)$ is $(N-2)$-transitive.
    That is, if $m \le N - 2$, any tuple of $m$ distinct entries chosen from $N$ letters
    can be mapped by $\Alt(N)$ to any other such tuple.
    Since the claim of the lemma is vacuous unless $t \le 2^{n/6}$,
    we may assume that,
    for any $m \in \{1,2,\ldots, t\}$,
    the alternating groups $\Alt(2^n)$ and $\Alt(\mathbf K_s)$ act transitively
    on
    \begin{align}
        \mathrm{Diff}_m &= \{(x_1,\ldots, x_m) \in (\FF_2^n)^{\times m} ~|~ x_a \neq x_b \text{ if } a \neq b \} \quad \text{ and } \quad 
        \mathrm{Diff}_m \cap \mathbf K_s^{\times m} ,\label{eq:diffgood}
    \end{align}
    respectively.
    Every $t$-tuple $x = (x_1,\ldots,x_t)$ corresponds to a tuple $[x] \in \mathrm{Diff}_m$ 
    where $m$ is the number of distinct entries of $x$
    and $[x]$ is obtained by removing any repeated entries from~$x$.
    This correspondence gives a linear map $\ket{x} \mapsto \ket{[x]}$,
    which commutes with the permutation group action:
    \begin{align}
        P(\pi)^{\otimes t} \ket x  = \ket{\pi(x)} \mapsto \ket{[\pi^{\times t}(x)]} = \ket{\pi^{\times m}([x])} = P(\pi)^{\otimes m} \ket{[x]}
    \end{align}
    It is now clear that 
    $\norm{
        (\cR - \cA)(\ket{x}\!\bra{x})
    }_1$ is equal to
    \begin{align}
        \norm[\Big]{
            \underbrace{
                \avg_{\pi \sim \mu(\Alt(\mathbf K_s))} P(\pi)^{\otimes m}\ket{[x]}\!\bra{[x]}P(\pi)^{\dagger \otimes m} 
            }_{E_\cR}
            - 
            \underbrace{
                \avg_{\pi \sim \mu(\Alt(2^n))} P(\pi)^{\otimes m}\ket{[x]}\!\bra{[x]}P(\pi)^{\dagger \otimes m} 
            }_{E_\cA}
        }_1 \, .
    \end{align}
    Since the action of the alternating groups is transitive in the respective sets in~\cref{eq:diffgood},
    the operators $E_\cR$ and $E_\cA$ are trace-normalized identity operators
    on the spans over $\mathrm{Diff}_m \cap \mathbf K_s^{\times m}$
    and over $\mathrm{Diff}_m$, respectively.
    Put $r = \rank E_\cR = \abs{\mathrm{Diff}_m \cap \mathbf K_s^{\times m}}$ 
    and $a = \rank E_\cA = \abs{\mathrm{Diff}_m}$.
    Then,
    \begin{align}
        \norm{
            (\cR - \cA)(\ket{x}\!\bra{x})
        }_1
        =
        \norm{E_\cR - E_\cA}_1
        =
        r \cdot \left(\frac 1 r - \frac 1 a \right) + (a-r)\cdot \frac 1 a
        =
        \frac{2(a-r)}{a} .
    \end{align}
    We complete the proof with simple estimates for $r$ and $a$.
    \begin{align}
        &1 - \frac{1}{2^n} \frac{m(m-1)}{2}
        \le 
        \frac{a}{2^{nm}} =\left( 1 - \frac{1}{2^n}\right) \cdots \left( 1 - \frac{m-1}{2^n}\right)
        \le
        1, \\
        &\frac{r}{2^{nm}} = \frac{1}{2^{nm}} \abs{\mathrm{Diff}_m \cap \mathbf K_s^{\times m}} 
        \ge 1 - \frac{6m}{2^{n/6}} - \frac{m(m-1)}{2^{n+1}}
        \ge 1 - \frac{12m}{2^{n/6}},\nonumber
    \end{align}
    where the second line is because
    $2^{-nm}\abs{\mathbf K_s^{\times m}} = (1-2^{-3s})^{6m} \ge 1 - 6m \cdot 2^{-n/6}$.
    Therefore, $1 - \frac r a \le 12m / 2^{n/6} \le 12 t / 2^{n/6}$.
\end{proof}

\subsection{Composting Kassabov's generators}

Consider $n=18s$ bits where $s\geq 1$.
Kassabov's result (\cref{sec:kas_depth_one}) says that there exists a generating set of a fixed size,
with respect to which the Kazhdan constant of $\Alt(\mathbf K_s)$ is bounded below by a constant.
Each generator here is implemented by a product of $\cO(n)$ nonoverlapping CNOT and Toffoli gates (so depth is $1$ with all-to-all connectivity),
each of which preserves the set $\mathbf K_s$ of all ``good'' bit strings.

Let $S_0$ be the set of Kassabov generators for~$\Alt(\mathbf K_s)$. 
$S_0$ generates a subgroup~$B$ of~$\Alt(2^n)$ 
such that there is a surjective group homomorphism~$B \to \Alt(\mathbf K_s)$;
this does not mean that $B$ is a product of $\Alt(\mathbf K_s)$ and some other group.
Nonetheless we can replace $\Alt(\mathbf K_s)$ in \cref{lemma:gapXRX} 
with~$S$.

\begin{proof}[Proof of~\cref{thm:linear_perm_design}]
We first prove the theorem for $n=18s$ with $s\geq 1$. 
Let $S = \{xyz: x,z\in P_X, y\in S_0\}$. Each element in $S$ has a constant circuit depth using NOT, CNOT, and Toffoli gates. Hence, it suffices to show that for any $t = \cO( 2^{n/6} n^{-3} )$, 
\begin{align}
    g\big(
        \mu(P_X) * \mu(S_0) * \mu(P_X),~ \tau,~ \Alt(2^n) 
    \big)
    = 1 - \Omega(1)\, .
\end{align}
Write $M_\nu = \avg_{\pi \sim \nu_S} P(\pi)^{\otimes t}$ for brevity.
The claim is that $\norm{M_X M_\nu M_X - M_A} = 1 - \Omega(1)$ in the designated regime of~$t$.

As we have remarked, $\abs{S_0} = \cO(1)$.
The short product \cref{lem:shortproduct} then
implies that $\cK(\Alt(\mathbf K_s); S_0|_{\mathbf K_s}) = \Omega(1)$,
where the restriction means that each gate viewed as a function~$\{0,1\}^{18s} \to \{0,1\}^{18s}$ 
is restricted to the domain~$\mathbf K_s$.
For linear operators on~$(\CC^2)^{\otimes 18 s t}$, 
the restriction amounts to multiplying by $\Pi = \projgood^{\otimes t}$:
\begin{align}
    \norm[\Big]{
        M_\nu \Pi 
        -
        M_K \Pi
    }
    \le
    1 - \frac{\cK(\Alt(\mathbf K_s); S_0|_{\mathbf K_s})^2}{\abs{S_0}+1}
    =
    1 - \Omega(1)
\end{align}
where we used \cref{lem:gapviakazhdan}. 
The extra ``$+1$'' in the denominator amounts to adjoining the identity to $S_0$.
Now we use triangle inequality.
\begin{align}
    M_X M_\nu M_X - M_A \nonumber
    &=
    (M_X M_\nu \Pi M_X - M_X M_K \Pi M_X) + (M_X M_K M_X - M_A) \nonumber\\
    & \qquad + M_X M_\nu (\one - \Pi) M_X + M_X M_K (\Pi - \one) M_X \nonumber \\
    \norm{M_X M_\nu M_X - M_A}
    &\le
    \norm{M_X (M_\nu \Pi - M_K \Pi) M_X}
    +
    \norm{M_X M_K M_X - M_A} \\
    &\qquad + \cO(t 2^{-n/6}) & \text{by \cref{lem:SmallBad} twice}\nonumber\\
    &\le \norm{M_X}^2 \norm{M_\nu \Pi - M_K \Pi} + \cO(t 2^{-n/6}) & \text{by \cref{lemma:gapXRX}}
    \nonumber\\
    &\le 1 - \Omega(1) + \cO(t 2^{-n/6}) \, .\nonumber & 
\end{align}

We now extend the spectral gap from $n=18s$ qubits to any $n\geq 1$. 
Let there be $n=18s+n'$ bits where $0 \leq n' \leq 17$ and $s \ge 1$.
We work with $t = \cO(2^{n/6}n^{-3})$. 
(We will eventually take even smaller $t$.)
Let $A \sqcup B \sqcup C$ be a partition of $n$ bits (bit indices to be more precise) 
where $\abs A = \abs C = n'$,
so $\abs B = n - 2n'$.
Let $\nu_{AB}$ and $\nu_{BC}$ be the distributions of applying a uniformly random gate from $S$ to $A\sqcup B$ and $B\sqcup C$. 
Denote by $\Alt(2^J)$ for any set $J$ of bit indices the alternating group~$\Alt(2^{\abs J})$
that permutes bit strings on~$J$ and leaves other bits intact.
We have just shown that
\begin{align}
    \norm[\big]{
        \avg_{\pi \sim \nu_{AB}} \tau_{AB}(\pi) - \avg_{\pi \sim \mu(\Alt(2^{AB}))} \tau_{AB}(\pi)
    } = 1 - \Omega(1)
\end{align}
where $\tau_{J}(\pi) = (P(\pi)_J)^{\otimes t}$ for any set $J$ of bit indices.
Since tensoring $\one$ to an operator does not change the operator norm,
we have
\begin{align}
    \norm[\big]{
        M(\nu_{AB}^{*k}, \tau_{AB} ) \otimes \one_C - M(\mu_{\Alt(2^{AB})}, \tau_{AB}) \otimes \one_C
    } &= (1 - \Omega(1))^k \, ,\\
    \norm[\big]{
        \one_A \otimes M(\nu_{BC}^{*k}, \tau_{AB} ) - \one_A \otimes M(\mu_{\Alt(2^{AB})}, \tau_{AB}) 
    } &= (1 - \Omega(1))^k \nonumber
\end{align}
for any $k \ge 1$.
Triangle inequality together with the fact that any moment operator has norm at most~$1$,
gives
\begin{align}
    &\norm*{
        \big( M(\nu_{AB}^{*k},\tau_{AB}) \otimes \one_C \big)
        \big( \one_A \otimes M(\nu_{BC}^{*k}, \tau_{BC}) \big)
        - M(\mu_{\Alt(2^{ABC})}, \tau_{ABC})
    }\\
    &\le
    2(1 - \Omega(1))^{k} + 
    \norm*{
         \big(M(\mu_{\Alt(AB)}, \tau_{AB}) \otimes \one_C\big)
         \cdot \big(\one_A \otimes M(\mu_{\Alt(BC)}, \tau_{BC})\big)
         - M(\mu_{\Alt(2^{ABC})}, \tau_{ABC})
    }\, . \nonumber
\end{align}
The overlap theorem (\cref{thm:permutationoverlap}) asserts that
the last norm is $\cO\left( \frac{(t \log t + n- 2n_0)^3}{\sqrt{2^{n - 2n_0}}} \right)$.
If we further assume that $t = \cO(2^{n/6.1})$ where $6.1 > 6$ is arbitrarily chosen,
then this smaller than $1/4$.
Setting $k$ to be a sufficiently large constant, we ensure that the overall essential norm of $\nu_{AB}^{*k} * \nu_{BC}^{*k}$
be at most $3/4$, so the gap is a constant.
\Cref{lem:local-vs-parallel} implies that this gap implies 
for $\frac{1}{2k} (\sum^k \nu_{AB} + \sum^k \nu_{BC}) = \frac 1 2(\nu_{AB} + \nu_{BC})$
\begin{align*}
    g\left( \frac{\nu_{AB} + \nu_{BC}}{2}, ~\tau, ~\Alt(2^n) \right) = 1 - \Omega(1) \,.
\end{align*}
\end{proof}

\section{Other efficient generators for permutation groups} \label{app:caprace_kassabov}

Here, we discuss small generating sets for alternating groups~$\Alt(n_s)$
for some ``dense'' sequence of integers~$\{n_s\}$, different from those in~\cref{section:efficientreversiblecircuits}.
This family of generating sets is found in a recent work of Caprace and Kassabov~\cite{caprace2023tame}.
The associated Cayley graphs of $\Alt(n_s)$ are shown to be a uniform family of expander graphs.
The density above means that there exists a constant $c \ge 1$
such that for any positive integer $N$ 
there is an $n_s$ satisfying $n_s \le N \le c n_s$.
Using \cref{lem:GeneratingLargerSymmetricGroupFromSmallerOnes}, for each $n_s$ we can then extend these generators to ones for $\Alt(N)$ for $N \leq c n_s$, yielding generators for $\Alt(N)$ for any $N \in \N$.
We explain this in more detail below.

Though the Caprace--Kassabov generators~\cite{caprace2023tame} give expander graphs,
we remark that it is unclear how to adapt them to very efficient reversible 
classical (or quantum) circuits without any ancillas
that implement an approximate permutation design on $n$-bit strings.
This is the reason we focused on Kassabov's~\cite{Kassabov_2007_alt,Kassabov_2007_lattices} generator in the main text.
However, for applications that allow ancillas, the construction from~\cite{caprace2023tame} is much simpler and circuits for the generators follow easily; this was already used in~\cite{metger2024simple,chen2024efficient}.

Let $p$ be an odd prime.
Define three bijective functions on $\FF_p^3$:
\begin{align}
        \sigma(x,y,z) = (y,z,x), \qquad
        \alpha(x,y,z) = (x+y, y, z), \qquad
        \beta(x,y,z) &= (x+y^2, y, z) .
\end{align}
It is obvious that $\sigma,\alpha,\beta$ act on $\FF_p^3 \setminus\{(0,0,0)\}$.
Viewed as elements of~$\Sym(p^3-1)$, 
they are even permutations because they have orders $3,p,p$, which are all odd,
so their disjoint cycle representations can only have odd-length cycles.

\begin{theorem}[Theorem 1.5 of~\cite{caprace2023tame}]\label{thm:generating_set_Alt}
    The permutations~$\sigma,\alpha,\beta$ generate $\Alt(p^3-1)$.
    Moreover, there exists a real number $\eps > 0$ 
    such that for each odd prime~$p$ the Kazhdan constant satisfies
    \begin{align}
        \cK\left(\Alt(p^3-1); \{\sigma,\alpha,\beta,\sigma^{-1},\alpha^{-1},\beta^{-1}\}\right)
        \ge 
        \eps.
    \end{align} 
    So, the associated Cayley graph is an $\eps'$-expander of degree~$6$
    where $\eps' > 0$ depends only on~$\eps$.
    This can be extended from the alternating to the symemtric group: with any transposition~$\tau$, 
    say the one that swaps $(1,0,0)$ and $(0,1,0)$ but leaves all others invariant,
    the Kazhdan constant $\cK\left(\Sym(p^3-1); \{\tau,\sigma,\alpha,\beta,\sigma^{-1},\alpha^{-1},\beta^{-1}\}\right)$
    is at least~$\eps/2$.
\end{theorem}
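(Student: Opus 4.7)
The plan is to reduce the claim to Kassabov's uniform Kazhdan constants for~$\SL(3,\FF_p)$ \cite{Kassabov_2007_lattices} together with a bounded-generation argument analogous to the one used in~\cref{sec:kas_small_set,sec:kas_exterior}. The statement splits naturally into two parts: (i) that $\{\sigma,\alpha,\beta\}$ generates $\Alt(p^3-1)$, and (ii) that the Kazhdan constant with respect to this generating set is bounded below uniformly in~$p$.

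For the generation claim, I would first show that $\langle \sigma,\alpha\rangle$ contains $\SL(3,\FF_p)$ acting on $\FF_p^3$. The map $\alpha = \one + e_{1,2}$ is the elementary matrix $E_{1,2}(1)$, and conjugating by the cyclic permutation~$\sigma$ of coordinates yields $E_{2,3}(1)$ and $E_{3,1}(1)$. By the identity $[E_{i,j}(1)^k, E_{j,\ell}(1)] = E_{i,\ell}(k)$ (compare~\cref{eq:RingOperationsInEL}), these suffice to generate every $E_{i,j}(k)$ for $k \in \FF_p$, so $\langle \sigma,\alpha\rangle \supseteq \EL(3;\FF_p) = \SL(3;\FF_p)$. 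Since $\SL(3;\FF_p)$ already acts transitively on $\FF_p^3\setminus\{0\}$, adding the nonlinear element $\beta$ should make the action doubly transitive: one must exhibit, for any two pairs $(u,v),(u',v')$ of distinct nonzero vectors, a group element mapping one pair to the other, using the fact that $\beta$ moves points along the quadratic curve $\{x=y^2\}$ in a way no linear map can. Once double transitivity is established, Pyber's bound~\cite{Pyber1993} (invoked in the same way as in the opening of~\cref{section:efficientreversiblecircuits})
forces the group to contain $\Alt(p^3-1)$ for all sufficiently large $p$, since the obvious lower bound $\abs{\SL(3;\FF_p)} = p^8(1-p^{-2})(1-p^{-3})$ on the group order exceeds the Pyber ceiling; the finitely many small primes can be checked by hand.

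For the Kazhdan constant, the key input is that $\SL(3;\FF_p)$ has a uniform Kazhdan constant with respect to the generating set $S_{\SL} = \{E_{i,j}(1)\}_{i\neq j}$, independent of~$p$. This is a standard consequence of the strong property~$(T)$ of $\SL(3;\ZZ)$ (see~\cite{Kassabov_2007_lattices}), which passes to finite quotients. Since every $E_{i,j}(1)$ is a product of at most three elements of $\{\sigma^{\pm 1},\alpha^{\pm 1}\}$ by the argument above, \cref{lem:shortproduct} transfers the uniform Kazhdan constant from $S_{\SL}$ to $\{\sigma,\alpha,\sigma^{-1},\alpha^{-1}\}$ acting on~$\SL(3;\FF_p)$. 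The remaining task is to use this and $\beta$ to produce a uniform Kazhdan constant on the full alternating group, and this is where \cref{cor:GoodKazhdanSubgroupToBigger} comes into play: one needs a constant~$k$ (independent of~$p$) such that every element of $\Alt(p^3-1)$ is a product of at most $k$ elements drawn from $\SL(3;\FF_p) \cup \langle \beta\rangle$.

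The main obstacle is establishing this bounded generation property. Unlike in~\cref{section:efficientreversiblecircuits}, where we exploited the product structure $\bKK_s = (\FF_2^{3s}\setminus\{0\})^{\times 6}$ to embed many copies of a structured subgroup, here we have only a single copy of $\SL(3;\FF_p)$ and one nonlinear shear. One natural route would be to show that certain conjugates $g\beta g^{-1}$ with $g \in \SL(3;\FF_p)$ together with $\SL(3;\FF_p)$ itself generate a doubly transitive subgroup in a \emph{quantitatively bounded} number of multiplications, then apply the analogues of~\cref{lem:fourpqpinv,lem:GeneratingLargerSymmetricGroupFromSmallerOnes}. Caprace and Kassabov~\cite{caprace2023tame} invoke more sophisticated machinery via tame automorphism groups and property~$(T)$ for certain polynomial automorphism groups, from which the bounded generation follows essentially for free; replicating this cleanly without their infrastructure is the technical heart of the argument. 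The final statement about the symmetric group follows immediately from the alternating one, since adjoining any transposition (an element of order~$2$) cannot degrade the Kazhdan constant by more than a factor of~$2$ via \cref{lem:shortproduct}.
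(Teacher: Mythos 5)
The paper does not prove this statement; it is quoted as Theorem~1.5 of Caprace--Kassabov \cite{caprace2023tame}, so there is no in-paper argument to compare against. Your proposal is an attempt to reprove the cited result from scratch, and it has two substantive problems.

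First, the Pyber step as you have set it up cannot close. Pyber's bound says a doubly transitive group of degree $N$ not containing $\Alt(N)$ has order at most $N^{33(\log_2 N)^2}$. With $N = p^3-1$ and your lower bound $\abs{\SL(3;\FF_p)} \approx p^8$, you get $\log\abs{\SL(3;\FF_p)} = \Theta(\log p)$ while $\log\bigl(N^{33(\log_2 N)^2}\bigr) = \Theta\bigl((\log p)^3\bigr)$. The Pyber ceiling is far \emph{above} $p^8$, so the order lower bound is useless — exactly the opposite of the situation in \cref{section:efficientreversiblecircuits}, where $\abs{\Gamma} = \abs{\SL(3s;\FF_2)}^{K^5}$ is doubly exponential in $s$ while the degree $K^6$ is only singly exponential. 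To salvage a Pyber-style argument you would have to show that $\langle\sigma,\alpha,\beta\rangle$ is \emph{superpolynomially} larger than $\SL(3;\FF_p)$, which is itself nontrivial; in practice one would instead invoke the classification of $2$-transitive groups. You also still owe the double transitivity argument: $\SL(3;\FF_p)$ preserves whether a pair of vectors is linearly dependent, so one must explicitly produce words in $\beta$ that break this invariant.

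Second — as you yourself flag — the bounded-generation step is absent. The index of $\SL(3;\FF_p)$ in $\Alt(p^3-1)$ is on the order of $(p^3)!/p^8$, superexponential in $p$, and none of \cref{lem:fourpqpinv,lem:GeneratingLargerSymmetricGroupFromSmallerOnes,cor:GoodKazhdanSubgroupToBigger} can span that gap with a bounded number of multiplications. Caprace and Kassabov do not proceed this way at all: they establish property~(T) for the tame automorphism group of affine $3$-space, note that $\sigma,\alpha,\beta$ are images of tame automorphisms under the finite quotient acting on $\FF_p^3\setminus\{0\}$, and let the Kazhdan constant descend to the quotient. Reproducing this infrastructure (or finding a genuinely new bounded-generation route) is the entire content of the theorem, and the sketch leaves it open. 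A smaller point: the $\Alt\to\Sym$ step is not an application of \cref{lem:shortproduct}, because a $\Sym(N)$-representation with no invariant vectors may still have $\Alt(N)$-invariant vectors (the sign representation), so the $\Alt(N)$ Kazhdan constant gives nothing there; the factor of $\tfrac12$ comes from decomposing the representation into the $\Alt$-fixed and $\Alt$-free parts and using that $\tau$ acts by $-1$ on the former.
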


Given any input $(x,y,z) \in \FF_p^3$ 
specified as a binary string of length~$3\lceil \log_2(p) \rceil$, 
the images $\sigma(x,y,z)$, $\alpha(x,y,z)$, and $\beta(x,y,z)$ (and their inverses)
can be evaluated at using $\CO(1)$ calls to finite field arithmetic.
Every finite field arithmetic operation $(+,-,*,/)$ can be implemented from $\CO(1)$
integer arithmetic on $\CO(n)$-bit integers,
which takes $\tilde\CO(n)$ bit operations~\cite{HarveyHoeven2019}.

We now explain how to extend \cref{thm:generating_set_Alt} from $\Sym(p^3-1)$ to $\Sym(2^n)$.
Bertrand's postulate, which is a theorem,
states that there is a prime in the interval $[n+1,2n]$ 
for any integer $n \ge 1$.
Hence, for any integer~$n \ge 2$,
there is an odd prime~$p$ such that $2^{n-1} < p < 2^n$,
which means $2^{3n - 3} \le p^3 - 1 < 2^{3n}$.
Hence, for any integer $n \ge 10$,
there is an odd prime~$p$ such that $ 2^{n-6} < p^3 - 1 < 2^n$.
(The requirement $n \ge 10$ is just a convenient choice.)
Consider embeddings (one-to-one group homomorphisms)~$\pi_j$ 
of~$\Sym(a = p^3-1)$ into~$\Sym(2^n)$
where $\pi_j(\Sym(a))$ ($j = 0,1,2,\ldots,m+1$) permutes points on intervals
\begin{align}
[1,a],\nonumber\\
[1 + 1 \cdot 2^{n-7}, a + 1 \cdot 2^{n-7}],\nonumber\\
[1 + 2 \cdot 2^{n-7}, a + 2 \cdot 2^{n-7}],\\
\cdots, \nonumber\\
[1 + m \cdot 2^{n-7}, a + m \cdot 2^{n-7}],\nonumber\\
[1+2^n-a, 2^n] \nonumber
\end{align}
where $m < 128$ is the greatest integer such that $a + 2^{n-7} m < 2^n$.
Then, applying \cref{lem:GeneratingLargerSymmetricGroupFromSmallerOnes} recursively,
we have that every element of~$\Sym(2^n)$ is a product of at most $2^{m+1} + 1$ elements 
from the embedded symmetric groups.

The action of the seven generators for~$\pi_j(\Sym(p^3-1))$, viewed as bijections~$\{0,1\}^n \to \{0,1\}^n$,
has circuit complexity~$\tilde \cO(n)$ for the following reason.
Given an $n$-bit string~$x$, one interprets it as an integer~$x$,
chooses  among $m+2$ intervals the left-most interval~$j$ to which $x$ belongs,
and computes the distance of $x$ from the left boundary of the interval~$j$.
Represent this distance, an integer, in base~$p$ to obtain $(x,y,z) \in \FF_p^3$,
act on it by the chosen generator, and put the result back into the interval~$j$.

The symmetric group~$\Sym(2^n)$ is now endowed with a generating set of size $7 \cdot (m+2) < 910$.
The Kazhdan constant for $\Sym(2^n)$ with this generating set 
is degraded by a factor of at most~$2^{m+1}+1$ (a constant independent of~$n$)
from that of $\Sym(p^3-1)$ with respect to the generating set of the seven elements.

We finally note that Bertrand's postulate says nothing on how to find an $(n - \CO(1))$-bit prime~$p$,
but we may resort to a probabilistic method with run time~$\poly(n)$.
The prime number theorem, that there are $\sim N / \log N$ primes less than $N$,
implies that a random $n$-bit integer 
is a prime with probability~$ \propto 1/n$.
Hence, after $\CO(n)$ random trials equipped with an efficient primality testing algorithm~\cite{AKS2004},
we find a desired odd prime~$p$ with high probability.

\end{document}